\documentclass[11pt]{article}
\usepackage[margin=1in]{geometry}
\usepackage[T1]{fontenc}
\usepackage{lmodern,microtype,amsmath,amssymb,amsthm,mathtools,bm,booktabs}
\usepackage[colorlinks=true,linkcolor=blue,citecolor=blue,urlcolor=blue]{hyperref}
\usepackage{enumitem,graphicx,needspace}
\graphicspath{{figures/}}
\allowdisplaybreaks[2]
\numberwithin{equation}{section}
\hypersetup{pdftitle={Static classical-quantum-entanglement trade-offs: an entropic converse and single-letter characterizations},pdfauthor={Mark M. Wilde}}
\newtheorem{theorem}{Theorem}[section]
\newtheorem{lemma}[theorem]{Lemma}
\newtheorem{proposition}[theorem]{Proposition}
\newtheorem{corollary}[theorem]{Corollary}
\newtheorem{conjecture}[theorem]{Conjecture}
\theoremstyle{definition}
\newtheorem{definition}[theorem]{Definition}
\theoremstyle{remark}
\newtheorem{remark}[theorem]{Remark}
\newcommand{\Tr}{\operatorname{Tr}}
\newcommand{\id}{\operatorname{id}}
\newcommand{\diag}{\operatorname{diag}}
\newcommand{\supp}{\operatorname{supp}}
\newcommand{\conv}{\operatorname{conv}}
\newcommand{\EE}{\mathbb E}
\newcommand{\ket}[1]{\lvert#1\rangle}
\newcommand{\bra}[1]{\langle#1\rvert}
\newcommand{\proj}[1]{\ket{#1}\!\bra{#1}}

\newcommand{\cC}{\mathcal C}
\newcommand{\cR}{\mathcal R}
\newcommand{\cU}{\mathcal U}
\newcommand{\cN}{\mathcal N}
\newcommand{\cE}{\mathcal E}
\newcommand{\cT}{\mathcal T}

\newcommand{\sS}{\mathsf S}
\newcommand{\sJ}{\mathsf J}
\newcommand{\Cr}{C_{\mathrm r}}
\newcommand{\mc}{\mathrm{mc}}
\newcommand{\fp}{\mathrm{fp}}
\newcommand{\eff}{\mathrm{eff}}
\newcommand{\reg}{\mathrm{reg}}
\newcommand{\atanh}{\operatorname{atanh}}

\title{\textbf{Static classical--quantum--entanglement trade-offs:\\
an entropic converse and single-letter characterizations}}
\newcommand{\authorname}{Mark M. Wilde}
\newcommand{\authoremail}{wilde@cornell.edu}
\author{\texorpdfstring{\authorname\\[0.4em]
\small \textit{School of Electrical and Computer Engineering, Cornell University}\\
\small \textit{Ithaca, New York 14853, USA}\\
\small\texttt{\authoremail}}{\authorname}}

\date{\today}
\begin{document}
\maketitle
\begin{abstract}
The direct static capacity region of a bipartite quantum state describes its
asymptotic conversion into classical communication, quantum communication, and
entanglement, with all communication directed from Alice to Bob. This paper gives a unified
finite-block converse for this region. The proof treats generated and consumed
resources simultaneously and derives all three information inequalities for one
local instrument. Its main ingredients are no-signalling, the chain rule, strong
subadditivity, and continuity of conditional entropy. The achievability part follows by
composing instrument compression with quantum side information and state
redistribution, which establish the achievable rate region. A supporting-hyperplane
characterization gives the static capacity formula.
For a pure state of entanglement entropy $h$ subjected to erasure with probability~$p$,
the complete region is the convex hull of $0$ and $(0,-ph,(1-p)h)$, plus the
unit-resource cone. This statement holds for arbitrary collective instruments by
using subset-entropy inequalities related to the methods of Mamindlapally and
Winter. Complete characterizations also hold for symmetrically extendible states and locally
flagged pure-state mixtures. For maximally correlated states, the optimization
over local instruments admits an exact matrix formulation and an additive
outer bound. An explicit qubit example shows that a nonorthogonal discarded
quantum memory can outperform every efficient instrument and every instrument
with conditionally commuting discarded states.
For states obtained by qubit dephasing of Schmidt-aligned pure states, an exact
threshold characterizes when the static capacity formula vanishes, at every
blocklength and after regularization. Full single-letter characterizations
for general Hadamard states and for the remaining trade-offs for dephased states
remain open, with sufficient tensorization conditions identified.
\end{abstract}

\tableofcontents

\section{Introduction}

\subsection{Background and motivation}
A noisy bipartite state can supply entanglement and can improve communication
protocols when combined with noiseless communication. The rates at which these
resources can be generated are constrained by trade-offs. The direct static
capacity region quantifies these trade-offs
by keeping track of three net rates: classical communication~$C$, quantum
communication~$Q$, and entanglement~$E$. Positive rates indicate generation, and
negative rates indicate consumption. In the setting that we consider here, classical and quantum communication are
both from Alice to Bob, and backward communication is not free.

Ref.~\cite{HW10} established the regularized direct static capacity theorem by
combining classically assisted state redistribution with the three unit
protocols, and by proving a converse through reductions among resource
quadrants and octants. Ref.~\cite{WH12} subsequently developed a
unified entropy-based converse and a supporting-hyperplane formula for the
dynamic problem that involves the three aforementioned resources and a quantum
channel. Chapter~25 of~\cite{Wilde17} gives a detailed
exposition of that approach. A corresponding treatment of the static
problem should account for generated and consumed resources in a single
argument and identify the entropy optimization that determines every supporting
hyperplane of the region.

The dynamic trade-off theory also has applications to optical communication.
Ref.~\cite{WHG12a,WHG12b} applied it to bosonic channels,
including the pure-loss channel, and exhibited achievable trade-offs that
improve on time sharing between protocols for individual tasks.
Ref.~\cite{QW17} subsequently determined the corresponding
classical--quantum--entanglement capacity region of quantum-limited
bosonic amplifier channels under a mean photon-number constraint.
These results motivate the analogous static problem for shared bosonic
states, which is discussed among the future directions in
Section~\ref{sec:future}.

An additional question is when collective instruments can be replaced by
single-copy ones. Hadamard channels~\cite{KMNR07} have additive dynamic trade-off formulas
\cite{BHTW10,WH12}, and the corresponding state problem was explicitly proposed
in the conclusion of Ref.~\cite{BHTW10}. The static optimization includes a
quantum system that Alice may discard, and so the channel results do not immediately
resolve the state problem. One face is already determined by the result
of~\cite[Proposition~4]{LDS18} on the one-way distillable entanglement of degradable
states. Understanding the other faces requires
control of both the classical outcome and the discarded quantum system.

States resulting from a quantum erasure channel provide a useful setting in which to study these questions.
In the dynamic setting, the
subset-entropy methods of~\cite{MW23}, particularly
their Lemmas~3 and~4 and their erasure-channel analysis, control averages over
random subsets without assuming that the underlying state input to the channel is a product state.
Those lemmas build on the techniques of~\cite{GHW22}. For a static
resource, Alice's instrument must be chosen without knowledge of Bob's erasure
pattern. A proof must therefore apply to a fixed bipartite state and arbitrary
local instruments, rather than infer a state-capacity theorem from the dynamic
capacity of the associated channel.

\subsection{Summary of contributions}

This paper gives a unified entropic treatment of the direct static capacity theorem and characterizes the resulting trade-offs for several classes of bipartite states. The general treatment includes achievability, a converse that accounts simultaneously for generated and consumed resources, and a supporting-hyperplane capacity formula. Complete capacity regions follow for several state families, including erased states. For maximally correlated states and states obtained from qubit dephasing, structural reductions and exact boundary results give partial characterizations of the region. The analysis also demonstrates that allowing a discarded quantum system can strictly improve the trade-off, identifying an important consideration in the optimization over local instruments. These findings clarify which regions admit single-letter descriptions and isolate the remaining questions for broader classes of Hadamard states.

Theorems~\ref{thm:static} and~\ref{thm:finite-converse} give the established
static capacity theorem a unified entropic proof. The main step is an auxiliary
instrument that includes the reference of the transmitted quantum information
and both shares of any initially supplied maximally entangled state in its
mathematical dilation. No-signalling then removes these auxiliary correlations
from the first communication bound. The same instrument proves all three
facets of the capacity region, including for protocols that consume entanglement.
Theorem~\ref{thm:formula} derives the supporting-hyperplane capacity formula,
including its parameter domain and regularization.

For erased states, Theorems~\ref{thm:erased-support} and~\ref{thm:erased-region}
extend the erased Bell-state characterization of Ref.~\cite{HW10} to
arbitrary pure bipartite inputs. They give an additive static capacity
formula and the complete capacity region. The proof
uses strong subadditivity and weak monotonicity to establish the required
Bernoulli subset-entropy inequalities. Theorems~\ref{thm:extendible}
and~\ref{thm:flagged} determine two further complete families: symmetrically
extendible states and locally flagged pure-state mixtures.
Theorem~\ref{thm:had-face} records the coherent-information face of a Hadamard
state as a consequence of Ref.~\cite{LDS18}; an independent entropic proof is
included in Appendix~\ref{app:hadamard}.

For maximally correlated states, Theorem~\ref{thm:matrix-formula} gives an exact
matrix optimization at every blocklength, and Corollary~\ref{cor:mc-outer}
gives an additive outer bound using Theorem~\ref{thm:Jadd}. Theorem~\ref{thm:quantum-memory} exhibits a two-qubit state
for which a nonorthogonal discarded quantum memory strictly improves the
trade-off over all efficient instruments and all instruments with
conditionally commuting discarded states. For Schmidt-aligned qubit dephasing,
Theorem~\ref{thm:qubit-onecopy} reduces the one-copy problem to a concave
envelope of a two-variable binary-entropy expression.
Theorem~\ref{thm:threshold} determines exactly when the static capacity
formula vanishes, uniformly over collective instruments, and Corollary~\ref{cor:origin-slope} determines
the entanglement gain at vanishing quantum-communication cost. The proof uses
an entropy-contraction inequality valid with an arbitrary untouched auxiliary
system, established in Theorem~\ref{thm:contraction}. The intermediate
positive-support curve remains unresolved. We also explain why an arbitrary
pure input to a dephasing channel need not produce a maximally correlated state.

The direct proof uses instrument compression with quantum side information
\cite{WHBH12} and quantum state redistribution~\cite{DY08,YD09} as stated
coding theorems. Their underlying random-coding and decoupling proofs are not
reproduced. We give their composition, removal of shared randomness, resource
accounting, regularization, and the entire converse. Identification with
conventional resource accounting additionally invokes the established direct
coding theorem of Ref.~\cite[Theorem~2 and Section~VI-A]{HW10}, through the
comparison given after the converse.

Table~\ref{tab:scope} summarizes the precise status of the state-family
results. An exact one-copy optimization is distinguished from an operational
capacity formula, which additionally requires control of collective instruments.
\begin{table}[htbp]
\centering\small
\begin{tabular}{@{}p{0.29\textwidth}p{0.65\textwidth}@{}}
\toprule
Resource family & Result and remaining question \\
\midrule
Erased pure states & Complete region and additive support, including arbitrary collective instruments; Theorems~\ref{thm:erased-support} and~\ref{thm:erased-region}. \\[4pt]
Symmetrically extendible states & Complete region $\cU$; Theorem~\ref{thm:extendible}. \\[4pt]
Locally flagged pure mixtures & Complete region $(0,0,D)+\cU$; Theorem~\ref{thm:flagged}. \\[4pt]
General Hadamard states & Additive coherent-information face and exact penalty identity; full single-letterization remains open; Theorem~\ref{thm:had-face} and Eq.~\eqref{eq:had-penalty}. \\[4pt]
Maximally correlated states & Exact one-copy matrix optimization and additive outer bound; equality with the full capacity region remains open; Theorem~\ref{thm:matrix-formula} and Corollary~\ref{cor:mc-outer}. \\[4pt]
Schmidt-aligned qubit dephasing & Exact threshold for a vanishing static capacity formula and exact origin slope; intermediate trade-offs remain open; Theorem~\ref{thm:threshold} and Corollary~\ref{cor:origin-slope}. \\[4pt]
Nonaligned qubit dephasing & Exact coherent-information face and an achievable primitive point; Theorem~\ref{thm:had-face} and Eq.~\eqref{eq:nonaligned-primitive}; full region unresolved. \\
\bottomrule
\end{tabular}
\caption{Scope of the exact results and remaining optimizations. Here $\cU$
is the cone generated by the three unit-resource protocols, and $D$ is the
average pure-state entanglement of a locally flagged mixture.}
\label{tab:scope}
\end{table}

The main text contains the coding theorem, the complete entropic converse,
the derivation of the capacity formula, and the principal results for each
state family. Appendix~\ref{app:direct} gives the detailed direct construction.
Appendices~\ref{app:erased}--\ref{app:contraction} contain the longer
state-family proofs. Figures~\ref{fig:static-protocol} and~\ref{fig:auxiliary-instrument}
describe the task and its converse instrument, while
Figure~\ref{fig:erased-three-dimensional} shows the full erased-state region,
and Figures~\ref{fig:erased-slices} and~\ref{fig:dephasing-slices} display
capacity slices and bounds. Appendix~\ref{app:figures} specifies the numerical
construction used in the latter figure. All code and inputs needed to
reproduce the figures, together with reproduction instructions, are
available as ancillary files accompanying the arXiv posting of this paper
in the \texttt{anc/} directory.

\Needspace{12\baselineskip}
\section{Notation and the operational problem}\label{sec:notation}
This section fixes the entropy notation and defines the communication task,
including its error criterion and resource accounting. It also introduces
the unit-resource cone and the local instruments that appear in the
capacity theorem.

All quantum systems in this paper are finite dimensional, and every theorem
should be understood under this convention. We write
$\mathbb{N}\coloneqq\{1,2,\ldots\}$ and $[n]\coloneqq\{1,\ldots,n\}$ for $n\in\mathbb N$. We use $\log_2$ for the binary logarithm and $\ln$ for the natural logarithm. For a density operator $\omega_{ABD}$ on systems $A$, $B$, and $D$, define
\begin{align}
H(A)_\omega&\coloneqq-\Tr[\omega_{A}\log_2\omega_{A}],\\
H(A|B)_\omega&\coloneqq H(AB)_\omega-H(B)_\omega,\\
I(A;B)_\omega&\coloneqq H(A)_\omega-H(A|B)_\omega,\\
I(A;B|D)_\omega&\coloneqq H(A|D)_\omega-H(A|BD)_\omega,\\
I(A\rangle B)_\omega&\coloneqq-H(A|B)_\omega.
\end{align}
The relative entropy is defined as
\begin{equation}
D(\omega\|\tau)\coloneqq\Tr[\omega(\log_2\omega-\log_2\tau)]
\end{equation}
when
$\supp\omega\subseteq\supp\tau$, and it is equal to $+\infty$ otherwise.
We write $\pi_d\coloneqq I/d$ and
$\Phi_d\coloneqq\proj{\Phi_d}$, where
$\ket{\Phi_d}\coloneqq d^{-1/2}\sum_{i=1}^d\ket i\ket i$. 
Subsystem labels or the surrounding dimensions determine $d$ when it is omitted.
For a unit vector $\ket\xi$, we also write $\xi\coloneqq\proj\xi$ for its density
operator when the meaning is clear.
For a Hermitian operator $M$, the notation $M\geq0$ means that $M$ is
positive semidefinite. System labels appear as subscripts on states and channels.
For a system $A$, its copies are $A_1,\ldots,A_n$, and
$A^n\coloneqq A_1\cdots A_n$ denotes their joint system. If a system label
already contains a descriptive subscript, a separate numerical index denotes
the copy; for example, $A_{\mathrm f}^n\coloneqq
A_{\mathrm f,1}\cdots A_{\mathrm f,n}$.
The three net rates are denoted throughout by $C$, $Q$, and $E$. For classical $X$, conditional entropies are taken as averages over $x$.
For $t\in[0,1]$, the binary entropy is defined as $h_2(t)\coloneqq-t\log_2t-(1-t)\log_2(1-t)$,
with $0\log_2 0\coloneqq0$. The normalized trace distance of states $\omega$ and $\tau$ is given by $\tfrac12\left\|\omega-\tau\right\|_1$.

\subsection{Codes and net rates}
Fix $n\in\mathbb{N}$ copies of a bipartite state $\rho_{AB}$ and a purification $\psi_{ABE}$ of it.
A code has six nonnegative total resource amounts
\begin{equation}
(c_g,q_g,e_g;c_c,q_c,e_c),
\label{eq:resource-amounts}
\end{equation}
measured in bits, qubits, and ebits, respectively. The subscripts $g$ and $c$
indicate generation and consumption. Lowercase letters denote total resource
amounts for a block protocol, whereas uppercase letters denote rates per copy
of the shared state. The resource amounts may depend on~$n$; this dependence
is left implicit.

The information-processing task begins with
Alice receiving a uniform classical message $M$ of dimension $2^{c_g}$ and one share
$A_Q$ of $\Phi_{RA_Q}$ of Schmidt rank $2^{q_g}$. Alice and Bob also share
$\Phi_{T_AT_B}$ of Schmidt rank $2^{e_c}$. Alice sends a classical register
$V$, such that $\log_2|V|\leq c_c$, and a quantum register~$W$, such that
$\log_2|W|\leq q_c$. She retains a register $L$ of dimension $2^{e_g}$.
Bob decodes registers $\widehat M,\widehat Q,\widehat L$, with
$\widehat M$ classical; its dephasing in the message basis is included
in the decoder.
Integer rounding of logarithmic dimensions has no asymptotic effect.
The message, transmitted Bell pair, supplied entanglement, and resource
purification are initially independent. Their joint state is
\begin{equation}
\zeta^{\rm in}_{MRA_QT_AT_BA^nB^nE^n}
\coloneqq\pi_M\otimes\Phi_{RA_Q}\otimes\Phi_{T_AT_B}
\otimes\psi_{ABE}^{\otimes n}.
\label{eq:initial-code-state}
\end{equation}
An untouched classical copy of $M$ is retained by Alice for testing; its duplicate
on Alice's side is implicit in this notation.

The ideal final state on the tested registers is
\begin{equation}
\Omega\coloneqq\frac1{|M|}\sum_m\proj m_{M}\otimes\proj m_{\widehat M}
\otimes\Phi_{R\widehat Q}\otimes\Phi_{L\widehat L}.
\label{eq:ideal}
\end{equation}
The code has error $\varepsilon$ when its marginal on these registers is within
normalized trace distance $\varepsilon$ of~$\Omega$. This is the simultaneous
classical transmission, entanglement transmission, and entanglement generation
criterion. Local ancillas are assumed to be available to Alice and Bob for free.
All communication goes forward from Alice to Bob. Consequently
Bob's intermediate local operations can be deferred to a single final decoder:
they cannot affect a later operation of Alice, and the decoder can replay them
in their original order after storing all received registers. Alice's forward
messages can likewise be grouped into $V,W$. See Figure~\ref{fig:static-protocol}
for a visual depiction of the protocol.

\begin{figure}[t]
\centering
\includegraphics[width=\linewidth]{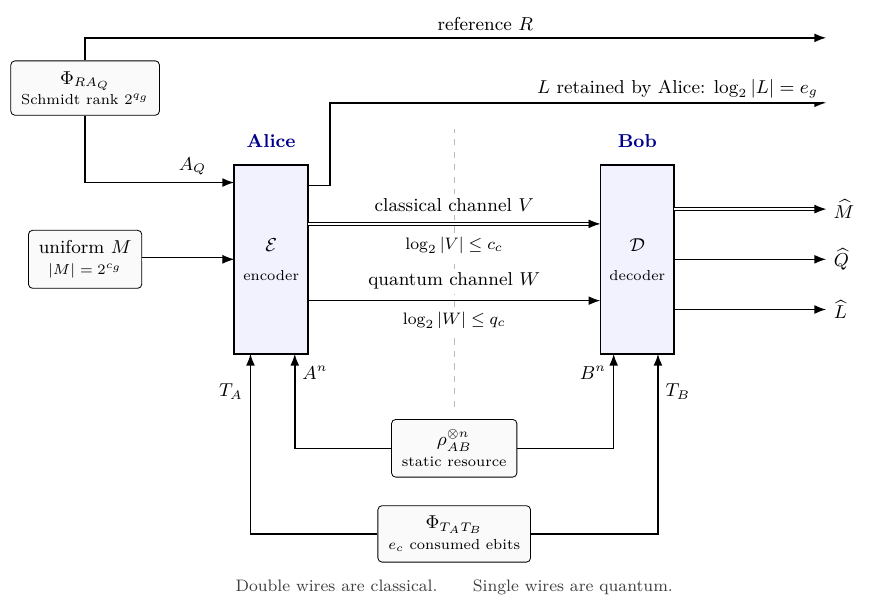}
\caption{A code for the direct static task, with resource amounts and net rates as
in~\eqref{eq:net}. Alice and Bob share $\rho_{AB}^{\otimes n}$ and $e_c$
supplied ebits in $\Phi_{T_AT_B}$. Alice encodes her shares, a uniform
classical message~$M$, and the input $A_Q$ entangled with reference $R$.
She transmits $V$ and $W$ and retains~$L$. Bob uses $B^nT_BVW$ to recover
$\widehat M,\widehat Q,\widehat L$. The error criterion~\eqref{eq:ideal}
tests the classical message and both maximally entangled pairs jointly.
An untouched classical copy of $M$ is implicit; local discarded systems
are omitted. Double wires are classical and single wires are quantum.}
\label{fig:static-protocol}
\end{figure}

The net rates are defined as
\begin{equation}
C\coloneqq\frac{c_g-c_c}{n},\qquad
Q\coloneqq\frac{q_g-q_c}{n},\qquad
E\coloneqq\frac{e_g-e_c}{n}.
\label{eq:net}
\end{equation}
We allow simultaneous consumption and generation of the same resource, with
all six resource amounts taken to be $O(n)$. Thus the definition includes finite-rate catalytic use
of the three ideal resources. A triple is achievable if it is a limit of such
net rates with error tending to zero. The closure of the set of achievable triples
is denoted $\cC_{\rm stat}(\rho)$. The conventional definition in Ref.~\cite{HW10} specifies consumed and
generated resources according to the signs of the net rates. Its codes
are included in the definition adopted here, which also permits
simultaneous consumption and generation of the same resource. After the
converse, we show that this additional freedom leaves the asymptotic
region unchanged. No cancellation of supplied and generated resources
is assumed in the finite-block argument.

\subsection{The unit-resource cone}
Teleportation, superdense coding, and entanglement distribution are specified by the following rate vectors:
\begin{equation}
u_{\rm TP}\coloneqq(-2,1,-1),\quad
u_{\rm SD}\coloneqq(2,-1,-1),\quad
u_{\rm ED}\coloneqq(0,-1,1),
\label{eq:unit-rays}
\end{equation}
respectively~\cite{Teleport,Dense}. Their nonnegative span is given by \cite{HW10}
\begin{equation}
\cU\coloneqq\{(C,Q,E):C+2Q\leq0,\ Q+E\leq0,\ C+Q+E\leq0\}.
\label{eq:unit}
\end{equation}
To verify the converse inclusion, write
\begin{equation}
(C,Q,E)=a\,u_{\rm TP}+b\,u_{\rm SD}+d\,u_{\rm ED},
\end{equation}
with $a,b,d\in\mathbb{R}$.
Since
$u_{\rm TP}=(-2,1,-1)$,
$u_{\rm SD}=(2,-1,-1)$, and
$u_{\rm ED}=(0,-1,1)$, each of the three defining linear
expressions isolates one coefficient:
\begin{equation}
C+Q+E=-2a,\qquad Q+E=-2b,\qquad C+2Q=-2d.
\end{equation}
Solving for the coefficients and substituting them back gives
\begin{equation}
(C,Q,E)
=-\frac{C+Q+E}{2}u_{\rm TP}
-\frac{Q+E}{2}u_{\rm SD}
-\frac{C+2Q}{2}u_{\rm ED}.
\end{equation}
This identity holds for every triple $(C,Q,E)$. If
$C+Q+E\leq0$, $Q+E\leq0$, and $C+2Q\leq0$, all three
coefficients are nonnegative. Thus every triple satisfying
these inequalities belongs to the cone generated by the
three unit-resource vectors.
Resource disposal is included in this cone.

\subsection{Local instruments}
A finite instrument on $A$ has a dilation by linear maps
\begin{equation}
V_x\colon A\longrightarrow SF,\qquad \sum_xV_x^\dagger V_x=I_A.
\label{eq:instrument}
\end{equation}
Here $S$ is its quantum output, $X$ its classical outcome, and $F$ its
Stinespring environment. For a fixed purification $\psi_{ABE}$ of $\rho_{AB}$, define the outcome
probability $p_x\coloneqq\Tr[(V_x^\dagger V_x\otimes I_{BE})\psi_{ABE}]$.
The corresponding state is given by
\begin{equation}
\sigma_{XSFBE}\coloneqq\sum_xp_x\proj x_{X}\otimes\psi^{x}_{SFBE},
\qquad
\ket{\psi^x}_{SFBE}\coloneqq p_x^{-1/2}(V_x\otimes I_{BE})\ket\psi_{ABE}.
\label{eq:instrument-state}
\end{equation}
Only positive-probability outcomes need to be included. Each conditional state $\psi^{x}_{SFBE}$ is pure.
A classical-output instrument is not a single global isometry after the outcome
has been dephased.
Every finite instrument admits such a representation: if its map for outcome \(x\) has Kraus operators \(\left(K_{x,j}\right)_j\), define \(V_x\coloneqq\sum_j K_{x,j}\otimes|j\rangle_F\). Tracing out (discarding) \(F\) recovers the original map for each outcome.
 We impose no restriction on its finite
auxiliary dimensions. In the direct protocol, $F$ may remain available to Alice;
``discarded'' refers to the reduced instrument used in the information formula.

We define the one-copy information region $\cR^{(1)}(\rho)$ as being equal to the union, over all such instruments, of the rate triples satisfying \cite{HW10}
\begin{align}
C+2Q&\leq-I(SX;E|B)_{\sigma},\label{eq:facet1}\\
Q+E&\leq I(S\rangle BX)_{\sigma},\label{eq:facet2}\\
C+Q+E&\leq I(S\rangle BX)_{\sigma}-I(X;E|B)_{\sigma}.\label{eq:facet3}
\end{align}
All three quantities are evaluated on the state $\sigma_{XSFBE}$ in
\eqref{eq:instrument-state}.

\section{The static capacity theorem}\label{sec:theorem}
The static capacity theorem identifies the operational region with a
regularized union of the information regions defined in
\eqref{eq:facet1}--\eqref{eq:facet3}. After stating the theorem, we recall
the protocol that achieves its generating rate point. The complete
entropic converse is given in Section~\ref{sec:converse}.

\begin{theorem}[Direct static capacity theorem~\cite{HW10}]\label{thm:static}
For every bipartite state $\rho_{AB}$,
\begin{equation}
\cC_{\rm stat}(\rho)
=\overline{\bigcup_{n\in\mathbb{N}}\frac1n\cR^{(1)}(\rho^{\otimes n})}.
\label{eq:regularized-region}
\end{equation}
Here $\cR^{(1)}$ is the information region in~\eqref{eq:facet1}--\eqref{eq:facet3},
and $\cC_{\rm stat}$ is the operational region defined in
Section~\ref{sec:notation}. The same region is obtained under the
conventional resource-accounting definition of Ref.~\cite{HW10}.
\end{theorem}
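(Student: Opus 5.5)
The proof splits into an achievability part, $\supseteq$, and a converse, $\subseteq$, followed by a comparison of the two resource-accounting conventions.

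\textbf{Achievability.} It suffices to show that for every instrument on $\rho^{\otimes n}$ the corner point of $\frac1n\cR^{(1)}(\rho^{\otimes n})$ is achievable, since the region is that point plus $\cU$. Concretely, the corner point is
\begin{equation}
\Bigl(I(X;B)_\sigma,\ \tfrac12 I(S;E'B|X)_\sigma-\tfrac12 I(S;B|X)_\sigma\ \text{suitably signed},\ \dots\Bigr),
\end{equation}
namely the unique point at which \eqref{eq:facet1}--\eqref{eq:facet3} hold with equality. The unit protocols \eqref{eq:unit-rays} can then be appended, because resources are allowed to be generated and consumed simultaneously. Closure and regularization follow by applying the one-copy construction to $\rho^{\otimes n}$ and blocking.

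For the one-copy protocol I would compose two coding theorems.
\begin{enumerate}[label=(\roman*)]
\item Instrument compression with quantum side information~\cite{WHBH12} simulates the outcome $X$ at classical cost $I(X;RB)$-type rates, with shared randomness that can be derandomized. Using part of the message as the randomness recycles $H(X|\cdot)$ and yields classical generation.
\item Conditioned on $X$, state redistribution~\cite{DY08,YD09} moves $S$ to Bob with side information $B$ and Alice's retained $F$. Its quantum cost is $\frac12 I(S;E|B X)$ and its entanglement gain or cost is set by $I(S\rangle BX)$. This yields quantum communication, and entanglement through the usual coherent-information identity.
\end{enumerate}
Error terms add via the triangle inequality. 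Shared randomness is removed by the standard averaging argument: some fixed value of the randomness performs at least as well as the average. Checking that the resulting net rates equal the corner point is linear algebra on entropies, using the purity of $\psi^x_{SFBE}$.

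\textbf{Converse.} Take any $(n,\varepsilon)$ code and build an auxiliary instrument on $A^n$. Its classical outcome is $X=V$. Its quantum output $S$ collects $W$, $L$ and $T_B$. The systems $M$, $R$, $A_Q$ and $T_A$ are absorbed into the dilation: $R$ and $T_B$ are folded into the instrument as ancillas prepared by Alice, with $R$ discarded into $F$. This produces a legitimate $\sigma_{XSFB^nE^n}$ as in \eqref{eq:instrument-state}. I would then bound each facet.
\begin{itemize}
\item For \eqref{eq:facet2}, Fannes--Audenaert continuity applied to the ideal state shows that $I(S\rangle B^nX)$ is at least $q_g+e_g-q_c-e_c-n\delta(\varepsilon)$. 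The subtracted terms come from the dimension bounds on $W$ and $T_B$ together with the chain rule.
\item For \eqref{eq:facet1} and \eqref{eq:facet3}, the classical message contributes $c_g$ through Bob's decoding, bounded by the Holevo quantity with $V$ costing at most $c_c$.
\item No-signalling is needed because $E^n$ and $B^n$ are independent of $M$, $R$ and $T_A$ before communication. This is what kills the auxiliary correlations in $I(SX;E|B)$.
\item Strong subadditivity handles the remaining mixed terms.
\end{itemize}
Dividing by $n$ and letting $\varepsilon\to0$ places the rates in the regularized closure.

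\textbf{Conventional accounting.} The last clause follows because conventional codes are a special case of the codes defined here, which gives one inclusion. For the reverse, any point of the region is achieved by the conventional direct theorem of~\cite{HW10}, since the achievability construction only uses sign-determined resources plus unit protocols.

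The main obstacle is the converse bound \eqref{eq:facet1}. It is the only facet where consumed entanglement $e_c$ and the transmitted reference $R$ must both cancel exactly with coefficient $2$ on $Q$. To handle it I would first expand $I(SX;E^n|B^n)$ by the chain rule over $V$, $W$, $T_B$ and $L$. I would then use that $E^n$ is in product with $MRT_AT_B$ given $B^n$ at the start, and that $W$ can raise the conditional mutual information by at most $2\log|W|$.
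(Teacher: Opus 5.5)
\textbf{The converse fails because of how you build the auxiliary instrument.} You take $X=V$ and $S=WLT_B$, and you absorb $M$ and $R$ into the dilation, with $R$ explicitly in $F$. With this placement the inequalities you claim are false even for the unit protocols. These are legitimate codes for every $\rho$, since they simply ignore the resource.

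\emph{Teleportation.} Take $k$ qubits, so $c_c=2k$, $e_c=k$, $q_g=k$, $\varepsilon=0$. Conditioned on the Bell outcome $V$, the register $T_B$ is maximally entangled with $R\in F$ and independent of $B^n$. Hence $I(S\rangle B^nX)=-k$. Your facet-2 claim, however, asserts $I(S\rangle B^nX)\geq q_g+e_g-q_c-e_c=0$.

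\emph{Superdense coding.} Here $c_g=2k$ and $q_c=e_c=k$. Averaging over the message hidden in the dilation leaves $WT_B$ maximally mixed on $2k$ qubits. So $I(S\rangle B^nX)-I(X;E^n|B^n)=-2k$, while $c_g-c_c+q_g-q_c+e_g-e_c=0$. This violates the third facet.

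The continuity step you describe for facet 2 actually produces $-H(RL|B^nWT_BV)$. The chain rule then gives a coherent information of $RLWT_B$, not of $WLT_B$. So the reference $R$ has to be part of the quantum output.

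\textbf{What is missing is the paper's placement $S=RLWT_B$ and $X=MV$.} This is combined with the exact no-signalling identity for the \emph{post-encoding} state,
$\omega_{MRT_BB^nE^n}=\pi_M\otimes\pi_R\otimes\pi_{T_B}\otimes\psi_{BE}$.
The relevant register is $T_B$, not $T_A$, which the encoder consumes. The identity gives $I(MRT_B;E|B)=I(MR;BT_B)=I(M;BE)=0$. So the auxiliary systems add nothing to the penalties while supplying exactly the terms you need. Write $D=WT_B$ and let $Z=B^nDV$ be Bob's full decoder input.
\begin{itemize}
\item \emph{Second facet:} $I(RL\rangle ZM)=I(S\rangle BX)+H(D|BX)$, with $H(D|BX)\leq q_c+e_c$.
\item \emph{Third facet:} $I(M;Z)+H(D|BX)+I(X;E|B)\leq H(V|B)+H(D|BV)\leq c_c+q_c+e_c$. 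Putting $M$ inside $X$ is precisely what makes the $H(D|BMV)$ and $H(V|BM)$ terms cancel.
\end{itemize}

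\textbf{Other points.}
\begin{itemize}
\item The first facet, which you single out as the main obstacle, actually survives your placement. Using $MR\perp B^nT_BE^n$, one still gets $I(SX;E|B)+I(MR;Z)\leq c_c+2q_c$ up to error terms.
\item Even so, the first facet needs a separate decoding estimate on the residual term $I(L;E|\cdot)$, coming from the near-ideal pair $L\widehat L$ (in the paper, $I(L;E|MRZ)\leq\delta_L$). ``Strong subadditivity handles the remaining terms'' does not supply this.
\item In the achievability part, the facet-saturating corner has $C_0=-I(X;E|B)_\sigma\leq0$, not $I(X;B)_\sigma$. Instrument compression consumes classical communication at rate $I(X;E|B)_\sigma$, and no classical communication is generated by recycling the message. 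Shared randomness is removed simply by fixing a good seed.
\end{itemize}
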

The overline in~\eqref{eq:regularized-region} denotes closure in $\mathbb R^3$.
It includes boundary rate triples approached by code sequences with vanishing error
and arbitrarily small rate slack. The union over finite blocklengths and
finite instruments need not itself contain all such limits.
The direct construction is recalled below, and its detailed proof is given in Appendix~\ref{app:direct}. The converse in
Section~\ref{sec:converse} proves all three inequalities for a single induced
instrument, without splitting the rate space into octants, as was done previously in \cite{HW10}.

\subsection{Achievability and the generating rate point}
Classically assisted state redistribution combines a local instrument with
classical communication to convey its outcome and quantum communication
to transfer its quantum output. This protocol was introduced in
Ref.~\cite[Lemma~1 and Section~V]{HW10}. In the formulation used here,
instrument compression with quantum side information first makes the
classical outcome available to Alice and Bob~\cite[Section~5.6.1]{WHBH12},
and conditional state redistribution then transfers the quantum output to
Bob~\cite{DY08,YD09}. The environment of the local instrument remains
available to Alice during the redistribution step.

For the state $\sigma_{XSFBE}$ in~\eqref{eq:instrument-state}, classically
assisted state redistribution achieves the net rates
\begin{align}
C_0&\coloneqq-I(X;E|B)_\sigma,\label{eq:CASR-C}\\
Q_0&\coloneqq-\tfrac12 I(S;E|BX)_\sigma,\label{eq:CASR-Q}\\
E_0&\coloneqq\tfrac12\big[I(S;B|X)_\sigma-I(S;F|X)_\sigma\big].
\label{eq:CASR-E}
\end{align}
Instrument compression first makes the classical outcome available to both
parties. Conditional state redistribution then transfers $S$ to Bob, using
$F$ and $B$ as the respective side information at Alice and Bob. Purity of the overall state
conditioned on $X$ and the chain rule imply that
\begin{align}
C_0+2Q_0&=-I(SX;E|B)_\sigma,\\
Q_0+E_0&=I(S\rangle BX)_\sigma,\\
C_0+Q_0+E_0&=I(S\rangle BX)_\sigma-I(X;E|B)_\sigma.
\label{eq:corner-facets}
\end{align}
Thus adding teleportation, superdense coding, and entanglement distribution
gives exactly the inequalities in~\eqref{eq:facet1}--\eqref{eq:facet3}.
Appendix~\ref{app:direct} gives the complete composition argument, including
fixed resource dimensions, conditioning on typical outcome sequences, removal
of shared randomness, and regularization over instruments acting on several
copies. The proof uses instrument compression and state redistribution as
stated coding theorems; their underlying random-coding proofs are not repeated.

\section{A unified finite-block entropic converse}\label{sec:converse}
We now prove the converse to~\eqref{eq:regularized-region} directly from the
code definition in Section~\ref{sec:notation}. Throughout this section, $B$
and $E$ abbreviate $B^n$ and $E^n$, respectively. The argument applies to an
arbitrary resource state; the tensor-power assumption is
needed only when we take the asymptotic limit.

\subsection{The induced instrument and the no-signalling identity}
Let $\omega_{MVRLWT_BFBE}$ denote the joint state after Alice's encoder and
before Bob's decoder, as represented in Figure~\ref{fig:auxiliary-instrument}. We choose a dilation of the encoder for which the
conditional state on $RLWT_BFBE$, given $M=m$ and $V=v$, is pure. The system
$F$ contains every other encoder output. Thus $\omega$ has the form required
of an instrument state in~\eqref{eq:instrument-state}. To keep track of its
outputs and the registers used in the decoding argument, define
\begin{equation}
\begin{aligned}
K&\coloneqq RL, & D&\coloneqq WT_B, & S&\coloneqq KD=RLWT_B,\\
X&\coloneqq MV, & Z&\coloneqq BDV.
\end{aligned}
\label{eq:converse-registers}
\end{equation}
In the instrument description, $B$ is Bob's original resource system $B^n$,
and $S$ and $X$ are the quantum and classical outputs defined in
\eqref{eq:converse-registers}. The additional notation $Z=BDV=B^nWT_BV$
denotes the complete input to Bob's decoder: it includes $B^n$, his share
$T_B$ of the supplied entanglement, and the received messages $W,V$.
Thus $Z$ contains $B$ but is not identified with it. In particular,
$ZM=DBX$ after regrouping the registers. The entropy calculations below
use this regrouping to express bounds involving the decoder input $Z$
in terms of the instrument systems $S,B,X$. All three information
quantities in~\eqref{eq:facet1}--\eqref{eq:facet3} are evaluated on this
same state $\omega$.

To verify that this is a legitimate instrument on Alice's original resource
system $A^n$, prepare $\Phi_{RA_Q}$ and $\Phi_{T_AT_B}$ locally as auxiliary
states, choose $m$ uniformly, and apply the encoder to $A^nA_QT_A$. Declare
$RLWT_B$ to be the quantum output and $m,v$ to be the classical output.
The resulting map is trace preserving when we sum over $m,v$ and trace all
of its quantum outputs, so its instrument operators obey the completeness
condition in~\eqref{eq:instrument}. This construction reproduces
$\omega_{XSFBE}$. It is an auxiliary description used in the information
formula: in the physical code, $T_B$ is at Bob and $R$ is the untouched
reference of the quantum message. The $BE$ marginal remains the marginal
of the initial resource purification.

Figure~\ref{fig:auxiliary-instrument} shows this construction.
\begin{figure}[t]
\centering
\includegraphics[width=\linewidth]{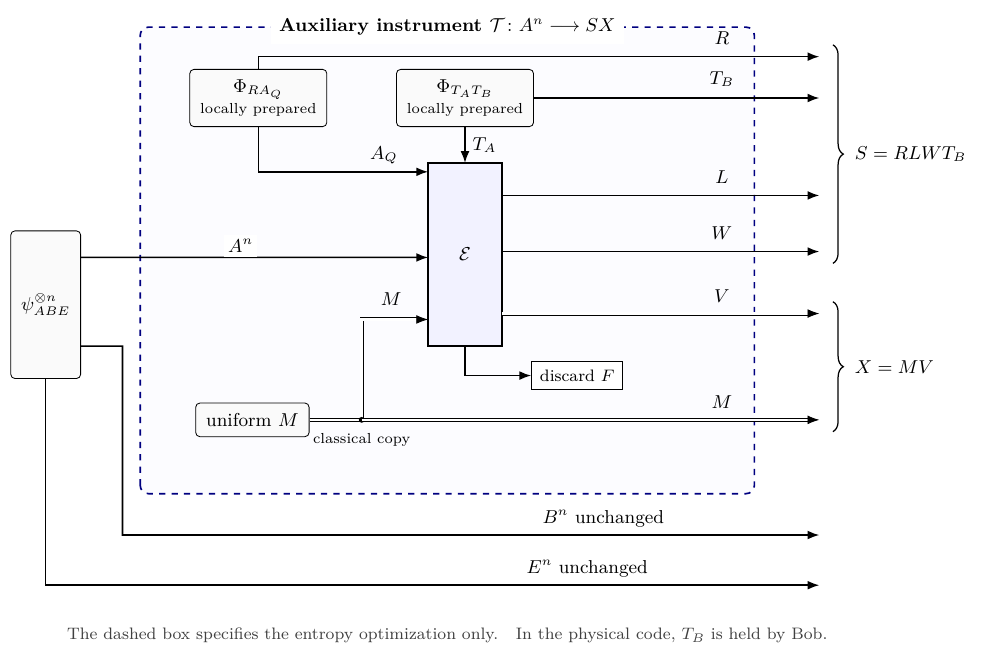}
\caption{The instrument used in the converse proof. The dashed box
specifies the auxiliary map $\mathcal T\colon A^n\to SX$ in
\eqref{eq:converse-registers}. Both $\Phi_{RA_Q}$ and $\Phi_{T_AT_B}$ are
prepared locally in this auxiliary construction. The encoder is the
one in Figure~\ref{fig:static-protocol}, with $M$ chosen uniformly.
Its quantum output is grouped as $S=RLWT_B$, and its classical output is
$X=MV$. Only the classical register $M$ is copied. The environment $F$
purifies each conditional state and is discarded in the reduced map.
The original $B^nE^n$ systems are untouched. In the physical protocol,
$T_B$ remains at Bob; placing it in the auxiliary output supplies no
additional communication or entanglement to the code.}
\label{fig:auxiliary-instrument}
\end{figure}

The following exact no-signalling identity holds:
\begin{equation}
\omega_{MRT_BBE}
=\pi_M\otimes\pi_R\otimes\pi_{T_B}\otimes\psi_{BE}.
\label{eq:nosignal}
\end{equation}
Here $\psi_{BE}$ denotes the marginal of the purification of the entire input
resource block. For each fixed~$m$, summing over the encoder outcomes gives
a trace-preserving map on $A^nA_QT_A$. Such a local map leaves the marginal
on $RT_BBE$ unchanged. The product structure of~\eqref{eq:initial-code-state} and averaging over
the uniform message $m$ therefore give~\eqref{eq:nosignal}. Consequently,
\begin{equation}
\begin{aligned}
I(MR;BT_B)_\omega&=0, & I(MRT_B;E|B)_\omega&=0,\\
I(M;BE)_\omega&=0, & H(MR)_\omega&=c_g+q_g.
\end{aligned}
\label{eq:nosignal-consequences}
\end{equation}
These are exact identities, including for a code with nonzero decoding error.

For use below, let $U_{\mathcal D}\colon Z\to\widehat M\widehat Q\widehat L G$
be a Stinespring isometry of Bob's decoder, and define its output state by
\begin{equation}
\widetilde\omega_{MRLFE\widehat M\widehat Q\widehat L G}
\coloneqq(I_{MRLFE}\otimes U_{\mathcal D})\omega_{MRLFEZ}
(I_{MRLFE}\otimes U_{\mathcal D}^{\dagger}).
\label{eq:decoded-converse-state}
\end{equation}
Tracing out $G$ gives the actual decoder output. The code
criterion mentioned after~\eqref{eq:ideal} is
\begin{equation}
\tfrac12\left\|
\widetilde\omega_{MRL\widehat M\widehat Q\widehat L}
-\Omega_{MRL\widehat M\widehat Q\widehat L}
\right\|_1\leq\varepsilon.
\label{eq:converse-error}
\end{equation}
Thus $\omega$, $\widetilde\omega$, and $\Omega$ denote, respectively, the
state before decoding, its isometric decoder output, and the ideal tested
output. Keeping these states distinct will make every use of data processing
and continuity explicit.

\subsection{Four consequences of a small decoding error}

It suffices to consider $0\leq\varepsilon\leq1/2$. Define
\begin{equation}
g(\varepsilon)\coloneqq(1+\varepsilon)
 h_2\!\left(\frac{\varepsilon}{1+\varepsilon}\right).
\end{equation}
For states $\xi_{UY}$ and $\zeta_{UY}$ at normalized trace distance at most $\varepsilon$,
the following conditional-entropy continuity bound holds~\cite{Winter16}:
\begin{equation}
\left|H(U|Y)_\xi-H(U|Y)_\zeta\right|
\leq2\varepsilon\log_2|U|+g(\varepsilon).
\label{eq:AFW}
\end{equation}
The sharp entropy-continuity inequality of Audenaert~\cite[Theorem~1]{Audenaert07}
also gives the ordinary-entropy bound
\begin{equation}
\left|H(U)_\xi-H(U)_\zeta\right|
\leq\varepsilon\log_2|U|+h_2(\varepsilon).
\label{eq:ordinary-continuity-converse}
\end{equation}
Indeed, for $d\coloneqq|U|\geq2$ and normalized trace distance
$t\leq\varepsilon\leq1/2$, Audenaert's bound is
$t\log_2(d-1)+h_2(t)$. Both terms are nondecreasing on this interval,
and $\log_2(d-1)\leq\log_2d$, giving~\eqref{eq:ordinary-continuity-converse}. 
Both continuity estimates depend only on the dimension of $U$;
the conditional-entropy bound is independent of the dimension of $Y$.

For the gross resource amounts specified in Section~\ref{sec:notation}, set
\begin{equation}
\begin{aligned}
\delta_K&\coloneqq2\varepsilon(q_g+e_g)+g(\varepsilon),\\
\delta_{MR}&\coloneqq2\varepsilon(c_g+q_g)+g(\varepsilon),\\
\delta_L&\coloneqq3\varepsilon e_g+h_2(\varepsilon)+g(\varepsilon),\\
\delta_M&\coloneqq\varepsilon c_g+h_2(\varepsilon).
\end{aligned}
\label{eq:error-terms}
\end{equation}
The first three consequences of~\eqref{eq:converse-error} are
\begin{align}
q_g+e_g&\leq I(K\rangle ZM)_\omega+\delta_K,
\label{eq:K-bound}\\
c_g+2q_g&\leq I(MR;Z)_\omega+\delta_{MR},
\label{eq:MR-bound}\\
c_g&\leq I(M;Z)_\omega+\delta_M.
\label{eq:M-bound}
\end{align}
We prove each in turn. By the ideal-state definition~\eqref{eq:ideal}, the
registers $R\widehat Q$ and $L\widehat L$ are independent maximally entangled
pairs. Since $K=RL$, continuity, data processing through Bob's decoder, and
strong subadditivity give
\begin{align}
q_g+e_g
&=-H(K|\widehat Q\widehat L)_\Omega\\
&\leq-H(K|\widehat Q\widehat L)_{\widetilde\omega}+\delta_K\\
&\leq-H(K|Z)_\omega+\delta_K\\
&\leq-H(K|ZM)_\omega+\delta_K\\
&=I(K\rangle ZM)_\omega+\delta_K.
\end{align}
This proves~\eqref{eq:K-bound}. For~\eqref{eq:MR-bound}, the ideal state
satisfies
\begin{equation}
H(MR)_\Omega=c_g+q_g,\qquad
H(MR|\widehat M\widehat Q)_\Omega=-q_g.
\end{equation}
Moreover, the $MR$ marginals of $\omega$, $\widetilde\omega$, and $\Omega$
are identical by~\eqref{eq:nosignal}. Thus only the conditional entropy needs
a continuity estimate, and
\begin{align}
c_g+2q_g
&=I(MR;\widehat M\widehat Q)_\Omega\\
&\leq I(MR;\widehat M\widehat Q)_{\widetilde\omega}+\delta_{MR}\\
&\leq I(MR;Z)_\omega+\delta_{MR}.
\end{align}
Finally,~\eqref{eq:converse-error} bounds the probability that
$M\ne\widehat M$ by $\varepsilon$. Fano's inequality gives
\begin{equation}
H(M|\widehat M)_{\widetilde\omega}
\leq\varepsilon c_g+h_2(\varepsilon)=\delta_M.
\end{equation}
Together with $H(M)_{\widetilde\omega}=c_g$ and data processing, this proves
\begin{equation}
c_g\leq I(M;\widehat M)_{\widetilde\omega}+\delta_M
\leq I(M;Z)_\omega+\delta_M,
\end{equation}
which is~\eqref{eq:M-bound}.

The fourth consequence controls a conditional mutual information involving
the entire decoding workspace:
\begin{equation}
I(L;E|MRZ)_\omega\leq\delta_L.
\label{eq:residual}
\end{equation}
Set $Y\coloneqq MR\widehat M\widehat QG$ in the decoded state
\eqref{eq:decoded-converse-state}. Isometric invariance, the chain rule, and
strong subadditivity imply
\begin{align}
I(L;E|MRZ)_\omega
&=I(L;E|\widehat L Y)_{\widetilde\omega}\\
&\leq I(L;EY|\widehat L)_{\widetilde\omega}\\
&=I(L;EY\widehat L)_{\widetilde\omega}
  -I(L;\widehat L)_{\widetilde\omega}\\
&\leq2H(L)_{\widetilde\omega}
  -I(L;\widehat L)_{\widetilde\omega}\\
&\leq2e_g-I(L;\widehat L)_{\widetilde\omega}.
\label{eq:residual-chain}
\end{align}
The fourth line uses $I(U;Y)_\nu\leq2H(U)_\nu$ for every state $\nu_{UY}$,
which follows from the Araki--Lieb inequality. The last line uses
$\log_2|L|=e_g$. To lower-bound the remaining mutual information, apply
\eqref{eq:ordinary-continuity-converse} and~\eqref{eq:AFW} to the marginal on
$L\widehat L$, whose ideal state is $\Phi_{L\widehat L}$:
\begin{align}
H(L)_{\widetilde\omega}
&\geq e_g-\varepsilon e_g-h_2(\varepsilon),\\
H(L|\widehat L)_{\widetilde\omega}
&\leq-e_g+2\varepsilon e_g+g(\varepsilon).
\end{align}
Subtracting the second inequality from the first gives
\begin{equation}
I(L;\widehat L)_{\widetilde\omega}
\geq2e_g-\delta_L.
\end{equation}
Substitution into~\eqref{eq:residual-chain} proves~\eqref{eq:residual}.
No dimension bound on $E,F$, or $G$ has been used.

\begin{theorem}[Finite-block simultaneous converse]\label{thm:finite-converse}
For every code described in Section~\ref{sec:notation} with normalized trace-distance
error at most $\varepsilon\in[0,1/2]$, the single induced instrument in
\eqref{eq:converse-registers} gives the predecoder state $\omega_{XSFBE}$ described above, satisfying
\begin{align}
c_g-c_c+2(q_g-q_c)
&\leq-I(SX;E|B)_\omega+\delta_L+\delta_{MR},\\
q_g-q_c+e_g-e_c
&\leq I(S\rangle BX)_\omega+\delta_K,\\
c_g-c_c+q_g-q_c+e_g-e_c
&\leq I(S\rangle BX)_\omega-I(X;E|B)_\omega+\delta_M+\delta_K.
\end{align}
The error terms are those of~\eqref{eq:error-terms}. The conclusion places
no sign restriction on the net rates in~\eqref{eq:net}.
\end{theorem}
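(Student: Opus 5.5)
The plan is to derive all three inequalities from the four consequences \eqref{eq:K-bound}, \eqref{eq:MR-bound}, \eqref{eq:M-bound}, and \eqref{eq:residual}, together with the no-signalling identities \eqref{eq:nosignal-consequences}. Each information quantity on the decoder input $Z=BW T_BV$ is expanded by the chain rule into pieces involving $V$, $W$, and $T_B$. Those pieces are then bounded by the dimension constraints $\log_2|V|\leq c_c$, $\log_2|W|\leq q_c$, and $\log_2|T_B|=e_c$. All quantities are evaluated on the single state $\omega$, so one instrument serves all three facets, and no sign case analysis is needed.

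\emph{Second inequality.} Start from \eqref{eq:K-bound} and use $ZM=DBX$ with $S=KD$. This gives
\[
I(K\rangle ZM)_\omega=-H(KD|BX)_\omega+H(D|BX)_\omega=I(S\rangle BX)_\omega+H(WT_B|BX)_\omega .
\]
Subadditivity then bounds $H(WT_B|BX)_\omega\leq\log_2|W|+\log_2|T_B|\leq q_c+e_c$. This yields the claim with error $\delta_K$.

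\emph{Third inequality.} Add \eqref{eq:M-bound} and \eqref{eq:K-bound}. Expand
\[
I(M;Z)_\omega=I(M;B)_\omega+I(M;V|B)_\omega+I(M;D|BV)_\omega ,
\]
where $I(M;B)_\omega=0$ by \eqref{eq:nosignal-consequences}. Combine $I(M;D|BV)_\omega+H(D|BX)_\omega=H(D|BV)_\omega\leq q_c+e_c$. It then remains to show
\[
I(M;V|B)_\omega+I(X;E|B)_\omega\leq c_c .
\]
Using $I(M;BE)_\omega=0$, we have $I(X;E|B)_\omega=I(V;E|BM)_\omega$. Hence the left side equals
\[
H(V|B)_\omega-H(V|BM)_\omega+H(V|BM)_\omega-H(V|BME)_\omega\leq H(V)_\omega\leq c_c ,
\]
because $V$ is classical and so $H(V|BME)_\omega\geq0$.

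\emph{First inequality.} This is the step I expect to require the most care, since it must absorb both $L$ and the supplied entanglement. I would expand $I(SX;E|B)_\omega$ along the ordering $MRT_B$, then $V$, then $W$, then $L$:
\[
I(SX;E|B)_\omega=I(MRT_B;E|B)_\omega+I(V;E|BMRT_B)_\omega+I(W;E|BMRT_BV)_\omega+I(L;E|MRZ)_\omega .
\]
The first term vanishes by \eqref{eq:nosignal-consequences}, and the last is at most $\delta_L$ by \eqref{eq:residual}.

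Separately, $I(MR;Z)_\omega=I(MR;BT_B)_\omega+I(MR;V|BT_B)_\omega+I(MR;W|BT_BV)_\omega$, and the first term vanishes by no-signalling. Adding the two expansions, the $V$ terms combine by the chain rule into $I(V;MRE|BT_B)_\omega\leq H(V)_\omega\leq c_c$, again using that $V$ is classical. The $W$ terms combine into $I(W;MRE|BT_BV)_\omega\leq2\log_2|W|\leq2q_c$.

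Hence $I(MR;Z)_\omega+I(SX;E|B)_\omega\leq c_c+2q_c+\delta_L$. Inserting this into \eqref{eq:MR-bound} gives the first inequality with error $\delta_L+\delta_{MR}$. No sign assumption on the net rates enters anywhere.
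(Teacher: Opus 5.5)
Your proposal is correct and follows the paper's proof essentially step for step. It uses the same chain-rule orderings: $MRT_B$, then $V$, $W$, $L$ for the first facet, with the residual bound on $I(L;E|MRZ)_\omega$; $ZM=DBX$ with $S=KD$ for the second; and $I(X;E|B)_\omega=I(V;E|BM)_\omega$ via no-signalling for the third. It also uses the same no-signalling identities and dimension bounds on $V$, $W$, $T_B$. The only cosmetic differences are these: you split $VW$ before adding the two expansions, whereas the paper first combines them into $I(VW;EMR|BT_B)_\omega$; and you keep $-H(V|BME)_\omega$ until the final step, whereas the paper bounds $I(V;E|BM)_\omega\leq H(V|BM)_\omega$ directly.
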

We prove the three inequalities below, using the same state $\omega$ in each.

\subsection{The first facet: classical communication plus twice quantum communication}
The instrument identifications~\eqref{eq:converse-registers}, the chain rule,
and the exact no-signalling identities in~\eqref{eq:nosignal-consequences} give
\begin{align}
I(SX;E|B)_\omega
&=I(MRLWT_BV;E|B)_\omega\\
&=I(MRT_B;E|B)_\omega+I(VW;E|BMRT_B)_\omega+I(L;E|MRZ)_\omega\\
&\leq I(VW;E|BMRT_B)_\omega+\delta_L.
\label{eq:ell-split}
\end{align}
The last step also uses~\eqref{eq:residual}. Similarly,
\begin{equation}
I(MR;Z)_\omega=I(MR;VW|BT_B)_\omega,
\end{equation}
because $Z=B T_B W V$ and $I(MR;BT_B)_\omega=0$.
Adding this equality to~\eqref{eq:ell-split} completes a chain-rule expansion:
\begin{align}
I(SX;E|B)_\omega+I(MR;Z)_\omega&\leq I(VW;EMR|BT_B)_\omega+\delta_L\\
&=I(V;EMR|BT_B)_\omega
       +I(W;EMR|BT_BV)_\omega+\delta_L\\
&\leq H(V)_\omega+2\log_2|W|+\delta_L\\
&\leq c_c+2q_c+\delta_L.
\label{eq:first-joint-bound}
\end{align}
For clarity, the classical-register bound used here follows from
\begin{equation}
I(V;Y|U)_\nu
=H(V|U)_\nu-H(V|UY)_\nu
\leq H(V|U)_\nu\leq H(V)_\nu
\end{equation}
for every state $\nu_{VUY}$ that is classical on $V$; its conditional entropy
$H(V|UY)_\nu$ is nonnegative. The quantum-register bound follows from
$-\log_2|W|\leq H(W|T)_\nu\leq\log_2|W|$ for every state $\nu_{WT}$, applied
to the two conditional entropies in the conditional mutual information.
Finally, the resource constraints imply $H(V)_\omega\leq\log_2|V|\leq c_c$ and
$\log_2|W|\leq q_c$.

Combining~\eqref{eq:first-joint-bound} with~\eqref{eq:MR-bound} gives
\begin{equation}
c_g-c_c+2(q_g-q_c)
\leq-I(SX;E|B)_\omega+\delta_L+\delta_{MR}.
\label{eq:converse-first}
\end{equation}
The consumed entanglement amount $e_c$ does not appear in this bound. The
cancellation is justified by including $T_B$ in the exact no-signalling
identity in~\eqref{eq:nosignal}.

\subsection{The second facet: quantum communication plus entanglement}
The identifications $Z=BDV$, $X=MV$, and $S=KD$ in
\eqref{eq:converse-registers} give the coherent-information chain rule
\begin{align}
I(K\rangle ZM)_\omega
&=-H(K|DBX)_\omega\\
&=-H(KD|BX)_\omega+H(D|BX)_\omega\\
&=I(S\rangle BX)_\omega+H(D|BX)_\omega.
\label{eq:coherent-converse-chain}
\end{align}
Furthermore, subadditivity and the input dimensions imply
\begin{equation}
H(D|BX)_\omega\leq H(D)_\omega
\leq\log_2|W|+\log_2|T_B|\leq q_c+e_c.
\label{eq:quantum-entanglement-bound}
\end{equation}
Substituting these two expressions into~\eqref{eq:K-bound} proves
\begin{equation}
q_g-q_c+e_g-e_c\leq I(S\rangle BX)_\omega+\delta_K.
\label{eq:converse-second}
\end{equation}

\subsection{The third facet: the sum of all three rates}
First, the classical penalty obeys
\begin{align}
I(X;E|B)_\omega
&=I(MV;E|B)_\omega\\
&=I(M;E|B)_\omega+I(V;E|BM)_\omega\\
&=I(V;E|BM)_\omega\\
&\leq H(V|BM)_\omega.
\label{eq:classical-environment-bound}
\end{align}
The third line follows from~\eqref{eq:nosignal}, and the final line follows
from the nonnegative conditional entropy of the classical register $V$.
The same no-signalling identity gives $I(M;B)_\omega=0$. Consequently,
\begin{align}
I(M;Z)_\omega
&=I(M;V|B)_\omega+I(M;D|BV)_\omega\\
&=H(V|B)_\omega-H(V|BM)_\omega+H(D|BV)_\omega-H(D|BMV)_\omega.
\end{align}
Since $BMV=BX$, adding $H(D|BX)_\omega+I(X;E|B)_\omega$ and
using~\eqref{eq:classical-environment-bound} yields the joint bound
\begin{align}
I(M;Z)_\omega+H(D|BX)_\omega+I(X;E|B)_\omega&\leq H(V|B)_\omega+H(D|BV)_\omega\\
&\qquad\leq c_c+q_c+e_c.
\label{eq:joint-bound}
\end{align}
The last step uses $H(V|B)_\omega\leq\log_2|V|\leq c_c$ and
$H(D|BV)_\omega\leq\log_2|W|+\log_2|T_B|\leq q_c+e_c$.

We can now combine the two decoding estimates~\eqref{eq:M-bound}
and~\eqref{eq:K-bound}, the coherent-information identity in~\eqref{eq:coherent-converse-chain}, and~\eqref{eq:joint-bound}:
\begin{align}
c_g+q_g+e_g
&\leq I(M;Z)_\omega+I(K\rangle ZM)_\omega+\delta_M+\delta_K\\
&=I(S\rangle BX)_\omega+I(M;Z)_\omega+H(D|BX)_\omega+\delta_M+\delta_K\\
&\leq I(S\rangle BX)_\omega-I(X;E|B)_\omega+c_c+q_c+e_c+\delta_M+\delta_K.
\end{align}
Rearranging gives
\begin{equation}
c_g-c_c+q_g-q_c+e_g-e_c
\leq I(S\rangle BX)_\omega-I(X;E|B)_\omega+\delta_M+\delta_K.
\label{eq:converse-third}
\end{equation}
This completes the proof of Theorem~\ref{thm:finite-converse}.

\subsection{Completion of the converse}
The inequalities in~\eqref{eq:converse-first},~\eqref{eq:converse-second},
and~\eqref{eq:converse-third} concern one instrument on $A^n$, with its state
$\omega$ of the form~\eqref{eq:instrument-state}. Divide them by $n$ and use
the net-rate definitions~\eqref{eq:net}. Because the six gross resource amounts
are $O(n)$ and $\varepsilon\to0$, every term in~\eqref{eq:error-terms}
divided by $n$ tends to zero. Let
\begin{equation}
\eta_n\coloneqq\frac1n\max\{\delta_L+\delta_{MR},\,
\delta_K,\,\delta_M+\delta_K\}.
\end{equation}
Subtracting $\eta_n$ from each coordinate of $(C,Q,E)$ reduces the three
left sides in~\eqref{eq:facet1}--\eqref{eq:facet3} by $3\eta_n$,
$2\eta_n$, and $3\eta_n$, respectively. Thus
\begin{equation}
(C-\eta_n,Q-\eta_n,E-\eta_n)
\in\frac1n\cR^{(1)}(\rho_{AB}^{\otimes n}).
\end{equation}
Since $\eta_n\to0$, taking the closure proves the converse
in~\eqref{eq:regularized-region} under the resource-accounting convention of Section~\ref{sec:notation}.

Every code under the conventional resource-accounting definition of
Ref.~\cite{HW10} is also a code under the definition adopted here.
Conversely, the preceding converse bounds the region defined here by the
regularized information region in~\eqref{eq:regularized-region}. The
original direct coding theorem~\cite[Theorem~2 and Section~VI-A]{HW10}
establishes achievability of that region under the conventional
definition. The two operational definitions therefore give the same
capacity region. In particular, simultaneous consumption and generation
with all six resource amounts $O(n)$ does not enlarge the region.

The direct construction in Section~\ref{sec:theorem} and
Appendix~\ref{app:direct} proves achievability under our definition from
instrument compression and state redistribution. The comparison with
conventional accounting uses the earlier direct theorem, without
assuming cancellation of approximate communication resources.

\section{The static capacity formula}\label{sec:formula}
Theorem~\ref{thm:static} characterizes the static capacity region through
an optimization over local instruments acting on arbitrarily many copies
of the shared state. For each instrument, the three
inequalities in~\eqref{eq:facet1}--\eqref{eq:facet3} specify a region of
achievable rate triples. Although this characterization is operationally
complete, evaluating the union of these regions can be difficult.
A useful alternative is to determine the supremum of each weighted
combination of the rates.

In this section, we express these suprema as entropy optimizations and
show how they characterize the entire capacity region. We first take
nonnegative linear combinations of the three rate inequalities and
identify the weights for which the resulting optimization is finite.
After a normalization, two parameters suffice to describe these weights,
apart from the direction corresponding to $C+2Q$. Regularizing the
entropy optimizations then gives the necessary and sufficient
inequalities stated in Theorem~\ref{thm:formula} below. The derivation uses linear-programming duality, following
Section~25.4.2 of Ref.~\cite{Wilde17}.

This formulation also provides a way to establish single-letter
capacity formulas. If the relevant entropy optimizations are additive
on tensor powers of the shared state, their regularization can be
removed. The final part of this section states this criterion
precisely, preparing for the analysis of particular state families
in the sections that follow.

\subsection{Derivation of the static capacity formula}
For a nonempty rate region $\mathcal R\subseteq\mathbb R^3$ and a real
weight vector $w=(w_C,w_Q,w_E)$, its \emph{support function} is defined as
\begin{equation}
h_{\mathcal R}(w)\coloneqq
\sup_{(C,Q,E)\in\mathcal R}(w_C C+w_Q Q+w_E E).
\label{eq:support-function-definition}
\end{equation}
This value, which may be infinite, is the least upper bound on the
weighted sum of rates. Whenever it is finite, the corresponding
inequality defines a half-space containing $\mathcal R$. A closed convex
region is the intersection of all these half-spaces. We now express the
support function of $\cR^{(1)}(\rho)$ as an entropy optimization; its
regularization will give the support function of $\cC_{\rm stat}(\rho)$.

Fix an instrument $\cT$ and its state $\sigma_{XSFBE}$ from
\eqref{eq:instrument-state}. Let $\mathcal P_{\cT}$ denote the polyhedron
specified by the three inequalities in~\eqref{eq:facet1}--\eqref{eq:facet3}.
For a weight vector $w\coloneqq(w_C,w_Q,w_E)\in\mathbb R^3$, computing its
support function is the linear program
\begin{equation}
\begin{aligned}
P_{\cT}(w)&\coloneqq\sup_{C,Q,E\in\mathbb R}(w_C C+w_Q Q+w_E E)\\
\text{subject to}\qquad C+2Q&\leq-I(SX;E|B)_\sigma,\\
Q+E&\leq I(S\rangle BX)_\sigma,\\
C+Q+E&\leq I(S\rangle BX)_\sigma-I(X;E|B)_\sigma.
\end{aligned}
\label{eq:static-primal}
\end{equation}
The rates are unrestricted real variables because they are net rates.
The instrument is fixed throughout this linear program; the optimization
over instruments will be taken afterward. We follow the duality argument
in the proof of Theorem~25.4.1 of Ref.~\cite{Wilde17}.

Associate nonnegative multipliers $a_1,a_2,a_3$ with the three constraints
in~\eqref{eq:static-primal}, respectively, and set
$a\coloneqq(a_1,a_2,a_3)$. Adding each multiplier times the corresponding
right-hand side minus left-hand side to the objective gives the Lagrangian
\begin{equation}
\begin{aligned}
\mathcal L_{\cT}(w;C,Q,E;a)\coloneqq{}&w_C C+w_Q Q+w_E E
+a_1[-I(SX;E|B)_\sigma-C-2Q]\\
&+a_2[I(S\rangle BX)_\sigma-Q-E]\\
&+a_3[I(S\rangle BX)_\sigma-I(X;E|B)_\sigma-C-Q-E].
\end{aligned}
\label{eq:static-lagrangian}
\end{equation}
For a feasible rate triple, every added term is nonnegative. Thus the
supremum of the Lagrangian over all real $C,Q,E$ is an upper bound on
$P_{\cT}(w)$. The coefficients of these variables in the Lagrangian are,
respectively, $w_C-a_1-a_3$, $w_Q-2a_1-a_2-a_3$, and $w_E-a_2-a_3$.
If one is nonzero, the unrestricted supremum is infinite. Otherwise all
three rate variables disappear. The dual linear program is consequently
\begin{equation}
\begin{aligned}
D_{\cT}^{\rm dual}(w)\coloneqq
\inf_{a_1,a_2,a_3\geq0}\ \big\{&-a_1 I(SX;E|B)_\sigma
+(a_2+a_3)I(S\rangle BX)_\sigma-a_3 I(X;E|B)_\sigma\big\}\\
\text{subject to}\quad
&a_1+a_3=w_C,\quad 2a_1+a_2+a_3=w_Q,\quad a_2+a_3=w_E.
\end{aligned}
\label{eq:static-dual}
\end{equation}
An infimum over an empty feasible set is understood as $+\infty$.

For comparison with the geometry of the rate region, define its facet normals
\begin{equation}
n_1\coloneqq(1,2,0),\qquad n_2\coloneqq(0,1,1),\qquad n_3\coloneqq(1,1,1).
\label{eq:static-facet-normals}
\end{equation}
The dual equality constraints in~\eqref{eq:static-dual} say precisely that
\begin{equation}
w=a_1n_1+a_2n_2+a_3n_3
=(a_1+a_3,\;2a_1+a_2+a_3,\;a_2+a_3).
\label{eq:polar}
\end{equation}
This system has the unique solution
\begin{align}
a_1&=\tfrac12(w_Q-w_E)=-\tfrac12w\cdot u_{\rm ED},\label{eq:dual-a1}\\
a_2&=\tfrac12(-2w_C+w_Q+w_E)=-\tfrac12w\cdot u_{\rm SD},\label{eq:dual-a2}\\
a_3&=\tfrac12(2w_C-w_Q+w_E)=-\tfrac12w\cdot u_{\rm TP}.
\label{eq:dual-a3}
\end{align}
Thus the dual is feasible exactly when these three numbers are nonnegative.

The primal is always feasible: the classically assisted state-redistribution
point $v_{\cT}\coloneqq(C_0,Q_0,E_0)$ from
\eqref{eq:CASR-C}--\eqref{eq:CASR-E} simultaneously saturates all three
constraints in~\eqref{eq:static-primal}.
Suppose that the dual is feasible, so that the uniquely determined
multipliers in~\eqref{eq:dual-a1}--\eqref{eq:dual-a3} are nonnegative.
The dual equality constraints and saturation of the primal constraints
at $v_{\cT}$ give
\begin{equation}
w\cdot v_{\cT}
=-a_1 I(SX;E|B)_\sigma
+(a_2+a_3)I(S\rangle BX)_\sigma
-a_3 I(X;E|B)_\sigma
=D_{\cT}^{\rm dual}(w).
\label{eq:static-duality-attainment}
\end{equation}
We can therefore verify strong duality directly:
\begin{equation}
w\cdot v_{\cT}
\leq P_{\cT}(w)
\leq D_{\cT}^{\rm dual}(w)
=w\cdot v_{\cT}.
\end{equation}
The first inequality follows from primal feasibility of $v_{\cT}$,
the second follows from weak duality, and the final equality follows
from~\eqref{eq:static-duality-attainment}.
Since the endpoints coincide, all inequalities are equalities.
Thus
$P_{\cT}(w)=D_{\cT}^{\rm dual}(w)$,
with the primal optimum attained at $v_{\cT}$ and the dual optimum
attained at the multipliers in~\eqref{eq:dual-a1}--\eqref{eq:dual-a3}.

If instead one of the numbers in~\eqref{eq:dual-a1}--\eqref{eq:dual-a3}
is negative, the corresponding unit-resource vector has a positive inner
product with $w$. Adding arbitrarily large nonnegative multiples of this
vector to $v_{\cT}$ preserves feasibility and makes the objective unbounded.
Geometrically, $\mathcal P_{\cT}=v_{\cT}+\cU$, with $\cU$ defined in
\eqref{eq:unit}; finite support requires $w\cdot u\leq0$ for every
$u\in\cU$. Thus the dual-feasibility conditions identify exactly the
directions with finite support.

Taking the supremum over instruments now motivates the definition
\begin{equation}
D_{a_1,a_2,a_3}(\rho)\coloneqq\sup_{\cT}
\big\{-a_1 I(SX;E|B)_\sigma+(a_2+a_3)I(S\rangle BX)_\sigma
-a_3 I(X;E|B)_\sigma\big\},\qquad a_1,a_2,a_3\geq0.
\label{eq:unnormalized}
\end{equation}
For dual-feasible weights $w$, the multipliers in
\eqref{eq:dual-a1}--\eqref{eq:dual-a3} depend only on $w$, and the
fixed-instrument duality calculation gives
\begin{equation}
\sup_{(C,Q,E)\in\cR^{(1)}(\rho)}(w_C C+w_Q Q+w_E E)
=\sup_{\cT}P_{\cT}(w)=D_{a_1,a_2,a_3}(\rho).
\label{eq:unnormalized-support-interpretation}
\end{equation}
No interchange between an instrument optimization and a dual minimization
is needed: the rate linear program was solved for each instrument first.

This supremum is finite even though no uniform auxiliary dimension has
been imposed. Indeed, $I(S\rangle BX)_\sigma\leq H(B)_\rho$, both
conditional mutual informations in~\eqref{eq:unnormalized} are nonnegative,
and the trivial instrument gives zero. Hence
\begin{equation}
0\leq D_{a_1,a_2,a_3}(\rho)\leq(a_2+a_3)H(B)_\rho.
\label{eq:unnormalized-bound}
\end{equation}

If $s\coloneqq a_2+a_3>0$, divide by $s$ and set
\begin{equation}
\mu\coloneqq a_1/s,\qquad
\lambda\coloneqq(a_1+a_3)/s.
\end{equation}
The normalized dual multipliers are $a_1/s=\mu$,
$a_2/s=1+\mu-\lambda$, and $a_3/s=\lambda-\mu$. Their
nonnegativity is equivalent to
\begin{equation}
\mu\geq0,\qquad\mu\leq\lambda\leq\mu+1,
\label{eq:parameters}
\end{equation}
and $w/s=(\lambda,1+2\mu,1)$. The chain rule
\begin{equation}
I(SX;E|B)_\sigma=I(X;E|B)_\sigma+I(S;E|BX)_\sigma
\end{equation}
then gives the quantum static capacity formula
\begin{equation}
\sS_{\lambda,\mu}(\rho)\coloneqq\sup_{\cT}
\big\{I(S\rangle BX)_\sigma-\lambda I(X;E|B)_\sigma
-\mu I(S;E|BX)_\sigma\big\}.
\label{eq:support}
\end{equation}
Thus, for the weights in~\eqref{eq:parameters},
\begin{equation}
\sS_{\lambda,\mu}(\rho)
=h_{\cR^{(1)}(\rho)}(\lambda,1+2\mu,1).
\label{eq:onecopy-support-interpretation}
\end{equation}
The static capacity formula therefore evaluates the support function of
the one-copy information region in these normalized directions.
When $s=0$, only a nonnegative multiple of $n_1$ remains. Its support is zero,
since $I(SX;E|B)_{\sigma}\geq0$ and the trivial instrument has $I(SX;E|B)_{\sigma}=0$.

\subsection{Regularization and the exact half-space representation}
The following inequality holds for all bipartite states $\rho$ and $\tau$:
\begin{equation}
\sS_{\lambda,\mu}(\rho\otimes\tau)
\geq\sS_{\lambda,\mu}(\rho)+\sS_{\lambda,\mu}(\tau),
\label{eq:superadd}
\end{equation}
by setting the instrument in the optimization on the left-hand side to be a tensor product of instruments.
The trivial instrument gives zero. Since $X$ is classical,
$H(SB|X)_\sigma\geq0$, and thus
$I(S\rangle BX)_{\sigma}\leq H(B|X)_{\sigma}\leq H(B)_{\sigma}$.
The two conditional mutual information penalties are nonnegative, so
\begin{equation}
0\leq\sS_{\lambda,\mu}(\rho)\leq H(B)_\rho.
\end{equation}

Fix $\mu\geq0$ and $\mu\leq\lambda\leq\mu+1$.
Applying the preceding bound to $\rho_{AB}^{\otimes n}$ gives,
for all $n\in\mathbb{N}$,
\begin{equation}
0\leq \sS_{\lambda,\mu}(\rho^{\otimes n})
\leq H(B^n)_{\rho^{\otimes n}}
=nH(B)_\rho.
\end{equation}
The last equality follows from additivity of entropy on product states.
Thus the normalized quantities
$\sS_{\lambda,\mu}(\rho^{\otimes n})/n$ are uniformly bounded.
Together with superadditivity in~\eqref{eq:superadd},
this bound allows us to apply Fekete's lemma, giving the finite limit
\begin{equation}
\sS_{\lambda,\mu}^{\reg}(\rho)
\coloneqq
\lim_{n\to\infty}\frac1n
\sS_{\lambda,\mu}(\rho^{\otimes n})
=
\sup_{n\in\mathbb{N}}\frac1n
\sS_{\lambda,\mu}(\rho^{\otimes n}).
\label{eq:reg-support}
\end{equation}

To see this in detail, fix $k\in\mathbb{N}$.
For $n\geq k$, write $n=mk+r$, where
$m\coloneqq\lfloor n/k\rfloor$ and $0\leq r<k$.
Repeated superadditivity, together with nonnegativity of the
remainder contribution when $r>0$, gives
\begin{equation}
\frac1n\sS_{\lambda,\mu}(\rho^{\otimes n})
\geq
\frac{m}{n}\sS_{\lambda,\mu}(\rho^{\otimes k}).
\end{equation}
When $r=0$, the same inequality follows directly from
superadditivity, without a remainder term.
Since $m/n\to1/k$ as $n\to\infty$, it follows that
\begin{equation}
\liminf_{n\to\infty}\frac1n
\sS_{\lambda,\mu}(\rho^{\otimes n})
\geq
\frac1k\sS_{\lambda,\mu}(\rho^{\otimes k}).
\end{equation}
This holds for every $k\in\mathbb{N}$, so
\begin{equation}
\liminf_{n\to\infty}\frac1n
\sS_{\lambda,\mu}(\rho^{\otimes n})
\geq
\sup_{k\in\mathbb{N}}\frac1k
\sS_{\lambda,\mu}(\rho^{\otimes k}).
\end{equation}
Conversely, every term of the normalized sequence is at most
this supremum, so its limit superior is also at most the
supremum. The limit therefore exists and equals the finite
supremum in~\eqref{eq:reg-support}.

The regularized information region is convex. Indeed, time-sharing codes on
separate blocks realizes rational mixtures of their rate triples, and limits
realize arbitrary mixtures. It is closed by definition and is invariant under
addition of $\cU$. Its support is the supremum of the supports of its block
regions; taking a closure does not change the supremum of a continuous linear
functional. In the notation of~\eqref{eq:support-function-definition},
\begin{equation}
h_{\cC_{\rm stat}(\rho)}(\lambda,1+2\mu,1)
=\sS_{\lambda,\mu}^{\reg}(\rho).
\label{eq:operational-support-interpretation}
\end{equation}
The regularized static capacity formula is therefore the support function
of the operational region in these directions. We have thus proved the
following characterization.
\begin{theorem}[Static capacity formula]\label{thm:formula}
For every bipartite state $\rho_{AB}$, its static capacity region
$\cC_{\rm stat}(\rho)$ from Section~\ref{sec:notation} consists
exactly of the triples $(C,Q,E)\in\mathbb R^3$ satisfying
\begin{equation}
C+2Q\leq0,\qquad
\lambda C+(1+2\mu)Q+E\leq\sS_{\lambda,\mu}^{\reg}(\rho)
\quad(\mu\geq0,\ \mu\leq\lambda\leq\mu+1).
\label{eq:halfspace}
\end{equation}
The second inequality is required for every indicated pair $(\lambda,\mu)$,
and $\sS_{\lambda,\mu}^{\reg}$ is defined in~\eqref{eq:reg-support}.
\end{theorem}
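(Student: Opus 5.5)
Write $\mathcal H$ for the set of triples satisfying \eqref{eq:halfspace}. The plan is to identify $\cC_{\rm stat}(\rho)$ with the intersection of its supporting half-spaces, then check that the half-spaces with finite support are exactly the ones listed in \eqref{eq:halfspace}.

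\emph{Structure of the region.} By Theorem~\ref{thm:static}, $\cC_{\rm stat}(\rho)$ equals the closure of the union of the block regions $\frac1n\cR^{(1)}(\rho^{\otimes n})$. As argued just before the statement, this set is closed, convex, and invariant under addition of $\cU$. It is nonempty, since the trivial instrument puts $0$ in it. A nonempty closed convex set equals the intersection of the half-spaces $\{r: w\cdot r\le h(w)\}$ over all $w\in\mathbb R^3$, where $h$ is its support function. Half-spaces with $h(w)=+\infty$ impose no constraint, so only directions with finite support matter.

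\emph{Directions with finite support.} Because the region is invariant under $\cU$ and nonempty, $h(w)<\infty$ forces $w\cdot u\le0$ for all $u\in\cU$. By \eqref{eq:dual-a1}--\eqref{eq:dual-a3}, this is exactly nonnegativity of $a_1,a_2,a_3$, i.e.\ $w=a_1n_1+a_2n_2+a_3n_3$ with $a_i\ge0$. Conversely, every such $w$ has finite support. There are two cases.
\begin{itemize}
\item If $s=a_2+a_3>0$, then by positive homogeneity we may rescale to $w/s=(\lambda,1+2\mu,1)$ with $(\lambda,\mu)$ in the range \eqref{eq:parameters}. By \eqref{eq:operational-support-interpretation} the support there is $\sS^{\reg}_{\lambda,\mu}(\rho)$, which is finite.
\item If $s=0$, then $w=a_1n_1$. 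Its support on each block region is $0$, by the remark after \eqref{eq:onecopy-support-interpretation} applied to $\rho^{\otimes n}$. Closure and union do not change this, so the constraint is $C+2Q\le0$.
\end{itemize}
Hence the intersection of all finite-support half-spaces is exactly $\mathcal H$, since a positive rescaling of $w$ gives the same half-space.

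\emph{Main obstacle.} The delicate point is justifying \eqref{eq:operational-support-interpretation}, namely that the support of the closure of the union equals $\sup_n\frac1n\sS_{\lambda,\mu}(\rho^{\otimes n})$, which by Fekete is the limit $\sS^{\reg}_{\lambda,\mu}$. The supremum of a linear functional over a union is the supremum of its suprema over the pieces. It is unchanged by taking closure, by continuity. It scales by $1/n$ under the dilation $\frac1n\cR^{(1)}$. Each piece is evaluated by \eqref{eq:onecopy-support-interpretation} applied to $\rho^{\otimes n}$. This gives $h_{\cC_{\rm stat}}$, and also $\cC_{\rm stat}\subseteq\mathcal H$.

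The reverse inclusion $\mathcal H\subseteq\cC_{\rm stat}$ follows from the half-space representation of closed convex sets, together with the fact that every finite-support direction is covered by \eqref{eq:halfspace}. That fact needs convexity and closedness of the region, which rests on the time-sharing argument above. I would state explicitly that rational mixtures are realized by concatenating block codes, and that closure then yields all mixtures.
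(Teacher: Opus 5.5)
Your proposal is correct and takes essentially the same approach as the paper. It uses closedness, convexity and $\cU$-invariance of the regularized region, identifies the finite-support directions with nonnegative multipliers $a_1,a_2,a_3$, and computes the support through the per-instrument duality in \eqref{eq:onecopy-support-interpretation} together with union, closure, $1/n$ scaling and Fekete's lemma. The paper phrases the reverse inclusion as strict separation of an exterior point rather than as an intersection of supporting half-spaces, but the two formulations are equivalent.
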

For the converse implication in this geometric statement, a point outside a
closed convex set can be strictly separated from it. The support function is finite in the separating direction, so this
direction has the decomposition in~\eqref{eq:polar} with nonnegative
coefficients. It is either
$n_1$ up to scaling or one of the normalized directions above. The corresponding
inequality in~\eqref{eq:halfspace} excludes the point. This proves sufficiency
as well as necessity of the displayed family.

\subsection{The precise single-letterization criterion}\label{sec:single-letterization}

If, for every admissible $\lambda,\mu$ and all $n\in\mathbb{N}$,
\begin{equation}
\sS_{\lambda,\mu}(\rho^{\otimes n})=n\sS_{\lambda,\mu}(\rho),
\label{eq:selfadd}
\end{equation}
then~\eqref{eq:halfspace} has a one-copy right side. Moreover,
$\cR^{(1)}(\rho)$ is convex: choose an independent classical coin to select
between instruments and include it in $X$. Its marginal is independent of $BE$,
so its additional $I(X;E|B)_{\sigma}$ contribution is zero and all three facet values
are the corresponding averages. The closure of this one-copy region has the
same support as the operational region under~\eqref{eq:selfadd}; hence
\begin{equation}
\cC_{\rm stat}(\rho)=\overline{\cR^{(1)}(\rho)}.
\label{eq:single-letter-region}
\end{equation}
Strong additivity with every second state implies~\eqref{eq:selfadd};
the argument here only requires self-additivity.

\section{State families with an exact capacity region}\label{sec:exact-families}

In this section, we evaluate the static capacity formula of
Theorem~\ref{thm:formula} for three classes of bipartite states:
symmetrically extendible states, erased states, and locally flagged
pure-state mixtures. For each family, the optimization can be
controlled at every blocklength, giving a single-letter
characterization of the entire capacity region. The resulting
regions have simple descriptions in terms of the unit-resource
cone~\eqref{eq:unit} and the additional resources obtainable from
the shared state. Each characterization follows by combining
achievable protocols with converse bounds that apply to arbitrary
collective instruments.

\subsection{Symmetrically extendible states}
We first consider states that admit a second copy of Bob's marginal
correlations with Alice. These are precisely the antidegradable states
of Ref.~\cite[Definition~2 and Lemma~30]{LDS18}.
More explicitly, if $\psi_{ABE}$ purifies $\rho_{AB}$, antidegradability
means that a channel $\mathcal A\colon E\to B'$ satisfies
$(\id_A\otimes\mathcal A)(\psi_{AE})=\rho_{AB'}$, where $B'\simeq B$.
Applying this channel to $E$ gives an extension with equal $AB$ and
$AB'$ marginals. Conversely, every such extension is obtained from
$\psi_{ABE}$ by a channel on the purifying system $E$: purify the
extension and use the isometric equivalence of purifications, then
discard the additional purifying system. Averaging an extension with
its image under the swap of $B$ and $B'$ makes it symmetric.
This equivalence relates the present family to the degradable states
considered in Section~\ref{sec:hadamard}.

\begin{theorem}\label{thm:extendible}
Suppose a bipartite state $\rho_{AB}$ has an extension $\xi_{ABB'}$ with
$\xi_{AB'}=\rho_{AB}$ under an identification $B'\simeq B$.
Then
\begin{equation}
\cC_{\rm stat}(\rho)=\cU.
\end{equation}
Here $\cC_{\rm stat}$ is the region of Section~\ref{sec:notation}, and
$\cU$ is the unit-resource cone in~\eqref{eq:unit}.
\end{theorem}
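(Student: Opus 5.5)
The plan is to use Theorem~\ref{thm:formula}. I will show that $\sS^{\reg}_{\lambda,\mu}(\rho)=0$ for every admissible pair $(\lambda,\mu)$, and then check that the resulting half-space description is exactly $\cU$.

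\emph{Achievability.} The trivial instrument has $S$ and $X$ trivial, so all three quantities in \eqref{eq:facet1}--\eqref{eq:facet3} vanish. Hence $\cR^{(1)}(\rho)$ contains every triple with $C+2Q\le0$, $Q+E\le0$ and $C+Q+E\le0$, which is all of $\cU$. Theorem~\ref{thm:static} then gives $\cU\subseteq\cC_{\rm stat}(\rho)$. The same conclusion follows from the unit protocols alone, since these consume their own resources.

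\emph{Converse.} First I reduce to a single-copy statement.
\begin{itemize}
\item Extendibility tensorizes: $\xi^{\otimes n}_{ABB'}$ is an extension of $\rho^{\otimes n}$ with $\xi^{\otimes n}_{A^nB'^n}=\rho^{\otimes n}_{A^nB^n}$. It therefore suffices to bound $\sS_{\lambda,\mu}$ for an arbitrary extendible state.
\item The two penalties $I(X;E|B)_\sigma$ and $I(S;E|BX)_\sigma$ are nonnegative, and $\lambda,\mu\ge0$. So it is enough to prove $I(S\rangle BX)_\sigma\le0$ for every instrument on $A$.
\end{itemize}

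\emph{Key step.} Apply the instrument to the $A$ system of $\xi_{ABB'}$. This produces a state $\sigma_{XSBB'}$, classical on $X$.
\begin{itemize}
\item The instrument acts only on $A$, and $\xi_{AB'}=\rho_{AB}$ under $B'\simeq B$. Hence the $XSB'$ marginal coincides with the $XSB$ marginal, which is the state in \eqref{eq:instrument-state} with $F$ traced out. In particular the outcome probabilities agree.
\item For each outcome $x$, weak monotonicity (equivalent to strong subadditivity via purification) applied to the conditional state on $SBB'$ gives $H(S|B)_{\sigma^x}+H(S|B')_{\sigma^x}\ge0$.
\item Averaging over $x$ gives $2H(S|BX)_\sigma\ge0$, that is, $I(S\rangle BX)_\sigma\le0$.
\end{itemize}

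\emph{Conclusion.} The previous step gives $\sS_{\lambda,\mu}(\rho^{\otimes n})\le0$. Combined with the lower bound $\sS_{\lambda,\mu}\ge0$, this yields $\sS_{\lambda,\mu}(\rho^{\otimes n})=0$ for all $n$, hence $\sS^{\reg}_{\lambda,\mu}(\rho)=0$. Theorem~\ref{thm:formula} then says that $\cC_{\rm stat}(\rho)$ is cut out by $C+2Q\le0$ and $\lambda C+(1+2\mu)Q+E\le0$ for all admissible $(\lambda,\mu)$. Two admissible choices recover the remaining facets of~\eqref{eq:unit}:
\begin{itemize}
\item $(\lambda,\mu)=(0,0)$ gives $Q+E\le0$;
\item $(\lambda,\mu)=(1,0)$ gives $C+Q+E\le0$.
\end{itemize}
Therefore $\cC_{\rm stat}(\rho)\subseteq\cU$, and with achievability, equality holds.

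The only step needing care is the key step: checking that the $XSB'$ and $XSB$ marginals really coincide for an arbitrary instrument, including its classical outcome. This holds because the instrument's action, including the outcome distribution, is determined by the $A$-side operators applied to the common marginal.
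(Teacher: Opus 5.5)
Your proof is correct and follows the paper's argument. The key step is the same: weak monotonicity applied to the instrument acting on the extension gives $I(S\rangle BX)_\sigma\leq0$, and this combines with nonnegativity of the penalties and the tensorization of the extension property. The only difference is bookkeeping at the end. The paper notes that every information polyhedron then lies in $\cU$ and invokes Theorem~\ref{thm:static} directly, whereas you pass through $\sS^{\reg}_{\lambda,\mu}(\rho)=0$ and Theorem~\ref{thm:formula}, which is equally valid.
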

\begin{proof}
Apply an arbitrary instrument on $A$ to the extension and denote its output by
$\sigma_{XSFBB'}$. Its $SBX$ and $SB'X$ marginals
are equal. Weak monotonicity, averaged over the classical $X$, gives
\begin{equation}
0\leq H(S|BX)_{\sigma}+H(S|B'X)_{\sigma}=2H(S|BX)_{\sigma}.
\end{equation}
Thus $I(S\rangle BX)_{\sigma}\leq0$. The conditional mutual information penalties in \eqref{eq:facet1}--\eqref{eq:facet3} are nonnegative by strong subadditivity, so each information
polyhedron is contained in $\cU$. Tensor powers have the same extension
property. The static theorem proves the upper inclusion, and ignoring the
state proves the lower inclusion.
\end{proof}
Every separable state has such an extension: in a decomposition
$\rho_{AB}=\sum_zq_z\alpha^{z}_{A}\otimes\beta^{z}_{B}$, use
$\sum_zq_z\alpha^{z}_{A}\otimes\beta^{z}_{B}\otimes\beta^{z}_{B'}$.
States produced by antidegradable channels also qualify. If $E\to B'$ simulates
the channel output from the complementary output, applying this map to a
purification produces the required extension.

\subsection{Erased pure states}\label{sec:erased-states}
The static capacity region of an erased maximally entangled Bell state
was characterized in Ref.~\cite[Section~VIII-A]{HW10}. Here the input
may be an arbitrary pure bipartite state, and we prove the converse directly
from subset-entropy inequalities that apply to every collective
instrument, following the methods of Refs.~\cite{MW23,GHW22}.

Let $\phi_{RA}$ be a pure bipartite state. Set
$\tau_A\coloneqq\phi_A$ and $h\coloneqq H(A)_\phi$. Define the quantum
erasure channel $\cE_p\colon A\to B$ with erasure probability
$p\in[0,1]$ as
\begin{equation}
\cE_p(\xi)\coloneqq(1-p)\xi\oplus p\Tr[\xi]\proj{\perp}.
\label{eq:erasure-channel}
\end{equation}
Here $B$ is the direct sum of a copy of $A$ and an orthogonal erasure
flag $\ket\perp_B$, and $\xi$ is an operator on $A$. Define an isometric
extension $U_p\colon A\to BE$ by its action on a vector $\ket v_A$:
\begin{equation}
U_p\ket{v}_{A}
\coloneqq\sqrt{1-p}\ket{v}_{B}\ket{\perp}_{E}
+\sqrt p\ket{\perp}_{B}\ket{v}_{E}.
\label{eq:erasure-isometry}
\end{equation}
The environment $E$ has the same direct-sum structure as $B$. Define
\begin{equation}
\psi^p_{RBE}\coloneqq(I_R\otimes U_p)\phi_{RA}(I_R\otimes U_p^\dagger),
\qquad\rho^p_{RB}\coloneqq\Tr_E[\psi^p_{RBE}].
\label{eq:erased-state-definition}
\end{equation}
Equivalently,
\begin{equation}
\rho^{p}_{RB}=(1-p)\phi_{RA\cong B}\oplus p\phi_{R}\otimes\proj{\perp}_{B}.
\end{equation}
The one-party output entropies are given by
\begin{align}
H(B)_{\psi^p}&=h_2(p)+(1-p)h,\\
H(E)_{\psi^p}&=h_2(p)+ph.
\end{align}
By isometric invariance, the \emph{joint} entropy is
\begin{equation}
H(BE)_{\psi^p}=H(A)_\phi=h.
\end{equation}
Thus
\begin{equation}
H(E|B)_{\psi^p}=ph-h_2(p).
\label{eq:fixed-conditional}
\end{equation}

\subsubsection{Evaluation of the static capacity formula}\label{sec:erased-support}
\begin{theorem}[Static capacity formula for erased states]
\label{thm:erased-support}
Let $\phi_{RA}$ be a pure bipartite state, let $p\in[0,1]$, and let
$\rho^p_{RB}$ be the corresponding erased state defined
in~\eqref{eq:erased-state-definition}.
Set $h\coloneqq H(A)_\phi$.
For all $n\in\mathbb{N}$ and every pair $(\lambda,\mu)$ satisfying
$\mu\geq0$ and $\mu\leq\lambda\leq\mu+1$, the static capacity formula
defined in~\eqref{eq:support} satisfies
\begin{equation}
\sS_{\lambda,\mu}\bigl((\rho^p)^{\otimes n}\bigr)
=nh\max\{0,1-2p(1+\mu)\}.
\label{eq:erased-support}
\end{equation}
\end{theorem}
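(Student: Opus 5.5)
The plan is to prove matching lower and upper bounds on $\sS_{\lambda,\mu}\bigl((\rho^p)^{\otimes n}\bigr)$. Throughout, Alice's system is $R$, and the erasure acts on $A$ to produce $B$ and $E$.

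\emph{Lower bound.} The value $0$ comes from the trivial instrument. For the other branch, I would use the identity instrument with $S=R^n$, trivial $X$, and trivial $F$, so that the state $\sigma$ is the pure state $(\psi^p)^{\otimes n}$. Purity and~\eqref{eq:fixed-conditional} give
\begin{equation}
I(R\rangle B)_{\psi^p}=H(B)-H(E)=(1-2p)h .
\end{equation}
The penalty is
\begin{equation}
I(R;E|B)_{\psi^p}=H(E|B)-H(E|RB)=H(E|B)+H(E)=2ph .
\end{equation}
The $\lambda$ term vanishes because $X$ is trivial. By additivity on products, this instrument yields exactly $nh\,[1-2p(1+\mu)]$.

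\emph{Upper bound: reduction to a single-parameter objective.} Fix an arbitrary collective instrument on $R^n$. Since $\lambda\geq\mu$ and $I(X;E|B)\geq0$, the chain rule gives
\begin{equation}
I(S\rangle BX)-\lambda I(X;E|B)-\mu I(S;E|BX)\leq I(S\rangle BX)-\mu I(SX;E|B).
\end{equation}
It therefore suffices to bound the right-hand side by $nh\max\{0,1-2p(1+\mu)\}$.

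\emph{Upper bound: subset-entropy form.} Next I would condition on the erasure pattern. Bob's flags reveal a random subset $T\subseteq[n]$ in which each index is included independently with probability $1-p$. Given $T$, Bob holds $A_T$ and Eve holds $A_{T^c}$, where $A^n$ is the pre-erasure system purifying $R^n$. Because the flags are classical and shared between $B$ and $E$, every entropy becomes an average over $T$. Using purity of $SFA^n$ conditioned on $x$, one gets
\begin{equation}
I(S\rangle BX)=\EE_T\bigl[H(A_T|X)-H(FA_{T^c}|X)\bigr],
\end{equation}
and $I(SX;E|B)$ similarly becomes $\EE_T\,I(SX;A_{T^c}|A_T)$. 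The target is then an inequality of the form
\begin{equation}
\EE_T\bigl[f(T)\bigr]-\mu\,\EE_T\bigl[g(T)\bigr]\leq nh\max\{0,1-2p(1+\mu)\}
\end{equation}
for set functions $f,g$ built from entropies of $X$, $S$, $F$, and subsets of the $A_i$. Here $H(A_i)=h$ for each $i$, and $H(A_{[n]}|X)\leq nh$.

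\emph{Main obstacle and intended route.} The key step, which I expect to be the main obstacle, is establishing the Bernoulli subset-entropy inequalities in the spirit of~\cite[Lemmas~3 and~4]{MW23} without any product assumption. I would first write each expectation as a telescoping sum of single-element increments $H(A_i A_U\cdots)-H(A_U\cdots)$ over a random ordering, following~\cite{GHW22}. Strong subadditivity makes these increments monotone in the conditioning set, and weak monotonicity pairs a Bob increment with the corresponding Eve increment. This should yield two linear inequalities:
\begin{itemize}
\item[(a)] $\EE_T f\leq(1-2p)\,nh+\kappa\,\EE_T g$ with an explicit coefficient $\kappa$;
\item[(b)] $\EE_T f\leq\frac{1}{2p}\,\EE_T g$, which comes from comparing how Bob's and Eve's shares grow as $p$ varies.
\end{itemize}
A convex combination of (a) and (b), chosen according to whether $1-2p(1+\mu)$ is positive, would then give the maximum in~\eqref{eq:erased-support}. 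If the coefficients do not match directly, my fallback is to differentiate $\EE_{T\sim p}$ with respect to $p$ (Russo-type formula). I would then show that the objective, normalized by $nh$, is dominated by the piecewise-linear function $\max\{0,1-2p(1+\mu)\}$, using its values at $p=0$ and $p=1/2$ and concavity derived from strong subadditivity.
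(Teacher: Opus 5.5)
Your lower bound (identity and trivial instruments), the removal of $(\lambda-\mu)I(X;E^n|B^n)_\sigma$, and the rewriting of both information quantities as averages over the surviving set $T$ are correct and match the paper's setup. The gap is in the upper bound. It goes beyond the unproved subset inequalities: inequality (b) has the wrong constant, so the final convex-combination step fails even if (a) and (b) are granted.

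\textbf{Why (b) as stated cannot work.} Write $I_{\mathrm{coh}}\coloneqq I(S\rangle B^nX)_\sigma$ and $I_{\mathrm{pen}}\coloneqq I(SX;E^n|B^n)_\sigma$. The identity instrument has $I_{\mathrm{coh}}=(1-2p)nh$ and $I_{\mathrm{pen}}=2pnh$, which forces $\kappa\geq0$ in (a). For every such $\kappa$, (a) and (b) are both satisfied by the pair $I_{\mathrm{coh}}=(1-2p)nh$, $I_{\mathrm{pen}}=2p(1-2p)nh$. At this pair, $I_{\mathrm{coh}}-\mu I_{\mathrm{pen}}=(1-2p)(1-2p\mu)nh$. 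This exceeds the claimed value $(1-2p-2p\mu)nh$ by $4p^2\mu nh$, and it vanishes only for $\mu\geq1/(2p)$ instead of $\mu\geq(1-2p)/(2p)$. Hence no nonnegative combination of (a) and (b) yields~\eqref{eq:erased-support}.

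What is needed is $2p\,I_{\mathrm{coh}}\leq(1-2p)\,I_{\mathrm{pen}}$, which the identity instrument saturates; checking (b) on that instrument shows it is not tight. Together with $I_{\mathrm{coh}}\leq(1-2p)nh$, the paper obtains the kink by interpolating in $\mu$ between $0$ and $\mu_*\coloneqq(1-2p)/(2p)$ at fixed $p$.

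Your fallback cannot substitute for this. For a fixed instrument, the objective is a concave function of $p$ minus another concave function of $p$, so it has no definite curvature. Even a convex objective would only inherit the chord bound $(1-2p)nh$ from its values at $p=0$ and $p=1/2$, and that bound has no $\mu$-dependence.

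\textbf{The missing idea.} You need a way to prove $2p\,I_{\mathrm{coh}}\leq(1-2p)\,I_{\mathrm{pen}}$ when Bob's surviving set has density $1-p$ and Eve's has density $p$; weak monotonicity cannot be applied to these two sets directly. Let $J_r$ include each index independently with probability $r$, and set $f(r)\coloneqq\EE_{J_r}H(A_{J_r}|X)_\omega$ and $g(r)\coloneqq\EE_{J_r}H(FA_{J_r}|X)_\omega$ for the pre-erasure state. Then $I_{\mathrm{coh}}=f(1-p)-g(p)$ and $I_{\mathrm{pen}}=pnh+g(p)-H(F|X)_\omega$. The required bound becomes $g(p)\geq2pf(1-p)+(1-2p)H(F|X)_\omega-p(1-2p)nh$ for $p\leq1/2$. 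The paper proves it by passing through density $1/2$, in three steps:
\begin{enumerate}
\item Concavity of $g$, from strong subadditivity applied to mixed second differences of the multilinear extension, gives $g(p)\geq(1-2p)g(0)+2pg(1/2)$ with $g(0)=H(F|X)_\omega$.
\item Weak monotonicity, $H(F|A_JX)_\omega+H(F|A_{J^c}X)_\omega\geq0$, summed over the uniform law on subsets gives $g(1/2)\geq f(1/2)$. This works only at $r=1/2$, where the law is invariant under $J\mapsto J^c$.
\item The bound $H(A_i|A_JX)_\omega\leq h$ gives $f(1-p)-f(1/2)\leq(1/2-p)nh$. After multiplication by $2p$, this produces the term $p(1-2p)nh$, the cost of returning to Bob's density.
\end{enumerate}
The same bound, with $f(1-p)\leq(1-p)nh$ and $H(F|X)_\omega\geq0$, also yields (a) with $\kappa=0$. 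Your sketch names strong subadditivity and weak monotonicity but not this chain through $r=1/2$, which is the core of the proof.

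\textbf{The regime $p>1/2$.} This case needs its own argument, because $I_{\mathrm{coh}}\leq(1-2p)nh$ fails there: the trivial instrument has $I_{\mathrm{coh}}=0$. The paper uses the midpoint inequality at $p=1/2$. It then realizes $\cE_p$ as $\cE_{1/2}$ followed by further erasure of surviving systems, which can only decrease $I_{\mathrm{coh}}$.
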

\begin{proof}
The proof is given in Appendix~\ref{app:erased-support}. It uses the
subset-entropy inequalities in Lemmas~\ref{lem:subset-concavity}--\ref{lem:subset-interpolation}
to control every collective instrument, including its discarded quantum
system. The identity instrument and the instrument that discards the entire
input attain the two possible support values.
\end{proof}

\subsubsection{Erased-state capacity region}\label{sec:erased-region}
\begin{theorem}[Erased-state static capacity]\label{thm:erased-region}
Let $\rho^p_{RB}$ be the erased state in~\eqref{eq:erased-state-definition},
obtained from a pure bipartite state $\phi_{RA}$ with $p\in[0,1]$, and
set $h\coloneqq H(A)_\phi$. With the forward-only convention of
Section~\ref{sec:notation} and the unit-resource cone $\cU$ in~\eqref{eq:unit},
\begin{equation}
\cC_{\mathrm{stat}}(\rho^p)
=\conv\{0,(0,-ph,(1-p)h)\}+\cU.
\label{eq:erased-region}
\end{equation}
Equivalently, a rate triple belongs to the region if and only if there exists
$t\in[0,1]$ such that
\begin{align}
C+2Q&\leq-2pth,\label{eq:region-t-first}\\
Q+E&\leq(1-2p)th,\\
C+Q+E&\leq(1-2p)th.
\label{eq:region-t}
\end{align}
For $p\geq1/2$ this region is simply $\cU$.
\end{theorem}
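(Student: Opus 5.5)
The plan is to derive the region from Theorem~\ref{thm:formula} together with the additive formula of Theorem~\ref{thm:erased-support}. Because~\eqref{eq:erased-support} gives $\sS_{\lambda,\mu}((\rho^p)^{\otimes n})=n\sS_{\lambda,\mu}(\rho^p)$, we have $\sS^{\reg}_{\lambda,\mu}(\rho^p)=h\max\{0,1-2p(1+\mu)\}$. Consequently $\cC_{\rm stat}(\rho^p)$ is exactly the set of triples satisfying $C+2Q\leq0$ and
\begin{equation*}
\lambda C+(1+2\mu)Q+E\leq h\max\{0,1-2p(1+\mu)\}
\end{equation*}
for all admissible $(\lambda,\mu)$. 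Set $P\coloneqq\conv\{0,v\}+\cU$ with $v\coloneqq(0,-ph,(1-p)h)$. This $P$ is closed and convex: it is the sum of a compact polytope and a closed polyhedral cone. Since Theorem~\ref{thm:formula} shows $\cC_{\rm stat}(\rho^p)$ is also closed and convex, it suffices to prove that the two sets have the same support function in every direction.

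First, compute the support of $P$. In any direction $w$ with some $w\cdot u>0$ for $u\in\cU$, both supports are $+\infty$. The remaining directions are the nonnegative combinations in~\eqref{eq:polar}, namely $n_1$ and the normalized directions $(\lambda,1+2\mu,1)$. For $n_1$ we get $h_P(n_1)=\max\{0,-2ph\}=0$, which matches the constraint $C+2Q\leq0$. For a normalized direction,
\begin{equation*}
h_P(\lambda,1+2\mu,1)=\max\{0,(1+2\mu)(-ph)+(1-p)h\}=h\max\{0,1-2p(1+\mu)\}.
\end{equation*}
This equals $\sS^{\reg}_{\lambda,\mu}$. The two closed convex sets therefore have identical support functions in all directions and so coincide. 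As a consistency check on achievability, $v$ is the classically assisted state-redistribution point $v_{\cT}$ of the identity instrument: $X$ is trivial, $S=A$, $F$ is trivial, and~\eqref{eq:CASR-C}--\eqref{eq:CASR-E} give $C_0=0$, $Q_0=-\tfrac12 I(A;E|B)=-\tfrac12(H(E|B)+H(A))$, and $E_0=\tfrac12 I(A;B)$. Using~\eqref{eq:fixed-conditional}, $H(A)_{\psi^p}=H(BE)=h$, $H(B)=h_2(p)+(1-p)h$, and $H(AB)=H(E)$, these evaluate to $Q_0=-ph$ and $E_0=(1-p)h$. So the geometric identity is consistent with the protocol.

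Next, obtain the $t$-parametrized description. A point lies in $P$ exactly when it lies in $tv+\cU$ for some $t\in[0,1]$. By~\eqref{eq:unit}, that is equivalent to $C+2Q\leq t(v_C+2v_Q)$, $Q+E\leq t(v_Q+v_E)$, and $C+Q+E\leq t(v_C+v_Q+v_E)$. Substituting the components of $v$ gives $v_C+2v_Q=-2ph$ and $v_Q+v_E=v_C+v_Q+v_E=(1-2p)h$. These are exactly~\eqref{eq:region-t-first}--\eqref{eq:region-t}.

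Finally, for $p\geq1/2$ we have $v\cdot n_2=(1-2p)h\leq0$, and the same holds for $v\cdot n_3$ and $v\cdot n_1$. Hence $v\in\cU$, and the region collapses to $\cU$.

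The only substantive input is the additivity in Theorem~\ref{thm:erased-support}, which is taken as given. The main care point is the equivalence between "equal support functions" and "equal sets." This needs both sets closed and convex, and it needs the case split over directions: infinite support off the dual cone, $n_1$, and the normalized directions. That split mirrors the separation argument following Theorem~\ref{thm:formula}.
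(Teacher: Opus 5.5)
Your proof is correct and follows essentially the paper's route: both identify the region by matching the support function of $\conv\{0,(0,-ph,(1-p)h)\}+\cU$ against the additive values $h\max\{0,1-2p(1+\mu)\}$ from Theorem~\ref{thm:erased-support}, in direction $(1,2,0)$ and in the normalized directions $(\lambda,1+2\mu,1)$. The differences are minor: you obtain achievability directly from the exact half-space description in Theorem~\ref{thm:formula} instead of from the identity-instrument protocol, and you spell out the $t$-description. Your consistency check, which is logically redundant, contains a small slip: Alice's system here is $R$, and $I(R;E|B)=H(E|B)+H(E)$ (equivalently $H(E|B)+H(RB)$), not $H(E|B)+H(R)$, and it is this identity that yields $Q_0=-ph$.
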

\begin{proof}
\noindent\textbf{Achievability.}
Take the identity instrument on $R$ (in possession of Alice), with quantum output $S=R$ and
trivial classical output and discarded system. The rate point in
\eqref{eq:CASR-C}--\eqref{eq:CASR-E} is then
\begin{equation}
\left(0,-\tfrac12 I(R;E)_{\psi^p},\tfrac12 I(R;B)_{\psi^p}\right).
\label{eq:erased-identity-point}
\end{equation}
The corresponding protocol transfers Alice's share to Bob using quantum
communication and generates entanglement.
By \eqref{eq:erasure-isometry} and purity, $\frac12I(R;E)_{\psi^p}=ph$. Similarly,
\begin{align}
\tfrac12 I(R;B)_{\psi^p}
&=\tfrac12[H(R)_{\psi^p}+H(B)_{\psi^p}-H(E)_{\psi^p}]\\
&=\tfrac12[h+h_2(p)+(1-p)h-h_2(p)-ph]\\
&=(1-p)h.
\end{align}
Hence this state-redistribution protocol achieves $(0,-ph,(1-p)h)$. Ignoring the state achieves
zero. Deterministic time sharing achieves their convex hull, and adding the
three unit protocols achieves its sum with $\cU$.

\textbf{Converse through support functions.}
The static capacity theorem (Theorem~\ref{thm:static}) reduces the
operational converse to bounding the regularized information region
in~\eqref{eq:regularized-region}. By \eqref{eq:erased-support}, its support value
in direction $(\lambda,1+2\mu,1)$ is
\begin{equation}
h\max\{0,1-2p(1+\mu)\}.
\end{equation}
The convex hull of the origin and the rate point $(0,-ph,(1-p)h)$ has exactly this support:
\begin{align}
\max_{0\leq t\leq1}
[\lambda\cdot0+(1+2\mu)(-pth)+(1-p)th]
&=\max_{0\leq t\leq1}th[1-2p(1+\mu)]\\
&=h\max\{0,1-2p(1+\mu)\}.
\end{align}
The support value in direction $(1,2,0)$ is zero for both regions. These are all
finite support directions after including nonnegative rescaling and the unit
cone. Thus the two closed convex regions coincide. An algebraic verification of the finite-facet description appears in Appendix~\ref{app:erased-facets}.

Finally, if $p\geq1/2$, the rate point $(0,-ph,(1-p)h)$ itself belongs to $\cU$, since
\begin{align}
0+2(-ph)&=-2ph\leq0,\\
-ph+(1-p)h&=(1-2p)h\leq0.
\end{align}
Therefore its convex hull with zero, added to $\cU$, is just $\cU$.
\end{proof}

The formula in~\eqref{eq:erased-region} has two useful consequences.
For fixed $p$, the capacity region depends on the pure input only through
its entanglement entropy $h$: inputs with different Schmidt coefficients
and local dimensions can therefore give the same region. Moreover,
time sharing between the state-redistribution protocol and ignoring the
resource, followed by the unit protocols, achieves the entire region.
The collective-instrument converse shows that more general processing
of many copies cannot enlarge it.

\subsubsection{Five-facet form}
Suppose $0<p<1/2$ and $h>0$, and set $k\coloneqq1-2p$.
The erased-state static capacity region
$\cC_{\mathrm{stat}}(\rho^p)$ in~\eqref{eq:erased-region}
consists exactly of the triples $(C,Q,E)\in\mathbb{R}^3$ satisfying
\begin{align}
C+2Q&\leq0,\label{eq:five-first}\\
Q+E&\leq kh,\label{eq:five-second}\\
C+Q+E&\leq kh,\label{eq:five-third}\\
kC+2(1-p)Q+2pE&\leq0,\label{eq:five-fourth}\\
C+2(1-p)Q+2pE&\leq0.
\label{eq:five}
\end{align}
Appendix~\ref{app:erased-facets} derives these inequalities by
eliminating the time-sharing parameter $t$ from the characterization
in Theorem~\ref{thm:erased-region}.
For the limiting cases, \eqref{eq:erased-region} gives
$\cC_{\mathrm{stat}}(\rho^p)=\cU$ when $h=0$ and
$\cC_{\mathrm{stat}}(\rho^0)=(0,0,h)+\cU$ when $p=0$.
Figure~\ref{fig:erased-three-dimensional} displays the five facets for
$p=1/4$ and $h=1$. The two vertices are the origin and the achievable
rate point $(0,-ph,(1-p)h)$ in~\eqref{eq:erased-identity-point};
the edge joining them corresponds to time sharing. The coordinate axes
meet at the origin of the region, so the signs and relative magnitudes
of the three resource rates can be read directly.

\begin{figure}[!t]
\centering
\includegraphics[width=\linewidth]{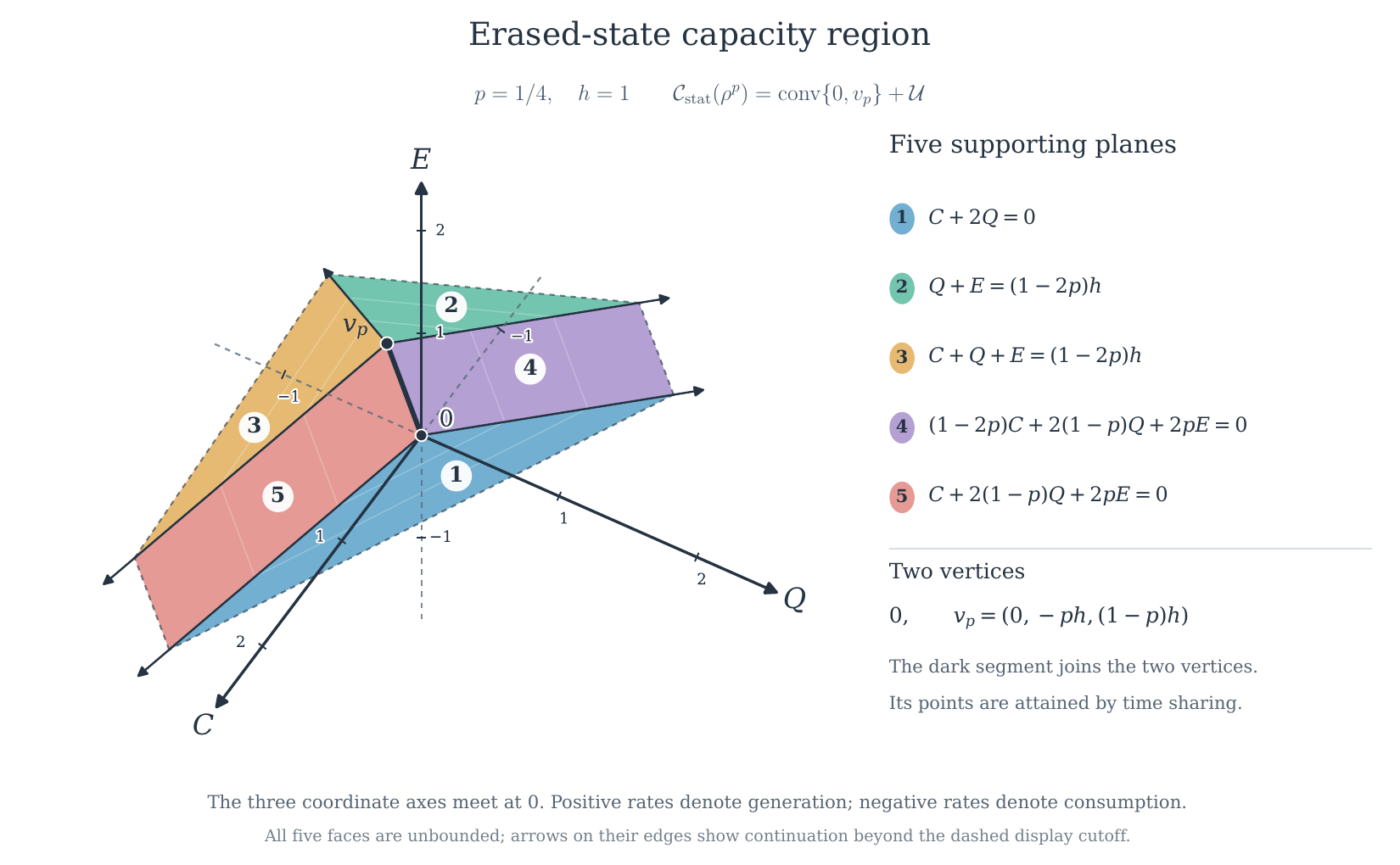}
\caption{The erased-state static capacity region in~\eqref{eq:erased-region},
for $p=1/4$ and $h=1$. The five colored faces lie on the supporting planes
listed at right, corresponding to the five inequalities in~\eqref{eq:five-first}--\eqref{eq:five}.
The two vertices are $0$ and $v_p\coloneqq(0,-ph,(1-p)h)$, and their
connecting edge is attained by time sharing. All five faces are unbounded:
arrows on their edges indicate continuation, and dashed segments mark the
boundary of the displayed portion. All three coordinate axes pass through
the origin. Solid axis arrows indicate positive coordinate directions;
lighter dashed extensions indicate negative directions. Tick values are
in bits, qubits, and ebits per copy for $C$, $Q$, and $E$, respectively.}
\label{fig:erased-three-dimensional}
\end{figure}

\subsubsection{Two capacity slices}\label{sec:erased-slices}
For $0<p<1/2$, set $k\coloneqq1-2p$. With no net classical communication, set
$C=0$ and $Q=-Q_c$, where $Q_c\geq0$ is the supplied quantum communication
rate. The largest achievable entanglement rate is
\begin{equation}
E_{\max}(0,-Q_c)=\min\!\left\{kh+Q_c,\frac{1-p}{p}Q_c\right\}.
\label{eq:erased-Q-slice}
\end{equation}
For $0\leq Q_c\leq ph$, time sharing with the rate point $(0,-ph,(1-p)h)$ attains the boundary.
For $Q_c\geq ph$, this state-redistribution protocol followed by entanglement distribution
attains it.

With no net quantum communication, set $Q=0$ and $C=-C_c$, where $C_c\geq0$
is the supplied classical communication rate.
Then
\begin{equation}
E_{\max}(-C_c,0)=k\min\!\left\{h,\frac{C_c}{2p}\right\}.
\label{eq:erased-C-slice}
\end{equation}
Indeed, the inequalities in~\eqref{eq:five-first}--\eqref{eq:five} imply $E\leq kh$ and $2pE\leq kC_c$.
The last of these inequalities gives $2pE\leq C_c$, which is weaker because $0<k<1$.
Replacing the quantum communication in this protocol by teleportation consumes
classical communication at rate $2ph$ and entanglement at rate $ph$,
giving the net rate point $(C,Q,E)=(-2ph,0,kh)$.
Time sharing with the origin gives the initial line
segment in \eqref{eq:erased-C-slice}.
Figure~\ref{fig:erased-slices} displays both slices.

\begin{figure}[t]
\centering
\includegraphics[width=\linewidth]{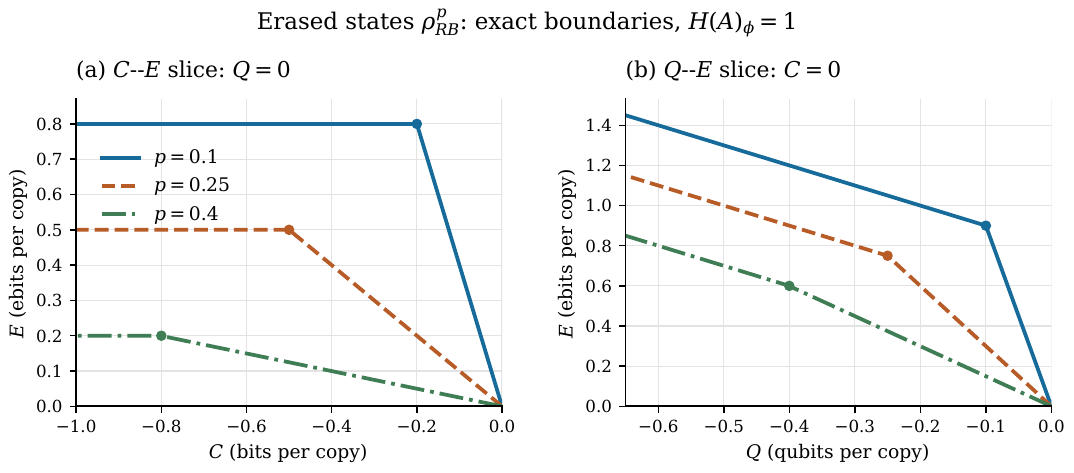}
\caption{Exact slices of the erased-state capacity region in
Theorem~\ref{thm:erased-region}, for $h=H(A)_\phi=1$ and
$p\in\{0.1,0.25,0.4\}$. The boundaries are given by
\eqref{eq:erased-C-slice} and \eqref{eq:erased-Q-slice}.
Markers indicate the rate point $(0,-ph,(1-p)h)$ in (b) and its teleportation conversion
in (a). Negative communication rates represent consumption. Only the
portions with $E\geq0$ are displayed; points below each boundary are
achievable. The increase beyond one ebit in (b) uses additional supplied
quantum communication for entanglement distribution.}
\label{fig:erased-slices}
\end{figure}

\begin{remark}[Why the erasure flag does not provide free feedback]
Bob can locally distinguish successful transmission from erasure. Alice's
instrument must nevertheless be chosen without being told the actual erasure
pattern. Giving her this pattern would change the operational problem. For
example, free backward communication would allow them to keep the surviving
pure states and concentrate entanglement at rate $(1-p)h$. The forward-only
answer above must not be confused with this different task.
\end{remark}

\subsection{Locally flagged pure-state mixtures}\label{sec:flagged-states}
Suppose that Alice and Bob have matching classical flags specifying a
shared pure state. They can each read the flag without communication and
apply a protocol adapted to that state. The following theorem shows that
the complete static region is determined by the average entanglement
entropy. Ordinary pure bipartite states are included by taking a single
flag value.

\begin{theorem}[Locally flagged pure-state mixtures]\label{thm:flagged}
Consider the state
\begin{equation}
\rho_{A_{\mathrm f}A_{\mathrm q}B_{\mathrm f}B_{\mathrm q}}
\coloneqq\sum_zq_z\proj z_{A_{\mathrm f}}\otimes\proj z_{B_{\mathrm f}}
\otimes\phi^z_{A_{\mathrm q}B_{\mathrm q}},
\label{eq:flagged-pure-state}
\end{equation}
where $\left(q_z\right)_z$ is a probability distribution, every $\phi^z_{A_{\mathrm q}B_{\mathrm q}}$
is a pure state, and the flag vectors form orthonormal families on $A_{\mathrm f}$
and $B_{\mathrm f}$. The subscripts $\mathrm f$ and $\mathrm q$ distinguish
the classical flag registers from the systems carrying the conditional pure
state. Set $A\coloneqq A_{\mathrm f}A_{\mathrm q}$, $B\coloneqq B_{\mathrm f}B_{\mathrm q}$, and
\begin{equation}
D\coloneqq\sum_zq_zH(A_{\mathrm q})_{\phi^z}.
\label{eq:flagged-average-entanglement}
\end{equation}
For all $n\in\mathbb{N}$, $\mu\geq0$, and $\mu\leq\lambda\leq\mu+1$,
\begin{equation}
\sS_{\lambda,\mu}(\rho_{AB}^{\otimes n})=nD,
\qquad
\cC_{\rm stat}(\rho_{AB})=(0,0,D)+\cU.
\label{eq:flagged-region}
\end{equation}
The static capacity formula, operational region, and unit-resource cone are defined
in~\eqref{eq:support}, Section~\ref{sec:notation}, and~\eqref{eq:unit},
respectively.
\end{theorem}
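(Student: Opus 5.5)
The plan is to prove the support identity $\sS_{\lambda,\mu}(\rho_{AB}^{\otimes n})=nD$ by matching an achievable instrument with a universal upper bound, and then obtain the region from Theorem~\ref{thm:formula} as in the proof of Theorem~\ref{thm:erased-region}.

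\emph{Lower bound.} Use the instrument with no classical outcome whose quantum output is all of $A^n$, so $S=A^n$ and $F$ is trivial. Since the flags are classical and copied on both sides, $\rho^{\otimes n}$ is again a locally flagged pure mixture with flag $z^n$ and average entanglement $nD$. A purification is $\sum_z\sqrt{q_z}\ket z_{A_{\mathrm f}}\ket z_{B_{\mathrm f}}\ket z_{E}\ket{\phi^z}$. Then $I(A\rangle B)=H(B)-H(AB)$. Conditioning on the flag, which $B$ holds, gives $H(B)=H(Z)+\sum_z q_zH(B_{\mathrm q})_{\phi^z}$ and $H(AB)=H(Z)$, so $I(A\rangle B)=D$. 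The penalty terms vanish: $I(X;E|B)=0$ because $X$ is trivial. Also $I(S;E|B)=I(A;E|B)=0$, because conditioned on $B_{\mathrm f}$ the state of $AE$ is a product of a pure state and the classical copy of $z$; equivalently $E$ is a function of the flag that $B$ already holds.

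\emph{Upper bound.} For an arbitrary instrument on $A^n$ with state $\sigma$, we drop the nonnegative penalties. It then suffices to show $I(S\rangle BX)_\sigma\le nD$. The key step is to give $X$ the flag. Conditioning on the additional classical register $Z^n$ held by Bob (take $B_{\mathrm f}^n$, already part of $B$) changes nothing. Given $B_{\mathrm f}^n=z^n$, Bob's $B_{\mathrm f}$ is a fixed classical value, and the conditional state on $A^nB_{\mathrm q}^n$ is the pure state $\phi^{z^n}$ together with the known flag on $A_{\mathrm f}$. Hence
\[
I(S\rangle BX)_\sigma=\sum_{z^n}q_{z^n}\,I(S\rangle B_{\mathrm q}^nX)_{\sigma^{z^n}},
\]
where $\sigma^{z^n}$ is the instrument applied to $\phi^{z^n}$. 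A subtlety is that the outcome distribution depends on $z^n$; writing the entropies conditioned on $B_{\mathrm f}$, which is classical, makes this an exact average. For each fixed pure input we then bound $I(S\rangle B_{\mathrm q}X)\le\sum_x p_x H(B_{\mathrm q})_{\psi^x}$, since $H(SB_{\mathrm q}|X{=}x)\ge0$. By concavity of entropy and no-signalling, $\sum_xp_x\psi^x_{B_{\mathrm q}}=\phi^{z^n}_{B_{\mathrm q}}$, so this is at most $H(B_{\mathrm q}^n)_{\phi^{z^n}}=\sum_i H(A_{\mathrm q})_{\phi^{z_i}}$. Averaging over $z^n$ gives $nD$. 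This is the standard fact that LOCC cannot increase pure-state entanglement on average, here in entropic form. Nonnegativity of $\sS$ together with this bound gives equality.

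\emph{Region.} The identity above makes $\sS^{\reg}_{\lambda,\mu}=D$ for all admissible $(\lambda,\mu)$. The set $(0,0,D)+\cU$ has support $(\lambda,1+2\mu,1)\cdot(0,0,D)=D$ in these directions and support $0$ in direction $n_1$. Since both sets are closed convex and invariant under $\cU$, Theorem~\ref{thm:formula} identifies them. Achievability is also direct: the point $(0,0,D)$ is the state-redistribution point of the trivial-$F$ instrument, since $Q_0=-\tfrac12I(A;E|B)=0$ and $E_0=\tfrac12I(A;B)=D$. I expect the main obstacle to be the careful justification that conditioning on Bob's flag register lets us treat each $z^n$ branch separately, including the entropy bookkeeping with the $z$-dependent outcome distribution.
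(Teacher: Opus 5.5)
Your upper bound is correct and is essentially the paper's argument. In the marginal $\sigma_{XSB^n}$ the flag register $B_{\mathrm f}^n$ is classical, so $I(S\rangle B^nX)_\sigma$ splits over flag sequences, and $H(SB_{\mathrm q}^n|\cdot)\geq0$, concavity, and the fixed $B_{\mathrm q}^n$ marginal give $nD$.

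The gap is in the lower bound, and in the achievability claim built on it. For the identity instrument with $S=A$, the penalty $I(A;E|B)_\psi$ does not vanish. The purification $\sum_z\sqrt{q_z}\ket z_{A_{\mathrm f}}\ket z_{B_{\mathrm f}}\ket z_E\ket{\phi^z}_{A_{\mathrm q}B_{\mathrm q}}$ is coherent across $A_{\mathrm f}B_{\mathrm f}E$. Hence $B_{\mathrm f}$ is classical only in marginals that omit $A_{\mathrm f}$ or $E$, and you cannot condition on it as a fixed classical value in a quantity that involves both. That is why the conditioning works in your upper bound, where $E$ never appears, but not here. Since $\psi_{ABE}$ is pure,
$I(A;E|B)_\psi=H(A)_\psi+H(E)_\psi-H(B)_\psi=(H(q)+D)+H(q)-(H(q)+D)=H(q)$, where $H(q)\coloneqq-\sum_zq_z\log_2q_z$.

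The identity instrument therefore scores only $n[D-\mu H(q)]$. That attains $nD$ at $\mu=0$ but falls short for every $\mu>0$ unless the flag is deterministic. Your direct claim fails for the same reason. Since $\tfrac12I(A;B)_\psi=D+\tfrac12H(q)$, the redistribution point of this instrument is $(0,-\tfrac12H(q),D+\tfrac12H(q))=(0,0,D)+\tfrac12H(q)\,u_{\rm ED}$, not $(0,0,D)$. Because $-u_{\rm ED}\notin\cU$, this point does not generate $(0,0,D)+\cU$. Placing $(0,0,D)$ in the region through Theorem~\ref{thm:formula} needs $\sS^{\reg}_{\lambda,\mu}\geq D$ for every admissible $\mu$. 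So both $\sS_{\lambda,\mu}(\rho_{AB}^{\otimes n})=nD$ for $\mu>0$ and the inclusion $(0,0,D)+\cU\subseteq\cC_{\rm stat}(\rho_{AB})$ are left unproven when $H(q)>0$.

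The missing step is to let Alice measure $A_{\mathrm f}$ in the flag basis, record the outcome in $X$, and retain only $S=A_{\mathrm q}$ with trivial $F$. The measurement removes the coherence and gives the state $\sum_zq_z\proj z_X\otimes\proj z_{B_{\mathrm f}}\otimes\phi^z_{SB_{\mathrm q}}\otimes\proj z_E$. Now:
\begin{itemize}
\item $B_{\mathrm f}$ determines both $X$ and $E$, so $I(X;E|B)=0$;
\item $E$ is fixed given $X$, so $I(S;E|BX)=0$;
\item $I(S\rangle BX)=D$.
\end{itemize}
The redistribution point of this instrument is exactly $(0,0,D)$, and applying it to each copy gives $nD$ for all admissible $(\lambda,\mu)$. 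With this instrument in place, your upper bound and support-function comparison complete the proof as in the paper. Operationally, Bob already holds the flag, so transmitting it coherently only wastes quantum communication.
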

\begin{proof}
Fix $n\in\mathbb N$. For a flag sequence
$z^n\coloneqq(z_1,\ldots,z_n)$, define
\begin{equation}
q_{z^n}\coloneqq\prod_{i=1}^nq_{z_i},
\qquad
\phi^{z^n}_{A_{\mathrm q}^nB_{\mathrm q}^n}
\coloneqq\bigotimes_{i=1}^n\phi^{z_i}_{A_{\mathrm q,i}B_{\mathrm q,i}}.
\label{eq:flagged-product-data}
\end{equation}
Here $A_{\mathrm q}^n\coloneqq A_{\mathrm q,1}\cdots A_{\mathrm q,n}$, and similarly for the
other systems. The tensor-power input state therefore has the form
\begin{equation}
\rho_{AB}^{\otimes n}
=\sum_{z^n}q_{z^n}\proj{z^n}_{A_{\mathrm f}^n}
\otimes\proj{z^n}_{B_{\mathrm f}^n}\otimes\phi^{z^n}_{A_{\mathrm q}^nB_{\mathrm q}^n}.
\label{eq:flagged-tensor-power}
\end{equation}
Entropy additivity on each conditional product state in
\eqref{eq:flagged-product-data} gives
\begin{align}
\sum_{z^n}q_{z^n}H(B_{\mathrm q}^n)_{\phi^{z^n}}
&=\sum_{i=1}^n\sum_{z^n}
\left(\prod_{j=1}^nq_{z_j}\right)H(B_{\mathrm q,i})_{\phi^{z_i}}\\
&=\sum_{i=1}^n\sum_{z_i}q_{z_i}H(B_{\mathrm q,i})_{\phi^{z_i}}
=nD.
\label{eq:flagged-product-entanglement}
\end{align}
The second equality uses $\sum_{z_j}q_{z_j}=1$ for $j\ne i$, and the
last equality uses purity of each $\phi^{z_i}$ and the definition of $D$
in~\eqref{eq:flagged-average-entanglement}.

Consider an arbitrary collective instrument on $A^n=A_{\mathrm f}^nA_{\mathrm q}^n$,
with classical output $X$, quantum output $S$, and discarded system $F$.
Let $\sigma_{XSFB^nE^n}$ be its state as in
\eqref{eq:instrument-state}, applied to a tensor-power purification of
$\rho_{AB}^{\otimes n}$. Bob's register $B_{\mathrm f}^n$ remains classical in
its flag basis, so its marginal with $X,S,B_{\mathrm q}^n$ is
\begin{equation}
\sigma_{XSB_{\mathrm f}^nB_{\mathrm q}^n}
=\sum_{x,z^n}p_{x,z^n}\proj x_X\otimes\proj{z^n}_{B_{\mathrm f}^n}
\otimes\sigma^{x,z^n}_{SB_{\mathrm q}^n},
\label{eq:flagged-arbitrary-instrument}
\end{equation}
where $\left(p_{x,z^n}\right)_{x,z^n}$ is a probability distribution and the
conditional states are normalized whenever $p_{x,z^n}>0$.
Zero-probability terms are omitted. Averaging over all instrument
outcomes is a trace-preserving operation on Alice's systems. It leaves
Bob's marginal unchanged separately for every flag sequence, giving
\begin{equation}
\sum_xp_{x,z^n}=q_{z^n},
\qquad
\sum_xp_{x,z^n}\sigma^{x,z^n}_{B_{\mathrm q}^n}
=q_{z^n}\phi^{z^n}_{B_{\mathrm q}^n}.
\label{eq:flagged-fixed-marginals}
\end{equation}
The conditional states $\sigma^{x,z^n}_{SB_{\mathrm q}^n}$ need not be product
states across the $n$ copies.

Since $B^n=B_{\mathrm f}^nB_{\mathrm q}^n$, the entropy formula for conditioning on the
classical registers $B_{\mathrm f}^n,X$, nonnegativity of ordinary entropy, and
concavity of entropy imply
\begin{align}
I(S\rangle B^nX)_\sigma
&=\sum_{x,z^n}p_{x,z^n}
\bigl[H(B_{\mathrm q}^n)_{\sigma^{x,z^n}}
-H(SB_{\mathrm q}^n)_{\sigma^{x,z^n}}\bigr]\\
&\leq\sum_{x,z^n}p_{x,z^n}H(B_{\mathrm q}^n)_{\sigma^{x,z^n}}\\
&\leq\sum_{z^n}q_{z^n}H(B_{\mathrm q}^n)_{\phi^{z^n}}
=nD.
\label{eq:flagged-coherent-bound}
\end{align}
For the second inequality, apply concavity separately for each sequence
with $q_{z^n}>0$, using the normalized probabilities
$p_{x,z^n}/q_{z^n}$ and~\eqref{eq:flagged-fixed-marginals}.
The last equality follows from~\eqref{eq:flagged-product-entanglement}.
Both conditional mutual information penalties in~\eqref{eq:support}
are nonnegative. Since $\lambda,\mu\geq0$, their subtraction can only
decrease the objective. Thus
\begin{equation}
\sS_{\lambda,\mu}(\rho_{AB}^{\otimes n})\leq nD.
\label{eq:flagged-support-upper}
\end{equation}

For the matching lower bound, it suffices to consider a single-copy
instrument. Choose the purification
\begin{equation}
\ket\psi_{ABE}\coloneqq\sum_z\sqrt{q_z}\ket z_{A_{\mathrm f}}\ket z_{B_{\mathrm f}}
\ket{\phi^z}_{A_{\mathrm q}B_{\mathrm q}}\ket z_E.
\label{eq:flagged-purification}
\end{equation}
Alice measures $A_{\mathrm f}$ in its flag basis, records the outcome in $X$,
and retains $S=A_{\mathrm q}$. The discarded system $F$ is trivial.
The resulting state is
\begin{equation}
\widehat\sigma_{XSBE}
\coloneqq\sum_zq_z\proj z_X\otimes\proj z_{B_{\mathrm f}}
\otimes\phi^z_{SB_{\mathrm q}}\otimes\proj z_E.
\label{eq:flagged-instrument-state}
\end{equation}
Here the system $A_{\mathrm q}$ of $\phi^z$ is relabelled as $S$.
Conditioned on $X=z$, the state on $SB_{\mathrm q}$ is pure and the
environment is fixed. Moreover, Bob's flag $B_{\mathrm f}$ already determines
$X$. These observations and the definition of $D$
in~\eqref{eq:flagged-average-entanglement} give
\begin{equation}
I(S\rangle BX)_{\widehat\sigma}=D,
\qquad I(X;E|B)_{\widehat\sigma}=0,
\qquad I(S;E|BX)_{\widehat\sigma}=0.
\label{eq:flagged-attaining-information}
\end{equation}
Thus this instrument attains the value $D$ in~\eqref{eq:support}.
Applying this single-copy instrument independently to each copy
attains the value $nD$ for $\rho_{AB}^{\otimes n}$.
Together with~\eqref{eq:flagged-support-upper}, this proves
$\sS_{\lambda,\mu}(\rho_{AB}^{\otimes n})=nD$
for all $n\in\mathbb N$, and consequently
$\sS_{\lambda,\mu}^{\reg}(\rho_{AB})=D$.
It remains to identify the operational region. By the chain rule
and~\eqref{eq:flagged-attaining-information},
$I(SX;E|B)_{\widehat\sigma}=0$.
Consequently, the inequalities in~\eqref{eq:facet1}--\eqref{eq:facet3}
for this single-copy instrument are precisely
\begin{equation}
C+2Q\leq0,\qquad Q+E\leq D,\qquad C+Q+E\leq D.
\label{eq:flagged-region-facets}
\end{equation}
By~\eqref{eq:unit}, these inequalities describe
$(0,0,D)+\cU$. The direct part of Theorem~\ref{thm:static}
therefore establishes achievability of this entire region.

Conversely, substituting
$\sS_{\lambda,\mu}^{\reg}(\rho_{AB})=D$
into~\eqref{eq:halfspace} and taking
$(\lambda,\mu)=(0,0)$ and $(1,0)$ gives
$Q+E\leq D$ and $C+Q+E\leq D$.
Together with the universal inequality $C+2Q\leq0$,
these are exactly the constraints
in~\eqref{eq:flagged-region-facets}.
This proves the region statement in~\eqref{eq:flagged-region}.
The point $(0,0,D)$ can also be achieved directly by pure-state
entanglement concentration conditioned on the locally available
flags~\cite{Concentration}.
\end{proof}

\section{Hadamard states}\label{sec:hadamard}
This section characterizes states obtained from Hadamard channels and
recalls the exact coherent-information face of their static capacity
region. A decomposition of the remaining entropy optimization into
nonnegative terms then identifies the obstacle to proving a complete
single-letter characterization.

A channel is Hadamard if its complementary channel is entanglement
breaking. This class was introduced in Ref.~\cite{KMNR07}; its dynamic
trade-off capacities were studied in Refs.~\cite{BHTW10,WH12}.
A Hadamard-generated state has the form
\begin{equation}
\rho_{AB}\coloneqq(\id_A\otimes\cN_{A'\to B})(\phi_{AA'}),
\end{equation}
where $\phi_{AA'}$ is pure and $\cN_{A'\to B}$ is Hadamard.
An entanglement-breaking channel admits rank-one Kraus operators. Thus an
isometric dilation of $\cN_{A'\to B}$ can be written, up to an output
isometry, as
\begin{equation}
U_{A'\to BE}
\coloneqq\sum_y\ket y_B\ket{e_y}_E\bra{v_y}_{A'},\qquad
\sum_y\proj{v_y}_{A'}=I_{A'},
\label{eq:had-dilation}
\end{equation}
with every vector $\ket{e_y}_E$ normalized. Fix the purification
\begin{equation}
\psi_{ABE}\coloneqq
(I_A\otimes U_{A'\to BE})\phi_{AA'}
(I_A\otimes U_{A'\to BE})^\dagger.
\label{eq:had-purification}
\end{equation}
Measuring $B$ in the basis displayed in~\eqref{eq:had-dilation} and preparing
$\proj{e_y}_E$ simulates the complementary channel. Consequently $\cN$ is
degradable, with an entanglement-breaking degrading channel, and $\psi_{AE}$
is separable. These structural facts follow from the measure-and-prepare
characterization of entanglement-breaking channels~\cite{HSR03,BHTW10}.

\begin{proposition}\label{prop:had-characterization}
A bipartite state $\rho_{AB}$ with purification $\psi_{ABE}$ admits a
Hadamard realization as defined in Section~\ref{sec:hadamard}, up to local isometries and pure local ancillas, if and only if
$\psi_{AE}$ is separable.
\end{proposition}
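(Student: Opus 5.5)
The forward direction has already been indicated before the statement, so I will only organize it. Suppose $\rho_{AB}=(\id_A\otimes\cN_{A'\to B})(\phi_{AA'})$ with $\cN$ Hadamard, possibly after local isometries and pure local ancillas. Take the dilation~\eqref{eq:had-dilation} and the purification~\eqref{eq:had-purification}. Then $\psi_{AE}=(\id_A\otimes\cN^c)(\phi_{AA'})$, and $\cN^c$ is entanglement breaking, so $\psi_{AE}$ is separable. Two invariance facts finish this direction. First, all purifications of $\rho_{AB}$ agree up to an isometry on $E$. Second, local isometries on $A$ and pure local ancillas, on either side, preserve separability of the $AE$ marginal. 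An ancilla placed on $B$ changes the purification only by an isometry on $E$, or by a pure tensor factor, so it too is harmless.

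For the converse, assume $\psi_{AE}$ is separable and write $\psi_{AE}=\sum_y p_y\,\alpha^y_A\otimes\proj{e_y}_E$. Refining mixed $E$-components into their spectral decompositions, absorbing weights into $p_y$, and using purification freedom on $A$ lets us take each $E$-component pure. I then build a Hadamard realization in three steps.
\begin{enumerate}[label=(\roman*)]
\item Choose a purification $\ket\phi_{AA'}$ of $\rho_A=\psi_A$, with $A'$ a copy of $\supp\rho_A$.
\item Treat the separable decomposition as a measure-and-prepare channel $\cN^c\colon A'\to E$. By the transpose trick, $\alpha^y_A$ corresponds to an operator $M_y\ge0$ on $A'$ with $\sum_yM_y=I_{A'}$. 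Concretely, $M_y=p_y\,\rho_{A}^{-1/2}(\alpha^y_A)^T\rho_A^{-1/2}$, transposed in the Schmidt basis of $\phi$. Setting $\cN^c(\xi)=\sum_y\Tr[M_y\xi]\proj{e_y}$ then gives $(\id\otimes\cN^c)(\phi_{AA'})=\psi_{AE}$.
\item Refine each $M_y$ into rank-one terms $\proj{v_{y,j}}$. This yields a POVM $\sum\proj{v}=I_{A'}$ and the isometry $U\ket{\xi}=\sum_{y,j}\ket{y,j}_{B'}\ket{e_y}_E\langle v_{y,j}|\xi\rangle$, which has exactly the form~\eqref{eq:had-dilation}. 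Its complementary channel to $E$ is $\cN^c$, which is entanglement breaking, so the channel $\cN\colon A'\to B'$ is Hadamard.
\end{enumerate}

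Next I identify the two purifications. The state $(I_A\otimes U)\ket\phi$ on $AB'E$ and the given $\ket\psi_{ABE}$ are both purifications of the same $\psi_{AE}$. Uhlmann's theorem and the isometric equivalence of purifications then give a partial isometry between $B'$ and $B$ mapping one to the other. Working on supports, this is an isometry from $\supp\rho_{B}$ into $B'$ or the reverse. That is exactly the allowance made by \emph{up to local isometries and pure local ancillas}: Bob appends an ancilla if $B'$ is larger and applies an isometry. Hence $\rho_{AB}$ equals $(\id\otimes\cN)(\phi)$ up to a local isometry on $B$.

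I expect the main obstacle to be the bookkeeping in steps (ii)--(iii) and in the final identification. Three points need care: restricting $A'$ to the support of $\rho_A$ so that $\rho_A^{-1/2}$ is defined; checking completeness $\sum_yM_y=I$, which follows from $\sum_yp_y\alpha^y_A=\rho_A$; and making the $B$-side isometry direction precise when dimensions differ. I would state that last point cleanly as equality of the states on supports, with the dimension mismatch absorbed by a pure ancilla on the smaller side.
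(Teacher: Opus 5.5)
Your proposal is correct and follows essentially the same route as the paper. Both restrict to the support of $\psi_A$, filter the separable $\psi_{AE}$ by $\psi_A^{-1/2}$ and transpose to obtain a POVM on $A'$, realize $\psi_{AE}$ as the output of the resulting measure-and-prepare channel on $\phi_{AA'}$, and identify $B$ through uniqueness of purifications. The only difference is cosmetic: you refine to pure $E$-components and rank-one POVM elements so that the isometry literally has the form~\eqref{eq:had-dilation}, whereas the paper keeps mixed prepared states $\eta^y_E$ and simply notes that a complementary channel of the entanglement-breaking map $\mathcal M$ is Hadamard.
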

\begin{proof}
One direction follows from~\eqref{eq:had-purification} and the definition of
an entanglement-breaking complementary channel. For the converse, restrict
$A$ to the support of $\tau_A\coloneqq\psi_A$ and diagonalize it as
$\tau_A=\sum_i t_i\proj i_A$, where $t_i>0$. Define the positive semidefinite operator
\begin{equation}
J_{AE}\coloneqq
(\tau_A^{-1/2}\otimes I_E)\psi_{AE}
(\tau_A^{-1/2}\otimes I_E).
\label{eq:had-filtered-choi}
\end{equation}
Local filtering preserves separability, and $\Tr_E[J_{AE}]=I_A$.
We can therefore write
\begin{equation}
J_{AE}=\sum_y L_A^y\otimes\eta_E^y,
\qquad L_A^y\geq0,\qquad \sum_yL_A^y=I_A,
\end{equation}
where every $\eta_E^y$ is a state. Let $A'$ be a copy of $A$, identified
through the chosen eigenbasis, and define
\begin{equation}
\mathcal M_{A'\to E}(\xi_{A'})
\coloneqq\Tr_A[(\xi^{T}\otimes I_E)J_{AE}]
=\sum_y\Tr[(L^y)^T\xi_{A'}]\eta_E^y.
\label{eq:had-prepare-channel}
\end{equation}
In the partial trace, $\xi^T$ is the transpose of $\xi_{A'}$ transported to
$A$; in the final expression, $(L^y)^T$ is transported to $A'$.
The operators $(L^y)^T$ form a positive operator-valued measure, so
$\mathcal M_{A'\to E}$ is a measure-and-prepare channel.
For
\begin{equation}
\ket{\phi^\tau}_{AA'}
\coloneqq\sum_i\sqrt{t_i}\ket i_A\ket i_{A'},
\end{equation}
the Choi identity gives
\begin{equation}
(\id_A\otimes\mathcal M_{A'\to E})(\phi^\tau_{AA'})
=(\tau_A^{1/2}\otimes I_E)J_{AE}(\tau_A^{1/2}\otimes I_E)
=\psi_{AE}.
\label{eq:had-choi-recovery}
\end{equation}
A complementary channel of $\mathcal M$ is Hadamard. An isometric dilation
of $\mathcal M$ acting on $\phi^\tau_{AA'}$ purifies the same $AE$ state as
$\psi_{ABE}$. Uniqueness of purification, up to an isometry on the purifying
support, identifies its other output with $B$. This proves the assertion.
\end{proof}

\subsection{An additive coherent-information face}
\begin{theorem}[Coherent-information face of a Hadamard state~\cite{LDS18}]
\label{thm:had-face}
Let $\rho_{AB}$ be a Hadamard-generated bipartite state as defined
in Section~\ref{sec:hadamard}, and let $\sS_{\lambda,\mu}$ be the
static capacity formula in~\eqref{eq:support}. For all $n\in\mathbb{N}$ and
$0\leq\lambda\leq1$,
\begin{equation}
\sS_{\lambda,0}(\rho_{AB}^{\otimes n})
=nI(A\rangle B)_\rho.
\label{eq:had-mu0}
\end{equation}
\end{theorem}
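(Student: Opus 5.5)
Achievability is immediate. The identity instrument on $A^n$ (trivial $X$ and $F$, $S=A^n$) gives $I(S\rangle BX)_\sigma=nI(A\rangle B)_\rho$, $I(X;E|B)_\sigma=0$, and the $\mu$-penalty drops out when $\mu=0$. This yields the lower bound $\sS_{\lambda,0}(\rho^{\otimes n})\geq nI(A\rangle B)_\rho$. The work is the matching upper bound. Since $\lambda\geq0$ and $I(X;E|B)_\sigma\geq0$, it suffices to prove it for $\lambda=0$:
\begin{equation}
I(S\rangle B^nX)_\sigma\leq nI(A\rangle B)_\rho
\end{equation}
for every instrument on $A^n$. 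This is exactly one-way distillable-entanglement additivity for degradable states, so the plan is to invoke~\cite[Proposition~4]{LDS18}. I would also sketch a self-contained entropic argument.

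For the direct argument, fix an instrument with dilation $V_x\colon A^n\to SF$ and the state $\sigma_{XSFB^nE^n}$ in~\eqref{eq:instrument-state}. Each conditional state $\psi^x_{SFB^nE^n}$ is pure. Purity then gives $I(S\rangle B^nX)_\sigma=H(B^n|X)-H(FE^n|X)$, and weak monotonicity/SSA give $H(SF|\ldots)$ type bounds. The cleanest route is to first show that discarding $F$ costs nothing, via the chain rule
\begin{equation}
I(S\rangle B^nX)\leq I(SF\rangle B^nX)=H(B^n|X)-H(E^n|X).
\end{equation}
This step uses degradability. There is a degrading channel $\mathcal D\colon B^n\to E^n$ that maps $\sigma_{SFB^nX}$ to $\sigma_{SFE^nX}$ at the level of each conditional state, because it acts only on Bob's side. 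Data processing then gives $I(F;B^n|SX)\geq I(F;E^n|SX)$, which rearranges to $H(S|B^nX)\geq H(SF|B^nX)+[H(SF|E^nX)-H(S|E^nX)]$. Combining this with purity, where $H(SF|E^nX)=-H(SF|B^nX)$, gives the desired comparison.

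It then remains to bound $H(B^n|X)-H(E^n|X)$ by $nI(A\rangle B)_\rho=n[H(B)-H(E)]$. Equivalently, $I(X;B^n)\leq I(X;E^n)$ would be the wrong direction, so instead I would use $H(B^n|X)-H(E^n|X)=H(B^n)-H(E^n)-[I(X;B^n)-I(X;E^n)]$. Degradability, $X\to B^n\to E^n$ with $\mathcal D$ acting on $B^n$ only, gives $I(X;B^n)\geq I(X;E^n)$. Tensor additivity of $H(B^n)-H(E^n)$ on $\psi^{\otimes n}$ then closes the bound.

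The main obstacle is the $F$-removal step. I need to verify that the degrading channel of the Hadamard dilation~\eqref{eq:had-dilation}, namely measuring $B$ and preparing $\ket{e_y}$, maps the joint conditional state on $SFB^n$ to that on $SFE^n$. This holds because Alice's instrument commutes with Bob-side maps and $\psi_{AE}=(\id\otimes\mathcal D)(\psi_{AB})$ holds exactly. The $n$-fold degrading map $\mathcal D^{\otimes n}$ handles collective instruments.
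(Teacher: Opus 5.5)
Your primary route is correct and is the same argument the paper gives after the statement. It uses the identity instrument for achievability, drops the nonnegative $\lambda$-penalty, and invokes \cite[Proposition~4]{LDS18} for the instrument bound at every blocklength. Your self-contained sketch follows the skeleton of the paper's Appendix~\ref{app:hadamard}: conditional purity, then $H(B^n|X)-H(E^n|X)=H(B^n)-H(E^n)-[I(X;B^n)-I(X;E^n)]$ with the degrading channel, then additivity.

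However, the step you flag as the main obstacle, removing $F$, does not go through as written. By the chain rule and conditional purity, $I(SF\rangle B^nX)_\sigma-I(S\rangle B^nX)_\sigma=-H(F|SB^nX)_\sigma=H(F|E^nX)_\sigma$. So what you must show is $H(F|E^nX)_\sigma\geq0$.

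Your inequality $I(F;B^n|SX)\geq I(F;E^n|SX)$ does not give this. Rearranged correctly, it reads $H(S|B^nX)\geq H(SF|B^nX)-[H(SF|E^nX)-H(S|E^nX)]$; your bracket has the wrong sign. After conditional purity this becomes $H(S|B^nX)\geq H(SF|B^nX)+H(F|B^nX)$. That is just $H(F|B^nX)_\sigma\leq H(F|E^nX)_\sigma$, which is data processing again. It would give the comparison you want only if $H(F|B^nX)_\sigma\geq0$, which fails in general. For example, a pure entangled $\rho_{AB}$ with $A$ sent to $F$ gives $H(F|B)<0$. Data processing fixes the order of the two conditional entropies but not the sign of $H(F|E^nX)_\sigma$.

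The missing ingredient is weak monotonicity, \eqref{eq:WM-classical}, applied to $\sigma_{XFB^nE^n}$: $H(F|E^nX)_\sigma+H(F|B^nX)_\sigma\geq0$. Combined with your data-processing comparison, this gives $2H(F|E^nX)_\sigma\geq H(F|E^nX)_\sigma+H(F|B^nX)_\sigma\geq0$, which closes the argument.

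The paper obtains $H(F|EX)_\sigma\geq0$ differently. It uses the Hadamard-specific separability of $\psi_{AE}$: each $\sigma^x_{FE}$ is separable, and strong subadditivity on a classical extension of the separable decomposition gives nonnegativity. Your repaired argument uses only degradability. It therefore proves the coherent-information face for every degradable state, matching the generality of \cite{LDS18}. The paper's appendix proof needs the entanglement-breaking complement in addition to the degrading channel.
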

A Hadamard state is degradable in the state sense: the degrading channel
$\mathcal D\colon B\to E$ described after~\eqref{eq:had-purification} satisfies
$ (\id_A\otimes\mathcal D)(\rho_{AB})=\psi_{AE}$.
Ref.~\cite[Proposition~4]{LDS18} proved that the
one-way distillable entanglement of a degradable state equals its coherent
information. Their result also identifies the optimization over Alice's
instruments at every blocklength. In the notation of~\eqref{eq:instrument-state},
it gives
\begin{equation}
\sup_{\mathcal T} I(S\rangle B^nX)_\sigma
=I(A^n\rangle B^n)_{\rho^{\otimes n}}=nI(A\rangle B)_\rho.
\label{eq:degradable-instrument-bound}
\end{equation}
The state $\sigma$ here is induced by an instrument on $A^n$ applied to
$\psi_{ABE}^{\otimes n}$. Dropping the nonpositive term
$-\lambda I(X;E^n|B^n)_\sigma$ from~\eqref{eq:support} gives the upper
bound in~\eqref{eq:had-mu0}. The identity instrument attains it, since its
classical output is trivial. Thus Theorem~\ref{thm:had-face} is a direct
consequence of Ref.~\cite{LDS18}. Appendix~\ref{app:hadamard} gives an
independent entropy proof for Hadamard states.

\subsection{The remaining general Hadamard optimization}

We now express the objective function in the static capacity formula of \eqref{eq:support} as a fixed term, depending only on
the initial state $\psi_{ABE}$, minus an
instrument-dependent nonnegative quantity.
This identifies what must be controlled to extend
Theorem~\ref{thm:had-face} beyond $\mu=0$.

Let $\rho_{AB}$ be a Hadamard-generated state with purification
$\psi_{ABE}$. For an instrument $V\coloneqq\left(V_x\right)_x$
with output state $\sigma_{XSFBE}$ from~\eqref{eq:instrument-state},
and parameters $\lambda,\mu$ satisfying~\eqref{eq:parameters},
define its score by
\begin{equation}
s_{\lambda,\mu}(V;\rho_{AB})
\coloneqq I(S\rangle BX)_\sigma
-\lambda I(X;E|B)_\sigma
-\mu I(S;E|BX)_\sigma.
\label{eq:had-score}
\end{equation}

\begin{lemma}[Decomposition of the Hadamard score]
\label{lem:had-score-decomposition}
Let $\rho_{AB}$ be a Hadamard-generated state with purification
$\psi_{ABE}$, and let $V\coloneqq\left(V_x\right)_x$ be a local
instrument with associated state $\sigma_{XSFBE}$ as
in~\eqref{eq:instrument-state}.
For $\mu\geq0$ and $\mu\leq\lambda\leq\mu+1$, define
\begin{equation}
\begin{aligned}
\Gamma_{\lambda,\mu}(V;\rho_{AB})
\coloneqq{}&(1+\mu-\lambda)
\big[I(X;B)_\sigma-I(X;E)_\sigma\big]
+(\lambda-\mu)I(X;B|E)_\sigma\\
&+H(F|EX)_\sigma+\mu H(E|FX)_\sigma.
\end{aligned}
\label{eq:had-gamma}
\end{equation}
Then the score in~\eqref{eq:had-score} satisfies
\begin{equation}
s_{\lambda,\mu}(V;\rho_{AB})
=I(A\rangle B)_\rho-\mu H(E|B)_\psi
-\Gamma_{\lambda,\mu}(V;\rho_{AB}),
\label{eq:had-penalty}
\end{equation}
and $\Gamma_{\lambda,\mu}(V;\rho_{AB})\geq0$.
\end{lemma}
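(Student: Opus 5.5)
The plan is to treat \eqref{eq:had-penalty} as a linear identity among entropies and verify it by matching coefficients. Nonnegativity of $\Gamma_{\lambda,\mu}$ will then be checked term by term, using the two structural facts about Hadamard states recorded after~\eqref{eq:had-purification}: degradability, and separability of $\psi_{AE}$.

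\textbf{Identity.} First, since the instrument acts only on $A$, the $BE$ marginal of $\sigma$ equals $\psi_{BE}$. Hence $H(B)_\sigma$, $H(E)_\sigma$, and $H(BE)_\sigma$ are instrument independent. Purity of $\psi_{ABE}$ gives $I(A\rangle B)_\rho=H(B)_\psi-H(E)_\psi$ and $H(E|B)_\psi=H(BE)_\psi-H(B)_\psi$.

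Second, conditioned on each $x$ the state $\psi^x_{SFBE}$ is pure, so $H(SB|X)_\sigma=H(FE|X)_\sigma$ and $H(SBE|X)_\sigma=H(F|X)_\sigma$. With these substitutions every term of the score~\eqref{eq:had-score} becomes a combination of the five conditional entropies
\[
H(B|X),\quad H(E|X),\quad H(BE|X),\quad H(FE|X),\quad H(F|X)
\]
and the three fixed entropies $H(B),H(E),H(BE)$. Explicitly,
\begin{align*}
I(S\rangle BX)&=H(B|X)-H(FE|X),\\
I(X;E|B)&=H(BE)-H(B)-H(BE|X)+H(B|X),\\
I(S;E|BX)&=H(FE|X)+H(BE|X)-H(B|X)-H(F|X).
\end{align*}
Expanding the four terms of $\Gamma_{\lambda,\mu}$ in~\eqref{eq:had-gamma} in the same basis gives the other side. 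Comparing coefficients of the eight entropies then proves~\eqref{eq:had-penalty}. For example, the coefficient of $H(E|X)$ vanishes on both sides, since $(1+\mu-\lambda)+(\lambda-\mu)-1=0$, and the coefficient of $H(B|X)$ is $1-\lambda+\mu$ on both sides. This step is routine bookkeeping.

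\textbf{Nonnegativity.} By~\eqref{eq:parameters} the weights $1+\mu-\lambda$, $\lambda-\mu$, $1$, and $\mu$ are all nonnegative, so it suffices to show that each bracketed quantity in~\eqref{eq:had-gamma} is nonnegative.
\begin{itemize}[leftmargin=1.5em]
\item $I(X;B|E)_\sigma\geq0$ by strong subadditivity.
\item For $I(X;B)_\sigma\geq I(X;E)_\sigma$, use the degrading channel $\mathcal D\colon B\to E$ satisfying $(\id_A\otimes\mathcal D)(\rho_{AB})=\psi_{AE}$. Alice's instrument and $\mathcal D$ act on different systems and therefore commute. Hence applying $\mathcal D$ to $B$ in $\sigma_{XB}$ yields $\sigma_{XE}$, and data processing of mutual information gives the inequality.
\item The main step is $H(F|EX)_\sigma\geq0$ and $H(E|FX)_\sigma\geq0$. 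Since $\psi_{AE}$ is separable, and each outcome branch of Alice's instrument is a local completely positive map on $A$, every normalized conditional state $\psi^x_{SFE}$ is separable across $SF\,|\,E$. Tracing out $S$ preserves separability, so each $\psi^x_{FE}$ is separable. Separable states satisfy $H(FE)\geq\max\{H(F),H(E)\}$; this follows, for example, from the majorization criterion of Nielsen--Kempe, or directly from concavity of conditional entropy over a product-state decomposition. Averaging over $x$ gives both inequalities.
\end{itemize}

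The point requiring care in that last step is that the discarded system $F$ and the output $S$ both arise from $A$. Separability must therefore be invoked across the cut $SF\,|\,E$ before discarding $S$, not merely for $A\,|\,E$.
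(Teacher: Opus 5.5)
Your proposal is correct and takes essentially the same route as the paper. The identity rests on conditional purity of $\psi^x_{SFBE}$ and the fixed marginal $\sigma_{BE}=\psi_{BE}$; you verify it by matching coefficients of eight entropies, whereas the paper chains through $I(S\rangle BX)_\sigma$ and $I(SX;E|B)_\sigma$ and then regroups with $I(X;B|E)_\sigma=I(X;B)_\sigma-I(X;E)_\sigma+I(X;E|B)_\sigma$. Your coefficient bookkeeping checks out (for example, $H(E)$, $H(B)$, $H(BE)$ receive $0$, $\lambda$, $-\lambda$ on both sides), and your nonnegativity argument uses the same three ingredients as the paper: degradability with data processing, strong subadditivity, and separability of each $\sigma^x_{FE}$.
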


\begin{proof}
The chain rule for conditional mutual information gives
\begin{equation}
s_{\lambda,\mu}(V;\rho_{AB})
=I(S\rangle BX)_\sigma
-\mu I(SX;E|B)_\sigma
-(\lambda-\mu)I(X;E|B)_\sigma.
\label{eq:had-score-chain}
\end{equation}
We evaluate the first two information quantities separately.

First, conditioned on each outcome $x$ of nonzero probability, the state
on $SFBE$ is pure. Consequently,
\begin{align}
I(S\rangle BX)_\sigma
&=H(B|X)_\sigma-H(SB|X)_\sigma\\
&=H(B|X)_\sigma-H(FE|X)_\sigma\\
&=H(B|X)_\sigma-H(E|X)_\sigma-H(F|EX)_\sigma.
\label{eq:had-entropy-split}
\end{align}
The first equality follows from the definition of coherent information,
the second follows from conditional purity, and the third follows from
the entropy chain rule.

The instrument acts only on $A$, so its averaged output satisfies
$\sigma_{BE}=\psi_{BE}$.
Expanding the first two conditional entropies
in~\eqref{eq:had-entropy-split} therefore gives
\begin{align}
H(B|X)_\sigma-H(E|X)_\sigma
&=H(B)_\psi-H(E)_\psi-I(X;B)_\sigma+I(X;E)_\sigma\\
&=I(A\rangle B)_\rho
-\big[I(X;B)_\sigma-I(X;E)_\sigma\big].
\label{eq:had-classical-entropy-split}
\end{align}
For the last equality, purity of $\psi_{ABE}$ implies
$H(AB)_\rho=H(E)_\psi$, and hence
$I(A\rangle B)_\rho=H(B)_\psi-H(E)_\psi$.
Substituting~\eqref{eq:had-classical-entropy-split}
into~\eqref{eq:had-entropy-split} yields
\begin{equation}
I(S\rangle BX)_\sigma
=I(A\rangle B)_\rho
-\big[I(X;B)_\sigma-I(X;E)_\sigma\big]
-H(F|EX)_\sigma.
\label{eq:had-coherent-identity}
\end{equation}

Next, consider $I(SX;E|B)_\sigma$.
Conditional purity also gives
\begin{align}
H(E|SBX)_\sigma
&=H(ESB|X)_\sigma-H(SB|X)_\sigma\\
&=H(F|X)_\sigma-H(EF|X)_\sigma\\
&=-H(E|FX)_\sigma.
\end{align}
Indeed, for each conditional pure state on $SFBE$, the complementary
systems $ESB$ and $F$ have the same entropy, as do $SB$ and $EF$.
Averaging these equalities over $x$ proves the second equality above.
Using this identity and $\sigma_{BE}=\psi_{BE}$, we obtain
\begin{equation}
I(SX;E|B)_\sigma
=H(E|B)_\psi-H(E|SBX)_\sigma
=H(E|B)_\psi+H(E|FX)_\sigma.
\label{eq:had-quantum-identities}
\end{equation}

Substituting~\eqref{eq:had-coherent-identity}
and~\eqref{eq:had-quantum-identities}
into~\eqref{eq:had-score-chain} now gives
\begin{equation}
\begin{aligned}
s_{\lambda,\mu}(V;\rho_{AB})
={}&I(A\rangle B)_\rho-\mu H(E|B)_\psi
-\big[I(X;B)_\sigma-I(X;E)_\sigma\big]\\
&-(\lambda-\mu)I(X;E|B)_\sigma
-H(F|EX)_\sigma-\mu H(E|FX)_\sigma.
\end{aligned}
\label{eq:had-expanded-score}
\end{equation}

To obtain the expression in~\eqref{eq:had-penalty}, we regroup the terms
involving the classical outcome $X$.
Expanding $I(X;BE)_\sigma$ in the two possible orders gives
\begin{equation}
I(X;B|E)_\sigma
=I(X;B)_\sigma-I(X;E)_\sigma+I(X;E|B)_\sigma.
\label{eq:had-classical-chain}
\end{equation}
Consequently,
\begin{equation}
\begin{aligned}
&I(X;B)_\sigma-I(X;E)_\sigma
+(\lambda-\mu)I(X;E|B)_\sigma\\
&\qquad=(1+\mu-\lambda)
\big[I(X;B)_\sigma-I(X;E)_\sigma\big]
+(\lambda-\mu)I(X;B|E)_\sigma.
\end{aligned}
\label{eq:had-classical-regrouping}
\end{equation}
Substituting~\eqref{eq:had-classical-regrouping}
into~\eqref{eq:had-expanded-score} and using the definition
of $\Gamma_{\lambda,\mu}$ in~\eqref{eq:had-gamma}
proves~\eqref{eq:had-penalty}.

It remains to prove that $\Gamma_{\lambda,\mu}(V;\rho_{AB})\geq0$.
The parameter constraints imply that all coefficients
in~\eqref{eq:had-gamma} are nonnegative.
The degrading channel from $B$ to $E$ reproduces the $XE$ marginal
from the $XB$ marginal after Alice's instrument.
Data processing therefore gives
$I(X;B)_\sigma\geq I(X;E)_\sigma$.
Strong subadditivity gives $I(X;B|E)_\sigma\geq0$.

For the remaining terms, $\psi_{AE}$ is separable because
$\rho_{AB}$ is Hadamard generated.
Since the instrument acts only on $A$, the normalized conditional
state $\sigma^x_{FE}$ is separable for every outcome $x$ of nonzero
probability.
Write
\begin{equation}
\sigma^x_{FE}
=\sum_z q_{z|x}\alpha_F^{xz}\otimes\beta_E^{xz},
\end{equation}
where $(q_{z|x})_z$ is a probability distribution and
$\alpha_F^{xz}$ and $\beta_E^{xz}$ are density operators.
Concavity of conditional entropy and the product form of each
term give
\begin{equation}
H(F|E)_{\sigma^x}
\geq\sum_z q_{z|x}H(F)_{\alpha^{xz}}\geq0,
\qquad
H(E|F)_{\sigma^x}
\geq\sum_z q_{z|x}H(E)_{\beta^{xz}}\geq0.
\label{eq:had-conditional-entropy-main}
\end{equation}
Averaging over $x$ yields
$H(F|EX)_\sigma\geq0$ and $H(E|FX)_\sigma\geq0$.
Thus every summand in~\eqref{eq:had-gamma} is nonnegative,
which completes the proof.
\end{proof}

We now return to the single-letterization criterion in
Section~\ref{sec:single-letterization}.
As explained there, additivity of the static capacity formula on tensor
powers, for every parameter pair in~\eqref{eq:parameters}, removes the
regularization in~\eqref{eq:reg-support} and determines the capacity
region from the one-copy information region.
Lemma~\ref{lem:had-score-decomposition} expresses this additivity
question entirely in terms of the optimization of
$\Gamma_{\lambda,\mu}$.
Indeed, taking the supremum over instruments
in~\eqref{eq:had-penalty} gives
\begin{equation}
\sS_{\lambda,\mu}(\rho_{AB})
=I(A\rangle B)_\rho-\mu H(E|B)_\psi
-\inf_V\Gamma_{\lambda,\mu}(V;\rho_{AB}).
\label{eq:had-optimized-score}
\end{equation}
The first two terms on the right are additive on tensor-product
states. Applying~\eqref{eq:had-optimized-score} to
$\rho_{AB}^{\otimes n}$ with purification $\psi_{ABE}^{\otimes n}$
therefore shows that the self-additivity condition
in~\eqref{eq:selfadd} is equivalent to
\begin{equation}
\inf_{V_n}\Gamma_{\lambda,\mu}
(V_n;\rho_{AB}^{\otimes n})
=n\inf_V\Gamma_{\lambda,\mu}(V;\rho_{AB})
\qquad\text{for all }n\in\mathbb{N}.
\label{eq:had-gamma-selfadd}
\end{equation}
Here $V_n$ ranges over all local instruments on $A^n$, including
collective instruments.
The same reasoning applies to products of two different
Hadamard-generated states: additivity of the static capacity formula
for such a product is equivalent to additivity of the corresponding
infimum of $\Gamma_{\lambda,\mu}$.

Nonnegativity alone does not show that the infimum of
$\Gamma_{\lambda,\mu}$ is additive. For two Hadamard-generated states
$\rho_{AB}$ and $\tau_{A'B'}$, product instruments only prove
\begin{align}
\inf_{V_{12}}\Gamma_{\lambda,\mu}
(V_{12};\rho_{AB}\otimes\tau_{A'B'})&\leq
\inf_{V_1}\Gamma_{\lambda,\mu}(V_1;\rho_{AB})
+\inf_{V_2}\Gamma_{\lambda,\mu}(V_2;\tau_{A'B'}).
\end{align}
The reverse inequality would require an additional argument controlling
collective instruments.

To examine the effect of retaining $F$ as part of Alice's quantum output,
let $V^{\mathrm{ret}}$ be the instrument with the same maps and classical
output as $V$, but with quantum output $SF$ and a trivial discarded system.
Evaluating both scores on the original state $\sigma_{XSFBE}$ gives
\begin{align}
&s_{\lambda,\mu}(V^{\mathrm{ret}};\rho_{AB})
-s_{\lambda,\mu}(V;\rho_{AB})\nonumber \\
&=I(SF\rangle BX)_\sigma-I(S\rangle BX)_\sigma
-\mu\big[I(SF;E|BX)_\sigma-I(S;E|BX)_\sigma\big]\\
&=-H(F|SBX)_\sigma-\mu I(F;E|SBX)_\sigma\\
&=H(F|EX)_\sigma-\mu I(F;E|X)_\sigma.
\label{eq:retain}
\end{align}
The classical term cancels because $XBE$ is unchanged. Conditional purity
gives $H(F|SBX)_\sigma=-H(F|EX)_\sigma$ and
$I(F;E|SBX)_\sigma=I(F;E|X)_\sigma$, justifying the last equality.
The two quantities in the last line are nonnegative individually, but their
difference need not have a fixed sign when $\mu>0$. The following sections
give a more explicit reduction and a strict example of this obstruction.

\Needspace{10\baselineskip}
\section{Maximally correlated states: an exact matrix reduction}\label{sec:mc}
Maximally correlated states have a distinguished role in entanglement
distillation~\cite{Rains01,DW05}. We study their full three-resource
trade-off while retaining Alice's discarded quantum system in the optimization.
The main results are an exact matrix formulation at every blocklength,
a strict example of an advantage from a discarded quantum system, and
an additive outer bound obtained from an information-bottleneck
optimization. The full single-letter capacity region remains open.
Let
\begin{equation}
\rho_{\mc,AB}\coloneqq\sum_{i,j=1}^d r_{ij}\ket{ii}\bra{jj}_{AB},
\qquad r\geq0,\quad\Tr r=1.
\label{eq:mcstate}
\end{equation}
We regard the coefficient matrix $r$ as a density operator on a
$d$-dimensional system $\widetilde A$, with distinguished basis
$\left\{\ket i_{\widetilde A}\right\}_{i=1}^d$. Its $n$ copies are
$\widetilde A_1,\ldots,\widetilde A_n$, and their joint system is denoted by
$\widetilde A^n\coloneqq\widetilde A_1\cdots\widetilde A_n$.
The tilde is part of the system label; a numerical subscript denotes a copy.
Set
\begin{equation}
p_i\coloneqq r_{ii}. \label{eq:p_i-def-max-corr}
\end{equation}
Coordinates with $p_i=0$
can be deleted: positive semidefiniteness gives $|r_{ij}|^2\leq p_ip_j$, so the corresponding
rows and columns vanish. Henceforth $p_i>0$.

Define the correlation matrix
\begin{equation}
G_{ij}\coloneqq\frac{r_{ij}}{\sqrt{p_ip_j}}.
\label{eq:G}
\end{equation}
Then $G\geq0$ and $G_{ii}=1$. Choose unit vectors $\ket{e_i}_E$ satisfying
$\langle e_j|e_i\rangle=G_{ij}$, and write $e_{i,E}\coloneqq\proj{e_i}_E$.
A purification of the maximally correlated state is
\begin{equation}
\ket\psi_{ABE}\coloneqq\sum_i\sqrt{p_i}\ket i_A\ket i_B\ket{e_i}_E.
\label{eq:purification}
\end{equation}
Tracing out $A$ gives
\begin{equation}
\psi_{BE}=\sum_i p_i\proj i_B\otimes e_{i,E}.
\label{eq:cqBE}
\end{equation}
It is useful to introduce at this point the associated classical--quantum source
\begin{equation}
\omega_{YE}\coloneqq\sum_i p_i\proj i_Y\otimes e_{i,E}.
\label{eq:cqsource}
\end{equation}
The register $Y$ records a classical input label. It is a separate register from
$B$, although $\omega_{YE}$ and $\psi_{BE}$ have the same matrix representation.
Since each conditional state $e_{i,E}$ is pure,
\begin{align}
H(BE)_\psi&=H(Y)_\omega+\sum_i p_iH(E)_{e_i}=H(Y)_\omega,\\
H(B)_\psi&=H(Y)_\omega,\\
H(E|B)_\psi&=0.
\label{eq:EgivenBzero}
\end{align}
The isometry $\ket i_{\widetilde A}\mapsto\ket{ii}_{AB}$ maps $r_{\widetilde A}$ to
$\rho_{\mc,AB}$, so their nonzero spectra coincide. Purity of $\psi_{ABE}$ then gives
\begin{align}
H(E)_\psi&=H(AB)_{\rho_{\mc}}=H(\widetilde A)_r,\\
I(A\rangle B)_{\rho_{\mc}}&=H(Y)_\omega-H(\widetilde A)_r.
\label{eq:cohinfo}
\end{align}

The Schur-product channel on $\widetilde A$ is
\begin{equation}
\cN_G(\theta)\coloneqq G\circ\theta,
\label{eq:Schur}
\end{equation}
where $\circ$ denotes entrywise multiplication. Its output is identified with
$\widetilde A$ when we evaluate the matrix functional below. An isometric extension is
$\ket i_{\widetilde A}\mapsto\ket i_{\widetilde A}\ket{e_i}_E$, with complementary channel
\begin{equation}
\cN_G^c(\theta)=\sum_i\theta_{ii}e_{i,E}.
\end{equation}
This complement measures the input basis and prepares a state, so it is
entanglement breaking. Thus $\cN_G$ is Hadamard. Applying $\cN_G$ to one share
of $\sum_i\sqrt{p_i}\ket{ii}$, with its output relabeled as $B$, produces
$\rho_{\mc,AB}$. No orthogonality assumption is made on the vectors $e_i$.

\subsection{Canonical instruments and an exact matrix formula}
We first reduce the general instrument optimization in
\eqref{eq:instrument-state} and~\eqref{eq:facet1}--\eqref{eq:facet3} to a
form adapted to the distinguished basis in~\eqref{eq:mcstate}.
The reduction preserves the full one-copy region and applies to a whole
block of copies. It is the starting point for the matrix optimization below.

For an instrument $V_x\colon A\to SF$, write
\begin{equation}
\sigma_{XSFBE}\coloneqq\sum_x\proj x_X\otimes
(V_x\otimes I_{BE})\psi_{ABE}(V_x^\dagger\otimes I_{BE}).
\end{equation}
Thus each conditional quantum state is pure. The following normal form
retains an orthogonal basis label and restricts the discarded states to be pure
conditional on that label and the instrument outcome.

\begin{theorem}[Canonical instrument normal form]
\label{thm:normal-form}
Let $\rho_{\mc,AB}$ be the maximally correlated state in~\eqref{eq:mcstate},
with a $d\times d$ coefficient density matrix $r$ and distinguished basis
$\left\{\ket i\right\}_{i=1}^d$. The instrument union defining $\cR^{(1)}$
in~\eqref{eq:facet1}--\eqref{eq:facet3} is unchanged if restricted to
instruments $V_x\colon A\to SF$ of the form
\begin{equation}
V_x\ket i_A=\sqrt{a(x|i)}\ket i_S\ket{f_{xi}}_F,
\qquad a(x|i)\geq0,\quad\sum_xa(x|i)=1,
\label{eq:canonical}
\end{equation}
where every $\ket{f_{xi}}_F$ is a unit vector. One can take $\dim S=d$
and $\dim F\leq d$. With the purification in~\eqref{eq:purification},
the induced state $\sigma_{XSFBE}$ of~\eqref{eq:instrument-state} satisfies
$I(X;E|B)_\sigma=0$. For every blocklength $n\in\mathbb{N}$, the same
statement holds with multi-indices $i^n$, $\dim S=d^n$, and $\dim F\leq d^n$.
\end{theorem}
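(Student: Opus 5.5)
The plan is to show that for a maximally correlated state, every quantity in \eqref{eq:facet1}--\eqref{eq:facet3} depends on an arbitrary instrument only through a small amount of data. We then replace that data by the canonical form \eqref{eq:canonical}, with no loss in any of the three facets.

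\textbf{Step 1: reduce to the data of the instrument.} Fix an instrument $V_x\colon A\to SF$. Write $g_{xi}\coloneqq V_x\ket i_A\in SF$, $a(x|i)\coloneqq\|g_{xi}\|^2$ and $\hat g_{xi}\coloneqq g_{xi}/\|g_{xi}\|$. By \eqref{eq:purification}, the conditional pure state given $x$ is proportional to
\begin{equation*}
\sum_i\sqrt{p_ia(x|i)}\,\ket{\hat g_{xi}}_{SF}\ket i_B\ket{e_i}_E .
\end{equation*}
Because $B$ carries the orthonormal label $i$, every marginal containing $B$ or $E$ but not $B$ alone needs a separate look. Two facts will be used.

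First, the $BE$ marginal given $x$ is $\sum_ip_ia(x|i)\proj i_B\otimes e_{i,E}$. Given $B=i$, the state of $E$ is the pure state $e_i$, independent of $x$. Hence $I(X;E|B)_\sigma=0$ for \emph{every} instrument, not just canonical ones. This already gives the last claim of the theorem.

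Second, using conditional purity, I would write
\begin{align*}
I(S\rangle BX)_\sigma&=H(B|X)_\sigma-H(FE|X)_\sigma,\\
I(S;E|BX)_\sigma&=H(FE|X)_\sigma+H(BE|X)_\sigma-H(B|X)_\sigma-H(F|X)_\sigma .
\end{align*}
Tracing out $B$ kills all off-diagonal terms, so the $FE$ marginal given $x$ is the classical--quantum mixture
\begin{equation*}
\sum_ip_ia(x|i)\,\tau_{xi,F}\otimes e_{i,E},\qquad \tau_{xi}\coloneqq\Tr_S\hat g_{xi}.
\end{equation*}
Thus the three facet values depend only on the numbers $a(x|i)$ and the $F$-marginals $\tau_{xi}$ of the normalized output vectors.

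\textbf{Step 2: realize the data canonically.} If every $\tau_{xi}$ were pure, say $\tau_{xi}=f_{xi}$, then the canonical instrument \eqref{eq:canonical} with these $a(x|i)$ and $\ket{f_{xi}}$ would reproduce exactly the same $(X,B,E,F)$ statistics. It would therefore reproduce all three facet values. Its $SB$ system is obtained from $B$ by the coherent copy $\ket i_S\ket i_B$, which is harmless because $H(SB|X)=H(FE|X)$ is all that matters. The dimension bounds are then immediate: $\dim S=d$, and the $d$ vectors $f_{xi}$ for each fixed $x$ span at most $d$ dimensions, so one may take $\dim F\le d$ after a conditional isometry on $F$. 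The $n$-copy statement is the same argument with multi-index $i^n$ on the maximally correlated state $\rho_{\mc}^{\otimes n}$, whose purification has the same form with $e_{i^n}=\bigotimes_k e_{i_k}$.

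\textbf{Step 3, the main obstacle: mixed marginals $\tau_{xi}$.} The general case has mixed $\tau_{xi}$. I would first try to refine the outcome. Decompose each $\tau_{xi}$ into pure states and absorb the decomposition index $k$ into the classical outcome, with $a(x,k|i)=a(x|i)c_{k|x,i}$. This produces pure discarded states. However, refining $X$ also changes $H(B|X)$ and $H(F|X)$, so I would have to verify that the objective $I(S\rangle BX)-\mu I(S;E|BX)$, and each facet separately, does not decrease. Concavity of the relevant conditional entropies is the tool to try, for example writing $H(B|X)-H(FE|X)=-H(FE|BX)+\text{const}$ in terms of the classical $B$.

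If refinement fails, the fallback is to move the purifying part of $\hat g_{xi}$, namely its $S$-component, to Bob's side via the label $i$ in $B$. That is, find a Bob-side isometry controlled on $i$ that shows the canonical instrument with $f_{xi}$ chosen as suitable purifications dominates the original on every facet.

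I expect this step, turning the mixed discarded marginals into pure ones without losing on any of \eqref{eq:facet1}--\eqref{eq:facet3}, to carry the real content of the theorem. Steps 1 and 2 are routine bookkeeping.
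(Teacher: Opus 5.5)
\textbf{There is a genuine gap: Step~1 is false as stated, and Step~3, which carries the real content, is left open.}

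\textbf{Step~1.} This step rests on a false premise: once Alice's instrument acts, $B$ is no longer classical conditional on the outcome. Given $x$, the unnormalized $BE$ state is $\sum_{i,j}\sqrt{p_ip_j}\,\bra j V_x^\dagger V_x\ket i\,\ket i\bra j_B\otimes\ket{e_i}\bra{e_j}_E$. Its off-diagonal blocks vanish only when the vectors $V_x\ket i$ are mutually orthogonal, which is exactly what the canonical form enforces. For a counterexample, take $G=I$, so $\psi_{ABE}$ is a GHZ state, and let Alice measure in the basis $\{\ket+,\ket-\}$. Each conditional $BE$ state is then a Bell state, so $H(E|BX)_\sigma=-1$ while $H(E|B)_\psi=0$, giving $I(X;E|B)_\sigma=1$.

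So $I(X;E|B)_\sigma=0$ is a property of canonical instruments, not of all instruments. Likewise, the facet values are not functions of $a(x|i)$ and $\tau_{xi}$ alone. Only $I(SX;E|B)_\sigma=H(E|FX)_\sigma$ is. The quantities $H(B|X)_\sigma=H(SFE|X)_\sigma$ and $I(X;E|B)_\sigma=H(SFE|X)_\sigma-H(SF|X)_\sigma$ depend on the overlaps of the full vectors $V_x\ket i$.

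\textbf{Consequence for Step~2.} The canonical instrument does not reproduce the same statistics, so Step~2 must be a domination argument rather than an equality. The missing step is this. Retaining an orthogonal copy $\ket i_L$ of the label preserves the $FXE$ marginal, and hence the first facet. The new value of $H(B|X)$ is $H(Y|X)$, which is at least $H(SFE|X)_\sigma$, because the entropy of a mixture of pure states is at most the Shannon entropy of its weights (dephase the Gram matrix). Finally, the copy makes $I(X;E|B)$ vanish. Together these improve all three facets at once.

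\textbf{Step~3.} You rightly call this the crux, but it is left open, and the refinement you suggest can fail. After copying, the coherent information is $H(Y|X)-H(FE|X)$. Absorbing $k$ with $a(x,k|i)=a(x|i)c_{k|x,i}$ changes it by $I(FE;K|X)-I(Y;K|X)$. This can be negative, because $K$ becomes correlated with the label.

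For a concrete failure, take $e_0=e_1$, $p=(1/2,1/2)$, and $\tau_0=0.9\proj u+0.1\proj{u^\perp}$. Take $\tau_1=I/2$, decomposed as $\tfrac12\proj{u^\perp}+\tfrac12\proj u$, and pair these components with $k=0,1$ respectively. Given either value of $K$, $F$ is perfectly correlated with $Y$. The value therefore drops from $1-h_2(0.3)>0$ to $0$.

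\textbf{The missing idea: an independent refinement.}
\begin{itemize}
\item Choose pure decompositions $\tau_{xi}=\sum_k b_{xi,k}\proj{f_{xi,k}}$.
\item Refine by a tuple $Z=(k_1,\ldots,k_d)$ with conditional law $\pi(z|x)\coloneqq\prod_i b_{xi,k_i}$, using $\ket i\mapsto\sqrt{a(x|i)\pi(z|x)}\ket i\ket{f_{xi,k_i}}$.
\item $Z$ is independent of $Y$ given $X$, so $H(Y|XZ)=H(Y|X)$.
\item Averaging over $Z$ returns the old $FXE$ marginal, so strong subadditivity gives $H(FE|XZ)\le H(FE|X)$ and $H(E|FXZ)\le H(E|FX)$.
\item $I(XZ;E|B)=0$ persists.
\end{itemize}
Your dimension count and the $n$-copy extension go through once these two domination steps are in place.
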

\begin{proof}
See Appendix~\ref{app:mc-normal-form}.
\end{proof}

\begin{remark}[What the normal form does not say]
The vectors $f_{xi}$ need not be orthogonal or equal. At blocklength $n$, the
vectors $f_{x,i^n}$ need not factor across copies. The pure-state refinement is
not an arbitrary rank-one measurement of the discarded system of the original
instrument. It constructs a new instrument and uses a tuple independent of the
input label conditional on $X$, after exploiting the distinguished basis.
It therefore does not contradict the nonmonotonicity of arbitrary instrument
refinement in the general Hadamard problem.
\end{remark}

\subsection{The finite-dimensional matrix optimization}

Theorem~\ref{thm:normal-form} reduces the instrument optimization to
conditional probabilities and discarded pure-state vectors.
We now encode these quantities in an ensemble of density operators,
whose matrix entries contain the probabilities and the overlaps of
the discarded vectors.
This gives an exact expression for the static capacity formula
in Theorem~\ref{thm:matrix-formula} and a description of the complete
one-copy information region in~\eqref{eq:exact-onecopy-region}.
The ensemble is constrained by its average diagonal, which is fixed
by the resource state.
The same construction applies at every blocklength, allowing us to
state precisely the remaining additivity question and a sufficient
condition for removing regularization.
Finally, we express the matrix objective in terms of the relative
entropy of coherence, providing another way to study this optimization.

\subsubsection{The conditional entropy calculation}
We now evaluate the information quantities of the canonical instrument
in~\eqref{eq:canonical}. The key step is to represent its discarded-system
overlaps by a density matrix, whose spectrum also determines the required
entropies after the Schur channel.

Recall the definition of $p_i$ in \eqref{eq:p_i-def-max-corr}. For a canonical instrument, set
$w_x\coloneqq\sum_i p_i a(x|i)$ and
$q_i^x\coloneqq p_i a(x|i)/w_x$, omitting zero-probability outcomes. 
Define
\begin{align}
\ket{\Psi_x}_{SFBE}
&\coloneqq\sum_i\sqrt{q_i^x}\ket i_S\ket{f_{xi}}_F\ket i_B\ket{e_i}_E,
\label{eq:canonical-state}\\
\sigma_{XSFBE}&\coloneqq\sum_xw_x\proj x_X\otimes\proj{\Psi_x}_{SFBE}.
\end{align}
Here and below, $\sigma$ denotes the state for the canonical instrument under
consideration. We write $\Psi_x\coloneqq\proj{\Psi_x}$ and
$f_{xi,F}\coloneqq\proj{f_{xi}}_F$ for the corresponding density operators. Its associated classical--quantum extension is
\begin{equation}
\omega_{Y X F E}\coloneqq\sum_{i,x}p_i a(x|i)
\proj i_Y\otimes\proj x_X\otimes\proj{f_{xi}}_F\otimes e_{i,E}.
\end{equation}
It extends the fixed source $\omega_{YE}$ in~\eqref{eq:cqsource} and has
$\omega_{XFE}=\sigma_{XFE}$. Define a density matrix on $\widetilde A$ by
\begin{equation}
(\theta_x)_{ij}\coloneqq
\sqrt{q_i^xq_j^x}\langle f_{xj}|f_{xi}\rangle.
\label{eq:theta}
\end{equation}
It is the reduced state on $\widetilde A$ of
$\sum_i\sqrt{q_i^x}\ket i_{\widetilde A}\ket{f_{xi}}_F$. Consequently,
\begin{align}
\theta_x&\geq0,\qquad\Tr\theta_x=1,\qquad
\Delta(\theta_x)=\diag(q^x),\\
H(F)_{\Psi_x}&=H(\widetilde A)_{\theta_x},
\label{eq:theta-spectral}
\end{align}
where $\Delta$ completely dephases $\widetilde A$ in its distinguished basis.
Using the definitions of $\theta_x$ in~\eqref{eq:theta}
and $G$ in~\eqref{eq:G}, we find that the $SB$ marginal
of the state in~\eqref{eq:canonical-state} is
\begin{equation}
(\Psi_x)_{SB}
=\sum_{i,j}(\theta_x)_{ij}G_{ij}\ket{ii}\bra{jj}_{SB}.
\end{equation}
Consequently,
\begin{equation}
H(FE)_{\Psi_x}=H(SB)_{\Psi_x}=H(\widetilde A)_{\cN_G(\theta_x)}.
\label{eq:FE-spectral}
\end{equation}
The first equality follows from purity of $\Psi_x$.
The second follows because $(\Psi_x)_{SB}$ is the isometric embedding
of $\cN_G(\theta_x)$ under
$\ket{i}_{\widetilde A}\mapsto\ket{ii}_{SB}$, which preserves the nonzero
eigenvalues and hence the entropy.
It follows that
\begin{align}
I(S\rangle BX)_\sigma
&=\sum_xw_x\big[H(\widetilde A)_{\Delta(\theta_x)}
-H(\widetilde A)_{\cN_G(\theta_x)}\big],\label{eq:matrix-c}\\
I(SX;E|B)_\sigma
&=\sum_xw_x\big[H(\widetilde A)_{\cN_G(\theta_x)}
-H(\widetilde A)_{\theta_x}\big],\label{eq:matrix-ell}\\
I(X;E|B)_\sigma&=0.
\end{align}
The only ensemble constraint is
\begin{equation}
\sum_xw_x\Delta(\theta_x)=\diag(p).
\label{eq:diagonal-constraint}
\end{equation}
No constraint $\sum_xw_x\theta_x=r$ is imposed. The matrices $\theta_x$ encode
the discarded vectors through their Gram matrices; they are distinct from the
physical average channel input in other capacity optimizations.

\begin{theorem}[Exact one-copy support formula]
\label{thm:matrix-formula}
Let $\rho_{\mc,AB}$ and its coefficient density matrix $r$ be as
in~\eqref{eq:mcstate}, with $p_i\coloneqq r_{ii}>0$ after deletion of
zero-probability coordinates. Let $p\coloneqq(p_1,\ldots,p_d)$, let $G$
be the correlation matrix in~\eqref{eq:G}, and let $\cN_G$ be the Schur
channel in~\eqref{eq:Schur}. On the $d$-dimensional matrix system $\widetilde A$,
let $\Delta$ denote dephasing in the distinguished basis. For $\mu\geq0$
and a density operator $\theta$ on $\widetilde A$, define
\begin{equation}
L_\mu^G(\theta)\coloneqq H(\widetilde A)_{\Delta(\theta)}
+\mu H(\widetilde A)_\theta-(1+\mu)H(\widetilde A)_{\cN_G(\theta)}
\label{eq:L}
\end{equation}
and
\begin{equation}
\mathfrak S_\mu(G,p)
\coloneqq\max_{\substack{\left(w_x,\theta_x\right)_x\\
\sum_xw_x\Delta(\theta_x)=\diag(p)}}
\sum_xw_xL_\mu^G(\theta_x).
\label{eq:matrix-functional}
\end{equation}
The maximum is over finite ensembles of density operators $\theta_x$ on
$\widetilde A$ with probability weights $w_x$, subject to the displayed mean-diagonal
constraint; $X$ denotes their classical label. For every
$\mu\leq\lambda\leq\mu+1$, the static capacity formula in~\eqref{eq:support} obeys
\begin{equation}
\sS_{\lambda,\mu}(\rho_{\mc,AB})=\mathfrak S_\mu(G,p).
\label{eq:exact-support}
\end{equation}
All matrices in~\eqref{eq:matrix-functional} act on the $d$-dimensional system
$\widetilde A$, and $|X|\leq d$ suffices. The optimized support is independent of $\lambda$.
\end{theorem}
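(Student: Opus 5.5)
The plan is to reduce everything to canonical instruments and then read off the objective from the matrix identities already derived. By Theorem~\ref{thm:normal-form}, the supremum in~\eqref{eq:support} may be restricted to instruments of the form~\eqref{eq:canonical}. Strictly, that theorem is stated for the region $\cR^{(1)}$. Since~\eqref{eq:onecopy-support-interpretation} identifies $\sS_{\lambda,\mu}$ with the support function of $\cR^{(1)}$, and a union of polyhedra has the same support whether or not the restriction is applied, the restriction is harmless. For such an instrument, $I(X;E|B)_\sigma=0$, so the $\lambda$-term disappears; this is why the optimized value is independent of $\lambda$. The chain rule then gives $I(S;E|BX)_\sigma=I(SX;E|B)_\sigma$. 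Combining~\eqref{eq:matrix-c} and~\eqref{eq:matrix-ell}, the objective becomes
\[
\sum_x w_x\bigl[H(\Delta(\theta_x))-H(\cN_G(\theta_x))-\mu H(\cN_G(\theta_x))+\mu H(\theta_x)\bigr]=\sum_x w_xL^G_\mu(\theta_x),
\]
and the diagonal constraint~\eqref{eq:diagonal-constraint} holds automatically. This proves $\sS_{\lambda,\mu}\le\mathfrak S_\mu$, provided every canonical instrument yields an admissible ensemble. It does: $w_x$ is a probability distribution, and $\theta_x$ is a density operator with $\Delta(\theta_x)=\diag(q^x)$, which gives $\sum_x w_xq^x_i=p_i$.

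For the reverse inequality, start from any finite ensemble $(w_x,\theta_x)$ satisfying the mean-diagonal constraint and construct a canonical instrument that realizes it. Set $a(x|i)\coloneqq w_x(\theta_x)_{ii}/p_i$. The constraint gives $\sum_x a(x|i)=1$, and then $q^x_i=(\theta_x)_{ii}$. Next, take a Gram factorization of $\theta_x$: choose vectors $\ket{g_{xi}}$ with $\langle g_{xj}|g_{xi}\rangle=(\theta_x)_{ij}$, for instance $\ket{g_{xi}}=\theta_x^{1/2}\ket i$ up to conjugation conventions, so that $\|g_{xi}\|^2=(\theta_x)_{ii}$. Put $\ket{f_{xi}}\coloneqq\ket{g_{xi}}/\|g_{xi}\|$ when $(\theta_x)_{ii}>0$, and let it be arbitrary otherwise; terms with $q^x_i=0$ drop out of~\eqref{eq:theta}. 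Then~\eqref{eq:theta} reproduces $\theta_x$. I also need to check that $\sum_x V_x^\dagger V_x=I$ with these $V_x$. Orthogonality of $\ket i_S$ makes $V_x^\dagger V_x$ diagonal with entries $a(x|i)$, so this holds. The same computation then shows that the objective equals $\sum_xw_xL_\mu^G(\theta_x)$, which gives equality. The dimension claim $\dim F\le d$ comes directly from the factorization $\theta_x^{1/2}$.

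It remains to justify the maximum, and this is the step I expect to take the most care. It is also where $|X|\le d$ comes from. The value depends on each $\theta_x$ only through $L^G_\mu(\theta_x)$ and the constraint vector $\operatorname{diag}(\theta_x)$, which lies in a $(d-1)$-dimensional simplex. So the optimum is the concave envelope of the function $\mathbf q\mapsto\max\{L^G_\mu(\theta):\Delta(\theta)=\diag(\mathbf q)\}$, evaluated at $p$. This inner function is attained because the set of such $\theta$ is compact and $L^G_\mu$ is continuous. It is also upper semicontinuous in $\mathbf q$, since it is a maximum of a continuous function over a compact-valued, upper hemicontinuous constraint. Carathéodory's theorem applied to the hypograph in $\mathbb R^{d-1}\times\mathbb R$ then gives $d+1$ points. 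To lower this to $d$, I would use the standard Fenchel--Eggleston refinement: the relevant set is connected, and the optimal point lies on the boundary of the convex hull. Attainment then follows from compactness of the space of $d$-point ensembles together with continuity.
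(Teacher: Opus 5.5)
Your proposal is correct and follows the paper's proof. One direction restricts to canonical instruments via Theorem~\ref{thm:normal-form} and uses \eqref{eq:matrix-c}--\eqref{eq:matrix-ell}; the other realizes an arbitrary admissible ensemble by purifying each $\theta_x$ and setting $a(x|i)=w_x(\theta_x)_{ii}/p_i$; and $|X|\le d$ comes from the Fenchel--Eggleston--Carath\'eodory theorem. The only cosmetic difference is that the paper applies the connected-set version directly to the compact, path-connected image of $\theta\mapsto(\theta_{11},\ldots,\theta_{d-1,d-1},L_\mu^G(\theta))$, which gives attainment and the $d$-point bound at once, without your inner maximization over fixed diagonals and its semicontinuity step.
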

\begin{proof}
See Appendix~\ref{app:mc-matrix-formula}.
\end{proof}

\subsubsection{The complete one-copy information region}
The scalar optimization in~\eqref{eq:matrix-functional} determines one
supporting value at a time. The same entropy calculation also gives the
generating rate points themselves, and hence the entire one-copy information
region after adding the unit-resource cone in~\eqref{eq:unit}.

For a density operator $\theta$ on $\widetilde A$, define
\begin{align}
Q(\theta)&\coloneqq-\tfrac12\big[
H(\widetilde A)_{\cN_G(\theta)}-H(\widetilde A)_\theta\big],\label{eq:primitive-matrix-Q}\\
E(\theta)&\coloneqq H(\widetilde A)_{\Delta(\theta)}
-\tfrac12\big[H(\widetilde A)_\theta+H(\widetilde A)_{\cN_G(\theta)}\big].
\label{eq:primitive-matrix}
\end{align}
The exact one-copy region is
\begin{equation}
\cR^{(1)}(\rho_{\mc,AB})
=\bigcup_{\substack{\left(w_x,\theta_x\right)_x\\
\sum_xw_x\Delta(\theta_x)=\diag(p)}}
\left[\left(0,\sum_xw_xQ(\theta_x),\sum_xw_xE(\theta_x)\right)+\cU\right].
\label{eq:exact-onecopy-region}
\end{equation}
The equality and the sufficient cardinality bound $|X|\leq d+1$ are proved in
Appendix~\ref{app:mc-region}. The primitive points satisfy $Q(\theta)\leq0$
and $E(\theta)\geq0$. The cardinality bound and continuity of entropy make
the constrained set of average primitive points compact. Its sum with the
closed cone $\cU$ is therefore closed, so no additional closure is needed
in~\eqref{eq:exact-onecopy-region}.

\subsubsection{What would remove regularization}
The preceding formula evaluates a single copy. To identify it with the
operational capacity region, we must control collective instruments as
required by Section~\ref{sec:single-letterization}. The matrix reduction
turns this issue into the following additivity question.

At blocklength $n$, the exact formula is
\begin{equation}
\sS_{\lambda,\mu}(\rho_{\mc,AB}^{\otimes n})
=\mathfrak S_\mu(G^{\otimes n},p^{\otimes n}).
\label{eq:block-formula}
\end{equation}
The matrices on the right act on $\widetilde A^n$ and need not be products.
For all $n\in\mathbb N$, the corresponding sufficient outcome bounds are
$d^n$ for the support optimization and $d^n+1$ for an arbitrary primitive
rate point, as proved in Appendices~\ref{app:mc-matrix-formula}
and~\ref{app:mc-region}.
A sufficient additivity theorem would be
\begin{equation}
\mathfrak S_\mu(G_1\otimes G_2,p_1\otimes p_2)
\leq\mathfrak S_\mu(G_1,p_1)+\mathfrak S_\mu(G_2,p_2).
\label{eq:target-tensorization}
\end{equation}
Product ensembles give the reverse inequality. Self-additivity for every
$\mu\geq0$ would identify~\eqref{eq:exact-onecopy-region} with the full operational region.

\subsubsection{A coherence interpretation}
To study the additivity question in~\eqref{eq:target-tensorization},
it is useful to rewrite the matrix objective as a difference of coherence
quantities. This representation will also provide the connection to the
dephasing-contraction argument in Section~\ref{sec:dephasing-contraction}.

The relative entropy of coherence is defined as
$\Cr(\theta)\coloneqq H(\widetilde A)_{\Delta(\theta)}-H(\widetilde A)_\theta$;
see~\cite{WY16} for its operational role. Since $\Delta\circ \cN_G=\Delta$,
\begin{align}
&(1+\mu)\Cr(\cN_G(\theta))-\mu\Cr(\theta)\nonumber \\
&=(1+\mu)\big[H(\widetilde A)_{\Delta(\theta)}-H(\widetilde A)_{\cN_G(\theta)}\big]-\mu\big[H(\widetilde A)_{\Delta(\theta)}-H(\widetilde A)_\theta\big]\\
&=L_\mu^G(\theta).
\label{eq:coherence-form}
\end{align}
This identity does not invoke an existing additivity theorem for this weighted
coherence difference.

\subsection{Efficient instruments, classical memories, and the coherent-information face}
We next evaluate two restrictions of the instrument optimization and
identify an exact supporting face. Efficient instruments have no discarded
quantum system, and commuting discarded states can be reduced to this
case without decreasing the score. The resulting restricted formula is
additive. At $\mu=0$, it also attains the unrestricted support.
\subsubsection{The efficient restriction is additive}
We first restrict the canonical instruments in~\eqref{eq:canonical}
to those with no discarded quantum system. Their matrix variables become
pure states, and the remaining optimization is over classical probability
distributions. This restricted optimization admits an additivity proof.

An efficient instrument has a one-dimensional discarded system $F$. Its canonical
form has vectors $f_{xi}$ equal up to phases, so each $\theta_x$ has rank one.
For a distribution $q$, define
\begin{equation}
q_Y\coloneqq\sum_iq_i\proj i_Y,
\qquad\eta^q_{YE}\coloneqq\sum_iq_i\proj i_Y\otimes e_{i,E}.
\end{equation}
The efficient objective depends only on $q$:
\begin{equation}
g_\mu(q)\coloneqq H(Y)_q-(1+\mu)H(E)_{\eta^q}.
\end{equation}
It follows that
\begin{equation}
\sS_\mu^{\eff}(p)
\coloneqq\max_{\sum_xw_xq^x=p}\sum_xw_xg_\mu(q^x).
\label{eq:efficient}
\end{equation}
The notation suppresses the dependence on the fixed correlation matrix $G$,
or equivalently on the conditional states $\left(e_{i,E}\right)_i$.
This expression also optimizes over arbitrary efficient instruments before
the basis-copying construction: that construction remains efficient and dominates
the original information region.

\begin{proposition}[Additivity of the efficient restriction]\label{prop:efficient}
For every $\mu\geq0$ and two maximally correlated states as
in~\eqref{eq:mcstate}, the efficient optimum defined in~\eqref{eq:efficient}
for their product equals the sum of the separate optima at the same $\mu$.
The product uses the product input distribution and the tensor-product
correlation matrix, with each separate optimum evaluated for its own
correlation matrix and distribution.
\end{proposition}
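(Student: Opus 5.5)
The efficient optimum in~\eqref{eq:efficient} is the upper concave envelope of $g_\mu$ evaluated at $p$. I would prove the two inequalities separately. The lower bound is immediate from product ensembles: given optimal ensembles $(w_x,q^x)$ for $p_1$ and $(w'_{x'},q'^{x'})$ for $p_2$, the ensemble $(w_xw'_{x'},q^x\otimes q'^{x'})$ has average $p_1\otimes p_2$. For the product source, the conditional states are $e_{i_1}\otimes e'_{i_2}$, so $\eta^{q\otimes q'}_{Y_1Y_2E_1E_2}=\eta^{q}_{Y_1E_1}\otimes\eta^{q'}_{Y_2E_2}$. Entropy additivity then gives $g_\mu^{(12)}(q\otimes q')=g^{(1)}_\mu(q)+g^{(2)}_\mu(q')$, so the product optimum is at least the sum of the separate optima.

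For the upper bound, the key step is a single-distribution inequality. For any joint distribution $q$ on $Y_1Y_2$, I would show
\begin{equation}
g^{(12)}_\mu(q)\leq g^{(1)}_\mu(q_{Y_1})+\sum_{i_1}q(i_1)\,g^{(2)}_\mu\bigl(q(\cdot|i_1)\bigr).
\end{equation}
To prove it, work with the classical--quantum state $\sum_{i_1,i_2}q(i_1,i_2)\proj{i_1}\otimes\proj{i_2}\otimes e_{i_1,E_1}\otimes e'_{i_2,E_2}$. The chain rule gives $H(Y_1Y_2)=H(Y_1)+H(Y_2|Y_1)$ and $H(E_1E_2)=H(E_1)+H(E_2|E_1)$. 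Strong subadditivity (conditioning reduces entropy) gives $H(E_2|E_1)\geq H(E_2|E_1Y_1)$. Conditioned on $Y_1=i_1$, the state of $E_1E_2$ is the product of the pure state $e_{i_1}$ with $\eta^{q(\cdot|i_1)}_{E_2}$, so $H(E_2|E_1Y_1)=\sum_{i_1}q(i_1)H(E_2)_{\eta^{q(\cdot|i_1)}}$. Since $\mu\geq0$, the coefficient $-(1+\mu)$ is negative, so this lower bound on $H(E_1E_2)$ yields the displayed upper bound on $g^{(12)}_\mu(q)$. Here the purity of each $e_{i}$ is used in the same way as in~\eqref{eq:EgivenBzero}.

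Next, I would average this inequality over an arbitrary ensemble $(w_x,q^x)$ with $\sum_xw_xq^x=p_1\otimes p_2$. The first terms form an ensemble $(w_x,q^x_{Y_1})$ whose average is $p_1$, so their weighted sum is at most $\sS^{\eff}_\mu(p_1)$. The second terms form an ensemble indexed by $(x,i_1)$ with weights $w_xq^x(i_1)$ and distributions $q^x(\cdot|i_1)$. Its average is the $Y_2$-marginal of $\sum_xw_xq^x$, namely $p_2$, so this sum is at most $\sS^{\eff}_\mu(p_2)$. Taking the maximum over ensembles proves the upper bound, and the proof is complete.

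The main obstacle is the single-distribution inequality, specifically getting the direction of the $E$-entropy bound right. It works only because the weight on $H(E)$ is negative and because conditioning on the classical label $Y_1$ purifies $E_1$. Nothing else is needed: the product-ensemble lower bound and the reindexing of the second ensemble by $(x,i_1)$ are routine bookkeeping.
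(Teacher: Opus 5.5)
Your proof is correct and matches the paper's argument. The paper also splits the joint objective by the chain rule (conditioned on the outcome $X$) and uses data processing for the preparation $Y_1\to E_1$ to obtain $H(E_2|E_1X)\geq H(E_2|Y_1X)$, which is your strong-subadditivity step done per distribution before averaging; it then regroups into one-copy ensembles with auxiliaries $X$ and $XY_1$ and closes with the same product-ensemble lower bound. One small point: purity of the $e_i$ is not needed in your key step, since the product form conditioned on $Y_1=i_1$ already gives $H(E_2|E_1,Y_1=i_1)=H(E_2)$.
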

\begin{proof}
See Appendix~\ref{app:mc-efficient}.
\end{proof}

\subsubsection{Commuting discarded states}
We next allow a discarded system but require its conditional states to
commute. A refinement in their common eigenbasis reduces this class to the
efficient optimization in~\eqref{eq:efficient}, so such memories cannot
improve its value.

For the instrument operators in~\eqref{eq:instrument}, define
\begin{equation}
a(x|i)\coloneqq\|V_x\ket i_A\|^2,\qquad
\tau_{xi,F}\coloneqq\frac{\Tr_S[V_x\proj i_A V_x^\dagger]}{a(x|i)}
\quad\text{when }a(x|i)>0.
\label{eq:commuting-conditional-states}
\end{equation}
The condition is that, for each fixed $x$,
$[\tau_{xi,F},\tau_{xj,F}]=0$ for all defined pairs $i,j$.
Instruments satisfying this condition cannot improve the efficient optimum. Appendix~\ref{app:mc-commuting} proves this by refining in
a common eigenbasis and incorporating its label into the classical outcome.

\subsubsection{The exact \texorpdfstring{$\mu=0$}{mu=0} face}
At $\mu=0$, the coherent-information result for Hadamard states already
determines the unrestricted optimization. We record its expression in the
matrix notation of~\eqref{eq:cohinfo}, including its value on tensor powers.

\begin{proposition}\label{prop:mc-mu-zero}
Let $\rho_{\mc,AB}$ and its coefficient state $r_{\widetilde A}$ be
as in~\eqref{eq:mcstate}, and let $\omega_{YE}$ be their associated
source in~\eqref{eq:cqsource}. For all $n\in\mathbb{N}$ and
$0\leq\lambda\leq1$, the static capacity formula in~\eqref{eq:support} satisfies
\begin{equation}
\sS_{\lambda,0}(\rho_{\mc,AB}^{\otimes n})
=n\big[H(Y)_\omega-H(\widetilde A)_r\big].
\label{eq:mu0}
\end{equation}
\end{proposition}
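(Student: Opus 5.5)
This follows from Theorem~\ref{thm:had-face} once we check that $\rho_{\mc,AB}$ is Hadamard generated and compute its coherent information. We already saw after~\eqref{eq:Schur} that $\rho_{\mc,AB}$ is obtained by applying the Schur-product channel $\cN_G$ to one share of $\sum_i\sqrt{p_i}\ket{ii}$. That channel has an entanglement-breaking complement $\cN_G^c(\theta)=\sum_i\theta_{ii}e_{i,E}$, so $\cN_G$ is Hadamard. Equivalently, by Proposition~\ref{prop:had-characterization}, tracing $B$ out of~\eqref{eq:purification} gives
\[
\psi_{AE}=\sum_ip_i\proj i_A\otimes e_{i,E},
\]
which is separable. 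Hence $\rho_{\mc,AB}$ is a Hadamard-generated state in the sense of Section~\ref{sec:hadamard}, up to the local relabelling of the output as~$B$. The tensor power $\rho_{\mc,AB}^{\otimes n}$ is Hadamard generated by $\cN_G^{\otimes n}$. This step is not needed separately, however, since Theorem~\ref{thm:had-face} already covers all~$n$.

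Theorem~\ref{thm:had-face} then gives, for every $n\in\mathbb N$ and $0\leq\lambda\leq1$,
\[
\sS_{\lambda,0}(\rho_{\mc,AB}^{\otimes n})=nI(A\rangle B)_{\rho_{\mc}}.
\]
It remains to substitute~\eqref{eq:cohinfo}, namely $I(A\rangle B)_{\rho_{\mc}}=H(Y)_\omega-H(\widetilde A)_r$. This follows from $H(B)_\psi=H(Y)_\omega$ (the $B$ marginal is $\diag(p)$) and $H(AB)_{\rho_{\mc}}=H(\widetilde A)_r$ (the isometry $\ket i\mapsto\ket{ii}$ preserves the spectrum). Combining the two displays yields~\eqref{eq:mu0}.

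As a consistency check, and as an alternative proof that avoids citing Ref.~\cite{LDS18}, I would also verify the result through Theorem~\ref{thm:matrix-formula} with $\mu=0$ and the block formula~\eqref{eq:block-formula}. At $\mu=0$ the objective is $L_0^G(\theta)=H(\Delta(\theta))-H(\cN_G(\theta))=\Cr(\cN_G(\theta))$. For the upper bound, the conditional-entropy decomposition in Lemma~\ref{lem:had-score-decomposition} at $\mu=0$ gives score $\leq I(A\rangle B)$, since $\Gamma_{\lambda,0}\geq0$. The identity instrument corresponds to the single matrix $\theta=r$ with $w=1$. It gives
\[
L_0^G(r)=H(\diag(p))-H(\cN_G(r)).
\]
Here $\cN_G(r)$ is the matrix with entries $r_{ij}G_{ij}=r_{ij}^2/\sqrt{p_ip_j}$, so this value does not obviously equal $H(\widetilde A)_r$. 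This indicates that the natural $\theta$ for the identity instrument is not $r$ but the Gram matrix of its trivial discarded vectors, namely $\theta_{ij}=\sqrt{p_ip_j}$, with $\cN_G(\theta)=r$. That choice gives exactly $H(Y)-H(r)$.

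Pinning down this identification is the main (mild) obstacle. I would rely on the direct route through Theorem~\ref{thm:had-face} and~\eqref{eq:cohinfo}, which needs only the Hadamard structure and is uniform in~$n$.
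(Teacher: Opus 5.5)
Your main argument is correct and is exactly the paper's proof: you note that $\rho_{\mc,AB}$ is Hadamard generated via the Schur channel $\cN_G$, apply Theorem~\ref{thm:had-face} to get $nI(A\rangle B)_{\rho_{\mc}}$ for every $n$, and convert this with~\eqref{eq:cohinfo} to $n\big[H(Y)_\omega-H(\widetilde A)_r\big]$. Your corrected identification in the consistency check is also right but not needed: the identity instrument corresponds to the rank-one matrix $\theta_{ij}=\sqrt{p_ip_j}$, so that $\cN_G(\theta)=r$.
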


\begin{proof}
As established after~\eqref{eq:Schur}, the maximally correlated
state $\rho_{\mc,AB}$ is Hadamard generated.
Theorem~\ref{thm:had-face} therefore gives, for all
$n\in\mathbb N$ and $0\leq\lambda\leq1$,
\begin{equation}
\sS_{\lambda,0}(\rho_{\mc,AB}^{\otimes n})
=nI(A\rangle B)_{\rho_{\mc}}
=n\big[H(Y)_\omega-H(\widetilde A)_r\big],
\end{equation}
where the second equality follows from~\eqref{eq:cohinfo}.
\end{proof}

\subsection{A strict quantum-memory advantage}\label{sec:qubit}

We now exhibit a point in the static capacity region that efficient
instruments cannot attain, even with collective processing and the unit
protocols. The example is a two-qubit maximally correlated state.
Theorem~\ref{thm:quantum-memory} gives an explicit instrument
with nonorthogonal conditional states in the discarded system $F$
whose static capacity formula value strictly exceeds the efficient optimum
in~\eqref{eq:efficient}. By Proposition~\ref{prop:efficient}, this
advantage persists even when efficient instruments act collectively
on arbitrarily many copies. The comparison also includes instruments
with conditionally commuting discarded states, demonstrating that
allowing noncommuting states in $F$ can strictly improve the trade-off.

\begin{theorem}[Advantage of a discarded quantum system]
\label{thm:quantum-memory}
Consider the two-qubit maximally correlated state of~\eqref{eq:mcstate},
with basis labels $i\in\{0,1\}$ and coefficient matrix
\begin{equation}
r\coloneqq\begin{pmatrix}1/2&2/5\\2/5&1/2\end{pmatrix}.
\end{equation}
Thus $p=(1/2,1/2)$ and $G_{01}=G_{10}=4/5$ for $G$ in~\eqref{eq:G}.
For $\mu=1$ and every $1\leq\lambda\leq2$, its efficient support
from~\eqref{eq:efficient} is
\begin{equation}
\sS_1^{\eff}(p)=1-2h_2(9/10)\approx0.0620088128.
\end{equation}
Choose unit vectors $f_0,f_1$ with $\langle f_1|f_0\rangle=9/10$ and the
single-outcome instrument $V\ket i_A\coloneqq\ket i_S\ket{f_i}_F$. Its matrix
$\theta_t\coloneqq\frac12\bigl(\begin{smallmatrix}1&t\\t&1\end{smallmatrix}\bigr)$,
where $t\coloneqq9/10$, has the following value of the objective in~\eqref{eq:L}:
\begin{equation}
L_1^G(\theta_t)=1+h_2(19/20)-2h_2(43/50)
\approx0.1179193338>\sS_1^{\eff}(p).
\end{equation}
Its achievable primitive rates from~\eqref{eq:primitive-matrix-Q}--\eqref{eq:primitive-matrix} are
\begin{equation}
(C_0,Q_0,E_0)=\left(0,-\frac{h_2(43/50)-h_2(19/20)}2,
1-\frac{h_2(19/20)+h_2(43/50)}2\right).
\end{equation}
This improvement holds over all efficient instruments and all conditionally
commuting discarded-memory instruments as defined in Appendix~\ref{app:mc-commuting},
including their collective versions per copy and addition of the
unit-resource cone~\eqref{eq:unit}.
\end{theorem}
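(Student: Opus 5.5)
The plan is to compute the efficient optimum $\sS_1^{\eff}(p)$ from~\eqref{eq:efficient} exactly. Then we evaluate $L_1^G$ at the single matrix $\theta_t$. Finally, a support-function argument in the direction $(\lambda,3,1)$, which is the direction $(\lambda,1+2\mu,1)$ at $\mu=1$, transfers the strict gap to the operational statement.

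\textbf{Step 1: the explicit values.} For $\theta_t$, the dephased state is $\pi_2$, so $H(\widetilde A)_{\Delta(\theta_t)}=1$. The eigenvalues of $\theta_t$ are $(1\pm t)/2$, so $H(\widetilde A)_{\theta_t}=h_2(19/20)$. The Schur output $\cN_G(\theta_t)$ has off-diagonal entry $\tfrac12\cdot\tfrac9{10}\cdot\tfrac45=\tfrac{9}{25}$, so its eigenvalues are $\tfrac12\pm\tfrac9{25}$ and its entropy is $h_2(43/50)$. Substituting into~\eqref{eq:L} gives $L_1^G(\theta_t)=1+h_2(19/20)-2h_2(43/50)$. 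Substituting into~\eqref{eq:primitive-matrix-Q}--\eqref{eq:primitive-matrix} gives the stated primitive point. Write $a\coloneqq h_2(43/50)$ and $b\coloneqq h_2(19/20)$. One checks that $3Q_0+E_0=-\tfrac32(a-b)+1-\tfrac12(a+b)=1+b-2a=L_1^G(\theta_t)$, so the point's weighted value in direction $(\lambda,3,1)$ equals $L_1^G(\theta_t)$. The decimal values are then routine numerics.

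\textbf{Step 2: the efficient optimum (main obstacle).} For a binary distribution $(q,1-q)$ with $\langle e_0|e_1\rangle=4/5$, the Gram matrix of $\eta^q_E$ gives
\[
H(E)_{\eta^q}=h_2\bigl(\tfrac12(1+\sqrt{1-\tfrac{36}{25}q(1-q)})\bigr),
\]
so $g_1(q)=h_2(q)-2H(E)_{\eta^q}$. At $q=1/2$ the square root equals $4/5$, the eigenvalue is $9/10$, and $g_1(1/2)=1-2h_2(9/10)$. The optimum in~\eqref{eq:efficient} at $p=(1/2,1/2)$ is the concave envelope of $g_1$ at $1/2$. Hence it suffices to prove the scalar inequality $g_1(q)\leq g_1(1/2)$ for all $q\in[0,1]$; the constant function is then an affine majorant that is tight at $1/2$.

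I would substitute $s\coloneqq4q(1-q)\in[0,1]$, so that both terms are functions of $s$: $h_2(q)=h_2\bigl(\tfrac12(1-\sqrt{1-s})\bigr)$ and the $E$ term has argument $\tfrac12\bigl(1+\sqrt{1-0.36s}\bigr)$. Then I would show that $g_1$ is nondecreasing in $s$ by comparing derivatives. Using $\frac{d}{dx}h_2\bigl(\tfrac{1+x}{2}\bigr)=-\tfrac12\log_2\frac{1+x}{1-x}=-\atanh(x)/\ln 2$, the claim reduces to the inequality
\[
\frac{\atanh\sqrt{1-s}}{\sqrt{1-s}}\geq\frac{0.36\,\atanh\sqrt{1-0.36s}}{\sqrt{1-0.36s}}.
\]
This holds because $x\mapsto\atanh(x)/x$ is increasing and $\sqrt{1-s}\leq\sqrt{1-0.36s}$, though this needs care. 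If that comparison proves insufficient, I would fall back on a certified numerical check of this one-variable function on a fine grid, combined with derivative bounds. This step is where I expect the real work to lie.

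\textbf{Step 3: the operational comparison.} By Proposition~\ref{prop:efficient}, $\sS_1^{\eff}$ is additive. Therefore any efficient collective instrument on $n$ copies has support at most $n\sS_1^{\eff}(p)$ in direction $(\lambda,3,1)$. By Appendix~\ref{app:mc-commuting}, the same bound holds for conditionally commuting discarded memories. Adding $\cU$ leaves this support unchanged, because $(\lambda,3,1)$ satisfies the dual-feasibility conditions~\eqref{eq:dual-a1}--\eqref{eq:dual-a3} for $1\leq\lambda\leq2$. Closure and per-copy normalization also leave it unchanged. The primitive point of Step 1 has weighted value $L_1^G(\theta_t)>\sS_1^{\eff}(p)$, so it lies outside all of these regions. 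Its achievability follows from Theorem~\ref{thm:static} via classically assisted state redistribution.
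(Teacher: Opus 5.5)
Your Steps~1 and~3 are correct and follow the paper's argument. Your reduction in Step~2 is also legitimate, and it asks for less than the paper proves: if $g_1(q)\leq g_1(1/2)$ for every $q$, then every ensemble with mean $1/2$ scores at most $g_1(1/2)$. The gap is in how you propose to prove that inequality. Put $\psi(x)\coloneqq\atanh(x)/x$, $x\coloneqq\sqrt{1-s}$ and $y\coloneqq\sqrt{1-\tfrac{9}{25}s}$. Differentiating $g_1=h_2(q)-2H(E)_{\eta^q}$ gives
\[
\frac{dg_1}{ds}=\frac{1}{2\ln2}\Bigl[\psi(x)-\tfrac{18}{25}\,\psi(y)\Bigr],
\]
so your displayed inequality has dropped the factor $1+\mu=2$. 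More importantly, the justification runs the wrong way. Since $x\leq y$ and $\psi$ is increasing, you get $\psi(x)\leq\psi(y)$, which is an upper bound on the left side, not a lower bound. The correct inequality $\psi(x)\geq\tfrac{18}{25}\psi(y)$ does hold, but only barely. At $s=1$ (that is, $q=1/2$) it reads $1\geq\tfrac{18}{25}\cdot\tfrac54\ln3=\tfrac{9}{10}\ln3\approx0.989$, which is equivalent to the numerical fact $\ln3\leq\tfrac{10}{9}$. As $s\downarrow0$, both sides diverge logarithmically. So a qualitative monotonicity comparison cannot close the step. You need a quantitative estimate that holds uniformly on $(0,1]$, and a grid-based certificate would have to treat the singular endpoint separately.

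The paper gets such an estimate from second derivatives. With $u(q)=\sqrt{\gamma^2+(1-\gamma^2)(1-2q)^2}$, it shows $f''(q)=h_2''(q)\,[1-2R(u(q))]$, where $R(u)=1-\gamma^2A(u)$ and $A(u)=\sum_{m\geq0}2u^{2m}/((2m+1)(2m+3))$ is increasing. Because $u(q)\geq\gamma$, this gives $R(u(q))\leq R(\gamma)=\tfrac{1-\gamma^2}{\gamma}\atanh\gamma=\tfrac{9}{20}\ln3$. The explicit series bound $\ln3\leq\tfrac{11}{10}$ then yields $2R\leq\tfrac{99}{100}<1$, so $f$ is strictly concave and Jensen's inequality gives the optimum. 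This also repairs your route. Strict concavity together with $g_1'(1/2)=0$ (by the symmetry $q\mapsto1-q$) gives exactly the monotonicity in $s$ you wanted. The same constant $\tfrac{9}{10}\ln3$ appears in both formulations because $dg_1/ds$ at $s=1$ equals $-g_1''(1/2)/8$.
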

\begin{proof}
Appendix~\ref{app:mc-quantum-memory} proves the efficient optimum by a
concavity calculation and verifies the strict comparison with rational
entropy bounds. Proposition~\ref{prop:efficient} extends the efficient bound
to collective instruments.
\end{proof}
The system $F$ is Alice's local discarded output, not a separately supplied
resource. The witness uses one copy and therefore does not resolve the
unrestricted single-letterization question.

\subsection{An information-bottleneck formulation}

We now reformulate the matrix optimization as an information-bottleneck
problem for the classical--quantum source in~\eqref{eq:cqsource}.
In this formulation, a quantum memory is prepared from the classical
label, and the objective balances its information about the environment
against its information about the label itself. We identify the class
of memory states corresponding to canonical instruments and obtain
the exact identity in~\eqref{eq:restricted-exact}. Allowing arbitrary
conditional memory states then gives the upper bound
in~\eqref{eq:outer-onecopy}, whose strong additivity is established
in the following subsection.

\subsubsection{The associated classical--quantum source}
We use the source in~\eqref{eq:cqsource} to express the information kept
in a discarded memory. Distinguishing memories realizable by canonical
instruments from arbitrary quantum memories will give, respectively, an
exact identity and an outer bound.

Recall the source $\omega_{YE}=\sum_i p_i\proj i_Y\otimes e_{i,E}$
from~\eqref{eq:cqsource}, with $Y$ classical and every conditional state
$e_{i,E}$ pure. Write $\alpha\coloneqq1+\mu$. For a quantum memory prepared from $Y$,
define the extension
\begin{equation}
\omega_{YTE}\coloneqq\sum_i p_i\proj i_Y\otimes\tau_{i,T}\otimes e_{i,E}.
\label{eq:memory-source}
\end{equation}
The objective $\alpha I(T;E)_\omega-I(T;Y)_\omega$ rewards information about
$E$ while penalizing information retained about the input label. Quantum
information-bottleneck formulations and optimization methods appear
in~\cite{DHW19,HY23}. The restricted identity and the additivity argument below
are proved explicitly.

\begin{definition}[Classically flagged pure memories]
Let the memory system in~\eqref{eq:memory-source} be
$T\coloneqq XF$, where $X$ is a classical register and $F$ is a
quantum register. A family of conditional memory states
$\left(\tau_{i,T}\right)_i$ is classically flagged pure if
\begin{equation}
\tau_{i,T}=\tau_{i,XF}
=\sum_x a(x|i)\proj{x}_X\otimes\proj{f_{xi}}_F,
\qquad \sum_x a(x|i)=1.
\label{eq:flagged-pure}
\end{equation}
Here $a(x|i)\geq0$, every $\ket{f_{xi}}_F$ is a unit vector,
and the same orthonormal basis
$\left\{\ket{x}_X\right\}_x$ is used for all $i$.
Thus every nonzero classical block is rank one.
\end{definition}
This restriction does not include arbitrary mixed, noncommuting memory families.
It also does not require the full state $\tau_{i,XF}$ to be pure.

Define
\begin{align}
\sJ_\alpha^{\fp}(\omega)
&\coloneqq\sup_{\left(\tau_{i,XF}\right)_i\text{ as in }\eqref{eq:flagged-pure}}
\big[\alpha I(T;E)_\omega-I(T;Y)_\omega\big],\label{eq:J-restricted}\\
\sJ_\alpha(\omega)
&\coloneqq\sup_{\left(\tau_{i,T}\right)_i}
\big[\alpha I(T;E)_\omega-I(T;Y)_\omega\big],
\label{eq:J-unrestricted}
\end{align}
where the state in each objective is the extension~\eqref{eq:memory-source}
for the family being optimized. In the first line $T=XF$; the second line allows
arbitrary density operators on a memory $T$ of arbitrary finite dimension.
No bound on the unrestricted memory dimension is asserted.

\subsubsection{Exact identity for the restricted memory class}
The restricted memory family in~\eqref{eq:flagged-pure} is precisely
the family arising from canonical instruments. We now translate their
matrix score into a bottleneck objective and then obtain an upper bound
by enlarging the allowed memory family.

For a canonical instrument, the associated memory is $T=XF$.
The entropy expansion in Appendix~\ref{app:mc-bottleneck-identity} gives
\begin{equation}
H(Y)_\omega-(1+\mu)H(E)_\omega
+(1+\mu)I(T;E)_\omega-I(T;Y)_\omega
=I(S\rangle BX)_\sigma-\mu I(SX;E|B)_\sigma.
\end{equation}
Every flagged-pure family yields a canonical instrument and conversely. Therefore
\begin{equation}
\mathfrak S_\mu(G,p)
=H(Y)_\omega-(1+\mu)H(E)_\omega+\sJ_{1+\mu}^{\fp}(\omega).
\label{eq:restricted-exact}
\end{equation}
Relaxing the memory restriction gives
\begin{equation}
\mathfrak S_\mu(G,p)\leq B_\mu(\omega)
\coloneqq H(Y)_\omega-(1+\mu)H(E)_\omega+\sJ_{1+\mu}(\omega).
\label{eq:outer-onecopy}
\end{equation}

\subsection{Strong additivity of the unrestricted outer bound}
We now prove that the unrestricted memory functional in
\eqref{eq:J-unrestricted} is additive for products of classical--quantum
sources. Together with~\eqref{eq:outer-onecopy}, this gives an outer bound
on the operational region that requires no blocklength regularization.
\begin{theorem}[Unrestricted bottleneck additivity]
\label{thm:Jadd}
Let $(\omega_1)_{Y_1E_1}$ and $(\omega_2)_{Y_2E_2}$ be arbitrary
classical--quantum sources, with $Y_1$ and $Y_2$ classical. Their conditional
environmental states need not be pure. For every $\alpha\geq0$, the
unrestricted memory functional $\sJ_\alpha$ defined in~\eqref{eq:J-unrestricted},
using extensions of the form~\eqref{eq:memory-source}, satisfies
\begin{equation}
\sJ_\alpha(\omega_1\otimes\omega_2)
=\sJ_\alpha(\omega_1)+\sJ_\alpha(\omega_2).
\label{eq:J-additivity}
\end{equation}
\end{theorem}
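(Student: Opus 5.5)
The plan is to prove the two inequalities separately: superadditivity by product memories, and subadditivity by a chain-rule splitting of an arbitrary joint memory into two single-source memories. Throughout, write $p^{(1)}_{y_1}$ and $p^{(2)}_{y_2}$ for the label distributions and $\varepsilon^{(1)}_{y_1,E_1}$, $\varepsilon^{(2)}_{y_2,E_2}$ for the (possibly mixed) conditional environment states. With this notation,
\begin{equation}
\omega_1\otimes\omega_2=\sum_{y_1,y_2}p^{(1)}_{y_1}p^{(2)}_{y_2}\proj{y_1y_2}_{Y_1Y_2}\otimes\varepsilon^{(1)}_{y_1,E_1}\otimes\varepsilon^{(2)}_{y_2,E_2}.
\end{equation}

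For ``$\geq$'', take families $\left(\tau^{(1)}_{y_1,T_1}\right)_{y_1}$ and $\left(\tau^{(2)}_{y_2,T_2}\right)_{y_2}$ that are near-optimal for each source separately. Use the product memory $\tau_{y_1y_2}\coloneqq\tau^{(1)}_{y_1}\otimes\tau^{(2)}_{y_2}$ on $T=T_1T_2$. The resulting extension is a tensor product of the two single-source extensions of the form~\eqref{eq:memory-source}. Hence $I(T;E_1E_2)$ and $I(T;Y_1Y_2)$ split as sums, and the objective is additive.

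For ``$\leq$'', fix an arbitrary family $\left(\tau_{y_1y_2,T}\right)$ and consider the joint extension $\omega_{Y_1Y_2TE_1E_2}$. The key structural facts are the following:
\begin{itemize}
\item $Y_1E_1$ and $Y_2E_2$ are independent.
\item Conditional on $(y_1,y_2)$, the state is the product $\tau_{y_1y_2}\otimes\varepsilon^{(1)}_{y_1}\otimes\varepsilon^{(2)}_{y_2}$.
\end{itemize}
Apply the chain rule on the cost term, with the order chosen to give exact terms:
\begin{equation}
I(T;Y_1Y_2)=I(T;Y_2)+I(T;Y_1|Y_2)=I(T;Y_2)+I(TY_2;Y_1),
\end{equation}
using $I(Y_1;Y_2)=0$. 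On the gain term, use $I(E_1;E_2)=0$ to get
\begin{equation}
I(T;E_1E_2)=I(T;E_2)+I(TE_2;E_1)\leq I(T;E_2)+I(TY_2;E_1).
\end{equation}
The last inequality is data processing. On $TY_2E_1E_2$, the register $E_2$ is produced from the classical $Y_2$ by the appending channel $\proj{y_2}\mapsto\proj{y_2}\otimes\varepsilon^{(2)}_{y_2}$, and this channel acts trivially on $TE_1$. The reason is that the conditional state given $y_2$ factorizes as $\bigl(\sum_{y_1}p^{(1)}_{y_1}\tau_{y_1y_2}\otimes\varepsilon^{(1)}_{y_1}\otimes\proj{y_1}\bigr)\otimes\varepsilon^{(2)}_{y_2}$, and then $Y_1$ is traced out.

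Combining the two displays, with $\alpha\geq0$ so that the inequality on the gain term goes the right way, gives
\begin{equation}
\alpha I(T;E_1E_2)-I(T;Y_1Y_2)\leq\bigl[\alpha I(TY_2;E_1)-I(TY_2;Y_1)\bigr]+\bigl[\alpha I(T;E_2)-I(T;Y_2)\bigr].
\end{equation}
It remains to check that each bracket is an admissible objective for the corresponding single source:
\begin{itemize}
\item For $\omega_1$, the memory $T_1\coloneqq TY_2$ is prepared from $y_1$ as $\sum_{y_2}p^{(2)}_{y_2}\tau_{y_1y_2}\otimes\proj{y_2}$. Conditional on $y_1$, it is in product with $E_1$, so the extension has the form~\eqref{eq:memory-source} for $\omega_1$.
\item For $\omega_2$, the memory $T_2\coloneqq T$ is prepared from $y_2$ as $\sum_{y_1}p^{(1)}_{y_1}\tau_{y_1y_2}$. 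Conditional on $y_2$, it is again in product with $E_2$.
\end{itemize}
Both memories are finite dimensional, since $Y_2$ is a finite classical register. Hence the first bracket is at most $\sJ_\alpha(\omega_1)$ and the second at most $\sJ_\alpha(\omega_2)$. Taking the supremum over $\tau$ finishes the proof.

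The main obstacle is the bookkeeping step: confirming that the marginals used in each bracket are exactly single-source extensions of the form~\eqref{eq:memory-source}. This includes checking that the relevant reductions of the joint state equal those of the single-source extensions. The same bookkeeping underlies the data-processing claim with the appending channel. The choice of chain-rule order, conditioning on $Y_2$ in the cost and on $E_2$ in the gain, is what makes the cost term exact and leaves only one inequality. This is the step I would double-check first. The argument uses no purity of the conditional environmental states, consistent with the statement.
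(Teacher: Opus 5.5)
Your proposal is correct and is essentially the paper's proof: the same chain-rule split using independence of the $Y$ and $E$ source marginals, one data-processing step through the classical preparation channel from a label to its environment, and product memories for the reverse inequality. The only difference is that you condition on the second source rather than the first, which mirrors the paper's memories: your $TY_2$ for $\omega_1$ and your $y_1$-averaged memory for $\omega_2$ correspond to the paper's averaged family $\overline\tau_{i_1,T}$ for $\omega_1$ and the memory $TY_1$ for $\omega_2$.
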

\begin{proof}
See Appendix~\ref{app:mc-bottleneck-additivity}.
\end{proof}

\begin{corollary}[A single-letter outer bound for the operational region]
\label{cor:mc-outer}
Let $\rho_{\mc,AB}$ be as in~\eqref{eq:mcstate}, with associated source
$\omega_{YE}$ from~\eqref{eq:cqsource} and outer functional $B_\mu$
from~\eqref{eq:outer-onecopy}. For all $n\in\mathbb{N}$ and every
$\mu\geq0$ with $\mu\leq\lambda\leq\mu+1$,
\begin{equation}
\sS_{\lambda,\mu}(\rho_{\mc,AB}^{\otimes n})\leq nB_\mu(\omega).
\label{eq:block-outer}
\end{equation}
Here $\sS_{\lambda,\mu}$ is defined in~\eqref{eq:support}. Consequently,
every achievable rate triple in the sense of Section~\ref{sec:notation} satisfies
\begin{align}
C+2Q&\leq0,\\
\lambda C+(1+2\mu)Q+E&\leq B_\mu(\omega)
\quad\text{for all }\mu\geq0,\ \mu\leq\lambda\leq\mu+1.
\label{eq:capacity-outer}
\end{align}
Here single-letter means that the outer bound is expressed in terms of one copy
of the source and has no blocklength regularization. The unrestricted memory
supremum still has neither a proved dimension bound nor a closed-form evaluation.
\end{corollary}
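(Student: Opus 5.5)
The plan chains three earlier results. First, \eqref{eq:block-formula} identifies $\sS_{\lambda,\mu}(\rho_{\mc,AB}^{\otimes n})$ with $\mathfrak S_\mu(G^{\otimes n},p^{\otimes n})$. Second, the outer bound \eqref{eq:outer-onecopy} is applied at blocklength $n$. Third, Theorem~\ref{thm:Jadd} reduces everything to one copy, and Theorem~\ref{thm:formula} converts the support bound into the half-space inequalities.

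\textbf{Step 1 (the $n$-copy state is again maximally correlated).} The state $\rho_{\mc,AB}^{\otimes n}$ is maximally correlated on $A^nB^n$. Its coefficient matrix is $r^{\otimes n}$, with diagonal $p^{\otimes n}$ and correlation matrix $G^{\otimes n}$. The tensor power of the purification \eqref{eq:purification} is the canonical purification with vectors $\ket{e_{i^n}}=\bigotimes_k\ket{e_{i_k}}$. Consequently, the associated classical--quantum source \eqref{eq:cqsource} of the $n$-copy state is exactly $\omega_{YE}^{\otimes n}$.

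\textbf{Step 2 (one-copy outer bound at blocklength $n$).} Applying \eqref{eq:outer-onecopy} to this $n$-copy maximally correlated state gives
\[
\sS_{\lambda,\mu}(\rho_{\mc,AB}^{\otimes n})\leq H(Y^n)_{\omega^{\otimes n}}-(1+\mu)H(E^n)_{\omega^{\otimes n}}+\sJ_{1+\mu}(\omega^{\otimes n}).
\]
The first two terms are additive on product states, so they equal $n[H(Y)_\omega-(1+\mu)H(E)_\omega]$.

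\textbf{Step 3 (single-letterize the memory term).} Iterating Theorem~\ref{thm:Jadd} by induction on $n$ gives $\sJ_{1+\mu}(\omega^{\otimes n})=n\sJ_{1+\mu}(\omega)$. Its hypotheses allow arbitrary cq sources, which covers $\omega^{\otimes(n-1)}\otimes\omega$. This proves \eqref{eq:block-outer}.

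\textbf{Step 4 (operational consequence).} Dividing \eqref{eq:block-outer} by $n$ and taking the supremum over $n$ in \eqref{eq:reg-support} gives $\sS^{\reg}_{\lambda,\mu}(\rho_{\mc})\leq B_\mu(\omega)$. Theorem~\ref{thm:formula} then shows that every point of $\cC_{\rm stat}$ satisfies $C+2Q\leq0$ and $\lambda C+(1+2\mu)Q+E\leq\sS^{\reg}_{\lambda,\mu}\leq B_\mu(\omega)$, which is \eqref{eq:capacity-outer}.

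The main point requiring care is Step 2. We must confirm that the derivation of \eqref{eq:restricted-exact} and \eqref{eq:outer-onecopy} applies verbatim to any maximally correlated state, in particular the $n$-copy one of dimension $d^n$. The normal-form Theorem~\ref{thm:normal-form} and the appendix identity are stated at every blocklength, so this should hold. It also remains to check that deleting coordinates with $p_{i^n}=0$ causes no issue, which it does not, since $p_i>0$ already implies $p^{\otimes n}_{i^n}>0$. Finiteness of $B_\mu$ needs no separate argument, because the inequality holds trivially if $B_\mu(\omega)=+\infty$.
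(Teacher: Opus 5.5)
Your proposal is correct and follows the paper's proof essentially step for step: evaluate \eqref{eq:restricted-exact} at blocklength $n$ via \eqref{eq:block-formula}, relax to the unrestricted functional, single-letterize with Theorem~\ref{thm:Jadd} and entropy additivity, and then pass through \eqref{eq:reg-support} and Theorem~\ref{thm:formula}. Your explicit induction on Theorem~\ref{thm:Jadd}, and your check that the $n$-copy source is $\omega_{YE}^{\otimes n}$ with all $p_{i^n}>0$, only spell out what the paper leaves implicit.
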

\begin{proof}
Apply~\eqref{eq:restricted-exact} at blocklength $n$, relax to the
unrestricted functional, and apply Theorem~\ref{thm:Jadd}.
Entropy is additive on the product marginals
$\omega_Y^{\otimes n}$ and $\omega_E^{\otimes n}$, giving
\eqref{eq:block-outer}. Dividing by $n$ and taking the supremum
over $n\in\mathbb{N}$ as in~\eqref{eq:reg-support} gives
$\sS_{\lambda,\mu}^{\reg}(\rho_{\mc,AB})\leq B_\mu(\omega)$.
The supporting inequalities in~\eqref{eq:halfspace} of
Theorem~\ref{thm:formula} then imply~\eqref{eq:capacity-outer}
for the operational region characterized by
Theorem~\ref{thm:static}. These inequalities are preserved under
closure because each defines a closed half-space.
\end{proof}

The outer bound is nonnegative. To see this, choose the classical memory
$\tau_{i,T}=\proj i_T$, so $T$ is a separate copy of $Y$ in
\eqref{eq:memory-source}. Since the states $e_{i,E}$ are pure,
\begin{align}
I(T;E)_\omega&=H(E)_\omega-\sum_i p_iH(E)_{e_i}=H(E)_\omega,\\
I(T;Y)_\omega&=H(Y)_\omega.
\end{align}
Thus $\sJ_\alpha(\omega)\geq\alpha H(E)_\omega-H(Y)_\omega$, which implies
$B_\mu(\omega)\geq0$. The functional is finite: for every memory extension,
data processing gives
$I(T;E)_\omega\leq I(Y;E)_\omega\leq H(Y)_\omega$, while
$I(T;Y)_\omega\geq0$.

\subsection{The remaining memory-reduction question}
The additive outer bound need not equal the optimized instrument value.
We now state the remaining equality question explicitly and explain how
a positive answer would remove regularization from the maximally correlated
capacity formula.

The preceding results give
\begin{equation}
\mathfrak S_\mu(G,p)
\leq\sS_{\lambda,\mu}^{\reg}(\rho_{\mc,AB})
\leq B_\mu(\omega).
\label{eq:sandwich}
\end{equation}
The regularized static capacity formula $\sS_{\lambda,\mu}^{\reg}(\rho_{\mc,AB})$ is defined in~\eqref{eq:reg-support}.
The first inequality follows by using the one-copy ensemble independently;
the second follows from~\eqref{eq:block-outer}.

\begin{conjecture}[A sufficient memory-reduction statement]
\label{conj:memory}
For a classical--quantum source
$\omega_{YE}\coloneqq\sum_i p_i\proj i_Y\otimes e_{i,E}$, where
$\left(p_i\right)_i$ is a probability distribution and every $e_{i,E}$ is a pure
density operator, the functionals in~\eqref{eq:J-restricted}--\eqref{eq:J-unrestricted} satisfy,
for every $\alpha\geq1$,
\begin{equation}
\sJ_\alpha(\omega)=\sJ_\alpha^{\fp}(\omega).
\label{eq:memory-conjecture}
\end{equation}
The restricted functional uses the memory class in~\eqref{eq:flagged-pure}.
\end{conjecture}
If~\eqref{eq:memory-conjecture} holds for a source and every relevant $\alpha$,
both inequalities in~\eqref{eq:sandwich} become equalities. This would establish
single-letterization for the associated maximally correlated state. The conjecture
is sufficient; it is not claimed to be equivalent to single-letterization.
The outer relaxation could be strict even if a different argument establishes
additivity of the restricted functional.

The unrestricted additivity proof does not establish this conjecture:
averaging a joint memory can leave the flagged-pure class. For $\mu>0$,
appending a flag changes the objective by a weighted difference of
conditional mutual informations whose sign is not fixed in general.
Appendix~\ref{app:mc-obstruction} gives the full calculation
and compares the static expression with the dynamic Hadamard formula.

\section{Static trade-offs for states obtained by qubit dephasing}
\label{sec:dephasing}
This section applies the exact one-copy matrix optimization for
maximally correlated states in Theorem~\ref{thm:matrix-formula}
to states obtained by sending one share of a Schmidt-aligned pure
state of two qubits through a qubit dephasing channel. The optimization
in~\eqref{eq:matrix-functional} is over ensembles of density
operators whose average diagonal is fixed by the resource state,
as specified in~\eqref{eq:diagonal-constraint}. Here we give an exact one-copy optimization,
a threshold at which the regularized static capacity formula vanishes,
and the entanglement gain at vanishing quantum-communication cost.
Capacity slices illustrate these results. The final subsection explains
why arbitrary pure inputs require a separate analysis.

Let $\widetilde A$ be a qubit with computational basis $\{\ket0,\ket1\}$, and let
$Z\coloneqq\proj0-\proj1$. The dephasing channel is defined as
\begin{equation}
\mathcal Z_\gamma(\theta_{\widetilde A})\coloneqq
\frac{1+\gamma}{2}\theta_{\widetilde A}
+\frac{1-\gamma}{2}Z\theta_{\widetilde A}Z,
\qquad0\leq\gamma\leq1.
\label{eq:dephasing-channel}
\end{equation}
For matrix entropy calculations we identify the output qubit with $\widetilde A$.
In a physical realization, its output is Bob's system $B$. In the convention
$(1-\delta)\theta_{\widetilde A}+\delta Z\theta_{\widetilde A}Z$, the parameter is
$\gamma=|1-2\delta|$: a negative off-diagonal factor can be removed by a local
$Z$ on $B$. Fix unit environment vectors $\ket{e_0}_E$ and $\ket{e_1}_E$ such that
$\langle e_1|e_0\rangle=\gamma$. A Stinespring isometry of $\mathcal Z_\gamma$ is
\begin{equation}
U_{\widetilde A\to BE}\ket i_{\widetilde A}
\coloneqq\ket i_B\ket{e_i}_E,\qquad i\in\{0,1\}.
\label{eq:dephasing-isometry}
\end{equation}

\subsection{Schmidt-aligned pure inputs}
We first take the Schmidt basis of the input on the transmitted qubit
to agree with the dephasing basis. The resulting state is maximally
correlated, so Theorem~\ref{thm:matrix-formula} applies. We evaluate that
one-copy optimization explicitly in qubit coordinates and identify the
corresponding achievable rate points.

For $0\leq p\leq1$, define the pure input and its density operator by
\begin{equation}
\ket{\phi_p}_{A\widetilde A}\coloneqq
\sqrt p\ket0_A\ket0_{\widetilde A}+\sqrt{1-p}\ket1_A\ket1_{\widetilde A},
\qquad \phi^p_{A\widetilde A}\coloneqq\proj{\phi_p}_{A\widetilde A}.
\end{equation}
The resource state and a purification of it are
\begin{align}
\rho_{AB}^{p,\gamma}
&\coloneqq(\id_A\otimes\mathcal Z_\gamma)(\phi^p_{A\widetilde A})\\
&=p\proj{00}_{AB}+(1-p)\proj{11}_{AB}
+\gamma\sqrt{p(1-p)}
 (\ket{00}\bra{11}+\ket{11}\bra{00})_{AB},
\label{eq:dephasing-resource}\\
\ket{\psi_{p,\gamma}}_{ABE}
&\coloneqq(I_A\otimes U_{\widetilde A\to BE})\ket{\phi_p}_{A\widetilde A},
\qquad\psi_{ABE}^{p,\gamma}\coloneqq\proj{\psi_{p,\gamma}}_{ABE}.
\label{eq:dephasing-purification}
\end{align}
The state in~\eqref{eq:dephasing-resource} is maximally correlated, as defined
in~\eqref{eq:mcstate}. As such, Theorem~\ref{thm:matrix-formula} is applicable.

To specialize the matrix optimization in~\eqref{eq:matrix-functional}
of Theorem~\ref{thm:matrix-formula}, we first parametrize the density
operators on the qubit system $\widetilde A$. An arbitrary such operator has
the form
\begin{equation}
\theta_{\widetilde A}
=\begin{pmatrix}
q & z\\
\overline z & 1-q
\end{pmatrix},
\qquad
0\leq q\leq1,\quad z\in\mathbb C,\quad |z|^2\leq q(1-q).
\end{equation}
The last inequality follows from positive semidefiniteness, which
requires the determinant to be nonnegative.

Conjugation by a diagonal unitary can make the off-diagonal entry
real and nonnegative. This conjugation preserves the diagonal of
$\theta_{\widetilde A}$ and hence the ensemble constraint
in~\eqref{eq:diagonal-constraint}. It also preserves all three
entropies in the objective~\eqref{eq:L}: the entropy of $\theta_{\widetilde A}$
is unitarily invariant, its completely dephased state is unchanged,
and the qubit dephasing channel satisfies
\begin{equation}
\mathcal Z_\gamma(U\theta_{\widetilde A}U^\dagger)
=U\mathcal Z_\gamma(\theta_{\widetilde A})U^\dagger
\end{equation}
for every diagonal unitary $U$. Here the Schur channel $\cN_G$
in~\eqref{eq:L} is $\mathcal Z_\gamma$.
The phase can therefore be removed separately for each ensemble
element without changing the optimization.

For $0<q<1$, set $t\coloneqq |z|/\sqrt{q(1-q)}$.
Positive semidefiniteness gives $0\leq t\leq1$. At $q=0$ or $q=1$,
the off-diagonal entry vanishes, and we may set $t=0$.
Consequently, it suffices to consider
\begin{equation}
\theta_{\widetilde A}(q,t)\coloneqq
\begin{pmatrix}
q & t\sqrt{q(1-q)}\\
t\sqrt{q(1-q)} & 1-q
\end{pmatrix},
\qquad 0\leq q,t\leq1.
\label{eq:qubit-test-state}
\end{equation}
Define
\begin{align}
u(q,t)&\coloneqq\sqrt{(2q-1)^2+4q(1-q)t^2},\\
b(q,t)&\coloneqq h_2\!\left(\frac{1+u(q,t)}2\right),\label{eq:qubit-entropy}\\
F_\mu(q,t)&\coloneqq h_2(q)+\mu b(q,t)-(1+\mu)b(q,\gamma t).
\label{eq:qubit-score}
\end{align}
The eigenvalues of $\theta_{\widetilde A}(q,t)$ are $(1\pm u(q,t))/2$, obtained from
its trace and determinant. Thus $H(\widetilde A)_{\theta(q,t)}=b(q,t)$, and dephasing
replaces $t$ by $\gamma t$.

\begin{theorem}[One-copy support for Schmidt-aligned qubit dephasing]
\label{thm:qubit-onecopy}
Let $\rho_{AB}^{p,\gamma}$ be the state in~\eqref{eq:dephasing-resource},
with $0\leq p,\gamma\leq1$, and let $F_\mu$ be the function
in~\eqref{eq:qubit-score}. For the parameters $\mu\geq0$ and
$\mu\leq\lambda\leq\mu+1$, the static capacity formula in~\eqref{eq:support} satisfies
\begin{equation}
\sS_{\lambda,\mu}(\rho_{AB}^{p,\gamma})
=\max_{\substack{\left(w_x,q_x,t_x\right)_x\\\sum_xw_xq_x=p}}
\sum_xw_x F_\mu(q_x,t_x).
\label{eq:qubit-onecopy}
\end{equation}
The maximization is over probability weights $w_x$ and parameters
$0\leq q_x,t_x\leq1$. Two outcomes suffice.
\end{theorem}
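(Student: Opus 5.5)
The plan is to specialize Theorem~\ref{thm:matrix-formula} to the qubit case, handle the degenerate parameters separately, and then use the reduction to the family $\theta_{\widetilde A}(q,t)$ described just before the statement. First handle degenerate $p\in\{0,1\}$: the state is then product (a pure product state), the support is zero, and the right side of~\eqref{eq:qubit-onecopy} forces $q_x=p$ and $t_x=0$, giving $F_\mu=0$. For $0<p<1$ the state~\eqref{eq:dephasing-resource} is maximally correlated with $d=2$, diagonal $(p,1-p)$, and correlation matrix $G_{01}=\gamma$. The Schur channel $\cN_G$ is therefore exactly $\mathcal Z_\gamma$. Theorem~\ref{thm:matrix-formula} then gives $\sS_{\lambda,\mu}(\rho^{p,\gamma})=\mathfrak S_\mu(G,p)$, independent of $\lambda$, with at most $|X|\leq d=2$ outcomes.

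Next I would evaluate $L_\mu^G$ on a general qubit density operator. By the diagonal-unitary covariance argument preceding the statement, each $\theta_x$ may be replaced by $\theta_{\widetilde A}(q_x,t_x)$ without changing its diagonal or any of the three entropies in~\eqref{eq:L}. I would then compute each term:
\begin{itemize}
\item $H(\widetilde A)_{\Delta(\theta)}=h_2(q)$;
\item $H(\widetilde A)_\theta=b(q,t)$, from the eigenvalues $(1\pm u)/2$;
\item $\mathcal Z_\gamma$ multiplies the off-diagonal entry by $\gamma$, so $H(\widetilde A)_{\mathcal Z_\gamma(\theta)}=b(q,\gamma t)$.
\end{itemize}
Hence $L_\mu^G(\theta(q,t))=F_\mu(q,t)$. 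The mean-diagonal constraint~\eqref{eq:diagonal-constraint} becomes $\sum_xw_xq_x=p$, since the second diagonal entry is then automatic.

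Conversely, every triple $(w_x,q_x,t_x)$ with $0\leq q_x,t_x\leq1$ defines valid density operators $\theta(q_x,t_x)$ satisfying the constraint. The two maximizations are therefore over the same sets of values, and~\eqref{eq:qubit-onecopy} follows. The maximum is attained because the feasible set with two outcomes is compact and $F_\mu$ is continuous.

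The only step needing care is the cardinality claim ``two outcomes suffice.'' It is inherited directly from the bound $|X|\leq d$ in Theorem~\ref{thm:matrix-formula}. Should one want an independent argument, the right side is the concave envelope of $q\mapsto\max_tF_\mu(q,t)$ evaluated at $p$. That function is continuous on $[0,1]$ because $b$ is continuous, so Carathéodory's theorem in one dimension gives two points. I expect no genuine obstacle beyond checking that the degenerate endpoints $q\in\{0,1\}$ and $\gamma\in\{0,1\}$ are covered by the continuity of $b$.
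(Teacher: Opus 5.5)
Your proposal is correct and follows the paper's proof essentially step for step. The steps match: the degenerate case $p\in\{0,1\}$ (where the left side vanishes by Theorem~\ref{thm:extendible}), then Theorem~\ref{thm:matrix-formula} with $d=2$ and $\cN_G=\mathcal Z_\gamma$, the diagonal-unitary reduction to $\theta_{\widetilde A}(q,t)$, the three entropy evaluations, and the cardinality bound $|X|\leq d=2$.

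One small slip: for $p\in\{0,1\}$ the constraint forces $q_x=p$ but not $t_x=0$. This is harmless, because $u(q,t)=1$, and hence $F_\mu(q,t)=0$, whenever $q\in\{0,1\}$, for every $t$.
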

\begin{proof}
If $p\in\{0,1\}$, the mean constraint forces $q_x=p$ for every
positive-probability outcome, and both sides vanish: the resource is a product
state, so Theorem~\ref{thm:extendible} applies. Assume henceforth $0<p<1$.
Apply Theorem~\ref{thm:matrix-formula} with $d=2$ and the Schur channel
$\cN_G=\mathcal Z_\gamma$. The diagonal entropy is
$H(\widetilde A)_{\Delta(\theta(q,t))}=h_2(q)$, where $\Delta$ is complete dephasing
in the computational basis. The two remaining entropies in~\eqref{eq:L}
are $H(\widetilde A)_{\theta(q,t)}=b(q,t)$ and
$H(\widetilde A)_{\mathcal Z_\gamma(\theta(q,t))}=b(q,\gamma t)$.
Consequently the objective is exactly~\eqref{eq:qubit-score}, and the
constraint in~\eqref{eq:diagonal-constraint} becomes $\sum_xw_xq_x=p$.
A diagonal unitary changes the phase of the off-diagonal entry but none of
these entropies or the constraint. Thus $t\geq0$ entails no loss.
The cardinality bound in Theorem~\ref{thm:matrix-formula}, with $d=2$,
gives $|X|\leq2$.
\end{proof}

Equivalently,~\eqref{eq:qubit-onecopy} is the upper concave envelope,
evaluated at $p$, of $q\mapsto\max_{0\leq t\leq1}F_\mu(q,t)$.
The two-outcome bound makes the mean constraint explicit:
\begin{equation}
\sS_{\lambda,\mu}(\rho^{p,\gamma}_{AB})
=\max_{\substack{0\leq q_0\leq p\leq q_1\leq1,\ q_0<q_1\\
0\leq t_0,t_1\leq1}}
\left[\frac{q_1-p}{q_1-q_0}F_\mu(q_0,t_0)
+\frac{p-q_0}{q_1-q_0}F_\mu(q_1,t_1)\right].
\label{eq:qubit-two-outcome}
\end{equation}
The weights are determined by their sum and the prescribed mean $p$.
An outcome of zero probability is allowed, so the formula includes
every single-outcome instrument. If two outcomes have the same
diagonal $q_0=q_1=p$, retaining the one with the larger score gives
at least their average score; hence excluding this redundant case
does not change the maximum.

For the symmetric input $p=1/2$, the identity
$F_\mu(q,t)=F_\mu(1-q,t)$ gives the further simplification
\begin{equation}
\sS_{\lambda,\mu}(\rho^{1/2,\gamma}_{AB})
=\max_{0\leq q\leq1/2,\ 0\leq t\leq1}F_\mu(q,t).
\label{eq:qubit-symmetric-support}
\end{equation}
Every ensemble average is at most the maximum on the right.
Conversely, the equiprobable pair $(q,t)$ and $(1-q,t)$ has the
required mean diagonal and attains $F_\mu(q,t)$. This identity
justifies using that symmetric family in the numerical exploration
described in Appendix~\ref{app:figures}.

Three outcomes suffice to preserve an arbitrary primitive rate point, by the
$d+1$ bound following~\eqref{eq:exact-onecopy-region}. The conditional state
with parameters $(q,t)$ gives the primitive point, in $(C,Q,E)$ coordinates,
\begin{equation}
\left(0,-\frac{b(q,\gamma t)-b(q,t)}2,
 h_2(q)-\frac{b(q,t)+b(q,\gamma t)}2\right).
\label{eq:qubit-point}
\end{equation}
This follows by substituting the same entropies into~\eqref{eq:primitive-matrix-Q}--\eqref{eq:primitive-matrix}.
Take convex combinations obeying the mean constraint and add the
unit-resource cone $\cU$ from~\eqref{eq:unit} to obtain the exact one-copy
information region. At blocklength $n$, the exact formula is
instead~\eqref{eq:block-formula} for density operators on $\widetilde A^n$.
Equation~\eqref{eq:qubit-onecopy} alone does not remove that optimization.

The reduced optimization also permits a modest improvement of the explicit
example in Theorem~\ref{thm:quantum-memory}. For $p=1/2$, $\gamma=4/5$,
and $\mu=1$, the single-outcome instrument with $t=223/250$ gives
\begin{equation}
F_1(1/2,223/250)
=1+h_2(473/500)-2h_2(1071/1250)
\approx0.1180348716.
\label{eq:refined-witness}
\end{equation}
This exceeds the value $0.1179193338\ldots$ of the simpler $t=9/10$
example. Appendix~\ref{app:mc-quantum-memory} certifies the strict
comparison by rational bounds. The ancillary code also explores
the two-outcome formula~\eqref{eq:qubit-two-outcome} and refines
the feasible instruments used for Figure~\ref{fig:dephasing-slices}.
These computations give achievable lower bounds; they do not certify
global optimality or remove regularization.

\Needspace{8\baselineskip}
\subsection{A complete entropy-contraction inequality}\label{sec:dephasing-contraction}
The one-copy formula in~\eqref{eq:qubit-onecopy} gives an achievable
lower bound on the regularized static capacity formula
in~\eqref{eq:reg-support}, since a one-copy instrument can be applied
independently to each copy. To obtain an upper bound, however, we
must also control the values attainable by arbitrary collective
instruments. For this purpose, we use the coherence representation
in~\eqref{eq:coherence-form}: the matrix objective compares the
coherence before and after dephasing. This representation allows us
to bound the collective objective by controlling how much coherence
remains after product dephasing. The required inequality must hold
at every blocklength and allow arbitrary correlations among the
input qubits.

Theorem~\ref{thm:contraction} establishes such an inequality in a
complete form: the same bound holds when the input qubits are
correlated with an arbitrary untouched auxiliary system, with a
constant independent of that system. The proof, given in
Appendix~\ref{app:contraction}, uses an entropy-production argument
of the type underlying logarithmic Sobolev inequalities~\cite{KT13}.
Complete entropic inequalities and generalized dephasing semigroups
are studied in Ref.~\cite[Section~6.1]{GR22}; related complete
pinching and coherence estimates appear in
Ref.~\cite[Section~4.2]{GZ26}. We give a self-contained proof of the
binary estimate needed here. The following subsection applies
this inequality to determine exactly when the static capacity
formula vanishes at every blocklength.

For $n\in\mathbb N$, write
$\widetilde A^n\coloneqq\widetilde A_1\cdots\widetilde A_n$ for the
$n$ input qubits. Let $R$ be an arbitrary auxiliary system, and let
$\Delta_i$ dephase $\widetilde A_i$, acting as the identity on the other
qubits and on $R$.
Set $\Delta_{[n]}\coloneqq\Delta_1\circ\cdots\circ\Delta_n$.
These channels commute because they act on distinct qubits, so the order
of composition is immaterial. For a density operator
$\theta_{\widetilde A^nR}$ define the density operator
$\widehat\theta_{\widetilde A^nR}\coloneqq\Delta_{[n]}(\theta_{\widetilde A^nR})$ and the functional
\begin{equation}
\mathcal C(\theta_{\widetilde A^nR})\coloneqq
H(\widetilde A^nR)_{\widehat\theta}-H(\widetilde A^nR)_\theta
=D(\theta_{\widetilde A^nR}\|\widehat\theta_{\widetilde A^nR}).
\label{eq:partial-coherence}
\end{equation}
For trivial $R$ this is the relative entropy of coherence $\Cr$ used
in~\eqref{eq:coherence-form}. The relative-entropy identity follows because
$\log_2\widehat\theta_{\widetilde A^nR}$ is block diagonal and
\begin{equation}
\Tr[\theta_{\widetilde A^nR}\log_2\widehat\theta_{\widetilde A^nR}]
=\Tr[\widehat\theta_{\widetilde A^nR}\log_2\widehat\theta_{\widetilde A^nR}].
\end{equation}
For singular states the logarithms are evaluated on the relevant support,
or equivalently the identity follows by a full-rank approximation.
In particular, $\widehat\theta$ is the average of the $2^n$ conjugations
by products of the qubit phase-flip operators, including the identity.
Thus $\theta\leq2^n\widehat\theta$, which implies
$\supp\theta\subseteq\supp\widehat\theta$ and makes the relative entropy finite.

\begin{theorem}[Entropy contraction under product binary dephasing]
\label{thm:contraction}
Let $n\in\mathbb{N}$, let $\widetilde A$ be a qubit, and let $R$ be an
arbitrary auxiliary system. For a density operator $\theta_{\widetilde A^nR}$
and the channel $\mathcal Z_\gamma$ in~\eqref{eq:dephasing-channel},
with $0\leq\gamma\leq1$, define
\begin{equation}
\xi_{\widetilde A^nR}\coloneqq
(\mathcal Z_\gamma^{\otimes n}\otimes\id_R)(\theta_{\widetilde A^nR}).
\end{equation}
Then the functional in~\eqref{eq:partial-coherence} satisfies
\begin{equation}
\mathcal C(\xi_{\widetilde A^nR})\leq\gamma^2\mathcal C(\theta_{\widetilde A^nR}).
\label{eq:coherence-contraction}
\end{equation}
\end{theorem}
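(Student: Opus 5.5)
The plan is to write $\mathcal Z_\gamma^{\otimes n}$ as a quantum Markov semigroup and prove the bound through an entropy-production (modified log-Sobolev) inequality with constant $2$, followed by Gr\"onwall's lemma. For $s\geq0$, set $\gamma=e^{-s}$. For $Z_i$ acting on $\widetilde A_i$, the single-qubit map satisfies $\mathcal Z_{e^{-s}}=e^{-s\mathcal L}$ with $\mathcal L(\theta)=\tfrac12(\theta-Z\theta Z)=\theta-\Delta(\theta)$. Hence
\begin{equation}
\mathcal Z_{e^{-s}}^{\otimes n}\otimes\id_R=e^{-s\mathcal L_{[n]}},\qquad
\mathcal L_{[n]}\coloneqq\sum_{i=1}^n(\id-\Delta_i).
\end{equation}
Write $\theta_s\coloneqq e^{-s\mathcal L_{[n]}}(\theta)$. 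Each $\Delta_i$ commutes with every $\mathcal Z_\gamma$ factor and satisfies $\Delta_i\circ\mathcal Z_\gamma=\Delta_i$. Therefore $\widehat{\theta_s}=\widehat\theta$ for all $s$, and by~\eqref{eq:partial-coherence}, $\mathcal C(\theta_s)=H(\widehat\theta)-H(\theta_s)$. The claim becomes $\mathcal C(\theta_s)\leq e^{-2s}\mathcal C(\theta_0)$. By Gr\"onwall, it suffices to show $\tfrac{d}{ds}\mathcal C(\theta_s)\leq-2\,\mathcal C(\theta_s)$. We first assume $\theta$ is full rank, which is preserved along the flow. The general case, including the endpoint $\gamma=0$, then follows by mixing with a small multiple of the maximally mixed state and using continuity of entropy in finite dimension.

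\emph{Step 1 (entropy production).} Differentiating gives $\tfrac{d}{ds}\mathcal C(\theta_s)=-\sum_i\Tr[(\theta_s-\Delta_i\theta_s)(\ln\theta_s)]/\ln2$. Since $\Tr[\theta\log\Delta_i\theta]=\Tr[\Delta_i\theta\log\Delta_i\theta]$, each summand is the symmetrized relative entropy:
\begin{equation}
\Tr[(\theta-\Delta_i\theta)\log_2\theta]=D(\theta\|\Delta_i\theta)+D(\Delta_i\theta\|\theta).
\end{equation}
\emph{Step 2 (local comparison).} The goal is $D(\Delta_i\theta\|\theta)\geq D(\theta\|\Delta_i\theta)$, where $\Delta_i\theta=\tfrac12(\theta+Z_i\theta Z_i)$ is an average of two unitary conjugates. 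Then the entropy production is at least $2\sum_iD(\theta\|\Delta_i\theta)$.

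\emph{Step 3 (tensorization).} The next claim is $\mathcal C(\theta)=D(\theta\|\Delta_{[n]}\theta)\leq\sum_iD(\theta\|\Delta_i\theta)$, proved by induction on $n$. Using the entropy form, the left side telescopes:
\begin{equation}
D(\theta\|\Delta_{[n]}\theta)=D(\theta\|\Delta_n\theta)+D(\Delta_n\theta\|\Delta_{[n-1]}\Delta_n\theta).
\end{equation}
The second term is $\mathcal C$ with $n-1$ dephasings applied to $\Delta_n\theta$. By induction it is at most $\sum_{i<n}D(\Delta_n\theta\|\Delta_i\Delta_n\theta)$. Because $\Delta_n$ commutes with $\Delta_i$, data processing under $\Delta_n$ gives $D(\Delta_n\theta\|\Delta_i\Delta_n\theta)\leq D(\theta\|\Delta_i\theta)$. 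The auxiliary $R$ plays no role here, which makes the constant complete. Combining the three steps gives $\tfrac{d}{ds}\mathcal C\leq-2\,\mathcal C$, with all logarithms in base $2$ and the $\ln2$ factors cancelling consistently.

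The main obstacle is Step 2. A classical sanity check supports it: for $p=(a,1-a)$ and its swap, both sides agree to second order near $a=\tfrac12$, and $D(m\|p)$ blows up as $a\to0$. This agreement also shows that the constant $\gamma^2$ is tight. To prove it in general, I would write $\theta$ in block form $\bigl(\begin{smallmatrix}P&K\\K^\dagger&Q\end{smallmatrix}\bigr)$ with respect to $\widetilde A_i$, so that $\Delta_i\theta=P\oplus Q$. Along the segment $\theta(u)=\Delta_i\theta+u(\theta-\Delta_i\theta)$ with $u\in[-1,1]$, the map $u\mapsto\theta(u)$ is unitarily symmetric under $u\mapsto-u$ via $Z_i$. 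Then $D(\theta(1)\|\theta(0))$ and $D(\theta(0)\|\theta(1))$ can both be expressed through integrals of $u\mapsto\Tr[(\theta-\Delta_i\theta)\log\theta(u)]$. I would compare them using the integral representation $\log x=\int_0^\infty\bigl[(1+t)^{-1}-(x+t)^{-1}\bigr]dt$ together with operator convexity of $x\mapsto(x+t)^{-1}$. If that direct route fails, the fallback is to prove the weaker but sufficient statement that the symmetrized relative entropy dominates $2D(\theta\|\Delta_i\theta)$ via the convexity of $u\mapsto D(\theta(u)\|\Delta_i\theta)$ in $u$, which is even and vanishes at $0$.
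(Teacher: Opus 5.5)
Your overall architecture is exactly the paper's: the semigroup $e^{-s\mathcal L_{[n]}}$, the entropy-production identity of Step~1, the tensorization of Step~3 by telescoping and data processing, Gr\"onwall, and the full-rank approximation. Steps~1 and~3 are correct.

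The gap is Step~2. It carries the entire analytic content of the theorem and is the paper's Lemma~\ref{lem:pinching}. Set $\zeta\coloneqq\Delta_i\theta$, $\Xi\coloneqq\theta-\zeta$, and $f(u)\coloneqq H(\zeta)-H(\zeta+u\Xi)=D(\theta(u)\|\zeta)$. Then $f(1)=D(\theta\|\zeta)$ and $f'(1)=\Tr[\Xi\log_2\theta]=D(\theta\|\zeta)+D(\zeta\|\theta)$, so Step~2 is precisely $f'(1)\geq2f(1)$.

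Your fallback is therefore not weaker than Step~2; it is the same statement. The properties you invoke (convexity of $f$, evenness, $f(0)=0$) do not imply it; they give only $f'(1)\geq f(1)$. For instance, $f(u)=\sqrt{1+u^2}-1$ has all three properties, but $f'(1)=1/\sqrt2<2(\sqrt2-1)=2f(1)$. Fed into Gr\"onwall, $f'(1)\geq f(1)$ yields only $\mathcal C(\xi)\leq\gamma\,\mathcal C(\theta)$. That would not give the exact threshold $\mu_\gamma$ in Theorem~\ref{thm:threshold}.

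Your primary route is not carried far enough to check. Operator convexity of $x\mapsto(x+t)^{-1}$ cannot settle it on its own, because the relevant traces have the form $\Tr[\Xi\,\cdot\,]$ with $\Xi$ indefinite.

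What is needed is control one derivative higher: $f''$ must be nondecreasing on $[0,1]$. Then $f'$ is convex there with $f'(0)=\Tr[\Xi\log_2\zeta]=0$. Hence $f'(u)\leq uf'(1)$ and $f(1)=\int_0^1f'(u)\,du\leq f'(1)/2$.

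The paper obtains this from the congruence $\zeta+u\Xi+sI=(\zeta+sI)^{1/2}(I+uK_s)(\zeta+sI)^{1/2}$, where $K_s\coloneqq(\zeta+sI)^{-1/2}\Xi(\zeta+sI)^{-1/2}$. This gives $f''(u)=\frac1{\ln2}\int_0^\infty\Tr[K_s^2(I+uK_s)^{-2}]\,ds$. Since $J\zeta J=\zeta$ and $J\Xi J=-\Xi$, one has $JK_sJ=-K_s$, so every $\Tr K_s^{2m+1}$ vanishes. Therefore $f(u)=\sum_{m\geq1}a_mu^{2m}$ with $a_m\geq0$, and $f'(1)=\sum_m2ma_m\geq2f(1)$.

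A completion closer to your resolvent language also works. Let $R_s(u)\coloneqq(\zeta+u\Xi+sI)^{-1}$. The integrand $\Tr[\Xi R_s\Xi R_s]$ of $f''$ is even in $u$ (conjugate by $J$). It also satisfies $\frac{d^2}{du^2}\Tr[\Xi R_s\Xi R_s]=6\Tr[(R_s^{1/2}\Xi R_s^{1/2})^4]\geq0$. So $f''$ is even and convex, hence nondecreasing on $[0,1]$.

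In either form, the missing idea is that the reflection symmetry must be exploited at the level of $f''$, not of $f$.
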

\begin{proof}
See Appendix~\ref{app:contraction} for the complete proof.
\end{proof}

\subsection{When the static capacity formula vanishes}
The contraction inequality in Theorem~\ref{thm:contraction} determines exactly which admissible
parameter pairs $(\lambda,\mu)$ give a vanishing static capacity
formula at every blocklength, including when collective instruments
are allowed. The following theorem identifies this threshold.

\begin{theorem}[Exact threshold for vanishing of the static capacity formula]
\label{thm:threshold}
Let $\rho_{AB}^{p,\gamma}$ be the state in~\eqref{eq:dephasing-resource},
with $0<p<1$ and $0<\gamma<1$, and define
\begin{equation}
\mu_\gamma\coloneqq\frac{\gamma^2}{1-\gamma^2}.
\label{eq:dephasing-threshold-parameter}
\end{equation}
For all $n\in\mathbb{N}$ and every admissible $\lambda,\mu$ in
\eqref{eq:parameters}, the static capacity formula in~\eqref{eq:support} satisfies
\begin{equation}
\sS_{\lambda,\mu}((\rho_{AB}^{p,\gamma})^{\otimes n})=0
\quad\text{if and only if}\quad\mu\geq\mu_\gamma.
\label{eq:threshold}
\end{equation}
The same threshold holds for the regularized static capacity formula in~\eqref{eq:reg-support}.
\end{theorem}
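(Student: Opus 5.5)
Two directions are needed: vanishing for $\mu\geq\mu_\gamma$ at every blocklength, and strict positivity for $\mu<\mu_\gamma$ already at $n=1$. Since $\sS_{\lambda,\mu}\geq0$ always and is superadditive, positivity at $n=1$ gives positivity at every $n$ and of the regularization. Vanishing at every $n$ gives vanishing of the regularization, because the latter is a supremum of $\frac1n\sS_{\lambda,\mu}(\rho^{\otimes n})$.

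\emph{Upper bound for $\mu\geq\mu_\gamma$.} By~\eqref{eq:block-formula}, $\sS_{\lambda,\mu}((\rho^{p,\gamma})^{\otimes n})=\mathfrak S_\mu(G^{\otimes n},p^{\otimes n})$, with Schur channel $\mathcal Z_\gamma^{\otimes n}$. It therefore suffices to show $L_\mu(\theta)\leq0$ for every density operator $\theta$ on $\widetilde A^n$. By the coherence form~\eqref{eq:coherence-form}, which holds verbatim at blocklength $n$ because $\Delta_{[n]}\circ\mathcal Z_\gamma^{\otimes n}=\Delta_{[n]}$,
\[
L_\mu(\theta)=(1+\mu)\,\mathcal C(\mathcal Z_\gamma^{\otimes n}(\theta))-\mu\,\mathcal C(\theta).
\]
Applying Theorem~\ref{thm:contraction} with trivial $R$ gives $L_\mu(\theta)\leq[(1+\mu)\gamma^2-\mu]\,\mathcal C(\theta)$. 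The bracket is nonpositive exactly when $\mu(1-\gamma^2)\geq\gamma^2$, that is, when $\mu\geq\mu_\gamma$. Since $\mathcal C\geq0$, every ensemble term is nonpositive, so $\mathfrak S_\mu\leq0$. The matching value $0$ comes from the trivial instrument. The $\lambda$-independence in Theorem~\ref{thm:matrix-formula} covers all admissible $\lambda$.

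\emph{Positivity for $\mu<\mu_\gamma$.} I will use Theorem~\ref{thm:qubit-onecopy} with an ensemble of two outcomes $(q_x,t_x)$ having $q_x=p$, or more simply a single outcome $q=p$ with small $t>0$. Since $\mathcal C(\theta)$ is smooth in $t$, a second-order expansion should give $\mathcal C(\theta(p,t))=c_p t^2+O(t^4)$ for some $c_p>0$ when $0<p<1$. Then $\mathcal C(\theta(p,\gamma t))=c_p\gamma^2t^2+O(t^4)$, so
\[
F_\mu(p,t)=L_\mu(\theta(p,t))=c_pt^2\bigl[(1+\mu)\gamma^2-\mu\bigr]+O(t^4).
\]
This is strictly positive for small $t$ when $\mu<\mu_\gamma$. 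To find $c_p$, I will compute the eigenvalues $(1\pm u)/2$ with $u^2=(2q-1)^2+4q(1-q)t^2$ and expand $b(q,t)$ in $t^2$ around $u_0=|2q-1|$. This gives $h_2(q)-b(q,t)=\kappa\, t^2+O(t^4)$ with $\kappa=\frac{q(1-q)}{|2q-1|}\log_2\frac{\max(q,1-q)}{\min(q,1-q)}>0$ for $q\neq1/2$. The main obstacle is the case $q=1/2$, where $u_0=0$ and the expansion is singular. Since $b(1/2,t)=h_2((1+t)/2)=1-\frac{t^2}{2\ln2}+O(t^4)$ is still quadratic, the same conclusion holds with $c=1/(2\ln 2)$. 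If the quartic remainders need care, I can use the exact form of $h_2$ and take $t$ small enough explicitly.

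Combining the two directions, together with superadditivity~\eqref{eq:superadd} and the supremum formula~\eqref{eq:reg-support}, yields both the blockwise statement and the statement for the regularized formula.
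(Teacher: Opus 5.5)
Your proposal is correct and follows the paper's proof essentially step for step. For $\mu\geq\mu_\gamma$ you use the blocklength-$n$ coherence form of $L_\mu$ together with Theorem~\ref{thm:contraction} with trivial $R$, and for $\mu<\mu_\gamma$ you use a second-order expansion of a single small-coherence state with the prescribed diagonal. The differences are cosmetic: you parametrize by $t$ instead of the off-diagonal entry $z=t\sqrt{p(1-p)}$ (your $\kappa$ equals the paper's $a_p\,p(1-p)$), and you pass to all blocklengths through superadditivity rather than the explicit product state $\theta_z^{\otimes n}$.
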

\begin{proof}
Let $G_\gamma\coloneqq\left(\begin{smallmatrix}1&\gamma\\\gamma&1\end{smallmatrix}\right)$
be the correlation matrix of~\eqref{eq:dephasing-resource}.
For a density operator $\theta_{\widetilde A^n}$ in the blocklength-$n$ matrix
optimization~\eqref{eq:block-formula}, let
$\xi_{\widetilde A^n}\coloneqq\mathcal Z_\gamma^{\otimes n}(\theta_{\widetilde A^n})$.
The identity in~\eqref{eq:coherence-form}, with a trivial auxiliary in
\eqref{eq:partial-coherence}, gives
\begin{equation}
L_\mu^{G_\gamma^{\otimes n}}(\theta_{\widetilde A^n})
=(1+\mu)\mathcal C(\xi_{\widetilde A^n})-\mu\mathcal C(\theta_{\widetilde A^n}).
\end{equation}
Theorem~\ref{thm:contraction} implies
\begin{equation}
L_\mu^{G_\gamma^{\otimes n}}(\theta_{\widetilde A^n})
\leq[(1+\mu)\gamma^2-\mu]\mathcal C(\theta_{\widetilde A^n}).
\end{equation}
This is nonpositive when $\mu\geq\mu_\gamma$. The bound holds for every
matrix in every allowed ensemble. A diagonal state with the required
diagonal probabilities attains zero. This proves the zero statement for
arbitrary collective instruments.

For strict positivity below the threshold, use the single-outcome state
\begin{equation}
\theta_{z,\widetilde A}\coloneqq
\begin{pmatrix}p&z\\\bar z&1-p\end{pmatrix},
\qquad 0<|z|<\sqrt{p(1-p)},
\label{eq:small-coherence-state}
\end{equation}
whose diagonal is the required $(p,1-p)$.
Since $0<p<1$, entropy is analytic near the full-rank diagonal state
$\theta_{0,\widetilde A}=\diag(p,1-p)$, and
\begin{equation}
\Cr(\theta_{z,\widetilde A})
=H(\widetilde A)_{\theta_0}-H(\widetilde A)_{\theta_z}
=a_p|z|^2+O(|z|^4),
\label{eq:small-coherence-expansion}
\end{equation}
where
\begin{equation}
a_p\coloneqq
\begin{cases}
\dfrac{\ln p-\ln(1-p)}{(2p-1)\ln2},&p\ne1/2,\\[4pt]
\dfrac2{\ln2},&p=1/2.
\end{cases}
\end{equation}
The coefficient is positive for $0<p<1$. The expansion follows by
inserting the eigenvalues
$(1\pm\sqrt{(2p-1)^2+4|z|^2})/2$ into the binary entropy;
the value at $p=1/2$ is the continuous extension.
Dephasing replaces $z$ by $\gamma z$, so
\begin{equation}
L_\mu^{G_\gamma}(\theta_{z,\widetilde A})
=[(1+\mu)\gamma^2-\mu]a_p|z|^2+O(|z|^4)>0
\end{equation}
for sufficiently small $|z|$ when $\mu<\mu_\gamma$.
The product state $\theta_{z,\widetilde A}^{\otimes n}$ has the required product
diagonal, and its score is $nL_\mu^{G_\gamma}(\theta_{z,\widetilde A})$ by entropy
additivity. Thus the static capacity formula is positive at every blocklength and the
regularized formula is bounded below by the same positive one-copy score.
This proves both directions.
\end{proof}

The expansion~\eqref{eq:small-coherence-expansion} also proves that the
contraction factor in Theorem~\ref{thm:contraction} is optimal:
\begin{equation}
\lim_{z\to0,\,z\ne0}
\frac{\Cr(\mathcal Z_\gamma(\theta_{z,\widetilde A}))}
{\Cr(\theta_{z,\widetilde A})}=\gamma^2.
\end{equation}
Thus the constant cannot be decreased even for a single qubit without an
auxiliary system.

At the threshold, the two endpoint choices $\lambda=\mu_\gamma$ and
$\lambda=\mu_\gamma+1$ give the exact supporting inequalities
\begin{align}
\gamma^2 C+(1+\gamma^2)Q+(1-\gamma^2)E&\leq0,\label{eq:dephase-origin-first}\\
C+(1+\gamma^2)Q+(1-\gamma^2)E&\leq0.
\label{eq:dephase-origin-facets}
\end{align}
All intervening $\lambda$ give their convex combinations. Together with
$C+2Q\leq0$, these inequalities also imply the supporting inequalities with zero right-hand side
for $\mu>\mu_\gamma$: the normal at fixed $\lambda-\mu$ changes by a
nonnegative multiple of $(1,2,0)$.

\begin{corollary}[Entanglement gain at vanishing quantum-communication cost]
\label{cor:origin-slope}
Let $\rho_{AB}^{p,\gamma}$ be as in~\eqref{eq:dephasing-resource},
with $0<p<1$ and $0<\gamma<1$, and let $\cC_{\rm stat}$ be the region
of Section~\ref{sec:notation}. For a supplied quantum-communication
rate $Q_c\geq0$, define
\begin{equation}
E_{\max}(Q_c)\coloneqq
\sup\{E:(0,-Q_c,E)\in\cC_{\rm stat}(\rho_{AB}^{p,\gamma})\}.
\label{eq:dephasing-emax}
\end{equation}
Then $E_{\max}(0)=0$ and
\begin{equation}
\lim_{Q_c\downarrow0}\frac{E_{\max}(Q_c)}{Q_c}
=\frac{1+\gamma^2}{1-\gamma^2}.
\label{eq:origin-slope}
\end{equation}
\end{corollary}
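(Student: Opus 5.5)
Both claims will follow from the threshold inequalities \eqref{eq:dephase-origin-first}--\eqref{eq:dephase-origin-facets} together with explicit small-coherence achievability. For $E_{\max}(0)=0$: setting $C=0$, $Q=0$ in \eqref{eq:dephase-origin-facets} gives $(1-\gamma^2)E\le0$, so $E\le0$, and the origin is achievable by ignoring the state. This settles $E_{\max}(0)=0$.

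For the upper bound on the slope, take $C=0$ and $Q=-Q_c$ in \eqref{eq:dephase-origin-facets}. This gives $(1-\gamma^2)E\le(1+\gamma^2)Q_c$, hence
\begin{equation*}
E_{\max}(Q_c)\le\frac{1+\gamma^2}{1-\gamma^2}\,Q_c
\end{equation*}
for every $Q_c\ge0$. The same bound follows from \eqref{eq:dephase-origin-first} with $C=0$. Because these are supporting inequalities of the regularized region, valid for collective instruments by Theorem~\ref{thm:threshold}, the bound is operational through Theorem~\ref{thm:formula}. It remains only to show that the ratio can approach this value.

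For the matching lower bound, use the single-outcome state $\theta_{z,\widetilde A}$ of \eqref{eq:small-coherence-state}. Its diagonal is $(p,1-p)$, so the mean-diagonal constraint holds with a single outcome and the primitive point is achievable. By \eqref{eq:primitive-matrix-Q}--\eqref{eq:primitive-matrix}, this primitive point is $(0,Q(\theta_z),E(\theta_z))$. Write $\Cr(\theta_z)=a_p|z|^2+O(|z|^4)$ and $\Cr(\mathcal Z_\gamma(\theta_z))=\gamma^2a_p|z|^2+O(|z|^4)$, both using \eqref{eq:small-coherence-expansion}, with $H(\Delta(\theta_z))=h_2(p)$ fixed. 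Then
\begin{align*}
-Q(\theta_z)&=\tfrac12\bigl[\Cr(\theta_z)-\Cr(\mathcal Z_\gamma(\theta_z))\bigr]=\tfrac12(1-\gamma^2)a_p|z|^2+O(|z|^4),\\
E(\theta_z)&=\tfrac12\bigl[\Cr(\theta_z)+\Cr(\mathcal Z_\gamma(\theta_z))\bigr]=\tfrac12(1+\gamma^2)a_p|z|^2+O(|z|^4).
\end{align*}
Hence $E(\theta_z)/(-Q(\theta_z))\to(1+\gamma^2)/(1-\gamma^2)$ as $z\to0$.

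Set $Q_z\coloneqq-Q(\theta_z)>0$. Given small $Q_c>0$, choose $|z|$ with $Q_z\le Q_c$. Time sharing with the origin and then appending entanglement distribution are both possible, but simplest is to use the point itself and add the rate $Q_c-Q_z$ of unused quantum communication, which is disposal in $\cU$. This yields $E_{\max}(Q_c)\ge E(\theta_z)$. Choosing $z$ so that $Q_z=Q_c$ exactly is possible by continuity, since $Q_z\to0$ and $Q_z$ varies continuously in $|z|$. Then $E_{\max}(Q_c)/Q_c\ge E(\theta_z)/Q_z$, which tends to the claimed slope as $Q_c\downarrow0$. Combining the two bounds proves \eqref{eq:origin-slope}.

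The main obstacle is justifying that the $O(|z|^4)$ remainders are uniform, so that the ratio limit holds along the continuous parametrization $Q_c\mapsto z$. This needs $Q_z>0$ for small $z\neq0$ and monotonicity, or at least continuity with range covering a neighborhood of $0$. Both follow from analyticity of entropy near the full-rank state $\diag(p,1-p)$ and $a_p>0$.
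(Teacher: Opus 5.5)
Your proposal is correct and follows the paper's proof essentially step for step: the upper bound and $E_{\max}(0)=0$ come from \eqref{eq:dephase-origin-facets} at $C=0$ together with achievability of the origin, and the matching lower bound comes from the primitive points of the small-coherence state $\theta_{z,\widetilde A}$ via the expansion \eqref{eq:small-coherence-expansion}. Your treatment of the parametrization also matches the paper: the cost $-Q(\theta_z)$ is continuous and bounded below by a positive multiple of $|z|^2$ for small $z\neq0$, so every small $Q_c$ is attained with $z\to0$, which is what the paper obtains from analyticity and strict monotonicity in $|z|$.
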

\begin{proof}
Equation~\eqref{eq:dephase-origin-facets} gives
$E_{\max}(Q_c)\leq(1+\gamma^2)Q_c/(1-\gamma^2)$.
The zero rate point is achievable, so this also proves $E_{\max}(0)=0$.
For the state $\theta_{z,\widetilde A}$ in~\eqref{eq:small-coherence-state},
the primitive rates in~\eqref{eq:qubit-point} satisfy
\begin{align}
Q_c&=\tfrac12(1-\gamma^2)a_p|z|^2+O(|z|^4),\\
E&=\tfrac12(1+\gamma^2)a_p|z|^2+O(|z|^4),
\end{align}
by~\eqref{eq:small-coherence-expansion}.
As a function of $|z|$, the first rate is analytic and strictly increasing
for sufficiently small positive $|z|$, since its derivative is
$(1-\gamma^2)a_p|z|+O(|z|^3)>0$.
These achievable points therefore supply all sufficiently small positive
$Q_c$ and approach the displayed ratio. Combining this lower bound with
the upper bound proves the limit. No full additivity assumption is used.
\end{proof}

The equality $E_{\max}(0)=0$ does not contradict positive one-way distillable
entanglement. The usual distillation convention permits free forward classical
communication; the slice~\eqref{eq:dephasing-emax} accounts for its net rate.

At $\gamma=1$, the state is pure and the full region is
$(0,0,h_2(p))+\cU$ by Theorem~\ref{thm:flagged}. At $\gamma=0$, or at
$p\in\{0,1\}$, the state is separable and the region is $\cU$ by
Theorem~\ref{thm:extendible}. For $0<\gamma<1$ and $0<p<1$, define
\begin{align}
D_{p,\gamma}
&\coloneqq I(A\rangle B)_{\rho^{p,\gamma}}
=H(B)_{\psi^{p,\gamma}}-H(E)_{\psi^{p,\gamma}}\\
&=h_2(p)
-h_2\!\left(\frac{1+\sqrt{\gamma^2+(1-\gamma^2)(2p-1)^2}}2\right).
\end{align}
The support at $\mu=0$ is exactly $D_{p,\gamma}$ by~\eqref{eq:had-mu0}.
Consequently $Q+E\leq D_{p,\gamma}$ and $C+Q+E\leq D_{p,\gamma}$ are exact
supporting bounds. Full single-letterization of the interval
$0<\mu<\mu_\gamma$ remains open.

\subsection{Capacity slices for a dephased maximally entangled state}
We now combine the exact supporting inequalities just obtained with
achievable points from~\eqref{eq:qubit-point}. The resulting two-dimensional
slices illustrate both the determined boundary portions and the gap that
remains between the sampled achievable rates and the collective outer bound.

Figure~\ref{fig:dephasing-slices} considers $p=1/2$ and $\gamma=0.8$.
Set $D\coloneqq D_{1/2,\gamma}=1-h_2((1+\gamma)/2)$.
The coherent-information bound and~\eqref{eq:dephase-origin-first}--\eqref{eq:dephase-origin-facets} give
\begin{align}
E&\leq\min\!\left\{D,\frac{\gamma^2(-C)}{1-\gamma^2}\right\}
&&\text{when }Q=0,\ C\leq0,\label{eq:dephasing-C-outer}\\
E&\leq\min\!\left\{D-Q,\frac{(1+\gamma^2)(-Q)}{1-\gamma^2}\right\}
&&\text{when }C=0,\ Q\leq0.\label{eq:dephasing-Q-outer}
\end{align}
These are outer bounds on the full operational region, including collective
instruments. The achievable curves use a finite family of the canonical
instruments in~\eqref{eq:qubit-point}, followed by time sharing and the unit
protocols. Appendix~\ref{app:figures} gives the construction and numerical
parameters. The gap in the figure remains unresolved; the sampled curve is
not asserted to exhaust even the complete one-copy information region.
For sufficiently large communication consumption, the identity instrument
and the unit protocols attain the coherent-information bounds, so the two
curves coincide.

\begin{figure}[t]
\centering
\includegraphics[width=\linewidth]{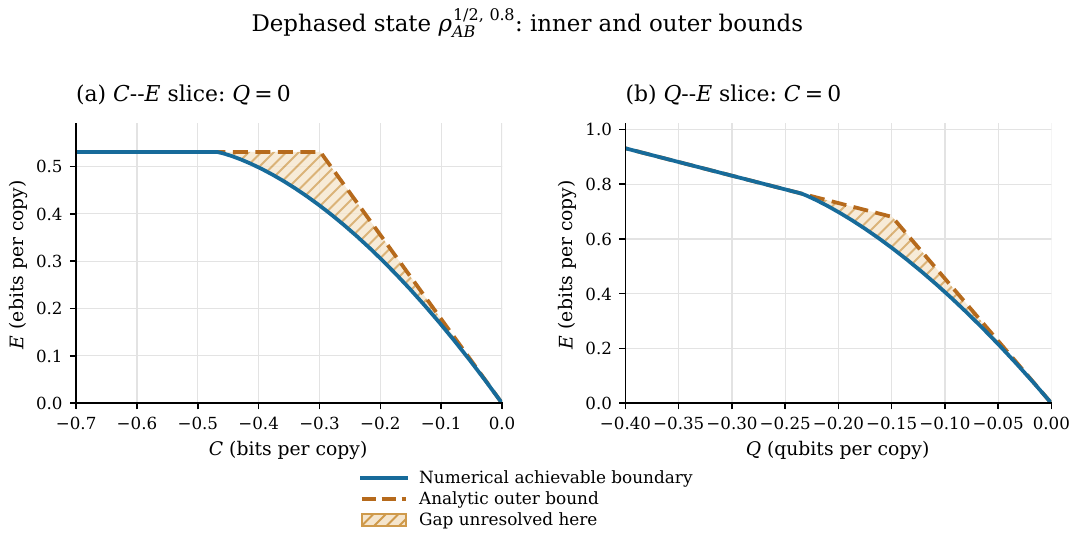}
\caption{Static trade-off bounds for the state $\rho_{AB}^{1/2,0.8}$
defined in~\eqref{eq:dephasing-resource}. The solid curves are achievable
by the finite construction in Appendix~\ref{app:figures}.
The dashed curves are the collective outer bounds in~\eqref{eq:dephasing-C-outer} and~\eqref{eq:dephasing-Q-outer}, with
$D=1-h_2(0.9)$. The shaded gap is unresolved. Only the portions with
nonnegative entanglement generation and nonpositive communication rates
are displayed. Points below a solid curve are achievable; points above a
dashed curve are excluded.}
\label{fig:dephasing-slices}
\end{figure}

\subsection{Arbitrary pure inputs need not be Schmidt aligned}
The preceding analysis used alignment of the Schmidt and dephasing bases.
We now explain why a general pure input need not yield a maximally
correlated state, and identify the conclusions that still follow from
the Hadamard-state results in Section~\ref{sec:hadamard}.

Let $\phi_{A\widetilde A}\coloneqq\proj\phi_{A\widetilde A}$ be an arbitrary pure state, with qubit input
marginal
\begin{equation}
\tau_{\widetilde A}\coloneqq\Tr_A\phi_{A\widetilde A}
=\begin{pmatrix}p&z\\\bar z&1-p\end{pmatrix}.
\end{equation}
Define its output state and a purification using~\eqref{eq:dephasing-isometry}:
\begin{equation}
\rho_{AB}\coloneqq(\id_A\otimes\mathcal Z_\gamma)(\phi_{A\widetilde A}),
\qquad
\ket\psi_{ABE}\coloneqq(I_A\otimes U_{\widetilde A\to BE})\ket\phi_{A\widetilde A},
\quad\psi_{ABE}\coloneqq\proj\psi_{ABE}.
\end{equation}
Expand
\begin{equation}
\ket\phi_{A\widetilde A}=\sqrt p\ket{a_0}_A\ket0_{\widetilde A}
+\sqrt{1-p}\ket{a_1}_A\ket1_{\widetilde A},
\end{equation}
with normalized $a_i$ when their probabilities are nonzero. Their overlap
need not vanish. Thus $\rho_{AB}$ need not be locally isometric to the
maximally correlated state~\eqref{eq:dephasing-resource}.
For an explicit example, take
$\ket\phi_{A\widetilde A}=\ket0_A\ket+_{\widetilde A}$ and $0<\gamma<1$.
The output $\rho_{AB}$ then has a rank-one marginal on $A$ and a rank-two
marginal on $B$, whose nonzero eigenvalues are $(1\pm\gamma)/2$.
Every maximally correlated state has equal marginal ranks, as is immediate
from~\eqref{eq:mcstate}; local isometries preserve these ranks. Hence this
output admits no maximally correlated realization under local isometries.
When $z=0$ and $0<p<1$, the two vectors are orthogonal and the Schmidt-aligned
analysis applies.

For every such input, $\rho_{AB}$ is still Hadamard generated. The marginal
states of its purification satisfy
$\psi_B=\mathcal Z_\gamma(\tau_{\widetilde A})$ and
$\psi_E=p\proj{e_0}_E+(1-p)\proj{e_1}_E$.
Their entropies are
\begin{align}
H(A)_\psi
&=h_2\!\left(\frac{1+\sqrt{(2p-1)^2+4|z|^2}}2\right),\\
H(B)_\psi
&=h_2\!\left(\frac{1+\sqrt{(2p-1)^2+4\gamma^2|z|^2}}2\right),\\
H(E)_\psi
&=h_2\!\left(\frac{1+\sqrt{\gamma^2+(1-\gamma^2)(2p-1)^2}}2\right).
\end{align}
The first follows from $H(A)_\psi=H(\widetilde A)_\phi=H(\widetilde A)_\tau$ and the
eigenvalues of $\tau_{\widetilde A}$; the second follows from the eigenvalues of
$\psi_B$. Represent the complementary state $\psi_E$ on a two-dimensional space
containing the span of $e_0,e_1$, adjoining a zero eigenvalue if necessary.
It has trace one and determinant $p(1-p)(1-\gamma^2)$, proving the third
formula, including the degenerate cases.
Hence the additive $\mu=0$ face is exactly $H(B)_\psi-H(E)_\psi$
by~\eqref{eq:had-mu0}. The identity instrument achieves the primitive point
\begin{equation}
\left(0,-\frac{H(A)_\psi+H(E)_\psi-H(B)_\psi}{2},
\frac{H(A)_\psi+H(B)_\psi-H(E)_\psi}{2}\right)
\label{eq:nonaligned-primitive}
\end{equation}
by the pure-state entropy identities in~\eqref{eq:CASR-C}--\eqref{eq:CASR-E}.
We do not extend the matrix formula or Theorem~\ref{thm:threshold} to
nonaligned inputs without an additional argument.

\section{Conclusions and future directions}

\subsection{Summary}
We have presented the direct static capacity theorem of Ref.~\cite{HW10}
through a common entropy
description of classical communication, quantum communication, and entanglement.
The direct construction combines instrument compression with quantum
side information~\cite{WHBH12} and state redistribution~\cite{DY08,YD09}
with the unit-resource protocols. The finite-block converse in
Theorem~\ref{thm:finite-converse} derives all three inequalities from one
instrument, for every choice of signs of the net rates. No-signalling accounts
for the supplied entanglement and message references, while continuity bounds
control the decoding error without introducing the dimension of a discarded
system. Theorem~\ref{thm:formula} then gives the exact supporting-hyperplane
description and identifies the additivity question needed for a single-letter
capacity formula.

The complete erased-state region in Theorem~\ref{thm:erased-region} follows
from subset-entropy bounds valid for arbitrary collective instruments,
using methods developed in Refs.~\cite{MW23,GHW22}.
Symmetrically extendible states and locally flagged pure-state mixtures give
two additional complete classes. For Hadamard states, the degradable-state
distillation theorem of Ref.~\cite[Proposition~4]{LDS18} determines
the coherent-information face, as stated in Theorem~\ref{thm:had-face}. The penalty
identity in~\eqref{eq:had-penalty} isolates the remaining optimization.
For maximally correlated states, the matrix reduction in
Theorem~\ref{thm:matrix-formula} preserves the discarded quantum memory, and
Corollary~\ref{cor:mc-outer} provides an additive outer bound. The strict example in
Theorem~\ref{thm:quantum-memory} shows why this memory cannot in general be
removed. For Schmidt-aligned qubit dephasing, the one-copy formula, the exact
threshold for its vanishing at all blocklengths, and the origin slope give
exact information about the region, while leaving the intermediate
trade-offs open.

\subsection{Directions for future research}\label{sec:future}
There are several interesting questions to consider going forward.
The first question is whether the matrix functional for maximally correlated
states tensorizes as in~\eqref{eq:target-tensorization}. Binary correlation
matrices provide a concrete starting point. The reduction proposed in
Conjecture~\ref{conj:memory}, together with the additivity theorem already
proved in Theorem~\ref{thm:Jadd}, would make both inequalities
in~\eqref{eq:sandwich} equalities and establish single-letterization for
maximally correlated states. For dephased states, the unresolved optimization
concerns $0<\mu<\mu_\gamma$, where $\mu_\gamma$ is defined
in~\eqref{eq:dephasing-threshold-parameter}.
The contraction inequality in~\eqref{eq:coherence-contraction} determines the
threshold for vanishing of the static capacity formula; a bound that also
retains the dependence on the fixed input diagonal could help determine
its positive values.

A second question concerns the general Hadamard penalty in~\eqref{eq:had-gamma}.
Each of its terms is nonnegative, but this alone does not imply additivity of
their optimized sum. A proof would need to relate a collective discarded
quantum system to admissible one-copy instruments. Conversely, a strict
collective advantage would identify a limitation of the single-copy
description even within this degradable family. Nonaligned inputs to qubit
dephasing are another useful test, since they retain the Hadamard structure
without necessarily being maximally correlated.

The possibility of nonadditive static trade-off curves deserves a separate
investigation. In the dynamic setting, Ref.~\cite{ZZS17}
constructed a channel with additive classical capacity but superadditive
classical capacity under limited entanglement assistance. Ref.~\cite{ZZHS19} established further separations between
additivity of individual capacities and additivity of trade-off regions.
A corresponding static question is whether a bipartite state $\rho_{AB}$
can have an additive coherent-information optimization while satisfying
\begin{equation}
\sS_{\lambda,\mu}(\rho^{\otimes n})
>n\sS_{\lambda,\mu}(\rho)
\label{eq:static-nonadditivity-question}
\end{equation}
for some $n\in\mathbb N$ with $n\geq2$ and parameters
in~\eqref{eq:parameters} with $\mu>0$.
Hadamard states provide a concrete setting because their coherent-information
face is already additive by Theorem~\ref{thm:had-face}. For the dephased
states covered by Theorem~\ref{thm:threshold}, a possible separation is
further restricted to $0<\mu<\mu_\gamma$.
Such a strict inequality would give a collective improvement of a supporting
value of the static region. The one-copy quantum-memory separation in
Theorem~\ref{thm:quantum-memory} does not establish this collective advantage.

Another direction is to determine the static capacity region of finite-energy
two-mode bosonic Gaussian states. After removing first moments by local
displacements, their covariance matrices can be brought to standard form by
local symplectic transformations~\cite{Simon00}. The bosonic
loss-channel and amplifier-channel results~\cite{WHG12a,WHG12b,QW17} suggest
starting with states obtained by sending one share of a two-mode squeezed
vacuum state through either channel. For a fixed infinite-dimensional resource state, the communication
and entanglement registers remain finite-dimensional at each
blocklength. The conditional-entropy continuity bounds used in
the converse therefore remain applicable, because they depend
only on the dimensions of these registers~\cite{Winter16}.
Extending the full static capacity theorem to this setting
calls for an appropriate treatment of the information quantities
and a justification of the direct coding argument. An extension
of the direct construction restricted to Gaussian instruments would give an
achievable region; optimality would require a converse allowing arbitrary
collective instruments. Establishing when Gaussian instruments suffice, and
when collective non-Gaussian operations improve the trade-off, are natural
parts of this problem.

The exactly solved families also suggest extensions. Products of erased states
with a common erasure probability obey the same entropy argument, whereas
different probabilities require a more general interpolation inequality.
Locally accessible flags may allow further mixtures to be evaluated explicitly,
provided that the cost of making a flag available to both parties is accounted
for. Finally, quantitative versions of the finite-block converse may help
determine convergence rates and stronger error bounds. Such refinements would
require additional estimates beyond the continuity argument used here.

\section*{Acknowledgements}

I acknowledge support from the Cornell University School of Electrical and Computer Engineering.

\section*{Statement on AI-assisted preparation}

ChatGPT Pro 6 (Astra) and Codex (OpenAI) were used extensively throughout
the development of this paper, including for mathematical exploration,
development and checking of proof arguments, drafting and revision of
the text, verification of bibliographic information, and preparation
of numerical and figure-generation code. The author directed this
process, read and revised successive drafts, requested changes to
the mathematical arguments and presentation, and determined which
suggestions to incorporate.

The retained records document an initial drafting cycle and at least
six subsequent substantial manuscript-revision cycles. Across
these cycles, twenty-two rounds of internal AI-assisted mathematical
and editorial review produced at least seventy-nine written reports.
The manuscript was revised in response to these reports and to the
author's comments. Six revision checklists contain seventy-six
grouped author-requested items, covering both implementation of
changes and verification of revisions already incorporated by the
author. This checklist count includes revisited items and does not
represent seventy-six distinct textual edits; numerous individual
comments and reviewer-requested corrections are additional.

Review tasks were distributed among AI agents with specified scopes.
They examined the coding theorem and finite-block entropic converse;
the linear-programming derivation of the static capacity formula;
the erased-state subset-entropy bounds and other exact state families;
the Hadamard and maximally correlated reductions, cardinality bounds,
and additivity claims; and the quantum-memory example and complete
dephasing-contraction argument. Other assignments checked numerical
certificates, figure data and reproducibility, bibliographic information,
notation, and cross-references. Some reviewers were assigned different
topics in later rounds, while other rounds used fresh review contexts.
The report count records completed review assignments, rather than
distinct agents; the retained records do not establish a cumulative
count of distinct reviewer agents.

In the most recent revision cycle, fourteen freshly initialized lead
reviewer contexts produced fourteen scoped reports across four rounds.
These reviewers received the manuscript version assigned to their
round without earlier referee reports. Their assignments covered the
core capacity theorem and exact state families, the maximally correlated
and dephasing results, and scholarly presentation. Subsidiary checks
were incorporated into the lead reports and are not counted as
additional reports or lead reviewer contexts.

The reviews formed part of the internal preparation of the
manuscript and do not constitute external peer review. The author
takes responsibility for the mathematical claims, proofs,
citations, and presentation in the final manuscript.

\bibliographystyle{alphaurl}
\bibliography{static_capacity}

\appendix

\section{Detailed achievability proof}\label{app:direct}
This appendix proves achievability of the rate point in
\eqref{eq:CASR-C}--\eqref{eq:CASR-E} by composing instrument compression
with quantum side information and state redistribution. It accounts for
fixed resource amounts, removes the shared randomness used in the
intermediate simulation, and then treats regularization. We begin with
the two coding theorems used in this construction.
For the instrument state~\eqref{eq:instrument-state}, instrument compression
with quantum side information simulates the instrument on many independent
copies, with its quantum outputs $SF$ at Alice and copies of $X$ at both
parties, using classical communication rate
\begin{equation}
I(X;E|B)_{\sigma}
\end{equation}
and shared randomness rate $H(X|BE)_{\sigma}$~\cite[Theorem~12, end of
Section~5.3, and Section~5.6.1]{WHBH12}. More precisely, define the state
\begin{equation}
\sigma_{X_AX_BSFBE}\coloneqq
\sum_xp_x\proj x_{X_A}\otimes\proj x_{X_B}\otimes\psi^x_{SFBE},
\label{eq:compression-target}
\end{equation}
where the probabilities and conditional states are those of
\eqref{eq:instrument-state}. Alice holds $X_ASF$ and Bob holds $X_BB$.
The simulation approximates tensor powers of this entire state, including
the reference $E$, in normalized trace distance. Its error tends to zero.
Arbitrarily small positive rate slack is permitted. This is the
feedback version of instrument compression: Alice also learns the outcome.
The nonnegativity of $H(X|BE)_{\sigma}$ follows because $X$ is classical.

For a pure state $\chi_{SFBE}$, state redistribution transfers $S$ from Alice,
who also holds $F$, to Bob, who holds $B$, at quantum communication rate
\begin{equation}
Q_{\rm red}\coloneqq\tfrac12 I(S;E|B)_{\chi}
\end{equation}
and net entanglement consumption rate
\begin{equation}
E_{\rm red}\coloneqq\tfrac12[I(S;F)_{\chi}-I(S;B)_{\chi}].
\end{equation}
A negative consumption rate means entanglement generation. The transferred
state and the generated or returned ideal entanglement are preserved jointly
\cite{DY08,YD09}. These formulas are equivalently
$Q_{\rm red}\geq\tfrac12 I(S;E|B)_{\chi}$ and
$Q_{\rm red}+E_{\rm red}\geq H(S|B)_{\chi}$ at their common corner.

Apply instrument compression and then redistribution conditional on $X$.
The latter conditioning has a concrete implementation. On a typical outcome
sequence, locally permute equal labels into blocks, use the redistribution
code for each normalized state $\psi^x_{SFBE}$ of~\eqref{eq:instrument-state}, and undo the permutations. Outcome frequencies
converge to $p_x$; the sum of the costs divided by blocklength therefore converges
to the average of the conditional costs. Reserve a small rate slack for deviations
and use an arbitrary output on atypical sequences. Their probability tends to
zero. On typical sequences, each positive-probability outcome occurs a
number of times tending to infinity. Choose each redistribution code with
error uniformly small for all sufficiently large blocklengths. There are
only finitely many such outcomes, so the sum of their coding errors tends
to zero. Contractivity of normalized trace distance under the subsequent
conditional redistribution map shows that the instrument-simulation error adds
to these coding errors.

These conditional protocols can be implemented with fixed resource amounts.
Choose a fixed order for the blocks with the same outcome. Supply enough initial ebits for
the largest cumulative deficit among all typical sequences in that order,
including temporary entanglement consumption inside each conditional redistribution
code. The standard redistribution constructions have finite consumed and
returned entanglement rates, so this reserve is $O(n)$. Pad each communication
register to its largest typical dimension. At the
end retain a fixed number of ebits no larger than the smallest typical final
yield, discarding any surplus. Frequencies within a sufficiently small typical
window change all total net costs by at most an arbitrarily small multiple
of $n$. On an atypical sequence output an arbitrary state on these same fixed
registers. Thus every random-seed realization has the same communication
dimensions, supplied entanglement amount, and tested output dimension. The initial
entanglement reserve is counted both in the supply and in the returned yield;
their difference has the averaged conditional net rate below.

The resulting net point, already stated in~\eqref{eq:CASR-C}--\eqref{eq:CASR-E}, is
\begin{align}
C_0&\coloneqq-I(X;E|B)_{\sigma},\\
Q_0&\coloneqq-\tfrac12I(S;E|BX)_{\sigma},\\
E_0&\coloneqq\tfrac12[I(S;B|X)_{\sigma}-I(S;F|X)_{\sigma}].
\label{eq:CASR-appendix}
\end{align}
The two conditional mutual information expressions
$I(S;E|BX)_{\sigma}$ and $I(S;E|FX)_{\sigma}$ agree by conditional purity.

There is no free shared randomness in the operational definition. It can be
removed \emph{after} this composition. Include any returned entanglement in
the final pure target $\Phi$. If $K$ is the initially shared random seed and
$\omega_k$ is the final marginal state on the retained entanglement
registers for seed $k$, then
\begin{equation}
\sum_kp_k\Tr[\Phi\omega_k]\geq1-\eta
\end{equation}
implies that some $k$ satisfies $\Tr[\Phi\omega_k]\geq1-\eta$. Fix that seed in
both code descriptions. The resulting deterministic protocol has the same
communication and entanglement amounts, and normalized trace-distance error at most
$\sqrt\eta$. Intermediate instrument simulation need no longer be accurate for
this fixed seed; only the final entanglement target is required. If the primitive
point has $E_0\leq0$, then all three net rates are nonpositive, because
$C_0$ and $Q_0$ in~\eqref{eq:CASR-C}--\eqref{eq:CASR-Q} are negatives
of nonnegative conditional mutual informations. This point is already
achievable by resource disposal. This is the elementary derandomization step
underlying the resource formulation in Refs.~\cite{HW10,DHW08}; see also
Ref.~\cite[Section~5.6.1, following Eq.~(81)]{WHBH12}.

We next verify the three facet values of~\eqref{eq:CASR-C}--\eqref{eq:CASR-E}. Purity gives
\begin{align}
I(S;E|FX)_{\sigma}
&=H(SF|X)_{\sigma}+H(EF|X)_{\sigma}-H(F|X)_{\sigma}-H(SEF|X)_{\sigma}\\
&=H(BE|X)_{\sigma}+H(SB|X)_{\sigma}-H(SBE|X)_{\sigma}-H(B|X)_{\sigma}\\
&=I(S;E|BX)_{\sigma}.
\end{align}
Expanding and cancelling terms yields
\begin{align}
2(Q_0+E_0)
&=-I(S;E|FX)_{\sigma}+I(S;B|X)_{\sigma}-I(S;F|X)_{\sigma}\\
&=-H(EF|X)_{\sigma}+H(SEF|X)_{\sigma}+H(B|X)_{\sigma}-H(SB|X)_{\sigma}\\
&=2H(B|X)_{\sigma}-2H(SB|X)_{\sigma}=2I(S\rangle BX)_{\sigma}.
\end{align}
The chain rule then gives
\begin{align}
C_0+2Q_0&=-I(SX;E|B)_{\sigma},\\
Q_0+E_0&=I(S\rangle BX)_{\sigma},\\
C_0+Q_0+E_0&=I(S\rangle BX)_{\sigma}-I(X;E|B)_{\sigma}.
\end{align}
Consequently the polyhedron~\eqref{eq:facet1}--\eqref{eq:facet3} is exactly
$(C_0,Q_0,E_0)+\cU$. All its points are achievable by the three unit protocols.
If a composition temporarily needs ebits that it later returns, they are
included among the consumed and generated ideal resources in the catalytic
code definition.
More explicitly, first run the primitive protocol and then apply the unit
protocols to its entanglement output and independent message inputs.
Contractivity of normalized trace distance under these subsequent channels
preserves the approximation to the joint communication and entanglement
target. For each fixed point of the translated cone, all resource amounts,
including a reserve for temporary entanglement consumption, have finite rates.

Finally fix an instrument on $k$ copies, regard $\rho^{\otimes k}$ as one
source symbol, and run the preceding direct proof on many independent symbols.
Its rates per original copy are divided by $k$. Taking all $k$ and then the
closure proves the direct inclusion in~\eqref{eq:regularized-region}.

\section{Proof of the erased-state capacity theorem}\label{app:erased}\label{sec:subset}
We first establish the entropy inequalities needed to analyze arbitrary
collective instruments on an erased state. All logarithms have base two.
For a classical--quantum state
\begin{equation}
\sigma_{XU}\coloneqq\sum_x p_x\proj{x}_X\otimes\sigma_U^x,
\end{equation}
where $\left(p_x\right)_x$ is a probability distribution and each $\sigma_U^x$ is a
normalized state, conditional entropy satisfies
\begin{equation}
H(U|X)_\sigma=\sum_x p_x H(U)_{\sigma^x}.
\end{equation}
In particular, $H(U|X)_\sigma\geq0$. Conditional entropy given a quantum
system can be negative.

\subsection{Strong subadditivity and weak monotonicity}
Let $\theta_{UVW}$ be an arbitrary tripartite state. Strong subadditivity
states that
\begin{equation}
I(U;V|W)_\theta
=H(U|W)_\theta-H(U|VW)_\theta\geq0.
\label{eq:SSA}
\end{equation}
To obtain weak monotonicity, let $\chi_{FUVW}$ be a purification of an
arbitrary state $\chi_{FUV}$ on the disjoint systems $F,U,V$. We claim that
\begin{equation}
H(F|U)_\chi+H(F|V)_\chi\geq0.
\label{eq:WM}
\end{equation}
Purity gives $H(F|U)_\chi=-H(F|VW)_\chi$, and consequently
\begin{align}
H(F|U)_\chi+H(F|V)_\chi
&=H(F|V)_\chi-H(F|VW)_\chi\\
&=I(F;W|V)_\chi\\
&\geq0.
\end{align}
The last inequality follows from~\eqref{eq:SSA} applied to the appropriate marginal of
$\chi_{FUVW}$.

Now let
\begin{equation}
\kappa_{XFUV}\coloneqq\sum_x p_x\proj{x}_X\otimes\kappa_{FUV}^x
\end{equation}
be a state that is classical on $X$. Applying~\eqref{eq:WM} to each
conditional state $\kappa_{FUV}^x$ and averaging gives
\begin{equation}
H(F|UX)_\kappa+H(F|VX)_\kappa\geq0.
\label{eq:WM-classical}
\end{equation}
The classical nature of $X$ is essential for this particular conditional form.

\subsection{Entropy of orthogonal blocks}
Let $\left(q_j\right)_j$ be a probability distribution and define its classical state
$q_Y\coloneqq\sum_j q_j\proj{j}_Y$. Let $\left(\xi_D^j\right)_j$ be normalized
states supported on mutually orthogonal subspaces of $D$, and define
\begin{equation}
\xi_D\coloneqq\bigoplus_j q_j\xi_D^j.
\end{equation}
If $\left(r_{jk}\right)_k$ are the eigenvalues of $\xi_D^j$, then the eigenvalues of
$\xi_D$ are $q_jr_{jk}$. Thus
\begin{align}
H(D)_\xi
&=-\sum_{j,k}q_jr_{jk}\log_2(q_jr_{jk})\\
&=-\sum_jq_j\log_2 q_j\sum_kr_{jk}
  -\sum_jq_j\sum_kr_{jk}\log_2 r_{jk}\\
&=H(Y)_q+\sum_jq_jH(D)_{\xi^j}.
\label{eq:blocks}
\end{align}
The normalization $\sum_k r_{jk}=1$ gives the last line. Terms with zero
probability or zero eigenvalue are interpreted using $0\log_2 0=0$.

\subsection{Bernoulli-subset entropy inequalities}
We use the subset-entropy approach of Mamindlapally and Winter~\cite{MW23}.
Their Lemmas~3 and~4 credit earlier inequalities and arguments of Grassl,
Huber, and Winter~\cite{GHW22}. These fixed-cardinality inequalities motivate
the Bernoulli averages below. We give the full proof needed here, including
an arbitrary quantum system $F$.

Let $A^n\coloneqq A_1\cdots A_n$. For $J\subseteq[n]$, write
$A_J\coloneqq\bigotimes_{j\in J}A_j$ and let $J^c\coloneqq[n]\setminus J$.
The empty tensor product is a one-dimensional system. Let $J_r$ be a random
subset in which each index is included independently with probability $r$.
Fix a state $\omega_{XFA^n}$ that is classical on $X$, and define
\begin{equation}
\begin{aligned}
f(r)&\coloneqq\EE_{J_r}H(A_{J_r}|X)_\omega, &
g(r)&\coloneqq\EE_{J_r}H(FA_{J_r}|X)_\omega,\\
f_F&\coloneqq H(F|X)_\omega, &
H_\Sigma&\coloneqq\sum_{i=1}^nH(A_i)_\omega.
\end{aligned}
\label{eq:subset-definitions}
\end{equation}
In particular, $f_F\geq0$ because $X$ is classical. Written explicitly,
\begin{equation}
f(r)=\sum_{J\subseteq[n]}r^{|J|}(1-r)^{n-|J|}H(A_J|X)_\omega,
\end{equation}
and similarly for $g$. These functions are polynomials in $r$; no
differentiability assumption on eigenvalues is needed below.

\begin{lemma}[Subset concavity and an upper increment bound]\label{lem:subset-concavity}
Let $\omega_{XFA^n}$ be a state classical on $X$, and let $f,g,H_\Sigma$
be as in~\eqref{eq:subset-definitions}. The functions $f$ and $g$ are
concave on $[0,1]$. Moreover, for $r\in[0,1]$,
\begin{equation}
\begin{aligned}
f'(r)&\leq H_\Sigma,\\
f(v)-f(u)&\leq(v-u)H_\Sigma\qquad(0\leq u\leq v\leq1).
\end{aligned}
\label{eq:Lipschitz}
\end{equation}
This is an upper increment bound; it does not assert that $f$ is increasing.
\end{lemma}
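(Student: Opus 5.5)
The plan is to differentiate the Bernoulli averages term by term. Since $f$ and $g$ are polynomials in $r$, this is legitimate without any regularity assumption on eigenvalues. Each derivative then becomes an average of discrete differences of conditional entropies, whose signs are fixed by strong subadditivity~\eqref{eq:SSA}.

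\emph{First derivative.} For a fixed index $i$, I will split $J_r$ according to whether $i\in J_r$. This gives
\begin{equation}
f(r)=\EE_{J'}\bigl[r\,H(A_{J'\cup\{i\}}|X)_\omega+(1-r)H(A_{J'}|X)_\omega\bigr],
\end{equation}
where $J'$ is a Bernoulli-$r$ subset of $[n]\setminus\{i\}$. Applying the product rule to the product weights $r^{|J|}(1-r)^{n-|J|}$, one coordinate at a time, yields
\begin{equation}
f'(r)=\sum_{i=1}^n\EE_{J'\subseteq[n]\setminus\{i\}}\,\Delta_i(J'),
\qquad
\Delta_i(J)\coloneqq H(A_{J\cup\{i\}}|X)_\omega-H(A_J|X)_\omega=H(A_i|A_JX)_\omega .
\end{equation}
The same formula holds for $g$, with $\Delta_i^F(J)\coloneqq H(A_i|FA_JX)_\omega$ in place of $\Delta_i(J)$.

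\emph{Second derivative.} I will apply the identical splitting to each term $\EE_{J'}\Delta_i(J')$ with respect to a second index $j\neq i$. This gives
\begin{equation}
f''(r)=\sum_{i\neq j}\EE_{J''\subseteq[n]\setminus\{i,j\}}\bigl[\Delta_i(J''\cup\{j\})-\Delta_i(J'')\bigr]
=-\sum_{i\neq j}\EE_{J''}\,I(A_i;A_j|A_{J''}X)_\omega\leq0 .
\end{equation}
The inequality follows from~\eqref{eq:SSA} applied to the conditional state for each $x$, followed by averaging over $x$. The analogous expression for $g$ is $-\sum_{i\neq j}\EE\,I(A_i;A_j|FA_{J''}X)_\omega\leq0$. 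Hence $f$ and $g$ are concave on $[0,1]$. The one step needing care is the bookkeeping of the product measure, namely checking that differentiating in $r$ produces exactly these marginal differences. Writing $r^{|J|}(1-r)^{n-|J|}=\prod_k r^{[k\in J]}(1-r)^{[k\notin J]}$ and differentiating one factor at a time makes this explicit.

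\emph{Increment bound.} By~\eqref{eq:SSA} (conditioning cannot increase entropy), each term satisfies $\Delta_i(J)=H(A_i|A_JX)_\omega\leq H(A_i|X)_\omega$. Concavity of entropy then gives $H(A_i|X)_\omega\leq H(A_i)_\omega$, since $\omega_{A_i}$ is the average of the conditional states over $x$. Summing over $i$ gives $f'(r)\leq H_\Sigma$ for every $r\in[0,1]$. Integrating from $u$ to $v$, which is valid because $f$ is a polynomial, gives $f(v)-f(u)\leq(v-u)H_\Sigma$. No monotonicity of $f$ is claimed, because $\Delta_i(J)$ may be negative: the conditioning is on quantum systems, and only $X$ is classical.
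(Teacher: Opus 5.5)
Your proposal is correct and follows essentially the same route as the paper. The paper packages your index-by-index splitting as a multilinear extension $\mathcal Q(t_1,\ldots,t_n)$, whose mixed partials are $-\EE\, I(A_i;A_j|FA_JX)_\omega$ and for which $\partial_i^2\mathcal Q=0$; it then bounds $H(A_i|A_JX)_\omega\leq H(A_i)_\omega$ by strong subadditivity in one step, where you pass through $H(A_i|X)_\omega$, with no substantive difference.
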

\begin{proof}
For a real-valued set function $a(J)$, define its multilinear extension by
\begin{equation}
\mathcal Q(t_1,\ldots,t_n)
\coloneqq\sum_{J\subseteq[n]}
\prod_{i\in J}t_i\prod_{i\notin J}(1-t_i)a(J).
\end{equation}
Equivalently, let $K\subseteq[n]$ be a random subset in which index $k$
is included independently with probability $t_k$. Then
$\mathcal Q(t_1,\ldots,t_n)=\EE[a(K)]$. Fix an index $i$ and first
choose only the other indices, writing $J\coloneqq K\setminus\{i\}$.
Conditioned on $J$, the full subset is $J\cup\{i\}$ with probability
$t_i$ and $J$ with probability $1-t_i$. The law of total expectation gives
\begin{equation}
\mathcal Q(t_1,\ldots,t_n)
=\EE_{J\subseteq[n]\setminus\{i\}}
\bigl[t_i a(J\cup\{i\})+(1-t_i)a(J)\bigr].
\label{eq:subset-condition-i}
\end{equation}
Here each remaining index $k$ is included independently with probability
$t_k$. In particular, the distribution of $J$ does not depend on $t_i$.
Differentiation therefore acts only on the two displayed coefficients:
\begin{equation}
\partial_i\mathcal Q
=\EE_{J\subseteq[n]\setminus\{i\}}
\bigl[a(J\cup\{i\})-a(J)\bigr],
\label{eq:subset-first-derivative}
\end{equation}
where $\partial_i \equiv \frac{\partial}{\partial t_i}$.
This derivative is the average change caused by including index $i$.

For $j\ne i$, the expression inside the expectation in
\eqref{eq:subset-first-derivative} has no explicit $t_j$, but the
distribution of its random subset does depend on $t_j$. To display that
dependence, choose a new subset $J\subseteq[n]\setminus\{i,j\}$ of the
remaining indices and separate the cases in which $j$ is present or absent:
\begin{equation}
\begin{aligned}
\partial_i\mathcal Q
=\EE_{J\subseteq[n]\setminus\{i,j\}}\bigl[
&t_j\big(a(J\cup\{i,j\})-a(J\cup\{j\})\big)\\
&+(1-t_j)\big(a(J\cup\{i\})-a(J)\big)\bigr].
\end{aligned}
\label{eq:subset-condition-j}
\end{equation}
The first difference is the change from adding $i$ when $j$ is already
present; the second is the change when $j$ is absent. The distribution of
this smaller subset $J$ depends on neither $t_i$ nor $t_j$.
Differentiating~\eqref{eq:subset-condition-j} with respect to $t_j$ gives
\begin{equation}
\partial_j\partial_i\mathcal Q
=\EE_{J\subseteq[n]\setminus\{i,j\}}
\bigl[a(J\cup\{i,j\})-a(J\cup\{i\})-a(J\cup\{j\})+a(J)\bigr].
\label{eq:subset-mixed-derivative}
\end{equation}
The order of the two derivatives can be exchanged because $\mathcal Q$
is a polynomial.
For the choice $a(J)\coloneqq H(FA_J|X)_\omega$, the expression in
square brackets in~\eqref{eq:subset-mixed-derivative} is
\begin{align}
&H(FA_JA_iA_j|X)_\omega-H(FA_JA_i|X)_\omega-H(FA_JA_j|X)_\omega+H(FA_J|X)_\omega\nonumber \\
&\quad=H(A_i|FA_JA_jX)_\omega-H(A_i|FA_JX)_\omega\\
&\quad=-I(A_i;A_j|FA_JX)_\omega\\
&\quad\leq0.
\end{align}
The last inequality follows from strong subadditivity in~\eqref{eq:SSA}:
including $A_j$ in the conditioning cannot increase the entropy increment
from adding $A_i$. Also, $\partial_i^2\mathcal Q=0$ because $\mathcal Q$
is affine in each individual variable. Since $g(r)=\mathcal Q(r,\ldots,r)$,
the chain rule for differentiation gives
\begin{equation}
g''(r)=\left.\sum_{i,j=1}^n\partial_i\partial_j\mathcal Q
\right|_{t_1=\cdots=t_n=r}.
\label{eq:subset-diagonal-second-derivative}
\end{equation}
The terms with $i=j$ vanish, and the remaining terms give
\begin{equation}
g''(r)=-\sum_{i\ne j}
\EE_{J\subseteq[n]\setminus\{i,j\}}
I(A_i;A_j|FA_JX)_\omega\leq0.
\end{equation}
Here every index in the expectation is included with probability $r$,
and the sum is over ordered pairs $i\ne j$.
Taking $F$ to be a one-dimensional system proves $f''(r)\leq0$ by the same
argument. Thus both functions are concave.

The first-derivative formula likewise gives
\begin{align}
f'(r)
&=\sum_{i=1}^n\EE_{J\subseteq[n]\setminus\{i\}}
\bigl[H(A_iA_J|X)_\omega-H(A_J|X)_\omega\bigr]\\
&=\sum_{i=1}^n\EE_{J\subseteq[n]\setminus\{i\}}
H(A_i|A_JX)_\omega\\
&\leq\sum_{i=1}^n H(A_i)_\omega\\
&=H_\Sigma.
\end{align}
Indeed,~\eqref{eq:SSA} with trivial initial conditioning gives
$H(A_i|A_JX)_\omega\leq H(A_i)_\omega$. Integrating this derivative bound
from $u$ to $v$ proves~\eqref{eq:Lipschitz}. All endpoint statements follow
by continuity of the polynomials.
\end{proof}

\begin{lemma}[The midpoint inequality]\label{lem:subset-midpoint}
For every state $\omega_{XFA^n}$ that is classical on $X$, the functions
in~\eqref{eq:subset-definitions} satisfy
\begin{equation}
g(1/2)\geq f(1/2).
\label{eq:midpoint}
\end{equation}
\end{lemma}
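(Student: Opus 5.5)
The plan is to compare $g(1/2)$ and $f(1/2)$ term by term, pairing each subset with its complement and then invoking the classical-conditional weak monotonicity inequality~\eqref{eq:WM-classical}.

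First, observe that at $r=1/2$ the random subset $J_{1/2}$ is uniformly distributed over all $2^n$ subsets of $[n]$. The complement map $J\mapsto J^c$ is a bijection of the power set. Hence $J_{1/2}^c$ has the same uniform law as $J_{1/2}$. I will write the difference of the two functions as
\begin{equation}
g(1/2)-f(1/2)=\EE_{J_{1/2}}\bigl[H(FA_{J_{1/2}}|X)_\omega-H(A_{J_{1/2}}|X)_\omega\bigr]
=\EE_{J_{1/2}}H(F|A_{J_{1/2}}X)_\omega,
\end{equation}
using the chain rule for conditional entropy given the classical register $X$.

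Next, I symmetrize using the complement invariance of the uniform law:
\begin{equation}
\EE_{J_{1/2}}H(F|A_{J_{1/2}}X)_\omega
=\tfrac12\,\EE_{J_{1/2}}\bigl[H(F|A_{J_{1/2}}X)_\omega+H(F|A_{J_{1/2}^c}X)_\omega\bigr].
\end{equation}
For each fixed $J$, the systems $F$, $U\coloneqq A_J$, and $V\coloneqq A_{J^c}$ are disjoint. The marginal $\omega_{XFA_JA_{J^c}}=\omega_{XFA^n}$ is classical on $X$. Therefore~\eqref{eq:WM-classical} applies and gives $H(F|A_JX)_\omega+H(F|A_{J^c}X)_\omega\geq0$. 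The degenerate cases $J=\emptyset$ or $J=[n]$ are covered as well: a one-dimensional $U$ or $V$ is allowed in~\eqref{eq:WM}, and alternatively $H(F|X)_\omega\geq0$ by classicality.

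Averaging these nonnegative quantities over $J$ proves~\eqref{eq:midpoint}. No step appears genuinely hard. The only point needing care is the complement-pairing, which works only at $r=1/2$: there the laws of $J_r$ and $J_r^c$ coincide. This explains why the statement is restricted to the midpoint. Nothing beyond strong subadditivity, through weak monotonicity, is used, and there is no dimension dependence on $F$.
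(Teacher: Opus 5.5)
Your proof is correct and is essentially the paper's argument: you write $g(1/2)-f(1/2)$ as the uniform average of $H(F|A_JX)_\omega$, pair each $J$ with $J^c$ using complement invariance of the uniform law, and apply \eqref{eq:WM-classical} to each pair. One small point: your ``alternatively, $H(F|X)_\omega\geq0$'' remark does not by itself handle the pair $(\emptyset,[n])$, because $H(F|A^nX)_\omega$ can be negative; your primary justification, a one-dimensional $U$ in \eqref{eq:WM}, does cover it.
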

\begin{proof}
The definitions give
\begin{equation}
g(1/2)-f(1/2)=2^{-n}\sum_{J\subseteq[n]}H(F|A_JX)_\omega.
\end{equation}
The complementation map $J\mapsto J^c$ permutes all subsets and preserves
their uniform weights. Consequently,
\begin{align}
2[g(1/2)-f(1/2)]
&=2^{-n}\sum_{J\subseteq[n]}
\bigl[H(F|A_JX)_\omega+H(F|A_{J^c}X)_\omega\bigr]\\
&\geq0.
\end{align}
For each $J$, the systems $A_J$ and $A_{J^c}$ are disjoint, so the summand
is nonnegative by~\eqref{eq:WM-classical}, applied to the appropriate
marginal of $\omega_{XFA^n}$. Dividing by two proves the claim.
\end{proof}

\begin{lemma}[The key interpolation bound]\label{lem:subset-interpolation}
Let $\omega_{XFA^n}$ be a state classical on $X$, and let $f,g,f_F,H_\Sigma$
be the quantities in~\eqref{eq:subset-definitions}. For $0\leq p\leq1/2$,
\begin{equation}
g(p)\geq2p f(1-p)+(1-2p)f_F-p(1-2p)H_\Sigma.
\label{eq:key}
\end{equation}
\end{lemma}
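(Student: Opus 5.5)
The plan is to combine the three ingredients already established for the Bernoulli-subset functions: concavity of $g$ from Lemma~\ref{lem:subset-concavity}, the midpoint inequality of Lemma~\ref{lem:subset-midpoint}, and the upper increment bound~\eqref{eq:Lipschitz} for $f$. No new entropy inequality should be needed; the argument is a one-dimensional interpolation on $[0,1]$.

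First, I would write the point $p\in[0,1/2]$ as a convex combination of the endpoints $0$ and $1/2$:
\begin{equation}
p=(1-2p)\cdot0+2p\cdot\tfrac12 .
\end{equation}
Concavity of $g$ on $[0,1]$ (Lemma~\ref{lem:subset-concavity}) then gives $g(p)\geq(1-2p)g(0)+2p\,g(1/2)$. At $r=0$ the random subset $J_0$ is empty almost surely, so $g(0)=H(F|X)_\omega=f_F$. Lemma~\ref{lem:subset-midpoint} supplies $g(1/2)\geq f(1/2)$. Since $2p\geq0$, these combine to
\begin{equation}
g(p)\geq(1-2p)f_F+2p\,f(1/2).
\end{equation}

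Second, I would move the argument of $f$ from $1/2$ to $1-p$ using the increment bound. Apply~\eqref{eq:Lipschitz} with $u=1/2$ and $v=1-p$; this is legitimate because $p\leq1/2$ gives $u\leq v\leq1$. The result is $f(1-p)-f(1/2)\leq(1/2-p)H_\Sigma$. Multiplying by $2p\geq0$ gives
\begin{equation}
2p\,f(1/2)\geq2p\,f(1-p)-p(1-2p)H_\Sigma.
\end{equation}
Substituting this into the previous display yields~\eqref{eq:key}.

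The endpoint cases require no separate treatment. At $p=0$ the bound reads $g(0)\geq f_F$, which is an equality. At $p=1/2$ it reduces to the midpoint inequality. The only delicate point is the direction of the increment bound. We need an upper bound on how much $f$ can grow between $1/2$ and $1-p$, and this is exactly what~\eqref{eq:Lipschitz} provides without any monotonicity claim. Every step also uses only nonnegative multipliers, since $2p\geq0$ and $1-2p\geq0$, so no inequality is reversed.
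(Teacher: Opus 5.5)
Your proposal is correct and matches the paper's proof step for step. It uses concavity of $g$ between $0$ and $1/2$ with $g(0)=f_F$, then the midpoint inequality $g(1/2)\geq f(1/2)$, then the increment bound~\eqref{eq:Lipschitz} from $1/2$ to $1-p$, multiplied by $2p$.
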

\begin{proof}
By Lemma~\ref{lem:subset-concavity}, $g$ is concave on $[0,1/2]$.
Since $g(0)=f_F$, this gives
\begin{align}
g(p)&\geq(1-2p)g(0)+2pg(1/2)\\
&\geq(1-2p)f_F+2pf(1/2),
\end{align}
where the second inequality follows from Lemma~\ref{lem:subset-midpoint}.
The increment bound in~\eqref{eq:Lipschitz} gives
\begin{align}
f(1-p)-f(1/2)&\leq(1/2-p)H_\Sigma,\\
\implies f(1/2)&\geq f(1-p)-(1/2-p)H_\Sigma.
\end{align}
Substituting into the preceding lower bound yields
\begin{align}
g(p)
&\geq(1-2p)f_F+2pf(1-p)-2p(1/2-p)H_\Sigma\\
&=2pf(1-p)+(1-2p)f_F-p(1-2p)H_\Sigma.
\end{align}
No restriction on $F$, its dimension, or its correlations with $A^nX$ has
been imposed.
\end{proof}

\subsection{Arbitrary collective instruments}
Take an arbitrary instrument on $R^n$ with maps $V_x\colon R^n\to SF$. It commutes
with the channel action on $A^n$, so it can be analyzed before the erasure channels.
Define the outcome probabilities by
$p_x\coloneqq\|(V_x\otimes I_{A^n})\ket\phi_{RA}^{\otimes n}\|^2$,
and omit outcomes of probability zero. The resulting classical--quantum
state and its normalized conditional vectors are defined as
\begin{align}
\omega_{XSFA^n}&\coloneqq\sum_xp_x\proj{x}_X\otimes\omega^x_{SFA^n},\label{eq:erased-prestate}\\
\ket{\omega^x}_{SFA^n}&
\coloneqq p_x^{-1/2}(V_x\otimes I_{A^n})\ket{\phi}_{RA}^{\otimes n}.
\end{align}
Applying the erasure isometry from \eqref{eq:erasure-isometry} defines
\begin{equation}
\sigma_{XSFB^nE^n}\coloneqq(I_{XSF}\otimes U_p^{\otimes n})\omega_{XSFA^n}
(I_{XSF}\otimes (U_p^\dagger)^{\otimes n}).
\label{eq:erased-poststate}
\end{equation}
Each conditional state on $SFA^n$ is pure, and completeness of the instrument implies
\begin{equation}
\omega_{A^n}=\sum_xp_x\omega^{x}_{A^n}=\tau^{\otimes n}.
\label{eq:fixed-input}
\end{equation}
To see the last assertion directly, for every operator $Z$ on $A^n$,
\begin{align}
\Tr[Z\omega_{A^n}]
&=\sum_x\Tr[(V_x^\dagger V_x\otimes Z)\phi^{\otimes n}]\\
&=\Tr[(I_{R^n}\otimes Z)\phi^{\otimes n}]\\
&=\Tr[Z\tau^{\otimes n}].
\end{align}
Equality of these expectation values for all $Z$ proves \eqref{eq:fixed-input}.

Use the subset polynomials $f$ and $g$ defined in \eqref{eq:subset-definitions} for the
pre-channel state $\omega_{XSFA^n}$, with $S$ traced out when necessary. Then
\begin{equation}
H_\Sigma=nh,\qquad f_F=H(F|X)_\omega\geq0.
\label{eq:Hsum}
\end{equation}
Individual conditional states $\omega^{x}_{A^n}$ are not assumed to be product.

\subsection{Erasure-pattern averaging}
For a fixed conditional input $x$, Bob receives the subsystem $A_J$ when the
surviving set is $J$. Its probability is
$(1-p)^{|J|}p^{n-|J|}$, independently of $x$. Distinct erasure patterns occupy
orthogonal flag sectors. Applying \eqref{eq:blocks} gives
\begin{align}
H(B^n|X)_{\sigma}&=nh_2(p)+f(1-p),\\
H(FE^n|X)_{\sigma}&=nh_2(p)+g(p).
\label{eq:pattern-entropies}
\end{align}
Tracing over $SFX$ leaves the original product $B^nE^n$ marginal. Therefore,
\begin{align}
H(B^nE^n)_{\sigma}&=nh,\\
H(B^n)_{\sigma}&=nh_2(p)+(1-p)nh,\\
H(E^n|B^n)_{\sigma}&=pnh-nh_2(p).
\label{eq:unconditional-block}
\end{align}

\subsection{Information quantities for the erased state}
All information quantities in the following identities are evaluated on the
post-channel state \eqref{eq:erased-poststate}. Strong subadditivity makes
the two conditional mutual information penalties nonnegative. Conditional purity of $SFB^nE^n$ gives
\begin{align}
I(S\rangle B^nX)_{\sigma}
&=H(B^n|X)_{\sigma}-H(SB^n|X)_{\sigma}\\
&=H(B^n|X)_{\sigma}-H(FE^n|X)_{\sigma}\\
&=f(1-p)-g(p).
\label{eq:c}
\end{align}
For the first-facet penalty, apply conditional entropy duality and then
\eqref{eq:pattern-entropies}--\eqref{eq:unconditional-block}:
\begin{align}
I(SX;E^n|B^n)_{\sigma}
&=H(E^n|B^n)_{\sigma}-H(E^n|SB^nX)_{\sigma}\\
&=H(E^n|B^n)_{\sigma}+H(E^n|FX)_{\sigma}\\
&=H(E^n|B^n)_{\sigma}+H(FE^n|X)_{\sigma}-H(F|X)_\omega\\
&=pnh-nh_2(p)+nh_2(p)+g(p)-f_F\\
&=pnh+g(p)-f_F.
\label{eq:d}
\end{align}
Finally, the value of the support objective for this instrument is
\begin{align}
s_{\lambda,\mu}
&\coloneqq I(S\rangle B^nX)_{\sigma}-\lambda I(X;E^n|B^n)_{\sigma}-\mu I(S;E^n|B^nX)_{\sigma}\\
&=I(S\rangle B^nX)_{\sigma}-\mu I(SX;E^n|B^n)_{\sigma}-(\lambda-\mu)I(X;E^n|B^n)_{\sigma}\\
&\leq I(S\rangle B^nX)_{\sigma}-\mu I(SX;E^n|B^n)_{\sigma}.
\label{eq:remove-lambda}
\end{align}
The last inequality only uses $\lambda\geq\mu$ and $I(X;E^n|B^n)_{\sigma}\geq0$.

\subsection{Proof of Theorem~\ref{thm:erased-support}}\label{app:erased-support}
Fix $n\in\mathbb N$ and parameters $\mu\geq0$ and
$\mu\leq\lambda\leq\mu+1$, as in Theorem~\ref{thm:erased-support}.
\begin{proof}
We first treat $0<p<1/2$. Set $k\coloneqq1-2p>0$ and
\begin{equation}
\mu_*\coloneqq\frac{k}{2p}=\frac{1}{2p}-1.
\end{equation}
For every collective instrument, the key interpolation bound (Lemma~\ref{lem:subset-interpolation}) and
\eqref{eq:Hsum} give
\begin{equation}
g(p)\geq2pf(1-p)+kf_F-pknh.
\label{eq:key-product}
\end{equation}
We establish two endpoint bounds.

\noindent\textbf{Endpoint 1: coherent information.}
Using \eqref{eq:c} and \eqref{eq:key-product},
\begin{align}
I(S\rangle B^nX)_{\sigma}
&=f(1-p)-g(p)\\
&\leq(1-2p)f(1-p)-kf_F+pknh\\
&=k[f(1-p)-f_F+pnh].
\end{align}
Because $f(0)=0$, \eqref{eq:Lipschitz} gives
\begin{equation}
f(1-p)\leq(1-p)nh.
\end{equation}
As $f_F\geq0$, it follows that
\begin{align}
I(S\rangle B^nX)_{\sigma}&\leq k[(1-p)nh-f_F+pnh]\\
&=k(nh-f_F)\\
&\leq knh.
\label{eq:coh-bound}
\end{align}

\noindent\textbf{Endpoint 2: the critical communication penalty.}
Using \eqref{eq:c} and \eqref{eq:d},
\begin{align}
I(S\rangle B^nX)_{\sigma}-\mu_*I(SX;E^n|B^n)_{\sigma}&=f(1-p)-g(p)-\mu_*[pnh+g(p)-f_F]\\
&=f(1-p)-(1+\mu_*)g(p)+\mu_*f_F-\mu_*pnh.
\end{align}
Since $2p(1+\mu_*)=1$ and $2p\mu_*=k$,
\begin{align}
2p[I(S\rangle B^nX)_{\sigma}-\mu_*I(SX;E^n|B^n)_{\sigma}]&=2pf(1-p)-g(p)+kf_F-pknh\\
&\leq0,
\label{eq:critical}
\end{align}
where the last inequality follows from~\eqref{eq:key-product}.

\noindent\textbf{Interpolation between the endpoints.}
Suppose $0\leq\mu\leq\mu_*$. Set $\theta\coloneqq\mu/\mu_*\in[0,1]$. Then
\begin{align}
&I(S\rangle B^nX)_{\sigma}-\mu I(SX;E^n|B^n)_{\sigma}\nonumber \\
&=(1-\theta)I(S\rangle B^nX)_{\sigma}+\theta[I(S\rangle B^nX)_{\sigma}-\mu_*I(SX;E^n|B^n)_{\sigma}]\\
&\leq(1-\theta)knh\\
&=\left(k-\frac{k\mu}{\mu_*}\right)nh\\
&=(1-2p-2p\mu)nh.
\end{align}
Together with \eqref{eq:remove-lambda}, this is the desired bound when its
right-hand side is nonnegative. For $\mu\geq\mu_*$, use $I(SX;E^n|B^n)_{\sigma}\geq0$ to obtain
\begin{align}
&I(S\rangle B^nX)_{\sigma}-\mu I(SX;E^n|B^n)_{\sigma}\nonumber \\
&=I(S\rangle B^nX)_{\sigma}-\mu_*I(SX;E^n|B^n)_{\sigma}-(\mu-\mu_*)I(SX;E^n|B^n)_{\sigma}\\
&\leq I(S\rangle B^nX)_{\sigma}-\mu_*I(SX;E^n|B^n)_{\sigma}\\
&\leq0.
\end{align}
This establishes the upper bound for $0<p<1/2$.

\noindent\textbf{The case $p=1/2$.}
Equations \eqref{eq:c} and \eqref{eq:midpoint} give
\begin{equation}
I(S\rangle B^nX)_{\sigma}=f(1/2)-g(1/2)\leq0.
\end{equation}
Since $I(SX;E^n|B^n)_{\sigma}\geq0$ and $I(X;E^n|B^n)_{\sigma}\geq0$ and $\lambda\geq\mu\geq0$, the support objective is at most zero.

\noindent\textbf{The case $p>1/2$.}
Bob's channel at erasure probability $p$ can be implemented by first applying
$\cE_{1/2}$, and then erasing a surviving data system with probability $2p-1$.
The existing erasure flag is left unchanged. The total survival probability is
$\tfrac12[1-(2p-1)]=1-p$. Thus the $p$-output is a local degradation of the
$1/2$-output. For completeness, denote the instrument output for erasure probability
$1/2$ by $\sigma^{1/2}_{XSFB_{1/2}^nE^n}$. Let
$W\colon B_{1/2}^n\to B_p^nG$ be a dilation of the product degradation, and let
$\zeta_{XSFB_p^nGE^n}$ be the state after applying $W$.
Its $XSB_p^n$ marginal agrees with the marginal obtained using
$\mathcal E_p^{\otimes n}$ directly, because the composed erasure
channels agree on every input.
Isometric invariance and strong subadditivity give
\begin{align}
H(S|B_p^nX)_\zeta-H(S|B_{1/2}^nX)_{\sigma^{1/2}}&=H(S|B_p^nX)_\zeta-H(S|B_p^nGX)_\zeta\\
&=I(S;G|B_p^nX)_\zeta\geq0.
\end{align}
Consequently,
$I(S\rangle B_p^nX)_\zeta\leq I(S\rangle B_{1/2}^nX)_{\sigma^{1/2}}\leq0$.
The remaining terms in \eqref{eq:support} are nonpositive, so the objective is at most zero.

\noindent\textbf{The case $p=0$.}
The environment is fixed and pure, so both conditional mutual information
penalties vanish. Conditional purity gives
\begin{align}
I(S\rangle B^nX)_{\sigma}&=H(A^n|X)_\omega-H(F|X)_\omega\\
&\leq H(A^n)_\omega\\
&=nh.
\end{align}
Here concavity of entropy gives $H(A^n|X)_\omega\leq H(A^n)_\omega$, and $H(F|X)_\omega\geq0$.

\noindent\textbf{Attainment of the upper bound.}
For the identity instrument, take $S=R^n$ with $F$ and $X$ trivial. Directly,
\begin{align}
I(R^n\rangle B^n)_{(\psi^p)^{\otimes n}}
&=H(B^n)_{(\psi^p)^{\otimes n}}-H(E^n)_{(\psi^p)^{\otimes n}}\\
&=n[h_2(p)+(1-p)h-h_2(p)-ph]\\
&=(1-2p)nh.
\end{align}
Also, by purity,
\begin{align}
I(R^n;E^n|B^n)_{(\psi^p)^{\otimes n}}
&=I(R^n;E^n)_{(\psi^p)^{\otimes n}}\\
&=H(R^n)_{(\psi^p)^{\otimes n}}+H(E^n)_{(\psi^p)^{\otimes n}}-H(B^n)_{(\psi^p)^{\otimes n}}\\
&=nh+n[h_2(p)+ph-h_2(p)-(1-p)h]\\
&=2pnh.
\end{align}
Its support value is therefore $[1-2p(1+\mu)]nh$. The instrument that assigns
$R^n$ to $F$, and outputs a fixed pure $S$ and constant $X$, has support
value zero. The better of these two instruments attains the upper bound in
all cases.
\end{proof}

\subsection[Derivation of the five-facet region]{Derivation of the facets in~\eqref{eq:five-first}--\eqref{eq:five}}
\label{app:erased-facets}
We derive the five inequalities in~\eqref{eq:five-first}--\eqref{eq:five}
and verify that they describe the erased-state region in~\eqref{eq:erased-region}.
For $0<p<1/2$ and $h>0$, define
\begin{equation}
k\coloneqq1-2p,\qquad \mu_*\coloneqq\frac{k}{2p}.
\label{eq:erased-facet-parameters}
\end{equation}
The inequality in~\eqref{eq:five-first} follows from the first information
facet~\eqref{eq:facet1} and nonnegativity of conditional mutual information.
Substituting the erased-state support value~\eqref{eq:erased-support}
into the supporting inequality in~\eqref{eq:halfspace}, with
$(\lambda,\mu)=(0,0)$ and $(1,0)$, gives
\eqref{eq:five-second} and~\eqref{eq:five-third}, respectively.
At $\mu=\mu_*$ from~\eqref{eq:erased-facet-parameters}, the support
value in~\eqref{eq:erased-support} is zero. The choice
$(\lambda,\mu)=(\mu_*,\mu_*)$ in~\eqref{eq:halfspace}, followed by
multiplication by $2p$, therefore gives
\begin{align}
2p\mu_* C+2p(1+2\mu_*)Q+2pE&\leq0,\\
kC+2(1-p)Q+2pE&\leq0.
\end{align}
The second line uses~\eqref{eq:erased-facet-parameters} and is exactly
\eqref{eq:five-fourth}. Similarly, the choice
$(\lambda,\mu)=(\mu_*+1,\mu_*)$ gives~\eqref{eq:five}, because
$2p(\mu_*+1)=1$ and $2p(1+2\mu_*)=2(1-p)$.
Both choices lie in the parameter domain~\eqref{eq:parameters}.

To verify sufficiency algebraically, recall that the region in
\eqref{eq:erased-region} is equivalently specified by
\eqref{eq:region-t-first}--\eqref{eq:region-t}. These inequalities require
\begin{equation}
\max\!\left\{0,\frac{Q+E}{kh},\frac{C+Q+E}{kh}\right\}
\leq t\leq
\min\!\left\{1,-\frac{C+2Q}{2ph}\right\}.
\end{equation}
The inequalities in~\eqref{eq:five-second}--\eqref{eq:five-third}
guarantee that the lower bound is at most $1$, and
\eqref{eq:five-first} guarantees that the upper bound is nonnegative.
The inequality in~\eqref{eq:five-fourth} gives
\begin{align}
2p(Q+E)&\leq-k(C+2Q),\\
\frac{Q+E}{kh}&\leq-\frac{C+2Q}{2ph}.
\end{align}
The inequality in~\eqref{eq:five} gives
\begin{align}
2p(C+Q+E)&\leq-k(C+2Q),\\
\frac{C+Q+E}{kh}&\leq-\frac{C+2Q}{2ph}.
\end{align}
Thus every component of the lower bound is at most every component of the
upper bound, and a feasible $t$ exists. If $h=0$, the input is unentangled and
the region is $\cU$. At $p=0$, the answer is $(0,0,h)+\cU$, as also follows
directly from \eqref{eq:region-t-first}--\eqref{eq:region-t}.

\section{An entropic proof of the Hadamard coherent-information face}
\label{app:hadamard}
We give the independent proof of Theorem~\ref{thm:had-face} announced in
Section~\ref{sec:hadamard}. We also recall the conditional-entropy
inequalities used in the penalty decomposition in~\eqref{eq:had-penalty}.
\begin{proof}
Fix the purification $\psi_{ABE}$ from~\eqref{eq:had-purification}, and let
$\sigma_{XSFBE}$ be the output of an arbitrary local instrument, as defined
in~\eqref{eq:instrument-state}. Because $\psi_{AE}$ is separable and the
instrument acts on $A$, every normalized conditional state
$\sigma^x_{SFE}$ is separable across $SF:E$. In particular, choose a
decomposition
\begin{equation}
\sigma^x_{FE}
=\sum_zq_{z|x}\alpha_F^{xz}\otimes\beta_E^{xz},
\end{equation}
where $\alpha_F^{xz}$ and $\beta_E^{xz}$ are states. Define its classical
extension by
\begin{equation}
\zeta_{XZFE}\coloneqq
\sum_{x,z}p_xq_{z|x}\proj x_X\otimes\proj z_Z
\otimes\alpha_F^{xz}\otimes\beta_E^{xz}.
\label{eq:had-separable-extension}
\end{equation}
Tracing out $Z$ gives $\sigma_{XFE}$. Strong subadditivity and the product
form conditioned on $X,Z$ imply
\begin{align}
H(F|EX)_\sigma
&\geq H(F|EXZ)_\zeta
=\sum_{x,z}p_xq_{z|x}H(F)_{\alpha^{xz}}\geq0,
\label{eq:had-sep-entropy}\\
H(E|FX)_\sigma
&\geq H(E|FXZ)_\zeta
=\sum_{x,z}p_xq_{z|x}H(E)_{\beta^{xz}}\geq0.
\label{eq:had-sep-entropy-E}
\end{align}
Conditional purity of $\sigma^x_{SFBE}$, the chain rule, and
\eqref{eq:had-sep-entropy} give
\begin{align}
I(S\rangle BX)_\sigma
&=H(B|X)_\sigma-H(SB|X)_\sigma\\
&=H(B|X)_\sigma-H(FE|X)_\sigma\\
&=H(B|X)_\sigma-H(E|X)_\sigma-H(F|EX)_\sigma\\
&\leq H(B)_\psi-H(E)_\psi
-\big[I(X;B)_\sigma-I(X;E)_\sigma\big]\\
&\leq H(B)_\psi-H(E)_\psi
=I(A\rangle B)_\rho.
\label{eq:had-coh}
\end{align}
The penultimate line uses the fixed marginal $\sigma_{BE}=\psi_{BE}$.
The final inequality follows from data processing under the degrading channel, which
reproduces $\sigma_{XE}$ from $\sigma_{XB}$. The last equality follows
from purity of $\psi_{ABE}$ and $\psi_{AB}=\rho_{AB}$.

At $\mu=0$, the objective in~\eqref{eq:support} is
\begin{equation}
I(S\rangle BX)_\sigma-\lambda I(X;E|B)_\sigma
\leq I(A\rangle B)_\rho,
\end{equation}
because $\lambda\geq0$ and conditional mutual information is nonnegative.
The identity instrument, with $S=A$ and $X,F$ trivial, attains this bound.
For $n$ copies, $\psi_{AE}^{\otimes n}$ is separable and the
product degrading channel gives the same data-processing inequality. Thus
the preceding upper bound applies to arbitrary collective instruments and
equals
\begin{equation}
I(A^n\rangle B^n)_{\rho^{\otimes n}}
=nI(A\rangle B)_\rho.
\end{equation}
The product identity instrument attains it, proving~\eqref{eq:had-mu0}.
\end{proof}

\section{Proofs for maximally correlated states}\label{app:mc}
This appendix gives the proofs underlying Section~\ref{sec:mc}.
It establishes the canonical instrument form, the exact matrix
optimization and cardinality bounds, the efficient-instrument and
commuting-memory restrictions, and the strict quantum-memory advantage.
It then proves the bottleneck identity and additivity of the
unrestricted outer bound, while distinguishing the remaining
memory-reduction question.

\subsection{Canonical instrument normal form}\label{app:mc-normal-form}
We prove Theorem~\ref{thm:normal-form} in two steps. First, retaining
an orthogonal copy of the input basis label improves all three information
inequalities. Second, a suitable refinement makes the discarded states
pure without losing this improvement. We begin with entropy identities
that allow the two comparisons to use the same quantities.

\subsubsection{Three entropy identities}
For an arbitrary instrument, conditional purity expresses the three
information quantities in terms of the systems $S,F,E$.
These identities make it possible to compare instruments while preserving
their discarded-system and environment marginal.

For an instrument with operators $V_x\colon A\to SF$ as in~\eqref{eq:instrument},
define its normalized conditional output vectors and its full output state by
\begin{align}
\ket{\sigma_x}_{SFBE}
&\coloneqq w_x^{-1/2}(V_x\otimes I_{BE})\ket\psi_{ABE},\\
\sigma_{XSFBE}&\coloneqq\sum_x w_x\proj x_X\otimes\proj{\sigma_x}_{SFBE},
\end{align}
where $w_x$ is the probability of outcome $x$; zero-probability outcomes are
omitted. Conditional purity and~\eqref{eq:EgivenBzero} imply
\begin{align}
I(S\rangle BX)_\sigma
&=H(B|X)_\sigma-H(SB|X)_\sigma\\
&=H(SFE|X)_\sigma-H(FE|X)_\sigma,\label{eq:c-purity}\\
I(SX;E|B)_\sigma
&=H(E|B)_\psi-H(E|SBX)_\sigma\\
&=H(E|FX)_\sigma,\label{eq:ell-purity}\\
I(X;E|B)_\sigma
&=H(E|B)_\psi-H(E|BX)_\sigma\\
&=H(B|X)_\sigma-H(BE|X)_\sigma\\
&=H(SFE|X)_\sigma-H(SF|X)_\sigma.
\label{eq:j-purity}
\end{align}
For the second identity, conditional entropy duality gives
$H(E|SBX)_\sigma=-H(E|FX)_\sigma$. For the final identity, the complements
of $B$ and $BE$ in each pure conditional state are $SFE$ and $SF$, respectively.

\subsubsection{Retaining an orthogonal copy of the basis label}
We first construct a new instrument that retains an orthogonal record
of the distinguished input basis label. Its $FXE$ marginal agrees with
the original one. The identities above will show that this construction
can only enlarge the information region.

For every $x,i$, define the following quantities, omitting outcomes with
$w_x=0$ and using the vector formula when $a(x|i)>0$:
\begin{align}
a(x|i)&\coloneqq\|V_x\ket i\|^2,\\
w_x&\coloneqq\sum_i p_i a(x|i),\\
q_i^x&\coloneqq\frac{p_i a(x|i)}{w_x},\\
\ket{v_{xi}}_{SF}&\coloneqq\frac{V_x\ket i}{\sqrt{a(x|i)}}.
\end{align}
If $a(x|i)=0$, choose $\ket{v_{xi}}_{SF}$ to be any fixed unit vector
and define its discarded marginal and pure-state decomposition accordingly.
The factor $\sqrt{a(x|i)}=0$ makes these choices irrelevant to the instrument
output, while the tuple distribution in~\eqref{eq:independent-refinement}
remains defined for every $i$. Completeness gives
\begin{equation}
\sum_xa(x|i)=\bra i\Big(\sum_xV_x^\dagger V_x\Big)\ket i=1.
\end{equation}
Define $q_Y^x\coloneqq\sum_iq_i^x\proj i_Y$ and the classical state
$\omega_{YX}\coloneqq\sum_{i,x}p_i a(x|i)\proj i_Y\otimes\proj x_X$.
This extends the fixed marginal $\omega_Y$ of~\eqref{eq:cqsource}.
Tracing out $B$ after the original instrument gives
\begin{equation}
(\sigma_x)_{SFE}=\sum_iq_i^x\proj{v_{xi}}_{SF}\otimes e_{i,E}.
\label{eq:original-SFE}
\end{equation}
Every state in this ensemble is pure. Thus the entropy-of-mixture bound gives
\begin{equation}
H(SFE)_{\sigma_x}\leq H(Y)_{q^x}.
\label{eq:mixture-bound}
\end{equation}
One proof uses the Gram matrix of the pure ensemble. It has the same nonzero
spectrum as the mixture and has diagonal $q^x$. Dephasing this Gram matrix
cannot decrease entropy, which proves~\eqref{eq:mixture-bound}.

Define a new instrument by
\begin{equation}
W_x\ket i_A\coloneqq\sqrt{a(x|i)}\ket i_L\ket{v_{xi}}_{SF},
\label{eq:copied-instrument}
\end{equation}
with retained system $LS$ and discarded system $F$. Its output is denoted by
$\widetilde\sigma_{XLSFBE}$, obtained from the preceding definition of $\sigma$
by replacing $V_x$ with $W_x$. More explicitly,
\begin{align}
\ket{\widetilde\sigma_x}_{LSFBE}
&=\sum_i\sqrt{q_i^x}\ket i_L\ket{v_{xi}}_{SF}\ket i_B\ket{e_i}_E,\\
\widetilde\sigma_{XLSFBE}
&=\sum_xw_x\proj x_X\otimes\proj{\widetilde\sigma_x}_{LSFBE}.
\end{align}
It is a legitimate instrument because
\begin{align}
\bra iW_x^\dagger W_x\ket j
&=\sqrt{a(x|i)a(x|j)}\langle i|j\rangle\langle v_{xi}|v_{xj}\rangle\\
&=\delta_{ij}a(x|i),\\
\sum_xW_x^\dagger W_x&=\sum_i\Big(\sum_xa(x|i)\Big)\proj i=I_A.
\end{align}
The $FXE$ marginal is unchanged. In $\widetilde\sigma$, the $LSFE$ and $LSF$
marginals conditional on $X$ are block diagonal in $L$, with pure conditional
states. Consequently,
\begin{equation}
H(LSFE|X)_{\widetilde\sigma}
=H(LSF|X)_{\widetilde\sigma}=H(Y|X)_\omega.
\end{equation}
The classical register $Y$ is used to evaluate these entropies; introducing it
does not prescribe a measurement of the original coherent state.

Using~\eqref{eq:c-purity}--\eqref{eq:j-purity}, we obtain
\begin{align}
I(LS\rangle BX)_{\widetilde\sigma}
&=H(Y|X)_\omega-H(FE|X)_\sigma
\geq I(S\rangle BX)_\sigma,\\
I(LSX;E|B)_{\widetilde\sigma}
&=H(E|FX)_\sigma=I(SX;E|B)_\sigma,\\
I(X;E|B)_{\widetilde\sigma}
&=0\leq I(X;E|B)_\sigma.
\label{eq:copy-dominance}
\end{align}
Consequently, all three right sides of~\eqref{eq:facet1}--\eqref{eq:facet3}
increase or stay fixed. The new instrument dominates the original information
region, simultaneously for every support parameter.

\subsubsection{Pure conditional states in the discarded system}
The preceding construction may still leave mixed conditional states
on $F$. We replace them by a pure-state refinement whose additional
classical outcome is independent of the input label conditional on $X$.
This independence preserves the required classical entropy while strong
subadditivity controls the other two quantities.

Set $\tau_{xi,F}\coloneqq\Tr_S\proj{v_{xi}}_{SF}$. These states may be mixed.
For each $x,i$, choose a pure-state decomposition
\begin{equation}
\tau_{xi,F}=\sum_k b_{xi,k}\proj{f_{xi,k}}_F.
\end{equation}
For fixed $x$, introduce a tuple $Z\coloneqq(k_1,\ldots,k_d)$ with distribution
\begin{equation}
\pi(z|x)\coloneqq\prod_{i=1}^d b_{xi,k_i}.
\label{eq:independent-refinement}
\end{equation}
The tuple $Z$ is independent of $Y$ conditional on $X$. Define a further instrument
\begin{equation}
\widehat W_{xz}\ket i_A
\coloneqq\sqrt{a(x|i)\pi(z|x)}\ket i_{\widehat S}\ket{f_{xi,k_i}}_F.
\label{eq:pure-refinement}
\end{equation}
Its output state is
\begin{align}
\ket{\widehat\sigma_{xz}}_{\widehat SFBE}
&\coloneqq\sum_i\sqrt{q_i^x}\ket i_{\widehat S}
  \ket{f_{xi,k_i}}_F\ket i_B\ket{e_i}_E,\\
\widehat\sigma_{XZ\widehat SFBE}
&\coloneqq\sum_{x,z}w_x\pi(z|x)\proj x_X\otimes\proj z_Z
  \otimes\proj{\widehat\sigma_{xz}}_{\widehat SFBE}.
\end{align}
Completeness follows from
\begin{equation}
\sum_{x,z}\widehat W_{xz}^\dagger\widehat W_{xz}
=\sum_i\Big[\sum_xa(x|i)\sum_z\pi(z|x)\Big]\proj i=I_A.
\end{equation}
To compare the reduced states, introduce the common classical--quantum extension
\begin{equation}
\Omega_{YXZFE}\coloneqq\sum_{i,x,z}p_i a(x|i)\pi(z|x)
\proj i_Y\otimes\proj x_X\otimes\proj z_Z
\otimes\proj{f_{xi,k_i}}_F\otimes e_{i,E}.
\end{equation}
Averaging over $Z$ recovers the previous $FXE$ marginal because
\begin{align}
\sum_z\pi(z|x)\proj{f_{xi,k_i}}_F
&=\sum_{k_i}b_{xi,k_i}\proj{f_{xi,k_i}}_F
\prod_{j\ne i}\Big(\sum_{k_j}b_{xj,k_j}\Big)\\
&=\tau_{xi,F}.
\end{align}
Thus $\Omega_{FXE}=\widetilde\sigma_{FXE}$,
$\Omega_{FXZE}=\widehat\sigma_{FXZE}$, and
$H(Y|XZ)_\Omega=H(Y|X)_\Omega=H(Y|X)_\omega$. The desired comparisons are
\begin{align}
I(\widehat S\rangle BXZ)_{\widehat\sigma}
-I(LS\rangle BX)_{\widetilde\sigma}&=[H(Y|XZ)_\Omega-H(FE|XZ)_\Omega]
-[H(Y|X)_\Omega-H(FE|X)_\Omega]\\
&=H(FE|X)_\Omega-H(FE|XZ)_\Omega\\
&=I(FE;Z|X)_\Omega\geq0,\\
&I(LSX;E|B)_{\widetilde\sigma}
-I(\widehat SXZ;E|B)_{\widehat\sigma}\nonumber \\
&=H(E|FX)_\Omega-H(E|FXZ)_\Omega\\
&=I(E;Z|FX)_\Omega\geq0,\\
&I(XZ;E|B)_{\widehat\sigma}=0.
\end{align}
All three information facets are again improved.

\begin{proof}
The two constructions above dominate an arbitrary instrument and end in the
form~\eqref{eq:canonical}, after relabeling the retained system and the classical
outcome. The reverse inclusion holds because these are legitimate instruments.
For a fixed outcome $x$, the at most $d$ vectors $f_{xi}$ span a space of dimension
at most $d$. Isometrically representing each such span in the same fixed
$d$-dimensional space preserves its Gram matrix and every relevant entropy.
This representation can depend on $x$ because the instrument has separate
operators $V_x$; no direct sum over outcomes is required in $F$.
The same argument applies to $d^n$ basis labels.
\end{proof}

\subsection{Proof of the exact matrix formula and cardinality bound}\label{app:mc-matrix-formula}
We now prove Theorem~\ref{thm:matrix-formula}. The entropy calculation
in~\eqref{eq:matrix-c}--\eqref{eq:matrix-ell} maps each canonical instrument
to a feasible matrix ensemble. We establish the reverse realization and
then bound the number of outcomes by preserving the diagonal constraint
and one scalar objective.

\begin{proof}
Theorem~\ref{thm:normal-form} and~\eqref{eq:matrix-c}--\eqref{eq:matrix-ell}
give one direction. For the converse, start with an arbitrary ensemble satisfying
\eqref{eq:diagonal-constraint}. Let $q_i^x\coloneqq(\theta_x)_{ii}$ and choose a purification
\begin{equation}
\ket{\vartheta_x}_{\widetilde AF}
\coloneqq\sum_i\sqrt{q_i^x}\ket i_{\widetilde A}\ket{f_{xi}}_F.
\end{equation}
Expand a purification in the distinguished basis of $\widetilde A$ and normalize its
coefficient vectors to obtain this representation. Coefficients of zero norm
can be assigned arbitrary unit vectors because their weights vanish.
The reduced state on $\widetilde A$ is exactly~\eqref{eq:theta}. Set
\begin{equation}
a(x|i)\coloneqq\frac{w_xq_i^x}{p_i}.
\end{equation}
The diagonal constraint gives
\begin{equation}
\sum_xa(x|i)=\frac{\sum_xw_xq_i^x}{p_i}=1.
\end{equation}
The canonical instrument~\eqref{eq:canonical} therefore realizes the desired
ensemble and support value.

For the cardinality bound, we use the following connected-set form of the
Fenchel--Eggleston--Carath\'eodory theorem~\cite{EGK11}:
every point in the convex hull of a connected subset of $\mathbb R^m$
is a convex combination of at most $m$ points of that subset.
Consider the continuous map from density operators on $\widetilde A$ to
\begin{equation}
\big(\theta_{11},\ldots,\theta_{d-1,d-1},L_\mu^G(\theta)\big)
\in\mathbb R^d.
\label{eq:cardinality-support-map}
\end{equation}
The density-operator set is compact and convex, hence path-connected.
Entropy is continuous, including at density operators with zero eigenvalues,
so the image in~\eqref{eq:cardinality-support-map} is compact and
path-connected. Its convex hull is compact. Fixing the first $d-1$
coordinates at $p_1,\ldots,p_{d-1}$ therefore leaves a nonempty compact
set on which the last coordinate attains its maximum.
The connected-set theorem represents a maximizing point using at most
$d$ image points. The remaining diagonal coordinate is preserved by
the trace-one condition. The reverse construction above realizes the
resulting ensemble as an instrument, proving $|X|\leq d$.
For all $n\in\mathbb N$, the same argument on $\widetilde A^n$ gives
$|X|\leq d^n$ for the support optimization on $n$ copies.
\end{proof}

\subsection{The complete one-copy region}\label{app:mc-region}
We now identify the full one-copy information region and bound the
number of outcomes needed to preserve an arbitrary primitive rate point.
The canonical primitive point
\begin{equation}
\left(0,-\tfrac12 I(SX;E|B)_\sigma,
I(S\rangle BX)_\sigma+\tfrac12I(SX;E|B)_\sigma\right)
\end{equation}
saturates all three canonical facets. Substituting~\eqref{eq:matrix-c} and
\eqref{eq:matrix-ell} gives~\eqref{eq:primitive-matrix-Q}--\eqref{eq:primitive-matrix};
Theorem~\ref{thm:normal-form} shows that arbitrary instruments do not enlarge the union.

To preserve an arbitrary primitive rate point, consider the map
\begin{equation}
\theta\longmapsto
\big(\theta_{11},\ldots,\theta_{d-1,d-1},Q(\theta),E(\theta)\big)
\in\mathbb R^{d+1},
\label{eq:cardinality-region-map}
\end{equation}
where $Q(\theta)$ and $E(\theta)$ are defined in
\eqref{eq:primitive-matrix-Q}--\eqref{eq:primitive-matrix}.
As in the preceding proof, its image is compact and path-connected.
The Fenchel--Eggleston--Carath\'eodory theorem therefore preserves the
average diagonal and both average rates using at most $d+1$ outcomes.
The reverse construction in Appendix~\ref{app:mc-matrix-formula}
again gives a valid instrument. Thus $|X|\leq d+1$ suffices for
the complete one-copy region, and $|X|\leq d^n+1$ suffices on
$n$ copies for all $n\in\mathbb N$. These are sufficient bounds;
no claim of optimality of the cardinalities is needed.

Both information quantities are nonnegative for canonical instruments.
For coherent information this follows because dephasing $\cN_G(\theta)$ gives
$\Delta(\theta)$ and cannot decrease entropy. For a single conditional state,
define $\omega_{YFE}^{\theta}\coloneqq\sum_iq_i\proj i_Y\otimes\proj{f_i}_F
\otimes e_{i,E}$, where $q_i\coloneqq\theta_{ii}$ and the vectors $f_i$ purify $\theta$
as above. Its $FE$ marginal is separable, and strong subadditivity gives
\begin{align}
H(E|F)_{\omega^\theta}&\geq H(E|FY)_{\omega^\theta}\\
&=\sum_iq_iH(E)_{e_i}=0.
\end{align}
Together with the spectral identities, this proves $Q(\theta)\leq0$ and
$E(\theta)\geq0$.

\subsection{Additivity of the efficient restriction}\label{app:mc-efficient}
We prove Proposition~\ref{prop:efficient} by splitting the objective
of a joint efficient instrument into two admissible one-copy objectives.
The essential step is data processing from the first classical input
label to its environment output. Product instruments give the matching
lower bound.

\begin{proof}
Write $\alpha\coloneqq1+\mu$. A joint efficient ensemble determines the state
\begin{equation}
\eta_{XY_1Y_2E_1E_2}\coloneqq
\sum_{x,i_1,i_2}w_xq^x_{i_1i_2}\proj x_X\otimes
\proj{i_1}_{Y_1}\otimes\proj{i_2}_{Y_2}
\otimes e_{i_1,E_1}\otimes e_{i_2,E_2},
\end{equation}
where $\sum_xw_xq^x_{i_1i_2}=p_{1,i_1}p_{2,i_2}$.
Its objective decomposes as
\begin{align}
&H(Y_1Y_2|X)_\eta-\alpha H(E_1E_2|X)_\eta\nonumber \\
&=H(Y_1|X)_\eta-\alpha H(E_1|X)_\eta+H(Y_2|Y_1X)_\eta-\alpha H(E_2|E_1X)_\eta.
\end{align}
The state on $E_1E_2X$ is obtained from that on $Y_1E_2X$ by preparing
$e_{i_1,E_1}$ from $Y_1$. Conditional data processing therefore gives
\begin{align}
I(E_2;E_1|X)_\eta&\leq I(E_2;Y_1|X)_\eta,\\
H(E_2|E_1X)_\eta&\geq H(E_2|Y_1X)_\eta.
\end{align}
Since $\alpha\geq0$, the joint objective is at most
\begin{align}
&\big[H(Y_1|X)_\eta-\alpha H(E_1|X)_\eta\big]+\big[H(Y_2|Y_1X)_\eta-\alpha H(E_2|Y_1X)_\eta\big].
\end{align}
These are valid one-copy objectives with classical auxiliaries $X$ and $XY_1$,
respectively. Their unconditional input distributions are $p_1$ and $p_2$, so
each is bounded by its one-copy optimum. Product efficient instruments attain
the sum of the two optima, proving equality.
\end{proof}

\subsection{Commuting discarded memories}\label{app:mc-commuting}
We now justify the commuting-memory reduction stated in Section~\ref{sec:mc}.
The canonical refinement from Appendix~\ref{app:mc-normal-form} can be
chosen in a common eigenbasis for each outcome. Its eigenvector label
can then be included in the classical outcome of an efficient instrument.

Suppose the conditional states $\left(\tau_{xi,F}\right)_i$ defined
in~\eqref{eq:commuting-conditional-states} commute for each fixed $x$.
In the pure-state refinement, choose a common eigenbasis for each fixed
$x$; this basis may depend on $x$. After absorbing
the independent tuple into the outcome $X$, every refined discarded vector is
a basis vector $\ket{z(x,i)}_F$. Denote this classical memory label by $Z$;
it is deterministic given $Y,X$. The associated state is
\begin{equation}
\eta_{XYZ E}\coloneqq\sum_{i,x}p_i a(x|i)\proj x_X\otimes\proj i_Y
\otimes\proj{z(x,i)}_Z\otimes e_{i,E}.
\end{equation}
Using the matrix entropy expression with $F$ identified with $Z$, its objective is
\begin{align}
&H(Y|X)_\eta+\mu H(Z|X)_\eta-(1+\mu)H(ZE|X)_\eta\nonumber \\
&=H(Y|X)_\eta+\mu H(Z|X)_\eta-(1+\mu)\big[H(Z|X)_\eta+H(E|XZ)_\eta\big]\\
&=H(Y|X)_\eta-H(Z|X)_\eta-(1+\mu)H(E|XZ)_\eta\\
&=H(Y|XZ)_\eta-(1+\mu)H(E|XZ)_\eta.
\end{align}
The last equality follows from
$H(YZ|X)_\eta=H(Y|X)_\eta+H(Z|YX)_\eta=H(Y|X)_\eta$.
The resulting expression is an efficient objective with outcome $XZ$.
The preceding dominance constructions can only increase the original support
value, so commuting conditional memories cannot improve the efficient optimum.
The same construction applies to an instrument on $A^n$, treating the
entire block as its input. Proposition~\ref{prop:efficient} then bounds
the resulting efficient objective by $n$ times its one-copy value.

\subsection{Proof of the quantum-memory advantage}\label{app:mc-quantum-memory}
We prove Theorem~\ref{thm:quantum-memory} by evaluating the efficient
optimum for its two-qubit example and constructing a strictly better
instrument with a noncommuting discarded memory. Exact rational entropy
bounds certify the comparison; the preceding additivity and commuting-memory
results extend it to the stated collective comparison classes.

Consider the two-qubit maximally correlated state with coefficient matrix
\begin{equation}
r\coloneqq\begin{pmatrix}1/2&2/5\\2/5&1/2\end{pmatrix}.
\label{eq:qubit-r}
\end{equation}
Its eigenvalues are \(9/10\) and \(1/10\). Here
\begin{equation}
p=(1/2,1/2),\qquad
G=\begin{pmatrix}1&4/5\\4/5&1\end{pmatrix}.
\end{equation}
Take \(\mu=1\); then any \(1\leq\lambda\leq2\) is admissible.
We first prove the exact efficient optimum, rather than merely comparing against the identity instrument.

\subsubsection{Concavity of the efficient conditional objective}
The efficient optimum in~\eqref{eq:efficient} is a concave-envelope
optimization. For the parameters of Theorem~\ref{thm:quantum-memory},
we show that its conditional objective is already concave, so its value
at the prescribed mean gives the exact optimum.

Let \(\gamma\coloneqq4/5\), and write \(q\in[0,1]\) for the first input probability. The state $\eta_E^q=q e_{0,E}+(1-q)e_{1,E}$, with $\langle e_1|e_0\rangle=\gamma$, has eigenvalues \((1\pm u(q))/2\), where
\begin{equation}
u(q)\coloneqq\sqrt{1-4(1-\gamma^2)q(1-q)}.
\end{equation}
Indeed its trace is one and its determinant is \(q(1-q)(1-\gamma^2)\). The efficient score is therefore
\begin{equation}
f(q)\coloneqq h_2(q)-2h_2\left(\frac{1+u(q)}2\right).
\label{eq:f-qubit}
\end{equation}
We prove that \(f\) is concave.

Set \(z\coloneqq1-2q\), \(k\coloneqq1-\gamma^2\), and \(u\coloneqq\sqrt{\gamma^2+kz^2}\). Direct differentiation gives
\begin{align}
\frac{du}{dq}&=-\frac{2kz}{u},\\
\frac{d^2u}{dq^2}&=\frac{4k\gamma^2}{u^3}.
\end{align}
For \(h(u)\coloneqq h_2((1+u)/2)\),
\begin{align}
h'(u)&=-\frac{\atanh u}{\ln2},\\
h''(u)&=-\frac1{(1-u^2)\ln2}.
\end{align}
It follows that
\begin{align}
\frac{d^2}{dq^2}h(u(q))
=-\frac4{\ln2}\left[
\frac{k^2z^2}{u^2(1-u^2)}+\frac{k\gamma^2\atanh u}{u^3}\right].
\label{eq:second-h}
\end{align}
Also
\begin{equation}
h_2''(q)=-\frac1{q(1-q)\ln2}=-\frac4{(1-z^2)\ln2}.
\end{equation}
Divide \eqref{eq:second-h} by this negative number, use
\(1-u^2=k(1-z^2)\) and \(kz^2=u^2-\gamma^2\), and obtain
\begin{align}
R(u)
&\coloneqq\frac{\frac{d^2}{dq^2}h(u(q))}{h_2''(q)}\\
&=\frac{kz^2}{u^2}
 +\frac{k\gamma^2(1-z^2)\atanh u}{u^3}\\
&=1-\frac{\gamma^2}{u^2}
 +\frac{\gamma^2(1-u^2)\atanh u}{u^3}\\
&=1-\gamma^2 A(u),
\end{align}
where
\begin{equation}
A(u)\coloneqq\frac{u-(1-u^2)\atanh u}{u^3}.
\end{equation}
Expanding \(\atanh u=\sum_{m\geq0}u^{2m+1}/(2m+1)\) yields
\begin{equation}
A(u)=\sum_{m=0}^{\infty}\frac{2u^{2m}}{(2m+1)(2m+3)}.
\end{equation}
Thus \(A\) is increasing on \((0,1)\). Since \(u(q)\geq\gamma\),
\begin{align}
R(u(q))&\leq R(\gamma)\\
&=\frac{1-\gamma^2}{\gamma}\atanh\gamma\\
&=\frac9{20}\ln3.
\end{align}
For an entirely explicit bound,
\begin{align}
\ln3
&=2\sum_{m\geq0}\frac{(1/2)^{2m+1}}{2m+1}\\
&\leq1+\frac1{12}+\frac25\sum_{m\geq2}(1/2)^{2m+1}\\
&=1+\frac1{12}+\frac25\frac{1/32}{1-1/4}\\
&=\frac{11}{10}.
\end{align}
Consequently \(2R(u(q))\leq99/100<1\), and
\begin{equation}
f''(q)=h_2''(q)[1-2R(u(q))]<0\qquad(0<q<1).
\end{equation}
Continuity extends concavity to \([0,1]\). Jensen's inequality now implies, for every ensemble of mean \(q=1/2\),
\begin{equation}
\sum_xw_xf(q_x)\leq f(1/2)=1-2h_2(9/10).
\end{equation}
The single conditional distribution \(q=1/2\) attains equality. Thus
\begin{equation}
\sS^{\eff}_{1}(p)=1-2h_2(9/10)\approx0.0620088128.
\label{eq:efficient-exact-example}
\end{equation}
By the preceding efficient-instrument analysis, this also bounds all commuting conditional memories and, by additivity, all collective efficient instruments per copy.

\subsubsection{A strictly better quantum-memory instrument}
Having evaluated the efficient optimum, we compute the score of the
instrument with nonorthogonal discarded states. We first certify the
strict separation and then verify the improved rational example
in~\eqref{eq:refined-witness}.

Set $t\coloneqq9/10$ and choose unit vectors $f_0,f_1$ with
$\langle f_1|f_0\rangle=t$. Define the single-outcome isometry
$V\colon A\to SF$ by
\begin{equation}
V\ket i_A\coloneqq\ket i_S\ket{f_i}_F.
\end{equation}
It is an isometry because the retained labels are orthogonal. The matrix in \eqref{eq:theta} is
\begin{equation}
\theta_t\coloneqq\frac12\begin{pmatrix}1&t\\t&1\end{pmatrix},\qquad
\cN_G(\theta_t)=\frac12\begin{pmatrix}1&\gamma t\\\gamma t&1\end{pmatrix}.
\end{equation}
Their eigenvalues give
\begin{align}
H(\widetilde A)_{\theta_t}&=h_2(19/20)\approx0.2863969571,\\
H(\widetilde A)_{\cN_G(\theta_t)}&=h_2(43/50)\approx0.5842388116.
\end{align}
Therefore
\begin{equation}
L_1^G(\theta_t)
=1+h_2(19/20)-2h_2(43/50)
\approx0.1179193338>\sS_1^{\eff}(p).
\label{eq:strict-example}
\end{equation}
The strict comparison can be certified without relying on floating-point
rounding. For $x>0$, set $v\coloneqq(x-1)/(x+1)$ and use
\begin{equation}
\left|\ln x-2\sum_{k=0}^{N-1}\frac{v^{2k+1}}{2k+1}\right|
\leq\frac{2|v|^{2N+1}}{(2N+1)(1-v^2)}.
\end{equation}
This is the geometric bound on the tail of the power series for
$2\operatorname{atanh}v$. Taking $N=128$ in the rational arguments involved,
and dividing by the positive corresponding interval for $\ln2$, gives
\begin{align}
0.46899&<h_2(9/10)<0.46900,\\
0.28639&<h_2(19/20)<0.28640,\\
0.58423&<h_2(43/50)<0.58424.
\end{align}
Thus the efficient value is less than $0.06202$, while the displayed quantum
memory value is greater than $0.11791$.
The same $N=128$ logarithm bounds also certify the refined witness
in~\eqref{eq:refined-witness}:
\begin{equation}
0.118034
<1+h_2(473/500)-2h_2(1071/1250)
<0.118035.
\label{eq:refined-witness-certificate}
\end{equation}
Applying the same $N=128$ bounds to the original $t=9/10$ score gives
\begin{equation}
0.117919<1+h_2(19/20)-2h_2(43/50)<0.117920.
\label{eq:original-witness-certificate}
\end{equation}
Thus the refinement is a strict improvement between two explicitly
achievable instruments. The ancillary script
\path{anc/explore_two_outcome_dephasing.py} checks the intervals
in~\eqref{eq:refined-witness-certificate} and
\eqref{eq:original-witness-certificate} using exact rational arithmetic.

The corresponding achievable primitive is
\begin{align}
C_0&=0,\\
Q_0&=-\tfrac12[h_2(43/50)-h_2(19/20)]\approx-0.1489209273,\\
E_0&=1-\tfrac12[h_2(19/20)+h_2(43/50)]\approx0.5646821156.
\end{align}
In the supporting direction at \(\mu=1\),
\begin{equation}
\lambda C_0+3Q_0+E_0=0.1179193338\ldots,
\end{equation}
which is strictly above the efficient bound. To include the unit protocols,
set $w\coloneqq(\lambda,3,1)$, where $1\leq\lambda\leq2$.
For the generating rays in~\eqref{eq:unit-rays},
\begin{equation}
w\cdot u_{\rm TP}=2-2\lambda\leq0,\qquad
w\cdot u_{\rm SD}=2\lambda-4\leq0,\qquad
w\cdot u_{\rm ED}=-2.
\end{equation}
Thus adding the unit-resource cone cannot increase this support, so the
advantage persists after allowing all unit protocols.

\begin{remark}[Logical content of the example]
The memory $F$ is a local output of Alice's instrument that is discarded,
not an additional supplied communication or entanglement resource.
This is a counterexample to efficient-instrument sufficiency, even for a two-qubit maximally correlated state. It is also a counterexample to the sufficiency of commuting conditional discarded memories. It is \emph{not} a counterexample to single-letterization: the improvement itself uses a single-copy instrument. No claim that \(t=9/10\) is globally optimal is needed.
\end{remark}

\subsection{The restricted bottleneck identity}\label{app:mc-bottleneck-identity}
We prove the identity in~\eqref{eq:restricted-exact} by expanding the
information quantities of the canonical memory $T=XF$.
Purity of $F$ conditional on the input label and the classical outcome
is the property that makes this expansion agree with the matrix score.
Throughout this proof, set $\alpha\coloneqq1+\mu$, as in
\eqref{eq:restricted-exact}.

For $T=XF$ from a canonical instrument, the extension~\eqref{eq:memory-source}
is the previously defined $\omega_{YXFE}$, and its $XFE$ marginal agrees with
$\sigma_{XFE}$. Since $F$ is pure conditional on $Y,X$,
\begin{align}
I(T;Y)_\omega
&=H(XF)_\omega-H(XF|Y)_\omega\\
&=H(X)_\omega+H(F|X)_\omega-H(X|Y)_\omega-H(F|YX)_\omega\\
&=I(X;Y)_\omega+H(F|X)_\omega.
\label{eq:memory-cost}
\end{align}
The chain rule also gives
\begin{equation}
I(T;E)_\omega=I(X;E)_\omega+I(F;E|X)_\omega.
\end{equation}
Substituting these identities, we find
\begin{align}
&H(Y)_\omega-\alpha H(E)_\omega+
\alpha I(T;E)_\omega-I(T;Y)_\omega\nonumber \\
&=H(Y)_\omega-\alpha H(E)_\omega+
\alpha I(X;E)_\omega+\alpha I(F;E|X)_\omega-I(X;Y)_\omega-H(F|X)_\omega\\
&=H(Y|X)_\omega-\alpha H(E|X)_\omega+\alpha[H(F|X)_\omega+H(E|X)_\omega-H(FE|X)_\omega]
-H(F|X)_\omega\\
&=H(Y|X)_\omega+(\alpha-1)H(F|X)_\omega-\alpha H(FE|X)_\omega\\
&=I(S\rangle BX)_\sigma-\mu I(SX;E|B)_\sigma.
\label{eq:bottleneck-identity}
\end{align}
The last equality follows from~\eqref{eq:matrix-c}--\eqref{eq:matrix-ell},
or directly from
\begin{align}
I(S\rangle BX)_\sigma&=H(Y|X)_\omega-H(FE|X)_\omega,\\
I(SX;E|B)_\sigma&=H(FE|X)_\omega-H(F|X)_\omega.
\end{align}

\subsection{Additivity of the unrestricted bottleneck functional}\label{app:mc-bottleneck-additivity}
We prove Theorem~\ref{thm:Jadd} for arbitrary classical--quantum
sources. Independence of the two source marginals and data processing
split a joint objective into valid one-source objectives. The proof
allows mixed conditional environment states and does not assume a
dimension bound on the memory.

\begin{proof}
Write the two sources as
$(\omega_k)_{Y_kE_k}=\sum_i p_{k,i}\proj i_{Y_k}\otimes e_{k,i,E_k}$
for $k\in\{1,2\}$. For a joint memory family $\left(\tau_{i_1i_2,T}\right)_{i_1,i_2}$, define
\begin{equation}
\Omega_{Y_1Y_2TE_1E_2}\coloneqq
\sum_{i_1,i_2}p_{1,i_1}p_{2,i_2}\proj{i_1}_{Y_1}\otimes\proj{i_2}_{Y_2}
\otimes\tau_{i_1i_2,T}\otimes e_{1,i_1,E_1}\otimes e_{2,i_2,E_2}.
\label{eq:joint-memory}
\end{equation}
Its unconditional source marginals satisfy
\begin{equation}
I(E_1;E_2)_\Omega=0,\qquad I(Y_1;Y_2)_\Omega=0.
\end{equation}
The chain rule gives
\begin{align}
\alpha I(T;E_1E_2)_\Omega-I(T;Y_1Y_2)_\Omega&=\alpha I(T;E_1)_\Omega-I(T;Y_1)_\Omega+\alpha I(T;E_2|E_1)_\Omega-I(T;Y_2|Y_1)_\Omega\\
&=\big[\alpha I(T;E_1)_\Omega-I(T;Y_1)_\Omega\big]+\big[\alpha I(TE_1;E_2)_\Omega-I(TY_1;Y_2)_\Omega\big].
\label{eq:J-chain}
\end{align}
For the last equality, independence gives
\begin{align}
I(TE_1;E_2)_\Omega
&=I(E_1;E_2)_\Omega+I(T;E_2|E_1)_\Omega
=I(T;E_2|E_1)_\Omega,\\
I(TY_1;Y_2)_\Omega
&=I(Y_1;Y_2)_\Omega+I(T;Y_2|Y_1)_\Omega
=I(T;Y_2|Y_1)_\Omega.
\end{align}
The marginal on $TE_1E_2$ is obtained from that on $TY_1E_2$ by preparing
$e_{1,i_1,E_1}$ from $Y_1$. Data processing yields
\begin{equation}
I(TE_1;E_2)_\Omega\leq I(TY_1;E_2)_\Omega.
\end{equation}
Since $\alpha\geq0$, the expression in~\eqref{eq:J-chain} is at most
\begin{align}
&\big[\alpha I(T;E_1)_\Omega-I(T;Y_1)_\Omega\big]+\big[\alpha I(TY_1;E_2)_\Omega-I(TY_1;Y_2)_\Omega\big].
\label{eq:J-split}
\end{align}
The first term is a valid one-source objective for $\omega_1$, with conditional
memory states
\begin{equation}
\overline\tau_{i_1,T}\coloneqq\sum_{i_2}p_{2,i_2}\tau_{i_1i_2,T}.
\end{equation}
The second term is a valid one-source objective for $\omega_2$, with memory
$TY_1$ and conditional states
\begin{equation}
\widetilde\tau_{i_2,TY_1}\coloneqq
\sum_{i_1}p_{1,i_1}\tau_{i_1i_2,T}\otimes\proj{i_1}_{Y_1}.
\end{equation}
Thus~\eqref{eq:J-split} is bounded by
$\sJ_\alpha(\omega_1)+\sJ_\alpha(\omega_2)$.
Taking the supremum over the joint memory proves subadditivity.

For the reverse inequality, choose separate memory families within $\varepsilon$
of their respective suprema and prepare their product conditional on
$(i_1,i_2)$. Mutual information is additive on the resulting product state.
The joint value is therefore at least the sum of the two suprema minus
$2\varepsilon$. Letting $\varepsilon\downarrow0$ proves the assertion.
\end{proof}

\subsection{The memory-class obstruction and the dynamic comparison}\label{app:mc-obstruction}
The preceding additivity proof applies to unrestricted memories.
We now identify why it does not directly establish additivity of the
restricted functional in~\eqref{eq:J-restricted}, and compare this
obstruction with the dynamic Hadamard-channel argument.

\subsubsection{Why the additive proof does not automatically apply to the restriction}
The proof above changes the conditional memory states by averaging
over one source label and by adjoining another label to the memory.
To reuse that proof for the exact static formula, these operations would
need to preserve the common pure-block form in~\eqref{eq:flagged-pure}.

In the proof of Theorem~\ref{thm:Jadd}, the derived memory
\begin{equation}
\overline\tau_{i_1,T}=\sum_{i_2}p_{2,i_2}\tau_{i_1i_2,T}
\end{equation}
can be a noncommuting mixed family without a common classical decomposition into
rank-one blocks. Even if every joint state $\tau_{i_1i_2,T}$ is pure, its average
need not be pure or have such a decomposition. The first term
in~\eqref{eq:J-split} is therefore bounded by the unrestricted functional,
but the argument does not bound it by the flagged-pure functional.

Appending a flag to purify or spectrally refine the memory need not improve
the objective. For an admissible extension of the source in
\eqref{eq:memory-source}, define
\begin{equation}
\Omega_{YTZE}\coloneqq
\sum_i p_i\proj i_Y\otimes\tau_{i,TZ}\otimes e_{i,E},
\qquad \Tr_Z[\tau_{i,TZ}]=\tau_{i,T}.
\label{eq:flagged-memory-extension}
\end{equation}
The change in the objective is
\begin{align}
&\big[\alpha I(TZ;E)_\Omega-I(TZ;Y)_\Omega\big]
-\big[\alpha I(T;E)_\Omega-I(T;Y)_\Omega\big]\nonumber \\
&=\alpha[I(TZ;E)_\Omega-I(T;E)_\Omega]
-[I(TZ;Y)_\Omega-I(T;Y)_\Omega]\\
&=\alpha I(Z;E|T)_\Omega-I(Z;Y|T)_\Omega.
\label{eq:flag-change}
\end{align}
Both conditional mutual informations are nonnegative. Conditional data
processing for the channel preparing $E$ from $Y$ also gives
$I(Z;E|T)_\Omega\leq I(Z;Y|T)_\Omega$. Thus at $\alpha=1$ the change
in~\eqref{eq:flag-change} is nonpositive. For $\alpha>1$, its sign is
not fixed in general. Conjecture~\ref{conj:memory} therefore requires
more than a generic claim that purification improves the objective.

\subsubsection{Contrast with the dynamic Hadamard proof}
The dynamic argument from \cite{WH12} uses a different optimization, in which the
input ensemble can be chosen. Comparing its entropy expression with
the fixed-diagonal static score explains why the channel additivity
result does not resolve~\eqref{eq:target-tensorization}.

For comparison, consider a channel isometry $U\colon A'\to BE$ and a pure-state
input ensemble $\left(p_x,\varphi^x_{AA'}\right)_x$. Define
\begin{equation}
\xi_{XABE}\coloneqq\sum_xp_x\proj x_X\otimes
(I_A\otimes U)\varphi^x_{AA'}(I_A\otimes U^\dagger).
\end{equation}
Write the evaluated dynamic support objective from~\cite{WH12} in entropy form as
\begin{align}
D_{a,b}(\xi)
&\coloneqq(1+b)H(B)_\xi+H(BE|X)_\xi+aH(B|X)_\xi-(1+a+b)H(E|X)_\xi.
\end{align}
Here $a,b\geq0$ are dynamic parameters, unrelated to the stochastic kernel
$a(x|i)$. The Hadamard degrading measurement supplies conditional entropy
inequalities with the required directions for this expression.

In the maximally correlated static problem, the matrix expression is instead
\begin{equation}
H(\widetilde A)_{\Delta(\theta)}+\mu H(\widetilde A)_\theta
-(1+\mu)H(\widetilde A)_{\cN_G(\theta)}.
\end{equation}
The auxiliary state $\theta$ is generally mixed. Its positive entropy term rewards
a nontrivial discarded quantum system, whereas the channel-output entropy has a
negative coefficient. Section~\ref{sec:qubit} gives an explicit example in which
restricting $\theta$ to be pure loses achievable rates. The dynamic additivity
theorem therefore does not directly establish~\eqref{eq:target-tensorization}.

The bottleneck formulation isolates the unresolved issue: the unrestricted
classical-input functional is additive, but the static instrument imposes a
special restriction on its memory states.

\Needspace{12\baselineskip}
\section{Proof of the complete dephasing contraction inequality}
\label{app:contraction}
This appendix proves Theorem~\ref{thm:contraction} for the coherence functional
in~\eqref{eq:partial-coherence}. The additional system $R$ is arbitrary,
finite dimensional, and untouched by the dephasing channel. We first prove
the binary pinching estimate used to control entropy production. We then
combine it with a relative-entropy inequality for successive dephasings
and integrate the resulting differential inequality. This order separates
the one-qubit estimate from its extension to arbitrarily correlated inputs.

\subsection{Entropy production for binary pinching}
The following lemma compares the coherence removed by one binary pinching
with the entropy production along its interpolation. Its proof uses an
even-power expansion with nonnegative coefficients and does not assume
that the pinched state commutes with the removed operator.

\begin{lemma}[Binary pinching entropy production]\label{lem:pinching}
Let $B$ be a quantum system, let $J=J^\dagger=J^{-1}$ be a unitary
involution on $B$, and define
$\Delta(\theta_B)\coloneqq(\theta_B+J\theta_BJ)/2$.
For a full-rank density operator $\theta_B$, set
$\zeta\coloneqq\Delta(\theta_B)$ and $\Xi\coloneqq\theta_B-\zeta$.
Here $\zeta$ is a density operator and $\Xi$ is a traceless Hermitian
operator on $B$. For $-1\leq t\leq1$ define
\begin{equation}
f(t)\coloneqq H(B)_\zeta-H(B)_{\zeta+t\Xi}.
\end{equation}
Then
\begin{equation}
f'(1)\geq2f(1).
\label{eq:entropy-production}
\end{equation}
\end{lemma}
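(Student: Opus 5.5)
The plan is to show that $f$ is an even function of $t$ with a power series $f(t)=\sum_{k\geq1}c_{2k}t^{2k}$ whose coefficients satisfy $c_{2k}\geq0$. The series should also converge on an open interval containing $[-1,1]$. Granting this, term-by-term differentiation gives
\begin{equation}
f'(1)=\sum_{k\geq1}2k\,c_{2k}\geq2\sum_{k\geq1}c_{2k}=2f(1),
\end{equation}
which is~\eqref{eq:entropy-production}.

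First, record the symmetries. Since $J\zeta J=\zeta$ and $J\Xi J=-\Xi$, we have $J(\zeta+t\Xi)J=\zeta-t\Xi$, so $f$ is even by unitary invariance of entropy. Next, $\log_2\zeta$ commutes with $J$, so $\Tr[\Xi\log_2\zeta]=-\Tr[J\Xi J\log_2\zeta]=-\Tr[\Xi\log_2\zeta]=0$. Hence
\begin{equation}
f(t)=D(\zeta+t\Xi\|\zeta),
\end{equation}
with $f(0)=0$ and $f'(0)=0$.

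Second, fix the domain of analyticity. Both $\theta_B=\zeta+\Xi$ and $J\theta_BJ=\zeta-\Xi$ are full rank, so $\zeta\pm\Xi>0$. Consequently $K\coloneqq\zeta^{-1/2}\Xi\zeta^{-1/2}$ is Hermitian with $\|K\|<1$. More generally, for each $s\geq0$ set $R_s\coloneqq(\zeta+sI)^{-1}$ and $K_s\coloneqq R_s^{1/2}\Xi R_s^{1/2}$. Then $-(\zeta+s)<\Xi<\zeta+s$ gives $\|K_s\|\leq\|K\|<1$ uniformly in $s$.

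Third, expand the relative entropy. I will use the resolvent representation
\begin{equation}
\ln X=\int_0^\infty\bigl[(1+s)^{-1}I-(X+sI)^{-1}\bigr]\,ds .
\end{equation}
In it, I expand $(\zeta+t\Xi+s)^{-1}=R_s^{1/2}(I+tK_s)^{-1}R_s^{1/2}$ as a Neumann series in $tK_s$. Collecting powers of $t$ in $\Tr[(\zeta+t\Xi)\ln(\zeta+t\Xi)]$, the $t^0$ and $t^1$ terms cancel against $H(B)_\zeta$ and the vanishing linear term. The remaining coefficient of $t^m$, for $m\geq2$, should reduce to a positive constant times $(-1)^m\int_0^\infty\Tr[K_s^m]\,ds$. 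The natural way to get there is to differentiate twice, $f''(t)=\frac{1}{\ln2}\int_0^\infty\Tr[\Xi R_s(t)\Xi R_s(t)]\,ds$, expand, and integrate back in $t$. Odd $m$ vanish by evenness of $f$, and for even $m$ the quantity $\Tr[K_s^m]$ is nonnegative because $K_s$ is Hermitian. The bound $\|K_s\|\leq\|K\|<1$ makes the series converge absolutely for $|t|<1/\|K\|$, an interval strictly containing $[-1,1]$. This justifies interchanging the $s$-integral with the sum and differentiating term by term at $t=1$.

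The main obstacle is the bookkeeping in the third step. After cyclic rearrangement, every $m$th-order term must collapse to $\Tr[K_s^m]$ with a sign and weight that are the same for all orderings of the products of $\Xi$ and $R_s$. If the direct expansion of $f$ becomes messy, I will instead prove $t f''(t)\geq f'(t)$ from the even nonnegative expansion of $f''(t)=\sum_k a_{2k}t^{2k}$, where $a_{2k}=(2k+1)\int\Tr[K_s^{2k+2}]\,ds\geq0$ up to positive constants. This implies that $f'(t)/t$ is nondecreasing, hence $f'(s)\leq s f'(1)$ on $[0,1]$. Integrating from $0$ to $1$ then gives $f(1)\leq f'(1)/2$.
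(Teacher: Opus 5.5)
Your proposal is correct and is essentially the paper's proof: the resolvent formula for $f''$, the congruence $\zeta+t\Xi+sI=(\zeta+sI)^{1/2}(I+tK_s)(\zeta+sI)^{1/2}$ that turns the integrand into $\Tr[K_s^2(I+tK_s)^{-2}]$, the vanishing of odd powers, and two integrations give $f(t)=\sum_{m\geq1}a_mt^{2m}$ with $a_m\geq0$, so that $f'(1)=\sum_m2ma_m\geq2f(1)$. Two small repairs are needed: the uniform bound $\|K_s\|\leq\|K\|$ follows from $-\|K\|\zeta\leq\Xi\leq\|K\|\zeta$ (hence $-\|K\|(\zeta+sI)\leq\Xi\leq\|K\|(\zeta+sI)$) rather than from $-(\zeta+sI)<\Xi<\zeta+sI$ alone, and interchanging the sum with the $s$-integral also requires $\int_0^\infty\Tr K_s^2\,ds<\infty$, which holds because the integrand is bounded near $s=0$ by full rank and is $O(s^{-2})$ at infinity.
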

\begin{proof}
We have $J\zeta J=\zeta$, $J\Xi J=-\Xi$, and $\Tr \Xi=0$.
The operators $\zeta+t\Xi$ are
positive definite density operators for $-1\leq t\leq1$, since their
endpoints are $\theta_B$ and $J\theta_BJ$. Positive definiteness also
holds on an open neighborhood of this interval, so the entropy
derivatives extend continuously to both endpoints. Thus $f(0)=0$ and
$f'(0)=\Tr[\Xi\ln \zeta]/\ln2=0$. The last equality follows from
$J\zeta J=\zeta$ and $J\Xi J=-\Xi$ by conjugating the trace by $J$.
The Fr\'echet derivative of the logarithm gives
\begin{equation}
f''(t)=\frac1{\ln2}\int_0^\infty
\Tr[\Xi(\zeta+t\Xi+sI)^{-1}\Xi(\zeta+t\Xi+sI)^{-1}]\,ds.
\label{eq:pinching-hessian}
\end{equation}
This identity follows by differentiating
$\ln T=\int_0^\infty[(1+s)^{-1}I-(T+sI)^{-1}]\,ds$.
Define $K_s\coloneqq(\zeta+sI)^{-1/2}\Xi(\zeta+sI)^{-1/2}$. The congruence
\begin{equation}
\zeta+t\Xi+sI=(\zeta+sI)^{1/2}(I+tK_s)(\zeta+sI)^{1/2}
\end{equation}
and cyclicity of the trace rewrite the integrand as
$\Tr[K_s(I+tK_s)^{-1}K_s(I+tK_s)^{-1}]$.
Since $K_s$ commutes with its own resolvent, this equals
$\Tr[K_s^2(I+tK_s)^{-2}]$. No commutation of $\zeta$ and $\Xi$ is assumed.
Since $\zeta\pm \Xi$ are positive definite,
$r\coloneqq\|\zeta^{-1/2}\Xi\zeta^{-1/2}\|<1$. The inequalities
$-r\zeta\leq \Xi\leq r\zeta$ imply $\|K_s\|\leq r$ for every $s\geq0$.
Thus the inverse-square power series converges uniformly for $0\leq t\leq1$.
Also $JK_sJ=-K_s$, so $\Tr K_s^{2m+1}=0$.
Expanding the inverse square therefore gives
\begin{equation}
f''(t)=\frac1{\ln2}\int_0^\infty\sum_{m=0}^\infty
(2m+1)t^{2m}\Tr K_s^{2m+2}\,ds,
\qquad0\leq t\leq1.
\end{equation}
All displayed summands are nonnegative, so Tonelli's theorem permits
interchanging their sum and integral. Near $s=0$, full rank bounds the
integrand; as $s\to\infty$, $\Tr K_s^2=O(s^{-2})$, and the remaining
series is bounded by a constant depending only on $r$.
Integrating twice from zero and using monotone convergence gives
\begin{equation}
f(t)=\sum_{m=1}^\infty a_m t^{2m}, \text{ where }
 a_m\coloneqq\frac1{2m\ln2}\int_0^\infty\Tr K_s^{2m}\,ds\geq0.
\label{eq:positive-series}
\end{equation}
Integrating the Hessian once instead gives
\begin{equation}
f'(t)=\sum_{m=1}^{\infty}2m a_m t^{2m-1},\qquad 0\leq t<1.
\end{equation}
The derivative is finite and continuous at $t=1$ by positive
definiteness. Monotone convergence as $t\uparrow1$ therefore gives
\begin{equation}
f'(1)=\sum_{m=1}^{\infty}2m a_m
\geq2\sum_{m=1}^{\infty}a_m=2f(1),
\end{equation}
as asserted.
\end{proof}

\subsection{From binary pinching to product dephasing}
We now prove Theorem~\ref{thm:contraction} using
Lemma~\ref{lem:pinching}. First, successive pinching and data processing
bound the total coherence by a sum of one-qubit contributions.
Applying the lemma to each contribution along the dephasing semigroup
then gives the claimed contraction with a constant independent of both
the number of qubits and the auxiliary system.
\begin{proof}
For $0\leq i\leq n$, define the density operators
$\theta^{(i)}_{\widetilde A^nR}\coloneqq(\Delta_1\circ\cdots\circ\Delta_i)(\theta_{\widetilde A^nR})$,
with $\theta^{(0)}=\theta$. Empty compositions below denote the identity channel.
We first establish the relative-entropy
inequality
\begin{align}
\mathcal C(\theta_{\widetilde A^nR})
&=\sum_{i=1}^n\left[H(\widetilde A^nR)_{\theta^{(i)}}
-H(\widetilde A^nR)_{\theta^{(i-1)}}\right]\\
&=\sum_{i=1}^nD\!\left(\theta^{(i-1)}_{\widetilde A^nR}\middle\|
(\Delta_1\circ\cdots\circ\Delta_{i-1})(\Delta_i(\theta_{\widetilde A^nR}))\right)\\
&\leq\sum_{i=1}^nD\!\left(\theta_{\widetilde A^nR}\middle\|
\Delta_i(\theta_{\widetilde A^nR})\right).
\label{eq:coherence-tensor}
\end{align}
The first equality follows by telescoping, and the second follows from the
pinching identity. The last inequality follows from data processing under $\Delta_1\circ\cdots\circ\Delta_{i-1}$, using commutation of
the pinching maps.

For full-rank $\theta_{\widetilde A^nR}$, define
$\theta_{t,\widetilde A^nR}\coloneqq
(\mathcal Z_{e^{-t}}^{\otimes n}\otimes\id_R)(\theta_{\widetilde A^nR})$
for $t\geq0$. This state remains full rank. The channel on site $i$ is
$\Delta_i+e^{-t}(\id-\Delta_i)$, and therefore
\begin{equation}
\frac{d\theta_t}{dt}=\sum_i(\Delta_i(\theta_t)-\theta_t).
\end{equation}
Its fully dephased state is constant:
$\Delta_{[n]}(\theta_t)=\widehat\theta$.
At a fixed $t$, apply Lemma~\ref{lem:pinching} to the joint system
$\widetilde A^nR$, with the involution $J_i$ acting as $Z$ on
$\widetilde A_i$ and as the identity on every other system. Define
\begin{align}
\zeta_{i,t}&\coloneqq\Delta_i(\theta_t),\qquad
\Xi_{i,t}\coloneqq\theta_t-\zeta_{i,t},\label{eq:pinching-time-states}\\
f_{i,t}(s)&\coloneqq H(\widetilde A^nR)_{\zeta_{i,t}}
-H(\widetilde A^nR)_{\zeta_{i,t}+s\Xi_{i,t}}.
\label{eq:pinching-time-function}
\end{align}
Here $\zeta_{i,t}$ is a density operator and $\Xi_{i,t}$ is a
traceless Hermitian operator on the joint system. The derivative
$f'_{i,t}(1)$ is with respect to $s$ at fixed $i,t$, as in the lemma.
It follows that
\begin{align}
\frac{d}{dt}\mathcal C(\theta_t)
&=\frac1{\ln2}\Tr\!\left[\frac{d\theta_t}{dt}\ln\theta_t\right]\\
&=-\sum_i\frac1{\ln2}\Tr[\Xi_{i,t}\ln\theta_t]\\
&=-\sum_i f'_{i,t}(1)\\
&\leq-2\sum_iD(\theta_t\|\Delta_i(\theta_t))\\
&\leq-2\mathcal C(\theta_t).
\end{align}
The first line differentiates
$H(\widetilde A^nR)_{\widehat\theta}-H(\widetilde A^nR)_{\theta_t}$,
using $\Tr[d\theta_t/dt]=0$. The second substitutes the evolution equation
and the definition of $\Xi_{i,t}$ in~\eqref{eq:pinching-time-states}.
The next equality follows by differentiating~\eqref{eq:pinching-time-function},
and the first inequality follows from Lemma~\ref{lem:pinching}.
The last inequality follows from~\eqref{eq:coherence-tensor} applied to $\theta_t$.
Hence $e^{2t}\mathcal C(\theta_t)$ is nonincreasing and
$\mathcal C(\theta_t)\leq e^{-2t}\mathcal C(\theta)$.
Set $\gamma\coloneqq e^{-t}$. The endpoints follow by continuity.
Finally, if $\theta_{\widetilde A^nR}$ is not full rank, set
$d\coloneqq\dim(\widetilde A^nR)$ and
$\theta^{[\eta]}_{\widetilde A^nR}\coloneqq(1-\eta)\theta_{\widetilde A^nR}+\eta I_{\widetilde A^nR}/d$.
Apply the full-rank result to this state and let $\eta\downarrow0$.
Entropy continuity completes the proof.
\end{proof}

\section{Construction of the capacity plots}\label{app:figures}
This appendix describes the construction of
Figures~\ref{fig:erased-three-dimensional},~\ref{fig:erased-slices},
and~\ref{fig:dephasing-slices}. The erased-state figures follow from the
exact region, whereas the dephasing curves combine explicit achievable
instruments with analytic outer bounds. The latter construction gives
no numerical optimality claim for the unresolved part of the region.

The three-dimensional drawing in Figure~\ref{fig:erased-three-dimensional}
uses the exact five-facet description in~\eqref{eq:five-first}--\eqref{eq:five}, with $p=1/4$
and $h=1$. Each face is intersected with a common plane to obtain a bounded
polygon for display. No additional face is drawn on this plane: dashed
segments mark the display boundary, and arrows indicate the unbounded
edges. The two vertices and the three coordinate axes use the same
orthographic projection. The accompanying script checks that every polygon
satisfies all five inequalities and lies on its designated supporting plane.

The erased-state curves in Figure~\ref{fig:erased-slices} are the exact
piecewise-linear functions~\eqref{eq:erased-Q-slice} and
\eqref{eq:erased-C-slice}, with $h=1$. Each corner is included explicitly
in the plotting grid. No optimization is needed for that figure.

For Figure~\ref{fig:dephasing-slices}, fix $p=1/2$ and $\gamma=0.8$.
For each $0\leq q\leq1/2$ and $0\leq t\leq1$, take two canonical
outcomes of equal probability, specified by the density operators
$\theta_{\widetilde A}(q,t)$ and $\theta_{\widetilde A}(1-q,t)$ from
\eqref{eq:qubit-test-state}. Their mean diagonal is $(1/2,1/2)$, so they
satisfy the exact instrument constraint in~\eqref{eq:diagonal-constraint}.
Both outcomes have the same entropies. With $b$ as in~\eqref{eq:qubit-entropy},
define their quantum communication consumption and entanglement generation rates by
\begin{align}
Q_c(q,t)&\coloneqq\tfrac12[b(q,\gamma t)-b(q,t)],\label{eq:plot-primitive-Q}\\
E_g(q,t)&\coloneqq h_2(q)-\tfrac12[b(q,t)+b(q,\gamma t)].
\label{eq:plot-primitive}
\end{align}
Equation~\eqref{eq:qubit-point} proves that $(0,-Q_c(q,t),E_g(q,t))$
is achievable. The difference $E_g(q,t)-Q_c(q,t)$ equals
$h_2(q)-b(q,\gamma t)$ and is nonnegative. The cost is nonnegative because
dephasing increases the entropy of a qubit.

The supplied script samples 101 equally spaced $q$ values on $[0,1/2]$.
For $t$, it takes the union of 801 equally spaced values on $[0,1]$ and
121 logarithmically spaced values on $[10^{-5},0.1]$, giving 920 distinct
values. These 92,920 pairs define feasible instruments, irrespective of
whether an optimization over the continuous family could do better.
The refinement adds balanced instruments with $q=1/2$ selected by
continuous optimization in $t$. For fixed $\mu$, the interior stationary
points of $F_\mu(1/2,t)$ satisfy
\begin{equation}
\mu\atanh t=(1+\mu)\gamma\atanh(\gamma t).
\label{eq:plot-stationary}
\end{equation}
The code compares these candidates with the endpoints $t=0,1$ for a
grid of support parameters below $\mu_\gamma$, and also includes
the rational witness $t=223/250$. Every selected value specifies a
feasible instrument, independently of whether it is a global maximizer.
The parameter grid and all selected values are recorded in the ancillary
code and data. Compared with the uniform and logarithmic grid alone, this refinement
raises the inner curves by approximately $8.7\times10^{-5}$ ebits per copy at their largest
separation; it leaves the analytic outer bounds unchanged.
For a sample indexed by $j$, write $Q_{c,j}$ and $E_{g,j}$ for the rates
in~\eqref{eq:plot-primitive-Q}--\eqref{eq:plot-primitive}. For $v\geq0$, solve the finite linear program
\begin{equation}
\begin{split}
F(v)\coloneqq\max_{\left(w_j\right)_j}\;&\sum_jw_j(E_{g,j}-Q_{c,j})\\
\text{subject to }\;&w_j\geq0,\quad\sum_jw_j\leq1,\quad
\sum_jw_jQ_{c,j}\leq v.
\end{split}
\label{eq:plot-LP}
\end{equation}
The unused weight is assigned to the trivial instrument. It is enough to
retain the selected points on the upper convex hull in the
$(Q_c,E_g-Q_c)$ plane; deleting other samples cannot compromise
achievability.

Teleporting the consumed qubits converts each primitive point into
$(-2Q_{c,j},0,E_{g,j}-Q_{c,j})$. Thus the $Q=0$ slice has the achievable
boundary $E=F((-C)/2)$. For the $C=0$ slice, unused supplied qubits
can instead generate ebits by entanglement distribution. Its achievable
boundary is $E=-Q+F(-Q)$. These arguments establish feasibility directly
from the primal weights in~\eqref{eq:plot-LP}.
The parameter grid includes the pure endpoint $(q,t)=(1/2,1)$ exactly.
Accordingly the achievable and outer bounds coincide once
$-C\geq h_2(0.9)$ or $-Q\geq h_2(0.9)/2$, respectively.

The script evaluates entropies in double precision and reconstructs each
objective from its primal weights. It scales those weights down whenever
needed to satisfy the two resource constraints numerically. All retained
hull vertices and analytic outer-bound corners are inserted in the final
grid. Line segments between achievable points are achievable by time
sharing. The figure is a numerical rendering of this explicit feasible
construction; no numerical optimality claim for the continuous or
regularized problem is used. The ancillary directory \texttt{anc/}
contains the plotting scripts, TikZ sources, input data, retained
instruments, and verification records. The command
\texttt{python anc/reproduce\_all.py} reproduces all five figure PDFs;
the dependencies and instructions are given in \texttt{anc/README.md}.

\end{document}